\newcommand{\nat}{\mathbb{N}}
\newcommand{\rel}{\mathbb{Z}}
\newcommand{\interval}[2]{[#1,#2]}
\newcommand{\Nat}{\mathbb{N}}
\newcommand{\Zed}{\mathbb{Z}}
\newcommand{\tuple}[1]{\langle #1 \rangle}
\newcommand{\pair}[2]{\langle #1, #2 \rangle}
\newcommand{\triple}[3]{\langle #1, #2,#3 \rangle}
\newcommand{\set}[1]{\{ #1 \}}
\newcommand{\defeq}{\overset{\textsf{def}}{\Leftrightarrow}}
\newcommand{\defstyle}[1]{{\em #1}}
\newcommand{\until}{\mathtt{U}}
\newcommand{\mynext}{\mathtt{X}}
\newcommand{\always}{\mathtt{G}}
\newcommand{\sometimes}{\mathtt{F}}
\newcommand{\eventually}{\mathtt{F}}
\newcommand{\myprevious}{\mathtt{X}^{-1}}
\newcommand{\since}{\mathtt{S}}
\newcommand{\ptime}{\textsc{PTime}}
\newcommand{\np}{\textsc{NP}}
\newcommand{\nc}{\textsc{NC}}
\newcommand{\pspace}{\textsc{PSpace}}
\newcommand{\exptime}{\textsc{ExpTime}}
\newcommand{\abigformula}{\Phi}
\newcommand{\aformula}{\phi} 
\newcommand{\aformulabis}{\psi} 
\newcommand{\aalphabet}{\Sigma}
\newcommand{\aletter}{a}
\newcommand{\sizeof}[1]{{\rm size}(#1)}
\newcommand{\vect}[1]{\mathtt{\textbf{#1}}}
\newcommand{\counters}{{\tt C}}
 \newcommand{\acounter}{\avariable}
\newcommand{\afactor}{a}
\newcommand{\aconstant}{b}
\newcommand{\guards}{{\tt G}}
\newcommand{\aguard}{{\tt g}}
\newcommand{\anupdate}{\vect{u}}
\newcommand{\avect}{\vect{v}}
\newcommand{\aterm}{{\tt t}}
\newcommand{\asys}{S}
\newcommand{\transsysof}[1]{TS(#1)}
\newcommand{\states}{Q}
\newcommand{\astate}{q}
\newcommand{\edges}{\Delta}
\newcommand{\anedge}{\delta}
\newcommand{\source}[1]{\mathit{source}(#1)}
\newcommand{\target}[1]{\mathit{target}(#1)}
\newcommand{\guard}[1]{\mathit{guard}(#1)}
\newcommand{\update}[1]{\mathit{update}(#1)}
\newcommand{\footprint}[1]{{\sf ft}(#1)}
\newcommand{\trans}{\rightarrow}
\newcommand{\labtrans}[1]{\xrightarrow{#1}}
\newcommand{\confs}{C}
\newcommand{\aconf}{c}
\newcommand{\aword}{w}
\newcommand{\arun}{\rho}
\newcommand{\amodel}{\sigma}
\newcommand{\wordof}[1]{\mathit{lab}(#1)}
\newcommand{\aregexp}{E}
\newcommand{\PLTL}{{\rm PLTL}}
\newcommand{\mc}[2]{{\rm MC}(#1,#2)}
\newcommand{\logicfrag}{{\rm L}}
\newcommand{\csfrag}{\mathcal{C}}
\newcommand{\flatcs}{\mathcal{CFS}}
\newcommand{\flatks}{\mathcal{KFS}}
\newcommand{\kps}{\mathcal{KPS}}
\newcommand{\cps}{\mathcal{CPS}}
\newcommand{\aseg}{p}
\newcommand{\first}[1]{\mathit{first}(#1)}
\newcommand{\last}[1]{\mathit{last}(#1)}
\newcommand{\effect}[1]{\mathit{effect}(#1)}
\newcommand{\preeffect}[1]{\mathit{effect}^{<}(#1)}
\newcommand{\aloop}{l}
\newcommand{\aschema}{P}
\newcommand{\languageof}[1]{\mathcal{L}(#1)}
\newcommand{\loopsof}[2]{iter_{#1}(#2)}
\newcommand{\awordbis}{u}
\newcommand{\awordter}{v}
\newcommand{\tempdepth}[1]{\mathit{td}(#1)}
\newcommand{\avariable}{{\sf x}}
\newcommand{\avariablebis}{{\sf y}}
\newcommand{\card}[1]{{\rm card}(#1)}
\newcommand{\egdef}{\stackrel{\mbox{\begin{tiny}def\end{tiny}}}{=}} 
\newcommand{\equivdef}{\stackrel{\mbox{\begin{tiny}def\end{tiny}}}{\equivaut}} 
\newcommand{\equivaut}{\;\Leftrightarrow\;}
\newcommand{\td}[1]{td(#1)}
\newcommand{\aset}{X}
\newcommand{\asetbis}{Y}
\newcommand{\asetter}{Z}
\newcommand{\aconstraintsystem}{\mathcal{E}}
\newcommand{\equivrel}[1]{\approx_{#1}^{ }}
\newcommand{\alabelling}{\mathbf{l}}
\newcommand{\atermmap}{\vec{m}}
\newcommand{\powerset}[1]{2^{#1}}
\newcommand{\atransition}{\anedge}
\newcommand{\step}[1]{\xrightarrow{\!\!#1\!\!}}
\newcommand{\amap}{f}
\newcommand{\length}[1]{{\rm len}(#1)} 
\newcommand{\nbloops}[1]{{\rm nbloops}(#1)}
\newcommand{\askeleton}{{\tt sk}}
\newcommand{\varprop}{{\rm AT}}
\newcommand{\avarprop}{p}
\newcommand{\amatrix}{\mathcal{M}}
\newcommand{\mapstate}{\mathit{\pi_{\astate}}}
\newcommand{\mapedge}{\mathit{\pi_{\anedge}}}
\newcommand{\amapbis}{h}
\newcommand{\size}[1]{{\rm size}(#1)}
\newcommand{\aresource}{\mathsf{R}}
\newcommand{\afootprint}{\mathsf{ft}}
\newcommand{\proj}{\mathsf{proj}}
\newcommand{\symbmodels}{\models_{{\small \sf symb}}}
\newcommand{\commzone}[1]{\textbf{[#1]}}
\newcommand{\labels}{L}
\newcommand{\nodistrans}{\tilde\Delta}
\newcommand{\cut}[1]{}
\renewcommand{\vec}[1]{\mathbf{#1}}
 \newif \ifshort \shortfalse 
 \newif \iflong \longtrue
\newif \ifready \readytrue
 \newtheorem{proposition}{Proposition}[section]
 \newtheorem{definition}[proposition]{Definition}
 \newtheorem{theorem}[proposition]{Theorem}
\newtheorem{lemma}[proposition]{Lemma}
 \newtheorem{corollary}[proposition]{Corollary}
\newenvironment{proof}{\vspace{1ex}\noindent{\bf Proof}\hspace{0.5em}}
	{\hfill\qed\vspace{1ex}}
\begin{document}

\begin{frontmatter}

\title{%
Taming Past LTL and Flat Counter Systems\footnote{
Supported by ANR project REACHARD 
ANR-11-BS02-001.  This is the completed version of~\cite{DemriDharSangnier12}.}
}
\author[LSV]{St\'ephane Demri}
\ead{demri@lsv.ens-cachan.fr}
\author[LIAFA]{Amit Kumar Dhar}
\ead{dhar@liafa.univ-paris-diderot.fr}
\author[LIAFA]{Arnaud Sangnier}
\ead{sangnier@liafa.univ-paris-diderot.fr}

\address[LSV]{LSV, CNRS, ENS Cachan, INRIA, France}
\address[LIAFA]{LIAFA, Univ. Paris Diderot, Sorbonne Paris Cité, CNRS, France}

\begin{keyword} linear-time temporal logic,  stuttering, model-checking, counter system, flatness, 
complexity, system of equations, small solution, Presburger arithmetic. 
\end{keyword}

\begin{abstract}
Reachability  and LTL model-checking problems for flat coun\-ter systems are known to be decidable
but whereas the reachability problem can be shown in \np, the best known complexity 
upper bound for the latter problem
is made of a tower of several exponentials. Herein, we show that the problem is 
only \np-complete even if LTL admits
past-time operators and arithmetical constraints on counters. 
\iflong
For instance, adding past-time operators
to LTL immediately leads to complications;  an \np \ upper bound cannot 
be deduced by translating
formulae into B\"uchi automata. 
\fi 
Actually, the \np \ upper bound is shown by
adequately combining a new stuttering theorem for Past LTL  and the property of small 
integer solutions  for quantifier-free Presburger formulae. Other complexity results
are  proved, for instance for restricted classes of flat counter 
\iflong
systems such as path schemas.
\else
systems. 
\fi 
\iflong
Our \np \ upper bound extends known and recent results on 
model-checking weak Kripke structures with LTL formulae 
as well as  reachability problems for flat counter systems.
\fi 
\end{abstract}

\end{frontmatter}

\section{Introduction}
\label{section-introduction}
\iflong \paragraph{Flat counter systems}
\else
{\em Flat counter systems.} 
\fi 
Counter systems are finite-state automata equip\-ped with program variables (counters) interpreted
over non-negative integers. They are  used in many places like,
broadcast protocols~\cite{Esparza&Finkel&Mayr99} and programs with pointers~\cite{Finkel&Lozes&Sangnier09} 
to quote a few 
examples.   But, alongwith their large scope of usability, many problems on general counter 
systems are known to be 
\iflong
undecidable~\cite{minsky-computation-67}.
\else
undecidable. 
\fi 
Indeed, this computational model can simulate 
Turing machines. 
\iflong
This is not the end of the story since decidability 
\else
Decidability 
\fi 
of  reachability problems or model-checking problems based on temporal 
logics, can be regained by considering subclasses of  counter 
\iflong
systems (this includes 
restrictions on the instructions, on the control graphs or on more semantical properties). 
\else
systems, see e.g.~\cite{Haaseetal09}.
\fi 
An important and natural class of counter
systems, in which various practical cases of infinite-state
systems (e.g. broadcast protocols~\cite{Finkel&Leroux02b}) can be
modelled, are those with a \emph{flat} control graph, i.e, those
where no control state occurs in more than one simple cycle,
\iflong
see e.g.~\cite{Boigelot98,Comon&Jurski98,Finkel&Leroux02b,Leroux&Sutre05,Bozga&Iosif&Lakhnech09}.
\else
see e.g.~\cite{Boigelot98,Comon&Jurski98,Finkel&Leroux02b,Leroux&Sutre05}.
\fi 
Decidability results on verifying safety and reachability properties on
flat counter systems have been obtained
in~\cite{Comon&Jurski98,Finkel&Leroux02b,Bozga&Iosif&Konecny10}.
However, so far, such properties have
been rarely considered in the framework of any formal specification
language (see an exception in~\cite{Comon&Cortier00}). 
In~\cite{demri-model-10}, 
 a class of  Presburger counter systems is identified for which the
 local model checking problem for 
Presburger-CTL$^{\star}$ is shown decidable. These are
Presburger counter systems defined over flat control graphs
with arcs labelled by
adequate Presburger formulae (representing constraints on counters).
Even though flatness is clearly a substantial restriction,
it is shown in~\cite{Leroux&Sutre05} that many classes of counter systems with
computable
Presburger-definable reachability sets are \emph{flattable}, i.e. there exists
a flat unfolding of the counter system with identical reachability sets.
Hence, the possibility of flattening a counter system is strongly
related to semilinearity of its reachability set. 
Moreover, in~\cite{Comon&Cortier00} model-checking relational counter systems
over LTL formulae is shown decidable when restricted to flat formulae (their translation
into automata leads to flat 
\iflong
structures). 
\else
structures). \\
\fi
\iflong 
\paragraph{Towards the complexity of temporal model-checking flat counter systems}
\else
{\em Towards the complexity of temporal model-checking flat counter systems.}
\fi 
In~\cite{demri-model-10}, it is shown that CTL$^{\star}$ model-checking
over the class of so-called \defstyle{admissible} counter systems is decidable
by reduction into the satisfiability problem for Presburger arithmetic,
the decidable first-order theory of natural numbers with addition. 
Obviously  CTL$^{\star}$ properties are more expressive than reachability
properties but this has a cost. 
However, for the class of counter systems considered in this paper, this 
provides a very rough complexity upper bound in 4\exptime. 
Herein,  our goal is to revisit standard  decidability results
for subclasses of counter systems obtained  by translation into 
Presburger arithmetic in order to 
obtain optimal complexity upper bounds.
Indeed, effectively composing the translation of
a verification problem
into Presburger arithmetic (PrA) and then using a solver for (PrA) is not necessarily
 optimal computationally.
\iflong 
\paragraph{Our contributions}  
\else
{\em Our contributions.} 
\fi 
In this paper, we establish several computational complexity characterizations
of model-checking problems restricted to flat counter systems in the presence of a rich  LTL-like
specification language with arithmetical constraints and past-time operators. 
Not only we provide an optimal complexity but also, we believe that our proof technique could be reused
for further extensions. 
Indeed, we combine three proof techniques: the general stuttering theorem~\cite{Kucera&Strejcek05}, 
the property of small integer solutions of equation systems~\cite{Borosh&Treybig76}
 (this latter technique is used 
\iflong
since~\cite{Rackoff78,Gurari&Ibarra81}) and the elimination of disjunctions in 
guards (see Section~\ref{section-disjunction}). 
\else
since~\cite{Rackoff78}) and the elimination of disjunctions in guards (see Section~\ref{section-disjunction}). 
\fi 
Let us be a bit more precise.

\noindent
We extend the \iflong general \fi  stuttering principle established in~\cite{Kucera&Strejcek05} for LTL (without past-time operators)
to Past LTL. 
\iflong
However, since this principle will be applied to path 
schemas, a fundamental structure
in flat counter systems, we do not aim at being optimal as soon as it will be helpful to 
establish the \np \ upper bounds.
A path schema is simply a finite alternation of path segments and simple loops (no repetition of edges)
and the principle states that satisfaction of an LTL formula requires only to take loops a number of times
that is linear in the temporal depth of the formula. 
This principle has been already used to establish that LTL model-checking over \emph{weak} Kripke structures 
is in \np~\cite{Kuhtz&Finkbeiner11} (weakness corresponds to flatness). 
\else
The stuttering theorem from~\cite{Kucera&Strejcek05}
for LTL without past-time operators has been used to show that 
 LTL model-checking over \emph{weak} Kripke structures 
is in \np~\cite{Kuhtz&Finkbeiner11} (weakness corresponds to flatness). 
\fi 
It is worth noting that another way to show a similar result would be to eliminate
past-time operators  thanks to Gabbay's Separation Theorem~\cite{Gabbay87} (preserving initial equivalence)
but the temporal depth of formulae might increase at least exponentially, 
which is a crucial parameter in our complexity
analysis. 
\noindent
We show that the model-checking problem restricted to flat 
counter systems in the presence
of LTL with past-time operators 
is in \np \ (Theorem~\ref{theorem-main}) by combining the above-mentioned
proof 
\iflong 
techniques (we call this problem $\mc{\PLTL[\counters]}{\flatcs}$). 
\else
techniques. 
\fi 
Apart from the use of the general stuttering theorem (Theorem~\ref{theorem-stuttering}),
we  take advantage of the other properties stated for instance
in Lemma~\ref{lemma-constraint-system} (characterization of
runs by quantifier-free Presburger formulae)
and
Theorem~\ref{theorem-main-disjunction} (elimination of disjunctions in guards preserving
flatness). 
\iflong
Note that the loops  in runs are visited  a number of times that can be exponential in the
 worst case, but this does not
prevent us from  establishing the \np \ upper bound.
\fi 
\iflong 
We  also take advantage of the fact that 
model-checking ultimately periodic models with Past LTL is in 
\ptime~\cite{Laroussinie&Markey&Schnoebelen02} but
our main decision procedure
is not automata-based, unlike the approach from~\cite{Vardi97}. 
\fi 
In the paper, complexity results for fragments/subproblems are also considered.
For instance, we get a  sharp lower bound since we establish that 
the model-checking problem on path schemas
\ifshort (a fundamental structure
in flat counter systems) \fi 
 with only 2 loops is already
\np-hard (see Lemma~\ref{lemma-constant-loops2}).
\iflong A summary table of results can be found in Section~\ref{section-conclusion}.
\else
 A summary table can be found in Section~\ref{section-conclusion}.
\fi 

\ifshort
Omitted proofs can be found in the technical appendix.
\else
\fi

\section{Flat Counter Systems and its LTL Dialect}
\label{section-definitions}

We write $\nat$ [resp. $\rel$] to denote  the set of natural 
numbers [resp. integers]
and $\interval{i}{j}$ to denote  $\set{k \in \Zed: i \leq k \ {\rm and} \ k \leq j}$.
For $\vec{v}\in\Zed^n$, $\vec{v}[i]$ 
denotes the $i^{th}$ element of $\vec{v}$ for every $i\in \interval{1}{n}$. For some $n$-ary tuple $t$,
we write $\pi_j(t)$ to denote the $j^{th}$ element of $t$ ($j \leq n$). 
In the sequel, integers are encoded with a binary representation. 
 For a finite alphabet $\aalphabet$, $\aalphabet^*$ represents the set of finite words over $\aalphabet$, 
$\aalphabet^+$ the set of finite non-empty words over $\aalphabet$ and $\aalphabet^\omega$ the set of 
$\omega$-words over $\aalphabet$. For a finite word $\aword=\aletter_1\ldots \aletter_k$ over $\aalphabet$, 
we write $\length{\aword}$ to denote its \defstyle{length} $k$. 
For $0 \leq i < \length{\aword}$, $\aword(i)$ represents the $(i+1)$-th letter of the word, here $\aletter_{i+1}$.
\subsection{Counter Systems}
\iflong 
Counter constraints are defined below as a subclass of Presburger formulae whose free variables are understood as counters.
Such constraints are used to define guards in counter systems but also to define arithmetical constraints in temporal formulae.
\fi 
Let $\counters = \set{\acounter_1, \acounter_2, \ldots}$ be a countably infinite set
of \defstyle{counters} (variables interpreted over non-negative integers) and
$\varprop = \set{\avarprop_1, \avarprop_2, \ldots}$ be a countable infinite set of propositional
variables (abstract properties about program points). 
We write $\counters_n$ to denote \iflong the restriction of $\counters$ to \fi 
$\set{\acounter_1, \acounter_2, \ldots, 
\acounter_n}$. 
\iflong
\begin{definition}[Guards]
\label{definition-counter-constraints} 
The set $\guards(\counters_n)$ of \defstyle{guards} (arithmetical constraints on counters
in $\counters_n$)
is defined inductively as follows:
$$
\begin{array}{rcl}
\aterm & ::= & \afactor.\acounter~\mid~ \aterm + \aterm\\
\aguard & ::= & \aterm \sim \aconstant~  \mid~\aguard \wedge \aguard~ \mid~ \aguard \vee \aguard 
\end{array}
$$
where $\acounter \in \counters_n$, $\afactor \in \rel$, $\aconstant \in \nat$ and $\sim \in \set{=,\leq,\geq,<,>}$.
\end{definition}
\else
The set $\guards(\counters_n)$ of \defstyle{guards} (arithmetical constraints on counters
in $\counters_n$)
is defined inductively as follows:
$\aterm \ ::= \ \afactor.\acounter~\mid~ \aterm + \aterm$ and 
$\aguard \ ::= \ \aterm \sim \aconstant~  \mid~\aguard \wedge \aguard~ \mid~ \aguard \vee \aguard$,
where $\acounter \in \counters_n$, $\afactor \in \rel$, $\aconstant \in \rel$ and $\sim \in \set{=,\leq,\geq,<,>}$.
\fi 
\iflong
Note that such guards 
\else
Such guards
\fi 
are closed under negations (but negation is not a
logical connective) and the truth constants $\top$ and $\perp$ can be easily defined
too. 
Given $\aguard \in \guards(\counters_n)$ and 
a 
vector $\vect{v} \in \nat^n$, we  say that $\avect$ satisfies $\aguard$, written 
$\vect{v} \models \aguard$, if the formula obtained by replacing each  $\acounter_i$ 
by $\vec{v}[i]$ holds. 
%
\begin{definition}[Counter system] For  \iflong a natural 
number \fi $n \geq 1$, a 
\iflong $n$-dim counter 
system  (shortly a counter system)
\else
counter system
\fi 
$\asys$ is a tuple 
$\tuple{\states,\counters_n,\edges,\alabelling}$ 
\iflong
where:
\begin{itemize}
\itemsep 0 cm 
\iflong
\item $\states$ is a finite set of \defstyle{control states}. 
\item $\alabelling: \states \rightarrow \powerset{\varprop}$ is a \defstyle{labelling function}.
\else
\item $\states$ is a finite set of \defstyle{control states}; 
$\alabelling: \states \rightarrow \powerset{\varprop}$ is a \defstyle{labelling function}.
\fi 
\item $\edges \subseteq \states \times \guards(\counters_n) \times \Zed^n 
      \times \states$ is a finite set of edges labeled by guards and updates of the counter values
      (\defstyle{transitions}).
\end{itemize}
\else
where 
$\states$ is a finite set of \defstyle{control states}, 
$\alabelling: \states \rightarrow \powerset{\varprop}$ is a 
\defstyle{labelling function}
and 
$\edges \subseteq \states \times \guards(\counters_n) \times \Zed^n 
 \times \states$ is a finite set of edges labeled by guards and 
updates of the counter values
      (\defstyle{transitions}).
\fi 
\end{definition}
For  $\anedge=\tuple{\astate,\aguard,\anupdate,\astate'}$ in $\edges$, we  use the following notations:
\begin{itemize}
\itemsep 0 cm
\item $\source{\anedge}=\astate$, 
\item $\target{\anedge}=\astate'$, 
\item $\guard{\anedge}=\aguard$,
\item $\update{\anedge}=\anupdate$.
\end{itemize}
As usual, to a counter system $\asys=\tuple{\states,\counters_n,\edges,\alabelling}$, 
we associate a labeled transition system $\transsysof{\asys}=\tuple{\confs,\trans}$ 
where $\confs=\states \times \nat^n$ is the set of \defstyle{configurations} and 
$\trans \subseteq \confs \times \Delta \times \confs$ is the \defstyle{transition relation} defined by: 
$\triple{\pair{\astate}{\avect}}{\delta}{\pair{\astate'}{\avect'}} \in \trans$ 
(also written  $\tuple{\astate,\avect} \labtrans{\delta} \tuple{\astate',\avect'}$) iff 
\iflong
the conditions below 
are satisfied:
\begin{itemize}
\itemsep 0 cm 
\item $\astate=\source{\anedge}$ and $\astate'=\target{\anedge}$,
\item $\avect \models \guard{\anedge}$ and $\avect'=\avect + \update{\anedge}$.
\end{itemize}
\else
$\astate=\source{\anedge}$, $\astate'=\target{\anedge}$, 
$\avect \models \guard{\anedge}$ and $\avect'=\avect + \update{\anedge}$.
\fi
\iflong Note that in
\else In 
\fi 
 such a transition system, 
the counter values are non-negative since $\confs=\states \times \nat^n$.
We extend the transition relation $\trans$ to finite words of transitions in $\edges^+$ as 
follows. For each $\aword=\anedge_1 \anedge_2 \ldots \anedge_{\alpha} \in \edges^+$, 
we have $\pair{\astate}{\avect} \labtrans{\aword} \pair{\astate'}{\avect'}$ 
if there are $\aconf_0,\aconf_1,\ldots,\aconf_{\alpha+1} \in \confs$ such that 
$\aconf_i \labtrans{\anedge_i} \aconf_{i+1}$ for all $i \in \interval{0}{\alpha}$, 
$\aconf_0=(\astate,\avect)$ and $\aconf_{\alpha+1}= \pair{\astate'}{\avect'}$. We say that an 
$\omega$-word $\aword \in \edges^\omega$ 
is \defstyle{fireable} in $\asys$ from a configuration $\aconf_0 \in \states 
\times \nat^n$ if for all finite prefixes $\aword'$ of $\aword$ there exists a configuration 
$\aconf \in \states \times \nat^n$ such that $\aconf_0 \labtrans{\aword'} \aconf$. We write 
$\wordof{\aconf_0}$ to denote the set  of $\omega$-words (\defstyle{labels}) which are fireable from $\aconf_0$ in $\asys$.

Given \iflong an initial \else a \fi  configuration $\aconf_0 \in \states \times \nat^n$, a \defstyle{run}
$\arun$ starting from $\aconf_0$ in $\asys$ is an infinite path in the associated 
transition system $\transsysof{\asys}$ denoted as:
\iflong
$$
\arun := \aconf_0 \labtrans{\anedge_0} \cdots \labtrans{\anedge_{\alpha-1}} 
\aconf_{\alpha} \labtrans{\anedge_{\alpha}} \cdots 
$$
\else
$
\arun := \aconf_0 \labtrans{\anedge_0} \cdots \labtrans{\anedge_{\alpha-1}} 
\aconf_{\alpha} \labtrans{\anedge_{\alpha}} \cdots 
$
\fi 
where $\aconf_i \in \states \times \nat^n$ and 
$\anedge_i \in \edges$ for all $i \in \nat$. 
Let $\wordof{\arun}$ be the $\omega$-word $\anedge_0 \anedge_1 \ldots$ 
associated to the run $\arun$. 
Note that by definition we have $\wordof{\arun} \in \wordof{\aconf_0}$. 
When $\aregexp$ is an $\omega$-regular expression over the finite alphabet 
$\edges$ and $\aconf_0$ is an initial configuration, $\wordof{\aregexp,\aconf_0}$ 
is defined as the set of labels of infinite runs $\arun$ starting at $\aconf_0$ 
such that $\wordof{\arun}$ belongs to the language defined by $\aregexp$. So
$\wordof{\aregexp,\aconf_0} \subseteq \wordof{\aconf_0}$.

We  say that a counter system is \defstyle{flat} if every node in the 
underlying graph belongs to at most one simple cycle (a cycle being simple if no edge is 
repeated twice in it) \cite{Comon&Jurski98}. In a flat counter system, simple cycles can be organized as a DAG
where two simple cycles are in the relation whenever there is path between
a node of the first cycle and a node of the second cycle.
We denote by $\flatcs$ the class of flat counter systems.
 
\iflong
Below, we present the control graph of a flat counter system (guards and updates are omitted).
\begin{center}
\scalebox{1}{
\begin{tikzpicture}[->,>=stealth',shorten >=1pt,
node distance=1cm, thick,auto,bend angle=60]

  \tikzstyle{every state}=[fill=white,draw=black,text=black]
  \node[state] (q1) [below]    {$\astate_1$};
  \node[state] (q2) [above right= of q1]    {$\astate_2$};
  \node[state] (q3) [below right= of q1]    {$\astate_3$};
  \node[state] (q4) [right= of q2]    {$\astate_4$};
  \node[state] (q5) [right= of q3]    {$\astate_5$};
  \node[state] (q6) [right=3.5cm  of q1]    {$\astate_6$};

  \path[->] (q1) edge node {} (q2);
  \path[->] (q1) edge node {} (q3);
  \path[->] (q2) edge node {} (q4);
  \path[->] (q2) edge node {} (q5);
  \path[->] (q3) edge node {} (q5);
  \path[->] (q3) edge node {} (q4);
  \path[->] (q4) edge node {} (q6);
  \path[->] (q5) edge node {} (q6);
  \path[->] (q2) edge [loop above] node {} (q2);
  \path[->] (q4) edge [loop above] node {} (q4);
  \path[->] (q6) edge [loop above] node {} (q6);
  \path[->] (q3) edge [loop below] node {} (q3);
  \path[->] (q5) edge [loop below] node {} (q5);
               
\end{tikzpicture}
}
\end{center}
\else
\begin{minipage}{3.5cm}
\scalebox{0.5}{
\begin{tikzpicture}[->,>=stealth',shorten >=1pt,
node distance=1cm, thick,auto,bend angle=60]

  \tikzstyle{every state}=[fill=white,draw=black,text=black]
  \node[state] (q1) [below]    {$\astate_1$};
  \node[state] (q2) [above right= of q1]    {$\astate_2$};
  \node[state] (q3) [below right= of q1]    {$\astate_3$};
  \node[state] (q4) [right= of q2]    {$\astate_4$};
  \node[state] (q5) [right= of q3]    {$\astate_5$};
  \node[state] (q6) [right=3.5cm  of q1]    {$\astate_6$};

  \path[->] (q1) edge node {} (q2);
  \path[->] (q1) edge node {} (q3);
  \path[->] (q2) edge node {} (q4);
  \path[->] (q2) edge node {} (q5);
  \path[->] (q3) edge node {} (q5);
  \path[->] (q3) edge node {} (q4);
  \path[->] (q4) edge node {} (q6);
  \path[->] (q5) edge node {} (q6);
  \path[->] (q2) edge [loop above] node {} (q2);
  \path[->] (q4) edge [loop above] node {} (q4);
  \path[->] (q6) edge [loop above] node {} (q6);
  \path[->] (q3) edge [loop below] node {} (q3);
  \path[->] (q5) edge [loop below] node {} (q5);
               
\end{tikzpicture}
}
\end{minipage}
\fi
\iflong
A \defstyle{Kripke structure} $\asys$  is a tuple 
 $\tuple{\states,\edges,\alabelling}$ where $\edges \subseteq \states \times \states$ and $\alabelling$
is labelling. It can be viewed as a degenerate form of counter systems without counters (in the sequel,
we take the freedom to see them as counter systems). All standard notions on counter systems naturally apply
to Kripke structures too (configuration, run, flatness, etc.). In the sequel, we shall also investigate the complexity
of model-checking problems on flat Kripke structures (such a class is denoted by $\flatks$). 
\else
\begin{minipage}{8.3cm}
On the left, we present the control graph of a flat 
counter system (guards and updates are omitted).
 A \defstyle{Kripke structure} $\asys$  is a tuple 
 $\tuple{\states,\edges,\alabelling}$ where $\edges \subseteq \states \times \states$ and $\alabelling$
is labelling. It can be viewed as a degenerate form of counter systems without counters (in the sequel,
we take the freedom to see them as counter systems). All standard notions on counter systems naturally apply
to Kripke structures too (configuration, run, flatness, etc.). In the sequel, we shall also investigate the complexity
of model-checking problems on flat Kripke structures (such a class is denoted by $\flatks$). 
\end{minipage}
\fi 
\subsection{Linear-Time Temporal Logic with Past and Arithmetical Constraints}
\iflong
Model-checking problem for Past LTL over finite state systems is 
known to be \pspace-complete~\cite{Sistla&Clarke85}. In spite of this nice feature,
a propositional variable $\avarprop$ only represents an abstract property 
about the current configuration of the system.
A more satisfactory solution is to include in the logical language the possibility to express
directly constraints between variables of the program, and doing so refining the standard abstraction made with propositional variables. 
 When the variables are typed, they may be interpreted in some
specific domain like integers, strings and so on; reasoning in such theories can be performed
thanks to satisfiability modulo theories proof techniques, see e.g.,~\cite{Ghilardietal08}
in which SMT solvers are used for model-checking infinite-state systems.
Hence, the basic idea behind the design of the logic  $\PLTL[\counters]$ is to refine
the language of atomic formulae and to allow  comparisons of counter values.
Similar motivations can be found in the introduction of concrete domains in description 
logics, that are logic-based formalisms for knowledge representation~\cite{Baader&Hanschke91,Lutz01b}.
We define below a version of linear-time temporal logic LTL dedicated to counter systems
in which the atomic formulae are linear constraints and the temporal operators
are those of LTL. Note that capacity constraints 
from~\cite{Dixon&Fisher&Konev07} are arithmetical constraints different from those
defined below. 
\else
Model-checking problem for Past LTL over finite state systems is 
known to be \pspace-complete. In spite of this nice feature,
a propositional variable $\avarprop$ only represents an abstract property 
about the current configuration of the system.
A more satisfactory solution is to include in the logical language the possibility to express
directly constraints between variables of the program, whence giving up the standard abstraction
made with propositional variables. We define below a version of LTL dedicated to counter systems
in which the atomic formulae are linear constraints; this is analogous to the use
of concrete domains in description logics~\cite{Lutz01b}. Note that capacity constraints 
from~\cite{Dixon&Fisher&Konev07} are arithmetical constraints different from those
defined below. 
\fi 
\iflong

The formulae of the logic $\PLTL[\counters]$ are defined as follows:
$$
\begin{array}{lcl}
\aformula & ::= &  \avarprop~ \mid~ \aguard \mid~ \neg \aformula~ \mid~ 
\aformula \wedge \aformula~ \mid~ 
\aformula \vee \aformula~ \mid~ \mynext  \aformula~ \mid~ \aformula \until \aformula~ \mid~ \myprevious  \aformula~ \mid~ \aformula \since \aformula~  
\end{array}
$$
where $\avarprop \in \varprop$ and $\aguard \in \guards(\counters_n)$ for some $n$.
\else
Formulae of  $\PLTL[\counters]$ are defined from
$
\aformula \ ::= \  \avarprop~ \mid~ \aguard \mid~ \neg \aformula~ \mid~ 
\aformula \wedge \aformula~ \mid~ 
\aformula \vee \aformula~ \mid~ \mynext  \aformula~ \mid~ \aformula \until \aformula~ \mid~ \myprevious  \aformula~ \mid~ \aformula \since \aformula~  
$
where $\avarprop \in \varprop$ and $\aguard \in \guards(\counters_n)$ for some $n$.
\fi  
We may use the standard abbreviations $\sometimes$, $\always$,  $\always^{-1}$  etc. For instance,
the formula $\always \sometimes (\acounter_1 + 2 \geq \acounter_2)$
states that infinitely often the value of counter 1 plus 2 is greater than the value of counter 2.
The past-time operators $\since$ and $\myprevious$ do not 
add  expressive power to the logic 
\iflong itself~\cite{Gabbay87},
\else itself,
\fi 
but it is known that it helps a lot to express properties succinctly,
see e.g.~\cite{Laroussinie&Schnoebelen00,Laroussinie&Markey&Schnoebelen02}. 
The temporal depth of $\aformula$, written $\td{\aformula}$, is defined
as the maximal number of imbrications of temporal operators in $\aformula$. 
Restriction of $\PLTL[\counters]$ to atomic formulae from $\varprop$ only is written
$\PLTL[\emptyset]$, it corresponds to the standard version of LTL with past-time operators. 
Models of $\PLTL[\counters]$ are essentially abstractions of runs
from counter systems, i.e. $\omega$-sequences $\amodel: \Nat \rightarrow \powerset{\varprop} \times
\Nat^{\counters}$. Given a model $\amodel$ and a position $i \in \Nat$, 
the satisfaction relation $\models$ for $\PLTL[\counters]$ is defined as follows 
(Boolean clauses are omitted):
\iflong
$$
\begin{array}{rcl}
\amodel,i \models~ \avarprop & ~\defeq~ & \avarprop \in \pi_1(\amodel(i))\\
\amodel,i \models~ \aguard & ~\defeq~ & \avect_i \models \aguard \ {\rm where} \ 
\avect_i(\acounter_j) =  \pi_2(\amodel(i))(\acounter_j) \\
\amodel,i \models~ \mynext \aformula & ~\defeq~ & \amodel,i+1 \models~ \aformula \\
\amodel,i \models~ \aformula_1 \until \aformula_2 & ~\defeq~ & 
 \amodel,j \models~ \aformula_2 \mbox{ for some }  i \leq j ~ \\
 & & \mbox{ such that } ~ \amodel,k \models~ \aformula_1 \mbox{ for all } i \leq k <j \\
\amodel,i \models~ \myprevious \aformula & ~\defeq~ & i>0 \mbox{ and } \amodel,i-1 \models~ \aformula \\
\amodel,i \models~ \aformula_1 \since \aformula_2 & ~\defeq~ & 
 \amodel,j \models~ \aformula_2 \mbox{ for some }  0 \leq j \leq i ~ \\
 & & \mbox{ such that } ~ \amodel,k \models~ \aformula_1 \mbox{ for all }  j < k \leq i \\
\end{array}
$$
\else
\begin{itemize}
\itemsep 0 cm
\item $\amodel,i \models~ \avarprop$ $\defeq$  $\avarprop \in \pi_1(\amodel(i))$,
      $\amodel,i \models~ \aguard$ $\defeq$  $ \avect_i \models \aguard$ where
$\avect_i[j] \egdef  \pi_2(\amodel(i))(\acounter_j)$,
\item $\amodel,i \models~ \mynext \aformula$ $\defeq$ $\amodel,i+1 \models~ \aformula$, 
\item $\amodel,i \models~ \aformula_1 \since \aformula_2$ $\defeq$  
 $\amodel,j \models~ \aformula_2$ for some $ 0 \leq j \leq i$ s.t.  
 $\amodel,k \models~ \aformula_1, \forall  j < k \leq i$.
\end{itemize}
\fi
 
Given \iflong a counter system \fi  $\tuple{\states, \counters_n, \edges, \alabelling}$
and a run $\arun:= \pair{\astate_0}{\avect_0} \labtrans{\anedge_0} 
\cdots \labtrans{\anedge_{p-1}} \pair{\astate_p}{\avect_p} \labtrans{\anedge_p} \cdots$,
we consider the model $\amodel_{\arun}: \Nat \rightarrow  \powerset{\varprop} \times
\Nat^{\counters}$ such that $\pi_1(\amodel_{\arun}(i)) \egdef \alabelling(\astate_i)$
and $\pi_2(\amodel_{\arun}(i))(\acounter_j) \egdef \avect_i[j]$ for all $j \in \interval{1}{n}$ and  
all $i \in \Nat$. 
Note that  $\pi_2(\amodel_{\arun}(i))(\acounter_j)$ is arbitrary for $j \not \in \interval{1}{n}$.
As expected, we extend the satisfaction relation to runs so that  $\arun, i \models \aformula$
$\equivdef$ $\amodel_{\arun}, i \models \aformula$ whenever $\aformula$ is built from counters
in $\counters_n$. 

\iflong
The verification problem we are interested in is the model-checking problem 
for $\PLTL[\counters]$ over counter systems, written $\mc{\logicfrag}{\csfrag}$, 
where $\logicfrag$ is a fragment of $\PLTL[\counters]$ and $\csfrag$ is a class of counter systems.
$\mc{\logicfrag}{\csfrag}$ is defined as follows:
\begin{description}
\itemsep 0 cm 
\item[Input:] A counter system $\asys \in \csfrag$, a configuration $\aconf_0$ and a formula $\aformula \in \logicfrag$;
\item[Output:] Does there exist a run $\arun$ starting from $\aconf_0$ in $\asys$ such that $\arun,0 \models \aformula$?
\end{description}
If the answer is "yes", we will write $\asys,\aconf_0 \models \aformula$. It is known that for the full class of counter systems, the \iflong model-checking \fi problem is undecidable; this is due to the 
fact that  reachability of a control state is undecidable for counter systems
 manipulating at 
\iflong least two counters~\cite{minsky-computation-67}. 
\else  least two counters.
\fi 
On the other hand, some 
restrictions can lead to 
decidability of this problem. This is the case for flat counter systems, for whom it is 
proved in \cite{demri-model-10} that the model-checking problem of some temporal logic more expressive than 
$\PLTL[\counters]$ is decidable. Unfortunately the decision 
procedure proposed in~\cite{demri-model-10} involves an exponential reduction to the 
satisfiability problem for some formulae of the Presburger arithmetic and as a consequence has a  high complexity.
\else
Given a fragment  $\logicfrag$  of $\PLTL[\counters]$  and 
 a class $\csfrag$ of counter systems, we write $\mc{\logicfrag}{\csfrag}$ to denote 
the existential model checking problem: given  $\asys \in \csfrag$,  a configuration $\aconf_0$ and 
$\aformula \in \logicfrag$,  does there exist $\arun$ starting from $\aconf_0$  such that 
$\arun,0 \models \aformula$?
In that case, we write $\asys,\aconf_0 \models \aformula$. 
It is known that for the full class of counter systems, the model-checking problem is undecidable,
see e.g.~\cite{minsky-computation-67}.  
Some 
restrictions, such as flatness,  can lead to decidability as shown in~\cite{demri-model-10} but  
the decision 
procedure there involves an exponential reduction to Presburger Arithmetic, whence
the  high complexity.
\fi 
\begin{theorem}\cite{demri-model-10,Kuhtz&Finkbeiner11}
$\mc{\PLTL[\counters]}{\flatcs}$ can be solved in $4$\exptime.\\
$\mc{\PLTL[\emptyset]}{\flatks}$ restricted to formulae with temporal operators $\until$,$\mynext$
is \np-complete.
\end{theorem}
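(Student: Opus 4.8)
The statement gathers two independent facts, so the plan is to prove them separately.

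For the $4$\exptime \ upper bound on $\mc{\PLTL[\counters]}{\flatcs}$, I would reduce to satisfiability in Presburger arithmetic, as in~\cite{demri-model-10}. Because the system is flat, every control state lies on at most one simple cycle, so there are at most $\card{\states}$ simple cycles, arranged as a DAG; hence the infinite runs decompose as the runs following finitely many \emph{path schemas}, each being an alternation of at most $\card{\states}$ simple loops joined by simple segments, and so of size polynomial in $\size{\asys}$ --- even though there may be exponentially many schemas. For a fixed schema $\aschema$ I would build an existential Presburger formula $\aformula_{\aschema}$ whose free variables are the iteration counters of the loops of $\aschema$ and which conjoins (i) the linear system expressing that all guards along $\aschema$ hold and all counter values stay non-negative for those iteration counters, and (ii) the assertion that the ultimately periodic model induced by such a run satisfies $\aformula$. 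This formula has size at most exponential in $\size{\asys}+\size{\aformula}$, and the disjunction over all schemas remains exponential. Composing this exponential blow-up with the triple-exponential-time decision procedure for Presburger arithmetic gives the $4$\exptime \ bound; since $\PLTL[\counters]$ embeds into the logic treated in~\cite{demri-model-10}, nothing more is needed.

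For $\mc{\PLTL[\emptyset]}{\flatks}$ restricted to $\until,\mynext$, \np \ membership (from~\cite{Kuhtz&Finkbeiner11}) uses the same decomposition with no arithmetic to track. The nondeterministic procedure guesses a path schema $\aschema$ of the flat Kripke structure reachable from $\aconf_0$ --- of polynomial size by flatness --- together with, for every loop of $\aschema$ other than the last, an iteration count bounded by a value $B$ polynomial in $\size{\asys}$ and $\tempdepth{\aformula}$; the last loop is unfolded forever. This data yields a lasso-shaped model $\amodel$ with stem and period of polynomial length, and one checks $\amodel,0\models\aformula$ in deterministic polynomial time, model-checking a fixed ultimately periodic model against an LTL formula being in \ptime~\cite{Laroussinie&Markey&Schnoebelen02}. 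Soundness is clear; for completeness, any run realising $\aformula$ follows some path schema, and the stuttering principle of~\cite{Kucera&Strejcek05} --- the key lemma --- shows that lowering every non-final iteration count to $B$ preserves satisfaction of $\aformula$, because a formula of temporal depth $d$ cannot distinguish a loop iterated at least $B$ times from the same loop iterated one more time. The crux is precisely this stuttering bound, and the fact that it stays polynomial even in the presence of $\mynext$, where nested next-operators may step across loop boundaries; this is what keeps the guessed certificate of polynomial size.

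For \np-hardness of the second problem I would give a straightforward polynomial reduction from \textsc{Sat}: a flat (in fact acyclic) Kripke structure consisting of a chain of $n$ diamonds, the $i$-th one offering two branches of equal length whose labels record whether $x_i$ is set to true or to false, so that every path encodes a truth assignment; a CNF is then translated into a conjunction of disjunctions of formulae of the form $\mynext^{m}\avarprop$ (only $\mynext$ and Boolean connectives), each pointing to the branch-recording state of the relevant diamond, and this formula is realised along a path iff the CNF is satisfiable. The delicate half of the theorem is the \np \ part: both the polynomial per-schema bound in the path-schema decomposition and the polynomial stuttering bound are indispensable, and either failing would break \np \ membership; by contrast, once the Presburger encoding is set up the $4$\exptime \ half is routine.
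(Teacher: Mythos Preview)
The paper presents this theorem as a cited result and gives no proof of its own, so there is no ``paper's proof'' to compare against directly; your sketches are largely faithful to the referenced works, with two points worth flagging.

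For the $4$\exptime\ part, calling the Presburger formula you build \emph{existential} is inconsistent with then invoking the triple-exponential decision procedure for full Presburger arithmetic. Encoding ``the ultimately periodic model induced by these iteration counters satisfies $\aformula$'' in the naive way requires alternating quantifiers over position indices (the semantics of $\until$ is $\exists\forall$), with alternation depth roughly $\tempdepth{\aformula}$; this is how~\cite{demri-model-10} proceeds. If the formula were truly existential of exponential size, you would get \textsc{NExpTime}, not $4$\exptime. Drop the word ``existential'' and your argument stands.

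For \np-hardness you take a genuinely different route from the paper and from~\cite{Kuhtz&Finkbeiner11}. Their reduction (reproduced in the paper as Lemma~\ref{lemma-nphardness}) encodes a truth assignment by whether each of $n$ self-loops is taken once or twice, and the formula uses $\eventually$ and $\mynext$ to read those choices off. Your diamond gadget instead encodes the assignment by which branch of each of $n$ diamonds the run traverses, read off by stacks of $\mynext$ only. Both work; yours even proves hardness for the $\mynext$-only fragment. One omission: your structure as described is acyclic, hence has no infinite runs at all under the paper's semantics --- you must append a terminal self-loop.
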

\iflong
The main goal of this work is to show that we can have a much better upper bounded for 
$\mc{\PLTL[\counters]}{\flatcs}$ and to give in fact the precise complexity of this problem
and for related fragments. 
\else
Our main goal is to characterize the complexity of $\mc{\PLTL[\counters]}{\flatcs}$.%
\fi 
%

\section{Stuttering Theorem for $\PLTL[\emptyset]$}
\label{sec:stuttering}
\iflong

Stuttering of finite words or single letters has been instrumental to show \iflong several \fi 
results about the expressive power of $\PLTL[\emptyset]$ fragments, see e.g.~\cite{Peled&Wilke97,Kucera&Strejcek05};
for instance, $\PLTL[\emptyset]$ restricted to the temporal operator $\until$ characterizes \iflong exactly \fi 
the class of formulae defining  classes of models invariant under stuttering. 
This is refined in~\cite{Kucera&Strejcek05} for $\PLTL[\emptyset]$ restricted to $\until$ and 
$\mynext$, by taking into account not only the $\until$-depth but also the $\mynext$-depth of formulae
and by introducing a principle of stuttering that involves both letter stuttering and word stuttering. 
In this section, we establish another substantial generalization that involves the full logic $\PLTL[\emptyset]$ (with its past-time
temporal operators). Roughly speaking, we show that if
$\amodel_1 \mathbf{s}^M \amodel_2, 0 \models \aformula$ where 
$\amodel_1 \mathbf{s}^M \amodel_2$ is a $\PLTL[\emptyset]$ model ($\amodel_1,\mathbf{s}$ being finite words), 
$\aformula \in \PLTL[\emptyset]$,
$\tempdepth{\aformula} \leq N$ and $M \geq 2N+1$,
then $\amodel_1 \mathbf{s}^{2N+1} \amodel_2, 0 \models \aformula$ (and other related properties). 
Hence,  if there is a run 
\begin{enumerate}
\itemsep 0 cm
\item[(a)] satisfying a path schema $\aschema$ (see Section~\ref{section-path-schemas}) 
and,
\item[(b)] verifying
a $\PLTL[\emptyset]$ formula $\aformula$,
\end{enumerate}
 then there is a run satisfying (a), (b) and each loop is visited
at most $2 \times \tempdepth{\aformula}+ 5$ times, leading to an \np \ upper bound (see
Proposition~\ref{proposition-iter-flatks}).
This extends a result without past-time operators~\cite{Kuhtz&Finkbeiner11}.
Moreover, this turns out to be a key property (Theorem~\ref{theorem-stuttering})
to establish the \np \ upper bound even in the
presence of 
counters (but additional work needs to be done). 
Note that Theorem~\ref{theorem-stuttering} below is interesting for its own sake, independently of our
investigation on flat counter systems. 


Given $M,M',N \in \Nat$, we write $M \equivrel{N} M'$ iff
$\mathtt{Min}(M,N) = \mathtt{Min}(M',N)$. Given 
$\aword= \aword_1 \awordbis^M \aword_2, \aword'=\aword_1 \awordbis^{M'} \aword_2 \in \aalphabet^{\omega}$
and $i,i' \in \Nat$, we define an equivalence relation $\pair{\aword}{i} \equivrel{N} \pair{\aword'}{i'}$
(implicitly parameterized by $\aword_1$, $\aword_2$ and $\awordbis$) such that 
$\pair{\aword}{i} \equivrel{N} \pair{\aword'}{i'}$ means that
the number of copies of $\awordbis$ before position $i$ and 
the number of copies of $\awordbis$ before position $i'$
are related by $~\equivrel{N}~$ and the same applies for the number of copies after the 
positions. Moreover, if $i$ and $i'$ occur in the part where $\awordbis$ is repeated, then they correspond to 
identical positions in $\awordbis$. More formally,
$\pair{\aword}{i} \equivrel{N} \pair{\aword'}{i'}$ $\equivdef$ $M \equivrel{2N} M'$ and one of the following conditions
holds true: 
\begin{enumerate}
\itemsep 0 cm
\item $i,i' < \length{\aword_1} + N \cdot \length{\awordbis}$ and $i=i'$. 
\item $i \geq \length{\aword_1} +  (M-N) \cdot \length{\awordbis}$ and $i' \geq \length{\aword_1} +  (M'-N) \cdot \length{\awordbis}$ and $(i - i') = (M - M') \cdot \length{\awordbis}$.
\item $\length{\aword_1} + N \cdot \length{\awordbis} \leq i < \length{\aword_1} +  (M-N) \cdot \length{\awordbis}$ and $\length{\aword_1} + N \cdot \length{\awordbis} \leq i' < \length{\aword_1} +  (M'-N) \cdot \length{\awordbis}$ and $|i - i'| = 0 \mod \length{\awordbis}$.
\end{enumerate}

Figure~\ref{fig:stut-rel} presents two words 
$\aword$ and $\aword'$ over the alphabet $\aalphabet = \set{\Box, \blacksquare}$
such that $\aword$ is of the form $\aword_1 (\Box \blacksquare)^7 \aword_2$ and 
$\aword'$ is of the form $\aword_1 (\Box \blacksquare)^8 \aword_2$. 
The relation $\equivrel{3}$ is represented by edges between positions: 
each edge from positions $i$ of $\aword$ to positions $i'$ of $\aword'$
represents the fact that $\pair{\aword}{i} \equivrel{3} \pair{\aword'}{i'}$.
\begin{figure}
  \begin{center}
    \begin{tikzpicture}[shorten >=1pt,node distance=0.4mm,auto,>=stealth',
vertex1/.style={minimum size=2pt,draw,rectangle},vertex2/.style={minimum size=2pt,fill,draw,rectangle}]


  \draw (0,0.2) -- node {$w_1$} (1.4,0.2);
  \draw (0,0) -- (0,0.4);
  \draw (1.4,0) -- (1.4,0.4);
  \node (st) at (1.4,0.2) {};
  \node           (d1) [right=of st] {};
  \node[vertex1]  (q1)  [right=of d1] {};
  \node[vertex2]  (q2)  [right=of q1] {};
  \node           (d11) [right=of q2] {};                
  \node[vertex1]  (q3)  [right=of d11] {};
  \node[vertex2]  (q4)  [right=of q3] {};
  \node           (d12) [right=of q4] {};
  \node[vertex1]  (q5)  [right=of d12] {};
  \node[vertex2]  (q6)  [right=of q5] {};
  \node           (d13) [right=of q6] {};
  \node  (endn)  [right=of d13] {$|$};
  \node           (d14) [right=of endn] {};
  \node[vertex1]  (q7)  [right=of d14] {};
  \node[vertex2]  (q8)  [right=of q7] {};
  \node           (d15) [right=of q8] {};
  \node[vertex1]  (q9)  [right=of d15] {};
  \node[vertex2]  (q10)  [right=of q9] {};
  \node           (d16) [right=of q10] {};
  \node  (startn)  [right=of d16] {$|$};
  \node           (d17) [right=of startn] {};
  \node[vertex1]  (q11)  [right=of d17] {};
  \node[vertex2]  (q12)  [right=of q11] {};
  \node           (d18) [right=of q12] {};
  \node[vertex1]  (q13)  [right=of d18] {};
  \node[vertex2]  (q14)  [right=of q13] {};
  \node           (d19) [right=of q14] {};
  \node[vertex1]  (q15)  [right=of d19] {};
  \node[vertex2]  (q16)  [right=of q15] {};
  \node           (d10) [right=of q16] {};
\node (end) [right=of d10] {$|$};

           (q3)   edge  node  {} (q4)
           (q5)   edge  node  {} (q6)
          (q7)   edge  node  {} (q8)
          (q9)   edge  node  {} (q10)
          (q11)   edge  node  {} (q12)
          (q13)   edge  node  {} (q14)
          (q15)   edge  node  {} (q16);
\draw (11.6,0.2) -- node {$w_2$} (13,0.2);
\draw (13,0) -- (13,0.4);


  \draw (0,2.2) -- node {$w_1$} (1.4,2.2);
  \draw (0,2) -- (0,2.4);
  \draw (1.4,2) -- (1.4,2.4);
  \node (st0) at (1.4,2.2) {};
  \node           (d2) [right=of st0] {};
  \node[vertex1]  (q01)  [right=of d2] {};
  \node[vertex2]  (q02)  [right=of q01] {};
  \node           (d21)  [right=of q02] {};
  \node[vertex1]  (q03)  [right=of d21] {};
  \node[vertex2]  (q04)  [right=of q03] {};
  \node           (d22)  [right=of q04] {};
  \node[vertex1]  (q05)  [right=of d22] {};
  \node[vertex2]  (q06)  [right=of q05] {};
  \node           (d23)  [right=of q06] {};
  \node  (endn)  [right=of d23] {$|$};
  \node           (d24)  [right=of endn] {};
  \node  (q07)  [right=of d24] {};
  \node[vertex1]  (q08)  [right=of q07] {};
  \node[vertex2]  (q09)  [right=of q08] {};
  \node  (q010)  [right=of q09] {};
  \node           (d25)  [right=of q010] {};
  \node           (d26)  [right=of d25] {};
  \node  (startn)  [right=of d26] {$|$};
  \node           (d27)  [right=of startn] {};
  \node[vertex1]  (q011)  [right=of d27] {};
  \node[vertex2]  (q012)  [right=of q011] {};
  \node           (d28)  [right=of q012] {};
  \node[vertex1]  (q013)  [right=of d28] {};
  \node[vertex2]  (q014)  [right=of q013] {};
  \node           (d29)  [right=of q014] {};
  \node[vertex1]  (q015)  [right=of d29] {};
  \node[vertex2]  (q016)  [right=of q015] {};
  \node           (d20)  [right=of q016] {};

\node (end) [right=of d20] {$|$};

           (q03)   edge  node  {} (q04)
           (q05)   edge  node  {} (q06)
           (q08)   edge  node  {} (q09)
           (q011)   edge  node  {} (q012)
           (q013)   edge  node  {} (q014)
          (q015)   edge  node  {} (q016);
\draw (11.6,2.2) -- node {$w_2$} (13,2.2);
\draw (13,2) -- (13,2.4);


\path[-]  (q01)   edge  node  {} (q1)
           (q02)   edge  node  {} (q2)
           (q03)   edge  node  {} (q3)
           (q04)   edge  node  {} (q4)
           (q05)   edge  node  {} (q5)
           (q06)   edge  node  {} (q6)
           (q08)   edge  node  {} (q7)
           (q08)   edge  node  {} (q9)
           (q09)   edge  node  {} (q8)
           (q09)   edge  node  {} (q10)
           (q011)   edge  node  {} (q11)
           (q012)   edge  node  {} (q12)
           (q013)   edge  node  {} (q13)
           (q014)   edge  node  {} (q14)
           (q015)   edge  node  {} (q15)
           (q016)   edge  node  {} (q16);

  
\end{tikzpicture}
  \end{center}
  \caption{Two words $\aword$, $\aword'$ with $\awordbis = \Box \blacksquare$ and the relation $\equivrel{3}$}
  \label{fig:stut-rel}
\end{figure}


In order to prove our stuttering theorem for $\PLTL[\emptyset]$, we need to express some properties concerning the relation $\equivrel{}$ whose proofs can be found in the subsequent subsections. Let 
$\aword= \aword_1 \awordbis^M \aword_2, \aword'=\aword_1 \awordbis^{M'} \aword_2 \in \aalphabet^{\omega}$, 
$i,i' \in \Nat$ and $N \geq 2$ such that $M,M' \geq 2N+1$ and $\pair{\aword}{i} \equivrel{N} \pair{\aword'}{i'}$. We can show the following properties:

\begin{description}
\itemsep 0 cm
\item[(Claim 1)]  $\pair{\aword}{i} \equivrel{N-1} \pair{\aword'}{i'}$ and $\aword(i) = \aword'(i')$.
\item[(Claim 2)]  $i,i'>0$ implies $\pair{\aword}{i-1} \equivrel{N-1} \pair{\aword'}{i'-1}$.
\item[(Claim 3)]  $\pair{\aword}{i+1} \equivrel{N-1} \pair{\aword'}{i'+1}$
\item[(Claim 4)]  For all $j \geq i$, there is $j' \geq i'$ such that
                  $\pair{\aword}{j} \equivrel{N-1} \pair{\aword'}{j'}$ and 
                  for all $k' \in \interval{i'}{j'-1}$, there is
                  $k \in  \interval{i}{j-1}$ such that 
                  $\pair{\aword}{k} \equivrel{N-1} \pair{\aword'}{k'}$.
\item[(Claim 5)] For all $j \leq i$, there is $j' \leq i'$ 
      such that  $\pair{\aword}{j} \equivrel{N-1} \pair{\aword'}{j'}$ and 
       for all $k' \in \interval{j'-1}{i'}$, there is
            $k \in  \interval{j-1}{i}$ such that $\pair{\aword}{k} \equivrel{N-1} \pair{\aword'}{k'}$.
\end{description}

We now state our stuttering theorem for $\PLTL[\emptyset]$ that is tailored for our future needs.
\begin{theorem}[Stuttering]
\label{theorem-stuttering}
Let $\amodel = \amodel_1 \mathbf{s}^M \amodel_2, \amodel'= \amodel_1  \mathbf{s}^{M'} 
\amodel_2 \in 
(\powerset{\varprop})^{\omega}$
and  $i,i' \in \Nat$ such that $N \geq 2$, $M,M' \geq 2N+1$ and
$\pair{\amodel}{i} \equivrel{N} \pair{\amodel'}{i'}$. Then,
for every  $\PLTL[\emptyset]$ formula $\aformula$ with $\tempdepth{\aformula} \leq N$, we have
$\amodel, i \models \aformula$ iff $\amodel', i \models \aformula$. 
\end{theorem}

\begin{proof} (sketch) The proof is by structural induction on the formula but first we need to establish 
properties

By way of example, let us 
present the induction step for subformulae of the form $\aformulabis_1 \until \aformulabis_2$. 
We show that $\amodel, i \models \aformulabis_1 \until \aformulabis_2$ implies 
 $\amodel', i' \models \aformulabis_1 \until \aformulabis_2$. 
Suppose there is  $j \geq i$ such that $\amodel, j \models \aformulabis_2$ and 
for every $k \in \interval{i}{j-1}$, we have $\amodel, k \models \aformulabis_1$. 
There is $j' \geq i'$ satisfying (Claim 4).  
Since $\tempdepth{\aformulabis_1}, \tempdepth{\aformulabis_2} \leq N-1$, 
by (IH), we have  $\amodel', j' \models \aformulabis_2$.
Moreover, for every $k' \in \interval{i'}{j'-1}$, there is
                  $k \in  \interval{i}{j-1}$ such that 
                  $\pair{\aword}{k} \equivrel{N-1} \pair{\aword'}{k'}$ and 
by (IH), we have $\amodel', k' \models \aformulabis_1$ 
for every $k' \in \interval{i'}{j'-1}$. 
Hence, $\amodel', i' \models \aformulabis_1 \until \aformulabis_2$.
\end{proof}

An alternative proof consists in using 
Ehrenfeucht-Fra\"{\i}ss{\'e} games~\cite{Etessami&Wilke00}. This will not provide necessarily
a shorter proof and it requires to use properties from the games in~\cite{Etessami&Wilke00}.
The forthcoming proofs for claims are self-contained. 

\subsection{A  Zone Classification for Proving (Claim 1) -- (Claim 5)}

For the proofs of (Claim 1) -- (Claim 5), 
the positions of each word $\aword$ of the form 
$\aword =\aword_1 \awordbis^M \aword_2 \in \aalphabet^{\omega}$  ($\aword_1 \in \aalphabet^*$, 
$\awordbis \in \aalphabet^+$ and $\aword_2 \in \aalphabet^{\omega}$)
with $M >  2N$ are partitionned into five zones (A, B, C, D and E).
We also assume that $N \geq 2$. 
Indeed, given that $\pair{\aword}{i} \equivrel{N} \pair{\aword'}{i'}$,
we shall proceed  by a case analysis on the positions $i$ and $i'$ 
depending on which zones $i$ and $i'$ belong to. 
The definition of zones is illustrated on Figure~\ref{fig:stut-zones} and here is the formal
characterization: 
\begin{itemize}
\itemsep 0 cm 
\item Zone A corresponds to the set of  positions $i \in \nat$ such that 
      $0 \leq i < \length{\aword_1} + (N-1) \cdot \length{\awordbis}$.
\item Zone B corresponds to the set of positions $i \in \nat$ such that 
      $\length{\aword_1} + (N-1) \cdot \length{\awordbis} \leq i < \length{\aword_1} + N \cdot \length{\awordbis}$.
\item Zone C corresponds to the set of  positions $i \in \nat$ such that $\length{\aword_1} + N \cdot \length{\awordbis} \leq i < \length{\aword_1} + (M-N) \cdot\length{\awordbis} $.
\item Zone D corresponds to the set of positions $i \in \nat$ such that 
      $\length{\aword_1} + (M-N) \cdot \length{\awordbis} \leq i < \length{\aword_1} + (M-(N-1)) \cdot \length{\awordbis} $.
\item Zone E corresponds to the set of positions $i \in \nat$ such that 
      $\length{\aword_1} + (M-(N-1)) \cdot \length{\awordbis} \leq i$.
\end{itemize}

\begin{figure}
  \begin{center}
    \begin{tikzpicture}[shorten >=1pt,node distance=0.4mm,auto,>=stealth',
vertex1/.style={minimum size=2pt,draw,rectangle},vertex2/.style={minimum size=2pt,fill,draw,rectangle}]

\filldraw[fill=black!20, draw=black!20] (-0.2,-0.2) rectangle +(3.9,0.8)
                                        (3.8,-0.2) rectangle +(0.9,0.8)
                                        (5.3,-0.2) rectangle +(2.3,0.8)
                                        (8.2,-0.2) rectangle +(0.9,0.8)
                                        (9.2,-0.2) rectangle +(3.9,0.8);
  \node (zoneA) at (2,0.8) {A};
  \node (zoneB) at (4.25,0.8) {B};
  \node (zoneC) at (6.4,0.8) {C};
  \node (zoneD) at (8.7,0.8) {D};
  \node (zoneE) at (11.2,0.8) {E};

  \draw (0,0.2) -- node {} (1.4,0.2);
  \draw (0,0) -- (0,0.4);
  \draw (1.4,0) -- (1.4,0.4);
  \node (st) at (1.4,0.2) {};
  \node           (d1)  [right=of st] {};
  \node[vertex1]  (q1)  [right=of d1] {};
  \node[vertex2]  (q2)  [right=of q1] {};
  \node           (d2)  [right=of q2] {};
  \node[vertex1]  (q3)  [right=of d2] {};
  \node[vertex2]  (q4)  [right=of q3] {};
  \node           (d3)  [right=of q4] {};
  \node[vertex1]  (q5)  [right=of d3] {};
  \node[vertex2]  (q6)  [right=of q5] {};
  \node           (d4)  [right=of q6] {};
  \node  (endn)  [right=of d4] {$|$};
  \node           (d5)  [right=of endn] {};
  \node[vertex1]  (q8)  [right=of d5] {};
  \node[vertex2]  (q9)  [right=of q8] {};
  \node           (d6)  [right=of q9] {};
  \node[vertex1]  (q10)  [right=of d6] {};
  \node[vertex2]  (q11)  [right=of q10] {};
  \node           (d7)  [right=of q11] {};
  \node  (startn)  [right=of d7] {$|$};
  \node           (d8)  [right=of startn] {};
  \node[vertex1]  (q13)  [right=of d8] {};
  \node[vertex2]  (q14)  [right=of q13] {};
  \node           (d9)  [right=of q14] {};
  \node[vertex1]  (q15)  [right=of d9] {};
  \node[vertex2]  (q16)  [right=of q15] {};
  \node           (d0)  [right=of q16] {};
  \node[vertex1]  (q17)  [right=of d0] {};
  \node[vertex2]  (q18)  [right=of q17] {};
  \node           (da)  [right=of q18] {};

\node (end) [right=of da] {$|$};

          (q3)   edge  node  {} (q4)
          (q5)   edge  node  {} (q6)
          (q8)   edge  node  {} (q9)
          (q10)   edge  node  {} (q11)
          (q13)   edge  node  {} (q14)
          (q15)   edge  node  {} (q16)
          (q17)   edge  node  {} (q18);
\draw (11.6,0.2) -- (13,0.2);
\draw (13,0) -- (13,0.4);

\end{tikzpicture}
  \end{center}
  \caption{The five zones for  $\aword_1 (\Box \blacksquare)^8 \aword_2$ with $N = 3$ and $\awordbis = 
  \Box \blacksquare$}
  \label{fig:stut-zones}
\end{figure}

Note that the definition of zones depends on the value $N$ (taken from $\equivrel{N}$) and also
on $\awordbis$, $\aword_1$ and $\aword_2$. In the sequel, we may index the zones 
by $N$ (providing e.g., A$_N$, B$_N$ etc.) 
when it is useful
to make explicit from which relation $\equivrel{N}$ the definition of zones is made.
Moreover, we may use a prime (providing for instance, A$'_N$, B$'_N$ etc.) to refer to zones 
for $\aword'$. 
So, the relation $\equivrel{N}$ can be redefined as follows when 
$M, M' > 2N$: 
$\pair{\aword}{i} \equivrel{N} \pair{\aword'}{i'}$ $\equivdef$ ($M \equivrel{2N} M'$ and)
 one of the conditions
holds true: 
\begin{enumerate}
\itemsep 0 cm
\item $i = i'$ and either ($i \in {\rm A}_N$ and $i' \in {\rm A}_N'$) or
       ($i \in {\rm B}_N$ and $i' \in {\rm B}_N'$).   
\item   $(i - i') = (M - M') \length{\awordbis}$ and either 
        ($i \in {\rm D}_N$ and $i' \in {\rm D}_N'$) or
       ($i \in {\rm E}_N$ and $i' \in {\rm E}_N'$).
\item $i \in {\rm C}_N$,  $i' \in {\rm C}_N'$ and 
      $|i - i'| = 0 \mod \length{\awordbis}$.
\end{enumerate}

\subsection{Proof of (Claim 1)} 

Before the proof, let us recall what is (Claim 1). 
Let 
$\aword= \aword_1 \awordbis^M \aword_2, \aword'=\aword_1 \awordbis^{M'} \aword_2 \in \aalphabet^{\omega}$, 
$i,i' \in \Nat$ and $N \geq 2$ such that $M,M' \geq 2N+1$ and $\pair{\aword}{i} \equivrel{N} \pair{\aword'}{i'}$.
\begin{description}
\itemsep 0 cm
\item[(Claim 1)]  $\pair{\aword}{i} \equivrel{N-1} \pair{\aword'}{i'}$; $\aword(i) = \aword'(i')$.
\end{description}

\begin{proof}
Let us first prove that $\pair{\aword}{i} \equivrel{N-1} \pair{\aword'}{i'}$.
Without any loss of generality, we can assume that $M \geq M'$.
Since $N > N-1$, it is obvious that $M \equivrel{2(N-1)} M'$.
\begin{itemize}
\itemsep 0 cm
\item If $i< \length{\aword_1}+(N-1) \cdot \length{\awordbis}$
      \commzone{$i$ is Zone A$_N$}, then 
       $i=i'$. Hence either ($i \in {\rm A}_{N-1}$, $i' \in {\rm A}_{N-1}'$ and $i = i'$)
      or   ($i \in {\rm B}_{N-1}$, $i' \in {\rm B}_{N-1}'$ and $i = i'$). Hence, 
      $\pair{\aword}{i} \equivrel{N-1} \pair{\aword'}{i'}$.
\item If $i \geq \length{\aword_1} +  (M-(N-1)) \cdot \length{\awordbis}$ \commzone{$i$ is in zone E$_N$} then 
 $i=i'+(M-M') \cdot \length{\awordbis}$ and 
 $i' \geq \length{\aword_1} +  (M'-(N-1))\cdot \length{\awordbis}$ \commzone{$i'$ is in zone E$_N'$}.
So, either ($i$ is in zone E$_{N-1}$  and $i'$ is in zone E$_{N-1}'$) 
    or  ($i$ is in zone D$_{N-1}$  and $i'$ is in zone D$_{N-1}'$).
Since $i=i'+(M-M') \cdot \length{\awordbis}$, we conclude that  $\pair{\aword}{i} \equivrel{N-1} \pair{\aword'}{i'}$.
\item If $\length{\aword_1} + (N-1) \cdot \length{\awordbis} \leq i < \length{\aword_1} +  
      N \cdot \length{\awordbis}$ \commzone{$i$ is in Zone B$_N$} then $i = i'$.
      Hence, $i \in {\rm C}_{N-1}$,  $i' \in {\rm C}_{N-1}'$ and 
      $|i - i'| = 0 \mod \length{\awordbis}$. Hence,  $\pair{\aword}{i} \equivrel{N-1} \pair{\aword'}{i'}$.

\item If $\length{\aword_1} + N \cdot \length{\awordbis} \leq i < \length{\aword_1} +  
      (M-N) \cdot \length{\awordbis}$ \commzone{$i$ in Zone C$_N$}, then 
      $\length{\aword_1} + N \cdot \length{\awordbis} \leq i' < \length{\aword_1} +  
      (M'-N) \cdot \length{\awordbis}$ \commzone{$i'$ is in Zone C$'_N$} and $|i - i'| = 0 \mod \length{\awordbis}$. 
      Consequently, $i$ is in Zone C$_{N-1}$, $i'$ is in Zone C$'_{N-1}$ and $|i - i'| = 0 \mod \length{\awordbis}$. 
      This entails that $\pair{\aword}{i} \equivrel{N-1} \pair{\aword'}{i'}$.

\item If $\length{\aword_1} + (M-N) \cdot \length{\awordbis} \leq i < \length{\aword_1} +  
(M-(N-1)) \cdot \length{\awordbis}$ \commzone{$i$ in Zone D$_N$}, then $i'$ is in Zone D$'_N$ and  
$i=i'+(M-M') \cdot \length{\awordbis}$.  Consequently,
$i$ is in Zone C$_{N-1}$, $i'$ is in Zone C$'_{N-1}$ and $|i - i'| = 0 \mod \length{\awordbis}$. 
This also entails that $\pair{\aword}{i} \equivrel{N-1} \pair{\aword'}{i'}$.
\end{itemize}

As far as the second property is concerned, it is also clear that $\aword(i) = \aword'(i')$, because 
either $i$ and $i'$ are at the same position in the word $\aword_1$ or $\aword_2$ either they are pointing some positions in the portions of the word which belong to $\awordbis^+$ and since their difference will be such that
 $|i - i'| = 0 \mod \length{\awordbis}$, it is easy to see that $i$ and $i'$ will point at the same position in  $\awordbis$.
\end{proof}

\subsection{Proof of (Claim 2)} 
Before the proof, let us recall what is (Claim 2).
Let 
$\aword= \aword_1 \awordbis^M \aword_2, \aword'=\aword_1 \awordbis^{M'} \aword_2 \in \aalphabet^{\omega}$, 
$i,i' \in \Nat$ and $N \geq 2$ such that  $M,M' \geq 2N+1$ and $\pair{\aword}{i} \equivrel{N} \pair{\aword'}{i'}$.
\begin{description}
\itemsep 0 cm
\item[(Claim 2)]  $i,i'>0$ implies $\pair{\aword}{i-1} \equivrel{N-1} \pair{\aword'}{i'-1}$.
\end{description}

\begin{proof}Without any loss of generality, we can assume that $M \geq M'$.
Since $N > N-1$, it is obvious that $M \equivrel{2(N-1)} M'$.
\begin{itemize}
\itemsep 0 cm
\item If $i< \length{\aword_1}+(N-1) \cdot \length{\awordbis}$
      \commzone{$i$ is Zone A$_N$}, then 
       $i=i'$. Hence, $i-1 \in {\rm A}_{N-1}$, $i'-1 \in {\rm A}_{N-1}'$ and $i-1 = i'-1$. 
       So, 
      $\pair{\aword}{i-1} \equivrel{N-1} \pair{\aword'}{i'-1}$.
\item If $i \geq \length{\aword_1} +  (M-(N-1)) \cdot \length{\awordbis}$ \commzone{$i$ is in zone E$_N$} then 
 $i=i'+(M-M') \cdot \length{\awordbis}$ and 
 $i' \geq \length{\aword_1} +  (M'-(N-1)) \cdot \length{\awordbis}$ \commzone{$i'$ is in zone E$_N'$}.
So, either ($i-1$ is in zone E$_{N-1}$, $i'-1$ is in zone E$_{N-1}'$ and $i-1=i'-1+(M-M') \cdot \length{\awordbis}$) 
    or  ($i-1$ is in zone D$_{N-1}$  and $i'-1$ is in zone D$_{N-1}'$ and $i-1=i'-1+(M-M') \cdot \length{\awordbis}$)
    or  ($i-1$ is in zone C$_{N-1}$  and $i'-1$ is in zone C$_{N-1}'$ and 
         $|(i-1) - (i'-1)| = 0 \mod \length{\awordbis}$).
   We conclude that  $\pair{\aword}{i-1} \equivrel{N-1} \pair{\aword'}{i'-1}$.
\item If $\length{\aword_1} + (N-1) \cdot \length{\awordbis} \leq i < \length{\aword_1} +  
      N \cdot \length{\awordbis}$ \commzone{$i$ is in Zone B$_N$} then $i = i'$.
      Hence, either ($i-1 \in {\rm C}_{N-1}$,  $i'-1 \in {\rm C}_{N-1}'$ and 
      $|(i-1) - (i'-1)| = 0 \mod \length{\awordbis}$) or 
      ($i-1 \in {\rm B}_{N-1}$,  $i'-1 \in {\rm B}_{N-1}'$ and 
      $i-1 = i'-1$). Hence,  $\pair{\aword}{i-1} \equivrel{N-1} \pair{\aword'}{i'-1}$.

\item If $\length{\aword_1} + N \cdot \length{\awordbis} \leq i < \length{\aword_1} +  
      (M-N) \cdot \length{\awordbis}$ \commzone{$i$ in Zone C$_N$}, then 
      $\length{\aword_1} + N \cdot \length{\awordbis} \leq i' < \length{\aword_1} +  
      (M'-N) \cdot \length{\awordbis}$ \commzone{$i'$ is in Zone C$'_N$} and $|i - i'| = 0 \mod \length{\awordbis}$. 
      Consequently, $i-1$ is in Zone C$_{N-1}$, $i'-1$ is in Zone C$'_{N-1}$ and 
      $|(i-1) - (i'-1)| = 0 \mod \length{\awordbis}$.
      This entails that $\pair{\aword}{i-1} \equivrel{N-1} \pair{\aword'}{i'-1}$.

\item If $\length{\aword_1} + (M-N) \cdot \length{\awordbis} \leq i < \length{\aword_1} +  (M-(N-1)) \cdot \length{\awordbis}$ \commzone{$i$ in Zone D$_N$}, then $i'$ is in Zone D$'_N$ and  
$i=i'+(M-M') \cdot \length{\awordbis}$.  Consequently,
$i-1$ is in Zone C$_{N-1}$, $i'-1$ is in Zone C$'_{N-1}$ and $|(i-1) - (i'-1)| = 0 \mod \length{\awordbis}$. 
This  entails that $\pair{\aword}{i-1} \equivrel{N-1} \pair{\aword'}{i'-1}$.
\end{itemize}
\end{proof}

\subsection{Proof of (Claim 3)} 
The proof proceeds in a similar way as the proof of Claim 2. For completeness shake the proof is provided in the Appendix.
\cut{
Before the proof, let us recall what is (Claim 3).
Let 
$\aword= \aword_1 \awordbis^M \aword_2, \aword'=\aword_1 \awordbis^{M'} \aword_2 \in \aalphabet^{\omega}$, 
$i,i' \in \Nat$ and $N \geq 2$ such that  $M,M' \geq 2N+1$ and $\pair{\aword}{i} \equivrel{N} \pair{\aword'}{i'}$.
\begin{description}
\itemsep 0 cm
\item[(Claim 3)]  $\pair{\aword}{i+1} \equivrel{N-1} \pair{\aword'}{i'+1}$.
\end{description}

\begin{proof} 
The proof is similar to the proof for (Claim 2). Nevertheless, full proof is provided below
for the sake of completeness.
Without any loss of generality, we can assume that $M \geq M'$.
Since $N > N-1$, it is obvious that $M \equivrel{2(N-1)} M'$.
\begin{itemize}
\itemsep 0 cm
\item If $i< \length{\aword_1}+(N-1) \cdot \length{\awordbis}$
      \commzone{$i$ is Zone A$_N$}, then 
       $i=i'$. Hence either ($i+1 \in {\rm A}_{N-1}$, $i'+1 \in {\rm A}_{N-1}'$ and $i+1 = i'+1$)
      or   ($i+1 \in {\rm B}_{N-1}$, $i'+1 \in {\rm B}_{N-1}'$ and $i+1 = i'+1$) or 
       ($i+1 \in {\rm C}_{N-1}$, $i'+1 \in {\rm C}_{N-1}'$ and $i - i' = 0$). Hence, 
      $\pair{\aword}{i+1} \equivrel{N-1} \pair{\aword'}{i'+1}$.
\item If $i \geq \length{\aword_1} +  (M-(N-1)) \cdot \length{\awordbis}$ \commzone{$i$ is in zone E$_N$} then 
 $i=i'+(M-M') \cdot \length{\awordbis}$ and 
 $i' \geq \length{\aword_1} +  (M'-(N-1)) \cdot \length{\awordbis}$ \commzone{$i'$ is in zone E$_N'$}.
So, either ($i+1$ is in zone E$_{N-1}$  and $i'+1$ is in zone E$_{N-1}'$) 
    or  ($i+1$ is in zone D$_{N-1}$  and $i'+1$ is in zone D$_{N-1}'$).
Since $i+1=i'+1+(M-M') \cdot \length{\awordbis}$, we conclude that  $\pair{\aword}{i+1} \equivrel{N-1} \pair{\aword'}{i'+1}$.
\item If $\length{\aword_1} + (N-1) \cdot \length{\awordbis} \leq i < \length{\aword_1} +  
      N \cdot \length{\awordbis}$ \commzone{$i$ is in Zone B$_N$} then $i = i'$.
      Hence, $i+1 \in {\rm C}_{N-1}$,  $i'+1 \in {\rm C}_{N-1}'$ and 
      $|(i+1) - (i'+1)| = 0 \mod \length{\awordbis}$. Hence,  $\pair{\aword}{i+1} \equivrel{N-1} \pair{\aword'}{i'+1}$.

\item If $\length{\aword_1} + N \cdot \length{\awordbis} \leq i < \length{\aword_1} +  
      (M-N) \cdot \length{\awordbis}$ \commzone{$i$ in Zone C$_N$}, then 
      $\length{\aword_1} + N \cdot \length{\awordbis} \leq i' < \length{\aword_1} +  
      (M'-N) \cdot \length{\awordbis}$ \commzone{$i'$ is in Zone C$'_N$} and $|i - i'| = 0 \mod \length{\awordbis}$. 
      Consequently, $i+1$ is in Zone C$_{N-1}$, $i'+1$ is in Zone C$'_{N-1}$ and 
      $|(i+1) - (i'+1)| = 0 \mod \length{\awordbis}$.
      This entails that $\pair{\aword}{i+1} \equivrel{N-1} \pair{\aword'}{i'+1}$.

\item If $\length{\aword_1} + (M-N) \cdot \length{\awordbis} \leq i < \length{\aword_1} +  (M-(N-1)) \cdot \length{\awordbis}$ \commzone{$i$ in Zone D$_N$}, then $i'$ is in Zone D$'_N$ and  
$i=i'+(M-M') \cdot \length{\awordbis}$.  Consequently,
either ($i+1$ is in Zone C$_{N-1}$, $i'+1$ is in Zone C$'_{N-1}$ and $|(i+1) - (i'+1)| = 0 \mod \length{\awordbis}$)
or ($i+1$ is in Zone D$_{N-1}$, $i'+1$ is in Zone D$'_{N-1}$ and 
$i+1=i'+1 +(M-M') \cdot \length{\awordbis}$). 
This also entails that $\pair{\aword}{i+1} \equivrel{N-1} \pair{\aword'}{i'+1}$.
\end{itemize}
\end{proof}

}

\subsection{Proof of (Claim 4)} 

Before providing the detailed proof, 
we give a concrete example on Figure~\ref{fig:stut-rel-n-1}. On this example,
 we assume that the top word $\aword$ and the bottom word $\aword'$ and their respective positions 
$i$ and $i'$ are such that $\pair{\aword}{i} \equivrel{3} \pair{\aword'}{i'}$. 
We want to illustrate (Claim 4) and for this matter, we choose a position $j$ in 
$\aword$. Now observe that according to the zone classification, 
$j$ is in the Zone C of the word $\aword$ and furthermore it is not possible to find 
a $j' > i'$ in the Zone C of the word $\aword'$ such that $j$ and $j'$ points on the same position of 
the word $\awordbis$. That is why we need to consider at this stage not the relation $\equivrel{3}$ but 
instead $\equivrel{2}$. In fact, as shown on the bottom of Figure~\ref{fig:stut-rel-n-1}, we can find for $j$, 
a position $j'$ in $\aword'$ such that $\pair{\aword}{j} \equivrel{2} \pair{\aword'}{j'}$ 
(take $j=j'$) and this figure also shows that for all $i' \leq k \leq j'$, $\pair{\aword}{k} 
\equivrel{2} \pair{\aword'}{k}$.
\begin{figure}
  \begin{center}
    \input{stut-rel-n-1.tikz}
  \end{center}
  \caption{Relation between $\equivrel{N}$ and $\equivrel{N-1}$}
  \label{fig:stut-rel-n-1}
\end{figure}

Before the proof, let us recall what is (Claim 4). 
Let 
$\aword= \aword_1 \awordbis^M \aword_2, \aword'=\aword_1 \awordbis^{M'} \aword_2 \in \aalphabet^{\omega}$, 
$i,i' \in \Nat$ and $N \geq 2$ such that  $M,M' \geq 2N+1$ and $\pair{\aword}{i} \equivrel{N} \pair{\aword'}{i'}$.
We can show the following properties:
\begin{description}
\itemsep 0 cm
\item[(Claim 4)]  For all $j \geq i$, there is $j' \geq i'$ such that
                  $\pair{\aword}{j} \equivrel{N-1} \pair{\aword'}{j'}$ and 
                  for all $k' \in \interval{i'}{j'-1}$, there is
                  $k \in  \interval{i}{j-1}$ such that 
                  $\pair{\aword}{k} \equivrel{N-1} \pair{\aword'}{k'}$.
\end{description}

\begin{proof} We proceed by a case analysis on the positions  $i$ and $j$.
Without any loss of generality, we can assume that $M \geq M'$.
\begin{itemize}
\itemsep 0 cm 
\item If $i \geq \length{\aword_1}+(M-N)\cdot \length{\awordbis}$ 
      \commzone{$i$ is in Zone D or E} then $j \geq \length{\aword_1}+(M-N)\cdot \length{\awordbis}$ \commzone{$j$ is in Zone D or E} and $i' \geq \length{\aword_1} +  (M'-N)\cdot \length{\awordbis}$ \commzone{$i'$ is Zone D or E} and $i=i'+(M-M')\cdot \length{\awordbis}$. We define $j'=j-(M-M')\cdot \length{\awordbis}$. Then it is clear that 
$j' \geq i'$ and  $\pair{\aword}{j} \equivrel{N} \pair{\aword'}{j'}$. By 
(Claim 1), we get $\pair{\aword}{j} \equivrel{N-1} \pair{\aword'}{j'}$.
Let $k' \in \interval{i'}{j'-1}$ and let $k=k'+(M-M')\cdot \length{\awordbis}$, then we have that 
$k \in \interval{i}{j-1}$ and also $\pair{\aword}{k} \equivrel{N} \pair{\aword'}{k'}$, 
hence by (Claim 1), $\pair{\aword}{k} \equivrel{N-1} \pair{\aword'}{k'}$.
\item If $i < \length{\aword_1}+N\cdot \length{\awordbis}$ \commzone{$i$ is in Zone A or B} then $i'< \length{\aword_1}+N\cdot \length{\awordbis}$ \commzone{$i'$ is in Zone A or B} and $i=i'$ and we have the following possibilities for 
the position $j \geq i$:
\begin{itemize}
\itemsep 0 cm 
\item If $j < \length{\aword_1}+N \cdot \length{\awordbis}$ \commzone{$j$ is in Zone A or B}, then 
let $j'=j$. Consequently we have $\pair{\aword}{j} \equivrel{N} \pair{\aword'}{j'}$ and by (Claim 1) we 
get  $\pair{\aword}{j} \equivrel{N-1} \pair{\aword'}{j'}$. 
Let $k' \in \interval{i'}{j'-1}$ and $k=k'$. Then we have that $k \in \interval{i}{j-1}$ 
and also 
$\pair{\aword}{k} \equivrel{N} \pair{\aword'}{k'}$ and 
by (Claim 1), $\pair{\aword}{k} \equivrel{N-1} \pair{\aword'}{k'}$.
\item If $\length{\aword_1}+N \cdot \length{\awordbis} \leq j < \length{\aword_1}+(M-N) 
      \cdot \length{\awordbis}$ \commzone{$j$ is in Zone C}, then let 
       $\ell=(j-(\length{\aword_1}+N \cdot \length{\awordbis})) \mod \length\awordbis$ 
       ($\ell$ the relative position of $j$ in the word $\awordbis$ it belongs to). 
       Consequently $0 \leq \ell < \length{\awordbis}$. 
       Let $j'= \length{\aword_1} + N \cdot \length{\awordbis}+\ell$ (we choose $j'$ at the same relative 
       position of $j$ in the first word $\awordbis$ of the Zone C). 
       Then $\length{\aword_1}+N \cdot \length{\awordbis} \leq j' < \length{\aword_1}+(M'-N) \cdot \length{\awordbis}$ \commzone{$j'$ is in Zone C} (because $(M'-N)>0$) and $|j - j'| = 0 \mod \length{\awordbis}$. 
We deduce that $\pair{\aword}{j} \equivrel{N} \pair{\aword'}{j'}$ and by 
(Claim 1)  we get $\pair{\aword}{j} \equivrel{N-1} \pair{\aword'}{j'}$.
Then let $k' \in \interval{i'}{j'-1}$ and let $k=k'$. 
Then we have that $k \in \interval{i}{j-1}$. Furthermore, 
 if $k' < \length{\aword_1}+N \cdot \length{\awordbis}$ \commzone{$k'$ is in Zone A or B} 
 we obtain $\pair{\aword}{k} \equivrel{N} \pair{\aword'}{k'}$ and by (Claim 1),
 $\pair{\aword}{k} \equivrel{N-1} \pair{\aword'}{k'}$. 
 Moreover, if $\length{\aword_1}+N \cdot \length{\awordbis} \leq k'$ 
 \commzone{$k'$ is in Zone C} then $k$ is in Zone C and 
 $|k - k'| = 0 \mod \length{\awordbis}$ since  $k=k'$.
 So, $\pair{\aword}{k} \equivrel{N} \pair{\aword'}{k'}$ and by (Claim 1),
 $\pair{\aword}{k} \equivrel{N-1} \pair{\aword'}{k'}$. 
\item If $ \length{\aword_1}+(M-N) \cdot \length{\awordbis} \leq j$ 
 \commzone{$j$ is in Zone E or D}, let $j'=j-(M-M')\cdot \length{\awordbis}$. 
 Then, we have $ \length{\aword_1}+(M'-N) \cdot \length{\awordbis} \leq j'$ \commzone{$j'$ is in Zone D or E} 
 and we deduce that $\pair{\aword}{j} \equivrel{N} \pair{\aword'}{j'}$ and 
 by (Claim 1) we get $\pair{\aword}{j} \equivrel{N-1} \pair{\aword'}{j'}$. 
 Then let $k' \in \interval{i'}{j'-1}$. 
 If $k' < \length{\aword_1}+N \cdot \length{\awordbis}$ \commzone{$k'$ is in Zone A or B}, 
 for $k=k'$, we obtain 
  $\pair{\aword}{k} \equivrel{N} \pair{\aword'}{k'}$ 
  and by (Claim 1),
 $\pair{\aword}{k} \equivrel{N-1} \pair{\aword'}{k'}$. 
 If $k' \geq \length{\aword_1}+(M'-N) \cdot \length{\awordbis}$ 
 \commzone{$k'$ is in Zone D or E}, we choose $k=k'+(M-M') \cdot \length{\awordbis}$ and here also we deduce 
 $\pair{\aword}{k} \equivrel{N} \pair{\aword'}{k'}$ 
  and by (Claim 1),
 $\pair{\aword}{k} \equivrel{N-1} \pair{\aword'}{k'}$.  
 If ${\aword_1}+N \cdot \length{\awordbis} \leq k' < \length{\aword_1}+(M'-N) \cdot \length{\awordbis}$ 
 \commzone{$k'$ is in Zone C}, let $\ell=(k'-(\length{\aword_1}+N \cdot \length{\awordbis})) \mod \length{\awordbis}$ 
 ($\ell$ is the relative position of $k'$ in the word $\awordbis$ it belongs to) and let 
 $k=\length{\aword_1}+N \cdot \length{\awordbis}+\ell$ ($k$ is placed at the same relative position  
 of $k'$ in the first word $\awordbis$ of the Zone C). Then we have 
 ${\aword_1}+N \cdot \length{\awordbis} \leq k < \length{\aword_1}+(M-N) \cdot \length{\awordbis}$ and $|k-k'| =0 
 \mod \length{\awordbis}$ which allows to deduce that 
  $\pair{\aword}{k} \equivrel{N} \pair{\aword'}{k'}$ and by (Claim 1),
 $\pair{\aword}{k} \equivrel{N-1} \pair{\aword'}{k'}$.  
\end{itemize}
\item If $\length{\aword_1}+N\cdot \length{\awordbis} \leq i < \length{\aword_1}+(M-N)\cdot \length{\awordbis}$ 
      \commzone{$i$ is Zone C} then $\length{\aword_1}+N\cdot \length{\awordbis} \leq i' < 
      \length{\aword_1}+(M'-N)\cdot \length{\awordbis}$ \commzone{$i'$ is in Zone C} and $|i-i'|=0 \mod 
      \length{\awordbis}$. Let $\ell=(i-(\length{\aword_1}+N\cdot \length{\awordbis})) \mod \length{\awordbis}$ 
      (the relative position of $i$ in the word $\awordbis$). We have the following 
      possibilities for the position  $j \geq i$:
\begin{itemize}
\item If $j-i < \length{\awordbis}-\ell+\length{\awordbis}$ ($j$ is either in the same word 
$\awordbis$ as $i$ or in the next word $\awordbis$), then $j < \length{\aword_1}+(M-(N-1))\cdot \length{\awordbis}$ \commzone{$j$ 
is in Zone C or D}. We define $j'=i'+(j-i)$ and we have that 
$\length{\aword_1}+N\cdot \length{\awordbis} \leq j' < \length{\aword_1}+(M'-(N-1))\cdot \length{\awordbis}$ \commzone{$j'$ is in 
Zone C or D} and since $|i-i'|=0 \mod \length{\awordbis}$, we deduce  $|j-j'|=0 \mod \length{\awordbis}$. From this we 
obtain  $\pair{\aword}{j} \equivrel{N-1} \pair{\aword'}{j'}$. Let $k' \in 
\interval{i'}{j'-1}$ and $k=i+k'-i'$. We have then that 
$k \in \interval{i}{j-1}$ and $\length{\aword_1}+N\cdot \length{\awordbis} \leq k' < \length{\aword_1}+(M'-(N-1))\cdot \length{\awordbis}$ and $\length{\aword_1}+N\cdot \length{\awordbis} \leq k < \length{\aword_1}+(M-(N-1))\cdot \length{\awordbis}$. 
Since $|i-i'|=0 \mod \length{\awordbis}$, we also have $|k-k'|=0 \mod \length{\awordbis}$. 
Consequently $\pair{\aword}{k} \equivrel{N-1} \pair{\aword'}{k'}$.
\item If $j-i \geq \length{\awordbis}-\ell+\length{\awordbis}$ 
($j$ is neither in the same word $\awordbis$ as $i$ nor in the next word $\awordbis$) and 
$j  \geq \length{\aword_1}+(M-N)\cdot \length{\awordbis}$ \commzone{$j$ is in Zone E or  D}. 
Let $j'=j-(M-M')\cdot \length{\awordbis}$
then $j' \geq \length{\aword_1}+(M'-N)\cdot \length{\awordbis}$ \commzone{$j'$ is in Zone E or  D} 
and consequently $\pair{\aword}{j} \equivrel{N} \pair{\aword'}{j'}$ and 
by (Claim 1)  we get $\pair{\aword}{j} \equivrel{N-1} \pair{\aword'}{j'}$. 
Then let $k' \in \interval{i'}{j'-1}$. If $k'\geq \length{\aword_1}+(M'-N)\cdot \length{\awordbis}$ \commzone{$k'$ is in Zone D or E}, then let $k=k'+(M-M')\cdot \length{\awordbis}$; we have in this case that 
$k  \geq \length{\aword_1}+(M-N)\cdot \length{\awordbis}$ and this allows us to deduce that 
$\pair{\aword}{k} \equivrel{N-1} \pair{\aword'}{k'}$. Now assume $k'< \length{\aword_1}+(M'-N)\cdot \length{\awordbis}$ \commzone{$k'$ is in Zone C} and $k'-i' < \length{\awordbis} -\ell$ ($k'$ and $i'$ are in the same word $\awordbis$), then 
let $k=i+k'-i'$. In this case we have $k < \length{\aword_1}+(M-N)\cdot \length{\awordbis}$ \commzone{$k$ is in Zone C} and since $|i-i'|=0 \mod \length{\awordbis}$, we also have $|k-k'|=0 \mod \length{\awordbis}$, whence 
$\pair{\aword}{k} \equivrel{N-1} \pair{\aword'}{k'}$. 
Now assume $k'< \length{\aword_1}+(M'-N)\cdot \length{\awordbis}$ \commzone{$k'$ is in Zone C} and $k'-i' \geq \length{\awordbis} -\ell$  ($k'$ and $i'$ are not in the same word $\awordbis$). 
We denote by $\ell'=(k'-(\length{\aword_1}+N\cdot \length{\awordbis})) \mod \length{\awordbis}$ the relative 
position of $k'$ in $\awordbis$ and let $k=i+(\length{\awordbis}-\ell)+\ell'$ 
($k$ and $k'$ occur in the same position in $\awordbis$ but $k$ occurs in the word  $\awordbis$
just after the word $\awordbis$ in which $i$ belongs to)
Then $k \in \interval{i}{j-1}$ (because $\ell' < \length{\awordbis}$ and $j-i \geq \length{\awordbis}-\ell+\length{\awordbis}$) and $k < \length{\aword_1}+(M-(N-1))\cdot \length{\awordbis}$ (because $i+(\length{\awordbis}-\ell) <\length{\aword_1}+(M-N)\cdot \length{\awordbis}$ and $\ell' < \length{\awordbis}$) and $|k-k'|=0 \mod \length{u}$ ($k$ and $k'$ are both pointing on the $\ell'$-th position in word $\awordbis$). This allows us to deduce that $\pair{\aword}{k} \equivrel{N-1} \pair{\aword'}{k'}$.
\item If $j-i \geq \length{\awordbis}-\ell+\length{\awordbis}$ 
($j$ is neither in the same word $\awordbis$ as $i$ nor in the next word $\awordbis$) and $j < \length{\aword_1}+(M-N)\cdot \length{\awordbis}$ \commzone{$j$ is in Zone C}. 
Then let $\ell' = (j-(\length{\aword_1} + N \cdot \length{\awordbis})) \mod \length{\awordbis}$ the relative position 
of $j$ in  $\awordbis$. We choose $j'=i'+(\length{\awordbis}-\ell)+\ell'$ 
($j$ and $j'$ occur in the same position in $\awordbis$ but $j'$ occurs in the word  $\awordbis$
just after the word $\awordbis$ in which $i'$ belongs to)
We have then that $j' < \length{\aword_1}+(M'-(N-1))\cdot \length{\awordbis}$ \commzone{$j'$ is in Zone C or D} (because $i'+(\length{\awordbis}-\ell) <\length{\aword_1}+(M-N)\cdot \length{\awordbis}$ and $\ell' < \length{\awordbis}$) and $|j-j'|=0 \mod \length{\awordbis}$ ($j$ and $j'$ are both pointing on the $\ell'$-th position in word $\awordbis$), hence 
$\pair{\aword}{j} \equivrel{N-1} \pair{\aword'}{j'}$. Let $k' \in \interval{i'}{j'-1}$. If 
$k'-i' < \length{\awordbis} -\ell$ ($k'$ and $i'$ are in the same word $\awordbis$), then let $k=i+k'-i'$. 
In this case we have $k < \length{\aword_1}+(M-N)\cdot \length{\awordbis}$ \commzone{$k$ is in Zone C} and 
since $|i-i'|=0 \mod \length{\awordbis}$, we also have $|k-k'|=0 \mod \length{\awordbis}$, hence 
$\pair{\aword}{k} \equivrel{N-1} \pair{\aword'}{k'}$. If $k'-i' \geq \length{\awordbis} -\ell$ ($k'$ and $i'$ are not in the same word $\awordbis$), then $j'-k' < \ell'$ and let $k=j-j'-k'$. In this case we have $k < \length{\aword_1}+(M-N)\cdot \length{\awordbis}$ \commzone{$k$ is in Zone C} and since $|j-j'|=0 \mod \length{\awordbis}$, we also have $|k-k'|=0 \mod \length{\awordbis}$, hence 
 $\pair{\aword}{k} \equivrel{N-1} \pair{\aword'}{k'}$.
\end{itemize}
\end{itemize}
\end{proof}

\subsection{Proof of (Claim 5)} 
The proof proceeds in a similar way as the proof of Claim 4. For completeness shake the proof is provided in the Appendix.
\cut{Before the proof, let us recall what is (Claim 5). 
Let 
$\aword= \aword_1 \awordbis^M \aword_2, \aword'=\aword_1 \awordbis^{M'} \aword_2 \in \aalphabet^{\omega}$, 
$i,i' \in \Nat$ and $N \geq 2$ such that  $M,M' \geq 2N+1$ and $\pair{\aword}{i} \equivrel{N} \pair{\aword'}{i'}$.
\begin{description}
\itemsep 0 cm
\item[(Claim 4)] for all $j \leq i$, there is $j' \leq i'$ 
      such that  $\pair{\aword}{j} \equivrel{N-1} \pair{\aword'}{j'}$ and 
       for all $k' \in \interval{j'-1}{i'}$, there is
            $k \in  \interval{j-1}{i}$ such that $\pair{\aword}{k} \equivrel{N-1} \pair{\aword'}{k'}$.
\end{description}

\begin{proof}
The proof is similar to the proof for (Claim 4) by looking backward instead of looking forward
(still there are slight differences because past is finite). Nevertheless, full proof is provided below
for the sake of completeness. We proceed by a case analysis on the positions  $i$ and $j$.
Without any loss of generality, we can assume that $M \geq M'$.
\begin{itemize}
\itemsep 0 cm 
\item If $i < \length{\aword_1}+N\cdot \length{\awordbis}$ \commzone{$i$ is in Zone A or B} then $j< \length{\aword_1}+N\cdot \length{\awordbis}$ \commzone{$j$ is in Zone A or B} and $i' < \length{\aword_1} + N\cdot \length{\awordbis}$ 
\commzone{$i'$ is in Zone A or B} and $i=i'$. We define $j'=j$. Then it is clear that $j' < i'$ and
 $\pair{\aword}{j} \equivrel{N} \pair{\aword'}{j'}$. By (Claim 1), we get 
$\pair{\aword}{j} \equivrel{N-1} \pair{\aword'}{j'}$. 
Let $k' \in \interval{j'-1}{i'}$ and let $k=k'$, then we have that 
$k \in \interval{j-1}{i}$ and also $\pair{\aword}{k} \equivrel{N} \pair{\aword'}{k'}$, 
hence by  (Claim 1), $\pair{\aword}{k} \equivrel{N-1} \pair{\aword'}{k'}$.
%

%
\item If $i \geq \length{\aword_1}+(M-N)\cdot \length{\awordbis}$ \commzone{$i$ is Zone D or E} then $i'\geq \length{\aword_1}+(M'-N)\cdot \length{\awordbis}$ \commzone{$i'$ is in Zone D or E} 
and $i=i'+(M-M')\cdot \length{\awordbis}$ and we have the following possibilities for the position  $j \leq i$:
\begin{itemize}
\itemsep 0 cm 
\item If $j \geq \length{\aword_1}+(M-N) \cdot \length{\awordbis}$ \commzone{$j$ is in Zone D or E}, then let $j'=j-(M-M')\cdot \length{\awordbis}$. Consequently, we have $\pair{\aword}{j} \equivrel{N} \pair{\aword'}{j'}$ and 
by (Claim 1) we get  $\pair{\aword}{j} \equivrel{N-1} \pair{\aword'}{j'}$. 
Let $k' \in \interval{j'-1}{i'}$ and $k=k'+(M-M')\cdot \length{\awordbis}$. Then we have that 
$k \in \interval{j-1}{i}$ and also 
 $\pair{\aword}{k} \equivrel{N} \pair{\aword'}{k'}$. By (Claim 1), 
 $\pair{\aword}{k} \equivrel{N-1} \pair{\aword'}{k'}$.
\item If $\length{\aword_1}+N \cdot \length{\awordbis} \leq j < \length{\aword_1}+(M-N) \cdot \length{\awordbis}$ \commzone{$j$ is in Zone C}, then let $\ell=(j-(\length{\aword_1}+N \cdot \length{\awordbis})) \mod \length\awordbis$ ($\ell$ is the relative position of $j$ in the word $\awordbis$ it belongs to). Consequently $0 \leq \ell < \length{\awordbis}$. 
Let $j'= \length{\aword_1} + (M'-N) \cdot \length{\awordbis}- (\length{\awordbis}-\ell)$ 
($j'$ is at the same position as $j$ in the last word $\awordbis$ of the Zone C). Then $\length{\aword_1}+N \cdot \length{\awordbis} \leq j' < \length{\aword_1}+(M'-N) \cdot \length{\awordbis}$ \commzone{$j'$ is in Zone C} 
(because $(M'\geq 2N+1$) and $|j - j'| = 0 \mod \length{\awordbis}$ (they are at the same position in the word $\awordbis$). We deduce that $\pair{\aword}{j} \equivrel{N} \pair{\aword'}{j'}$ and by (Claim 1) we get 
$\pair{\aword}{j} \equivrel{N-1} \pair{\aword'}{j'}$.
Then let $k' \in \interval{j'-1}{i'}$ and let $k=k'+(M-M') \cdot \length{\awordbis}$. Then we have that 
$k \in \interval{j-1}{i}$. Furthermore, if $k' \geq \length{\aword_1}+(M'-N) \cdot \length{\awordbis}$ 
\commzone{$k'$ is in Zone D or E} then $k \geq \length{\aword_1}+(M-N) \cdot \length{\awordbis}$ 
\commzone{$k$ is in Zone D or E} and we obtain $\pair{\aword}{k} \equivrel{N} \pair{\aword'}{k'}$ and by (Claim 1), 
$\pair{\aword}{k} \equivrel{N-1} \pair{\aword'}{k'}$. 
Moreover, if $k'< \length{\aword_1}+(M'-N) \cdot \length{\awordbis}$  then necessarily 
$\length{\aword_1}+N \cdot \length{\awordbis} \leq k'$ \commzone{$k'$ is in Zone C} (because $j'<k'$) 
and  $|k - k'| = 0 \mod \length{\awordbis}$ (because $k=k'+(M-M') \cdot \length{\awordbis}$). 
Whence, $k$ is in Zone C and
$\pair{\aword}{k} \equivrel{N} \pair{\aword'}{k'}$. By (Claim 1), we obtain 
 $\pair{\aword}{k} \equivrel{N-1} \pair{\aword'}{k'}$.
\item If $j< \length{\aword_1}+N \cdot \length{\awordbis}$ \commzone{$j$ is in Zone A or B}, let $j'=j$. 
We have then $j' < \length{\aword_1}+N\cdot \length{\awordbis}$ \commzone{$j'$ is in Zone A or B}. 
We deduce 
that $\pair{\aword}{j} \equivrel{N} \pair{\aword'}{j'}$ and by 
(Claim 1) we get  $\pair{\aword}{j} \equivrel{N-1} \pair{\aword'}{j'}$. 
Then let $k' \in \interval{j'-1}{i'}$. If $k' < \length{\aword_1}+N \cdot \length{\awordbis}$ \commzone{$k'$ is in Zone A}, 
for $k=k'$, we obtain $\pair{\aword}{k} \equivrel{N} \pair{\aword'}{k'}$ and by (Claim 1), 
$\pair{\aword}{k} \equivrel{N-1} \pair{\aword'}{k'}$. 
If $k' \geq \length{\aword_1}+(M'-N) \cdot \length{\awordbis}$ \commzone{$k'$ is in Zone D or E}, we choose 
$k=k'+(M-M') \cdot \length{\awordbis}$ and here also we deduce $\pair{\aword}{k}  \equivrel{N}  \pair{\aword'}{k'}$ 
and by (Claim 1),  $\pair{\aword}{k}  \equivrel{N-1}  \pair{\aword'}{k'}$. 
If ${\aword_1}+N \cdot \length{\awordbis} \leq k' < \length{\aword_1}+(M'-N) \cdot \length{\awordbis}$ 
\commzone{$k'$ is in Zone C}, let $\ell=(k'-(\length{\aword_1}+N \cdot \length{\awordbis})) \mod 
\length{\awordbis}$ ($\ell$ is the relative position of $k'$ in the word $\awordbis$ it belongs to) and 
let $k=\length{\aword_1}+N \cdot \length{\awordbis}+\ell$ ($k$ is at the same position of $k'$ in the first 
word of the zone C). Then we have ${\aword_1}+N \cdot \length{\awordbis} \leq k < \length{\aword_1}+(M-N) 
\cdot \length{\awordbis}$ \commzone{$k$ is in the Zone C} and $|k-k'|=0 \mod \length{\awordbis}$ which allows to 
deduce that $\pair{\aword}{k} \equivrel{N} \pair{\aword'}{k'}$ and by (Claim 1), 
$\pair{\aword}{k} \equivrel{N-1} \pair{\aword'}{k'}$.
\end{itemize}
\item If $\length{\aword_1}+N\cdot \length{\awordbis} \leq i < \length{\aword_1}+(M-N)\cdot \length{\awordbis}$ 
\commzone{$i$ in Zone C} then $\length{\aword_1}+N\cdot \length{\awordbis} \leq i' < \length{\aword_1}+(M'-N)\cdot \length{\awordbis}$ 
\commzone{$i'$ in Zone C} and $|i-i'|=0 \mod \length{\awordbis}$. Let $\ell=(i-(\length{\aword_1}+N\cdot \length{\awordbis})) 
\mod \length{\awordbis}$ (the relation position of $i$ in the word $\awordbis$ it belongs to). We 
have the following possibilities for the position $j \leq i$:
\begin{itemize}
\itemsep 0 cm 
\item If $i-j <\ell+\length{\awordbis}$ ($j$ is  in the same word $\awordbis$ as $i$ or in the previous word 
$\awordbis$) then $j \geq \length{\aword_1}+(N-1)\cdot \length{\awordbis}$ \commzone{$j$ is in Zone B or C}. We define 
$j'=i'-(i-j)$ and we have that $\length{\aword_1}+(N-1)\cdot \length{\awordbis} \leq j' < \length{\aword_1}+(M'-N)
\cdot \length{\awordbis}$ \commzone{$j'$ is in Zone B or C} and since $|i-i'|=0 \mod \length{\awordbis}$, 
we deduce  $|j-j'|=0 \mod \length{\awordbis}$. From this, we obtain  $\pair{\aword}{j} 
\equivrel{N-1} \pair{\aword'}{j'}$. Let $k' \in \interval{j'-1}{i'}$ and $k=i-(i'-k')$. 
We have then that $k \in \interval{j-1}{i}$ and $\length{\aword_1}+(N-1)\cdot \length{\awordbis} \leq k' < \length{\aword_1}+(M'-N)\cdot \length{\awordbis}$ \commzone{$k'$ is in Zone B or C} and $\length{\aword_1}+(N-1)\cdot \length{\awordbis} \leq k < \length{\aword_1}+(M-N)\cdot \length{\awordbis}$ \commzone{$k$ is in Zone B or C} and since $|i-i'|=0 \mod \length{\awordbis}$, we also have $|k-k'|=0 \mod \length{\awordbis}$. Consequently $\pair{\aword}{k} \equivrel{N-1} \pair{\aword'}{k'}$.
\item If $i-j \geq \ell+\length{\awordbis}$ ($j$ is neither in the same word $\awordbis$ as $i$ nor in the 
previous word $\awordbis$) and $j  < \length{\aword_1}+N\cdot \length{\awordbis}$ \commzone{$j$ is in zone A or B}. Let 
$j'=j$. So, $j' < \length{\aword_1}+N\cdot \length{\awordbis}$ and $\pair{\aword}{j} \equivrel{N} \pair{\aword'}{j'}$. 
By using (Claim 1) 
we get  $\pair{\aword}{j} \equivrel{N-1} \pair{\aword'}{j'}$. 
Then let $k' \in \interval{j'-1}{i'}$. If $k'< \length{\aword_1}+N\cdot \length{\awordbis}$ 
\commzone{$k'$ is in Zone A or B}, then let $k=k'$; we have in this case that 
$k  < \length{\aword_1}+N\cdot \length{\awordbis}$ and this allows us to deduce that 
$\pair{\aword}{k} \equivrel{N-1} \pair{\aword'}{k'}$. Now assume $k'\geq  \length{\aword_1}+N\cdot \length{\awordbis}$ 
\commzone{$k'$ is in Zone C} and $i'-k' \leq \ell$ ($k'$ and $i'$ are in the same word $\awordbis$), then let 
$k=i-(i'-k')$. In this case we have $k \geq \length{\aword_1}+N\cdot \length{\awordbis}$ \commzone{$k$ is in Zone C} and 
since $|i-i'|=0 \mod \length{\awordbis}$, we also have $|k-k'|=0 \mod \length{\awordbis}$, hence 
$(\aword,k) \equivrel{N-1} (\aword',k')$. Now assume $k'\geq \length{\aword_1}+N\cdot \length{\awordbis}$ 
\commzone{$k'$ is in Zone C} and $i'-k' > \ell$ ($k'$ and $i'$ are not in the same word $\awordbis$). We denote by $\ell'=(k'-(\length{\aword_1}+N\cdot \length{\awordbis})) \mod \length{\awordbis}$ the relation position of $k'$ in $\awordbis$ and let $k=i-\ell-(\length{\awordbis}-\ell')$ ($k$ is at the same position as $k'$ of $k$ in the word $\awordbis$  preceding the word $\awordbis$ $i$ belongs to). Then $k \in \interval{j-1}{i}$ 
(because $\length{\awordbis}-\ell'< \length{\awordbis}$ and $i-j \geq \ell+\length{\awordbis}$) and $k \geq \length{\aword_1}+(N-1)\cdot \length{\awordbis}$ (because $i+(\length{\awordbis}-\ell) \geq \length{\aword_1}+(M-N)\cdot \length{\awordbis}$ and $\length{\awordbis}-\ell < \length{\awordbis}$) and $|k-k'|=0 \mod \length{u}$ ($k$ and $k'$ are both pointing on the $\ell'$-th position in word $\awordbis$). This allows us to deduce that 
$\pair{\aword}{k} \equivrel{N-1} \pair{\aword'}{k'}$.
\item If $j-i \geq \ell+\length{\awordbis}$ ($j$ is neither in the same word $\awordbis$ as $i$ 
nor in the previous word $\awordbis$) 
and $j \geq \length{\aword_1}+N\cdot \length{\awordbis}$ \commzone{$j$ is in zone C}. 
Then let $\ell' = (j-(\length{\aword_1} + N \cdot \length{\awordbis})) \mod \length{\awordbis}$ the relative 
position of $j \in \awordbis$. We choose $j'=i'-\ell-(\length{\awordbis}-\ell')$ ($j'$ 
and $j$ are on the same position of $\awordbis$ but in the word $\awordbis$ 
precedent in the one to which $i$ belongs to). We have then that 
$j' \geq \length{\aword_1}+(N-1)\cdot \length{\awordbis}$ \commzone{$j'$ is zone B or C} (because 
$i'-\ell \geq \length{\aword_1}+N)\cdot \length{\awordbis}$ and $\length{\awordbis}-\ell' \leq \length{\awordbis}$) and 
$|j-j'|=0 \mod \length{\awordbis}$ ($j$ and $j'$ are both pointing on the $\ell'$-th position in word $\awordbis$), 
hence $\pair{\aword}{j} \equivrel{N-1} \pair{\aword'}{j'}$. Let $k' \in \interval{j'-1}{i'}$. 
If $i'-k' \leq \ell$ ($k'$ and $i'$ are in the same word $\awordbis$), then let $k=i-(i'-k')$. In this case we have $k \geq \length{\aword_1}+N\cdot \length{\awordbis}$ \commzone{$k$ is in Zone C} and since $|i-i'|=0 \mod \length{\awordbis}$, we also have 
$|k-k'|=0 \mod \length{\awordbis}$, hence $\pair{\aword}{k} \equivrel{N-1} \pair{\aword'}{k'}$. 
If $i'-k' > \ell$ ($k'$ and $i'$ are not in the same word $\awordbis$), 
then $k'-j' <\length{\awordbis}- \ell'$ and let $k=j+k'-j'$. In this case we have 
$\length{\aword_1}+N\cdot \length{\awordbis} \leq k < \length{\aword_1}+(M-N)\cdot \length{\awordbis}$ \commzone{$k$ is in Zone C} and 
since $|j-j'|=0 \mod \length{\awordbis}$, we also have $|k-k'|=0 \mod \length{\awordbis}$, whence
 $\pair{\aword}{k} \equivrel{N-1} \pair{\aword'}{k'}$.
\end{itemize}
\end{itemize}
\end{proof}

}

\cut{
\begin{lemma}\label{lemma-word}
Given $\aword,\aword' \in \edges^\omega$, a natural $N \geq 2$ and two positions $i,i' \in \nat$, if $(\aword,i) \equivrel{N} (\aword',i')$ then $(\aword, i) \equivrel{N-1} (\aword', i')$ and $\aword(i) = \aword'(i')$.
\end{lemma}
\ifshort
\else
\begin{proof}
Assume $(\aword,i) \equivrel{N} (\aword',i')$ then let $\awordbis \in \edges^+$, $\aword_1 \in \edges^\ast$, $\aword_2 \in \edges^\omega$ and $M,M' \in \nat$ as in the definition of the equivalence relation $\equivrel{N}$. 

First we will prove that $(\aword,i) \equivrel{N-1} (\aword',i')$. Since $N > N-1$, it is obvious that $M \equivrel{2(N-1)} M'$. Furthermore if $M=M'$ then $i=i'$ and consequently $(\aword,i) \equivrel{N-1} (\aword',i')$. If $M \neq M'$ with $M >M'$ (the case $M>M'$ can be treated similarly), we proceed by a case analysis on the position of $i$.:
\begin{itemize}
\item If $i< \length{\aword_1}+(N-1) \length{\awordbis}$ \commzone{$i$ is Zone A} then $i,i' < \length{\aword_1}+N\length{\awordbis}$ and  $i=i'$.
\item If $i \geq \length{\aword_1} +  (M-(N-1))\length{\awordbis}$ \commzone{$i$ is in zone E} then $i \geq \length{\aword_1} +  (M-N)\length{\awordbis}$ and $i' \geq \length{\aword_1} +  (M'-N)\length{\awordbis}$ \commzone{$i,i'$ are in Zone D or E} and $i=i'+(M-M')\length{\awordbis}$, hence we also have $i' \geq \length{\aword_1} +  (M'-(N-1))\length{\awordbis}$.
\item If $\length{\aword_1} + (N-1) \length{\awordbis} \leq i < \length{\aword_1} +  N\length{\awordbis}$ \commzone{$i$ is in Zone B} then $ i' < \length{\aword_1} +  N\length{\awordbis}$ and $i=i'$. So $\length{\aword_1} + (N-1) \length{\awordbis} \leq i' < \length{\aword_1} +  N\length{\awordbis}$ \commzone{$i'$ is also in Zone B} and $|i - i'| = 0 \mod \length{\awordbis}$.
\item If $\length{\aword_1} + N \length{\awordbis} \leq i < \length{\aword_1} +  (M-N)\length{\awordbis}$ \commzone{$i$ in Zone C}, then $\length{\aword_1} + N \length{\awordbis} \leq i' < \length{\aword_1} +  (M'-N)\length{\awordbis}$ \commzone{$i'$ is in Zone C} and $|i - i'| = 0 \mod \length{\awordbis}$. Consequently $\length{\aword_1} + N \length{\awordbis} \leq i' < \length{\aword_1} +  (M'-(N-1))\length{\awordbis}$ and  $|i - i'| = 0 \mod \length{\awordbis}$.
\item If $\length{\aword_1} + (M-N) \length{\awordbis} \leq i < \length{\aword_1} +  (M-(N-1))\length{\awordbis}$ \commzone{$i$ in Zone D}, then $\length{\aword_1} + (M'-N) \length{\awordbis} \leq i'$ and $i=i'+(M-M')\length{\awordbis}$.  Consequently $ i' < \length{\aword_1} +  (M'-(N-1))\length{\awordbis}$ \commzone{$i'$ is also in Zone E} and  $|i - i'| = 0 \mod \length{\awordbis}$.
\end{itemize}
From these different points, we deduce that $(\aword,i) \equivrel{N-1} (\aword',i')$.

It is also clear that $\aword(i) = \aword'(i')$, because either $i$ and $i'$ are at the same position in the word $\aword_1$ or $\aword_2$ either they are pointing some positions in the portions of the word which belong to $\awordbis^+$ and since their difference will be such that $|i - i'| = 0 \mod \length{\awordbis}$, it easy to see that $i$ and $i'$ will point at the same position in  $\awordbis$.
\end{proof}

\fi
\begin{lemma}\label{lemma-word-bis}
Given $\aword,\aword' \in \edges^\omega$, a natural $N \geq 2$ and two positions $i,i' \in \nat$, if $(\aword,i) \equivrel{N} (\aword',i')$ then the following properties hold:
\begin{itemize} 
\item[\textbf(i)] $(\aword,i+1) \equivrel{N-1} (\aword', i'+1)$.
\item[\textbf(ii)]  if $i,i'>0$ then $(\aword, i-1) \equivrel{N-1} (\aword', i'-1)$.
\item[\textbf(iii)] for all $j \geq i$, there is $j' \geq i'$ such that:
\begin{enumerate}
\item  $(\aword, j) \equivrel{N-1} (\aword',j')$,
\item and for all $k' \in \set{i',\ldots,j'-1}$, there is
            $k \in  \set{i,\ldots,j-1}$ such that 
             $(\aword, k) \equivrel{N-1} (\aword', k')$.
      \end{enumerate}
\item[\textbf(iv)] for all $j \leq i$, there
     is $j' \leq i'$ 
      such that:
      \begin{enumerate}
      \item  $(\aword, j) \equivrel{N-1}{} (\aword', j')$,
      \item and for all $k' \in \set{j'-1,\ldots,i'}$, there is
            $k \in  \set{j-1,\ldots,i}$ such that 
             $(\aword, k) \equivrel{N-1}{} (\aword', k')$.
      \end{enumerate}
\end{itemize}
\end{lemma}
\ifshort
\else
\begin{proof}
Assume $(\aword,i) \equivrel{N} (\aword',i')$ then let $\awordbis \in \edges^+$, $\aword_1 \in \edges^\ast$, $\aword_2 \in \edges^\omega$ and $M,M' \in \nat$ as in the definition of the equivalence relation $\equivrel{N}$. We will now prove the different properties. First note that since $N > N-1$, it is obvious that $M \equivrel{2(N-1)} M'$\\

\textbf{Property (i).} If $M=M'$ then $i=i'$ and $i+1=i'+1$ consequently $(\aword,i+1) \equivrel{N-1} (\aword',i'+1)$. If $M \neq M'$ with $M >M'$ (the case $M'>M$ can be treated similarly) we proceed by a case analysis on the position of $i$:
\begin{itemize}
\item If $i< \length{\aword_1}+(N-1) \length{\awordbis}$ \commzone{$i$ is in Zone A} then $i,i' < \length{\aword_1}+N\length{\awordbis}$ and $i=i'$ and hence $i'< \length{\aword_1}+(N-1) \length{\awordbis}$ \commzone{$i'$ is also in Zone A}. We can deduce that $i+1=i'+1$ and either $i'+1,i+1 < \length{\aword_1}+(N-1)\length{\awordbis}$ \commzone{$i+1$ and $i'+1$ are in zone A} or, since $\length{\awordbis}>0$, $\length{\aword_1}+(N-1)\length{\awordbis} \leq i'+1,i+1 < \length{\aword_1}+N \length{\awordbis}$ \commzone{$i+i$ and $i'+1$ are in zone B} and in the latter case we have effectively $|(i+1) - (i'+1)| = 0 \mod \length{\awordbis}$. Hence $(\aword,i+1) \equivrel{N-1} (\aword', i'+1)$.
\item If $i \geq \length{\aword_1} +  (M-N)\length{\awordbis}$ \commzone{$i$ is in Zone D or E} then $i' \geq \length{\aword_1} +  (M'-N)\length{\awordbis}$ \commzone{$i'$ is in zone D or E} and $i=i'+(M-M')\length{\awordbis}$. Consequently we have $i+1 \geq  \length{\aword_1} +  (M-N)\length{\awordbis}$ and $i'+1 \geq \length{\aword_1} +  (M'-N)\length{\awordbis}$ \commzone{$i+1$ and $i'+1$ are also in Zone D or E} and $i+1=i'+1+(M-M')\length{\awordbis}$. From this we deduce that $(\aword,i+1) \equivrel{N} (\aword', i'+1)$ and using Lemma \ref{lemma-word}, we obtain $(\aword,i+1) \equivrel{N-1} (\aword', i'+1)$.
\item If $\length{\aword_1}+N \length{\awordbis} \leq i < \length{\aword_1}+(M-N) \length{\awordbis}$ \commzone{$i$ is in Zone C}, then  $\length{\aword_1}+N \length{\awordbis} \leq i' <\length{\aword_1}+(M'-N) \length{\awordbis}$ \commzone{$i'$ is also in Zone C} and $|i - i'| = 0 \mod \length{\awordbis}$. Since $\length{\awordbis} >0$ we deduce that $\length{\aword_1}+(N-1) \length{\awordbis} \leq i+1 < \length{\aword_1}+(M-(N-1)) \length{\awordbis}$ and  $\length{\aword_1}+(N-1) \length{\awordbis} \leq i'+1 < \length{\aword_1}+(M'-(N-1)) \length{\awordbis}$ \commzone{$i+1$ and $i'+1$ are in Zone C or D} and $|(i+1) - (i'+1)| = 0 \mod \length{\awordbis}$, hence $(\aword,i+1) \equivrel{N-1} (\aword', i'+1)$.
\item If $\length{\aword_1}+(N-1) \length{\awordbis} \leq i < \length{\aword_1}+N \length{\awordbis}$ \commzone{$i$ is in Zone B}, then  $i,i' < \length{\aword_1}+ N \length{\awordbis}$ and $i=i'$ \commzone{$i$ and $i'$ are in Zone B}. Furthermore since $M \neq M'$ and $M \equivrel{2N} M'$, we have $M>2N$ and $M'>2N$, hence $(M-N) > N$ and $(M'-N) > N$. Using that $\length{\awordbis}>0$, this allows us to deduce that  $i+1 < \length{\aword_1}+(M-N)) \length{\awordbis}$ and $i'+1 < \length{\aword_1}+(M'-N-1) \length{\awordbis}$ \commzone{$i+1$ and $i'+1$ are in zone B or C} and since $|(i+1) - (i'+1)| = 0 \mod \length{\awordbis}$, we obtain $(\aword,i+1) \equivrel{N-1} (\aword', i'+1)$.
\end{itemize}

\textbf{Property (ii).} If $M=M'$ then $i=i'$ and $i-1=i'-1$ consequently $(\aword,i-1) \equivrel{N-1} (\aword',i'-1)$. If $M \neq M'$ with $M >M'$ (the case $M'>M$ can be treated similarly) we proceed by a case analysis on the position of $i$:
\begin{itemize}
\item If $i< \length{\aword_1}+N \length{\awordbis}$ \commzone{$i$ is in Zone A or B} then $i,i' < \length{\aword_1}+N\length{\awordbis}$ \commzone{$i$ and $i'$ are in Zone A or B} and  $i=i'$. Hence $i-1,i'-1 < \length{\aword_1}+N\length{\awordbis}$ \commzone{$i-1$ and $i'-1$ are in Zone A or B}.From this we deduce that $(\aword,i-1) \equivrel{N} (\aword', i'-1)$ and using Lemma \ref{lemma-word}, we obtain $(\aword,i-1) \equivrel{N-1} (\aword', i'-1)$.
\item If $i \geq \length{\aword_1} +  (M-(N-1))\length{\awordbis}$ \commzone{$i$ is in Zone E} then  $i=i'+(M-M')\length{\awordbis}$ and $i' \geq \length{\aword_1} +  (M'-(N-1))\length{\awordbis}$. We can deduce that $i-1=i'-1+(M-M')\length{\awordbis}$ and either ($i-1 \geq \length{\aword_1} +  (M-(N-1))\length{\awordbis}$  and $i'-1 \geq \length{\aword_1} +  (M-(N-1))\length{\awordbis}$) \commzone{$i-1$ and $i'-1$ are in Zone E} or, since $\length{\awordbis}>0$, ( $\length{\aword_1} +  (M-N)\length{\awordbis}\leq i-1 < \length{\aword_1} +  (M-(N-1))\length{\awordbis}$ and $\length{\aword_1} +  (M'-N)\length{\awordbis}\leq i'-1 < \length{\aword_1} +  (M'-(N-1))\length{\awordbis}$) \commzone{$i-1$ and $i'-1$ are in Zone D} and in the latter case, we have effectively $|(i-1)-(i'-1)|=0\mod \length{\awordbis}$. Hence $(\aword,i-1) \equivrel{N-1} (\aword', i'-1)$.
\item If $\length{\aword_1}+N \length{\awordbis} \leq i < \length{\aword_1}+(M-N) \length{\awordbis}$ \commzone{$i$ is in Zone C}, then  $\length{\aword_1}+N \length{\awordbis} \leq i' <\length{\aword_1}+(M'-N) \length{\awordbis}$ \commzone{$i'$ is in Zone C} and $|i - i'| = 0 \mod \length{\awordbis}$. Since $\length{\awordbis} >0$ we deduce that $\length{\aword_1}+(N-1) \length{\awordbis} \leq i-1 < \length{\aword_1}+(M-(N-1)) \length{\awordbis}$ and $\length{\aword_1}+(N-1) \length{\awordbis} \leq i'-1 < \length{\aword_1}+(M'-(N-1)) \length{\awordbis}$ \commzone{$i-1$ and $i'-1$ are in zone B or C (or D)} and $|(i-1) - (i'-1)| = 0 \mod \length{\awordbis}$, hence $(\aword,i-1) \equivrel{N-1} (\aword', i'-1)$.
\item If $\length{\aword_1}+(M-N) \length{\awordbis} \leq i < \length{\aword_1}+(M-(N-1)) \length{\awordbis}$ \commzone{$i$ in Zone D}, then  $ \length{\aword_1}+(M-N) \length{\awordbis} \leq i$ and $ \length{\aword_1}+(M'-N) \length{\awordbis} \leq i'$ and $i=i'+(M-M')\length{\awordbis}$. Furthermore since $M \neq M'$ and $M \equivrel{2N} M'$, we have $M>2N$ and $M'>2N$, hence $(M-N) > N$ and $(M'-N) > N$. Using that $\length{\awordbis}>0$, this allows us to deduce that  $ \length{\aword_1}+N \length{\awordbis}\leq i-1,i'-1$ \commzone{$i-1$ and $i'-1$ are in Zone C} and since $|(i+1) - (i'+1)| = 0 \mod \length{\awordbis}$, we obtain $(\aword,i-1) \equivrel{N-1} (\aword', i'-1)$.
\end{itemize}

\textbf{Property (iii).} If $M=M'$ then take for all $j \geq i$, $j'=j$, this will implies that $(\aword,j) \equivrel{N-1} (\aword'j')$ and for all $k' \in \set{i',\ldots,j'-1}$, take $k=k'$, hence we will have $k \in  \set{i,\ldots,j-1}$ and $(\aword, k) \equivrel{N-1} (\aword', k')$. Assume now $M \neq M'$ with $M >M'$ (the case $M'>M$ can be treated similarly). We proceed by a case analysis on the position  $i$ and $j$:
\begin{itemize}
\item If $i \geq \length{\aword_1}+(M-N)\length{\awordbis}$ \commzone{$i$ is in Zone D or E} then $j \geq \length{\aword_1}+(M-N)\length{\awordbis}$ \commzone{$j$ is in Zone D or E} and $i' \geq \length{\aword_1} +  (M'-N)\length{\awordbis}$ \commzone{$i'$ is zone D or E} and $i=i'+(M-M')\length{\awordbis}$. We define $j'=j-(M-M')\length{\awordbis}$. Then it is clear that $j' \geq i'$ and also that $(\aword,j) \equivrel{N} (\aword'j')$ and by Lemma \ref{lemma-word} we get $(\aword,j) \equivrel{N-1} (\aword'j')$. Let $k' \in \set{i',\ldots,j'-1}$ and let $k=k'+(M-M')\length{\awordbis}$, then we have that $k \in \set{i,\ldots,j-1}$ and also $(\aword,k) \equivrel{N} (\aword',k')$, hence by Lemma \ref{lemma-word}, $(\aword,k) \equivrel{N-1} (\aword',k')$.
\item If $i < \length{\aword_1}+N\length{\awordbis}$ \commzone{$i$ is in Zone A or B} then $i'< \length{\aword_1}+N\length{\awordbis}$ \commzone{$i'$ is in Zone A or B} and $i=i'$ and we have the following possibilities for the position of $j \geq i$:
\begin{itemize}
\item if $j < \length{\aword_1}+N \length{\awordbis}$ \commzone{$j$ is in Zone A or B}, then let $j'=j$. Consequently we have $(\aword,j) \equivrel{N} (\aword'j')$ and by Lemma \ref{lemma-word} we get $(\aword,j) \equivrel{N-1} (\aword'j')$. Let $k' \in \set{i',\ldots,j'-1}$ and $k=k'$. Then we have that $k \in \set{i,\ldots,j-1}$ and also $(\aword,k) \equivrel{N} (\aword',k')$ and by Lemma \ref{lemma-word}, $(\aword,k) \equivrel{N-1} (\aword',k')$.
\item if $\length{\aword_1}+N \length{\awordbis} \leq j < \length{\aword_1}+(M-N) \length{\awordbis}$ \commzone{$j$ is in Zone C}, then let $\ell=(j-(\length{\aword_1}+N \length{\awordbis})) \mod \length\awordbis$ ($\ell$ the relative position of $j$ in the word $\awordbis$ it belongs to). Consequently $0 \leq \ell < \length{\awordbis}$. Let $j'= \length{\aword_1} + N \length{\awordbis}+\ell$ (we chose $j'$ at the same relative position of $j$ in the first word $\awordbis$ of the zone C). Then $\length{\aword_1}+N \length{\awordbis} \leq j' < \length{\aword_1}+(M'-N) \length{\awordbis}$ \commzone{$j'$ is in Zone C} (because $(M'-N)>0$) and $|j - j'| = 0 \mod \length{\awordbis}$. We deduce that $(\aword,j) \equivrel{N} (\aword',j')$ and by Lemma \ref{lemma-word} we get $(\aword,j) \equivrel{N-1} (\aword'j')$.
Then let $k' \in \set{i',\ldots,j'-1}$ and let $k=k'$. Then we have that $k \in \set{i,\ldots,j-1}$. Furthermore if $k' < \length{\aword_1}+N \length{\awordbis}$ \commzone{$k'$ is in Zone A or B} we obtain $(\aword,k) \equivrel{N} (\aword',k')$ and by Lemma \ref{lemma-word}, $(\aword,k) \equivrel{N-1} (\aword',k')$. And if $\length{\aword_1}+N \length{\awordbis} \leq k'$ \commzone{$k'$ is in Zone C} then necessarily $k'< \length{\aword_1}+(M'-N) \length{\awordbis}$ (because $k'<j'$) and $k< \length{\aword_1}+(M-N) \length{\awordbis}$ (because $M>M'$) and also $|k - k'| = 0 \mod \length{\awordbis}$ (because $k=k'$). Hence $(\aword,k) \equivrel{N} (\aword', k')$ and Lemma \ref{lemma-word} tells us that $(\aword,k) \equivrel{N-1} (\aword', k')$.
\item if $ \length{\aword_1}+(M-N) \length{\awordbis} \leq j$ \commzone{$j$ is in Zone E}, let $j'=j-(M-M')\length{\awordbis}$. We have then $ \length{\aword_1}+(M'-N) \length{\awordbis} \leq j'$ \commzone{$j'$ is in Zone D or E} and we deduce that $(\aword,j) \equivrel{N} (\aword'j')$ and by Lemma \ref{lemma-word} we get $(\aword,j) \equivrel{N-1} (\aword'j')$. Then let $k' \in \set{i',\ldots,j'-1}$. If $k' < \length{\aword_1}+N \length{\awordbis}$ \commzone{$k'$ is in Zone A or B}, for $k=k'$, we obtain $(\aword,k) \equivrel{N} (\aword',k')$ and by Lemma \ref{lemma-word}, $(\aword,k) \equivrel{N-1} (\aword',k')$. If $k' \geq \length{\aword_1}+(M'-N) \length{\awordbis}$ \commzone{$k'$ is in Zone D or E}, we choose $k=k'+(M-M') \length{\awordbis}$ and here also we deduce $(\aword,k) \equivrel{N} (\aword',k')$ and by Lemma \ref{lemma-word}, $(\aword,k) \equivrel{N-1} (\aword',k')$. If ${\aword_1}+N \length{\awordbis} \leq k' < \length{\aword_1}+(M'-N) \length{\awordbis}$ \commzone{$k'$ is in Zone C}, let $\ell=(k'-(\length{\aword_1}+N \length{\awordbis})) \mod \length{\awordbis}$ ($\ell$ is the relative position of $k'$ in the word $\awordbis$ it belongs to) and let $k=\length{\aword_1}+N \length{\awordbis}+\ell$ ($k$ is placed at the same relative position  of $k'$ in the first word $\awordbis$ of the Zone C). Then we have ${\aword_1}+N \length{\awordbis} \leq k < \length{\aword_1}+(M-N) \length{\awordbis}$ and $|k-k'|=0 \mod \length{\awordbis}$ which allows to deduce that $(\aword,k) \equivrel{N} (\aword',k')$ and by Lemma \ref{lemma-word}, $(\aword,k) \equivrel{N-1} (\aword',k')$.
\end{itemize}
\item If $\length{\aword_1}+N\length{\awordbis} \leq i < \length{\aword_1}+(M-N)\length{\awordbis}$ \commzone{$i$ is zone C} then $\length{\aword_1}+N\length{\awordbis} \leq i' < \length{\aword_1}+(M'-N)\length{\awordbis}$ \commzone{$i'$ is in Zone C} and $|i-i'|=0 \mod \length{\awordbis}$. Let $\ell=(i-(\length{\aword_1}+N\length{\awordbis})) \mod \length{\awordbis}$ (the relative position of $i$ in the word $\awordbis$). We have the following possibilities for the position of $j \geq i$:
\begin{itemize}
\item If $j-i < \length{\awordbis}-\ell+\length{\awordbis}$ ($j$ is either in the same word $\awordbis$ as $i$ or in the next word $\awordbis$), then $j < \length{\aword_1}+(M-(N-1))\length{\awordbis}$ \commzone{$j$ is in Zone C or D}. We define $j'=i'+(j-i)$ and we have that $\length{\aword_1}+N\length{\awordbis} \leq j' < \length{\aword_1}+(M'-(N-1))\length{\awordbis}$ \commzone{$j'$ is in Zone C or D} and since $|i-i'|=0 \mod \length{\awordbis}$, we deduce  $|j-j'|=0 \mod \length{\awordbis}$. From this we obtain  $(\aword,j) \equivrel{N-1} (\aword',j')$. Let $k' \in \set{i',\ldots,j'-1}$ and $k=i+k'-i'$. We have then that $k \in \set{i,\ldots,j-1}$ and $\length{\aword_1}+N\length{\awordbis} \leq k' < \length{\aword_1}+(M'-(N-1))\length{\awordbis}$ and $\length{\aword_1}+N\length{\awordbis} \leq k < \length{\aword_1}+(M-(N-1))\length{\awordbis}$ and since $|i-i'|=0 \mod \length{\awordbis}$, we also have $|k-k'|=0 \mod \length{\awordbis}$. Consequently $(\aword,k) \equivrel{N-1} (\aword',k')$.
\item If $j-i \geq \length{\awordbis}-\ell+\length{\awordbis}$ ($j$ is not in the same word $\awordbis$ as $i$ neither in the next word $\awordbis$) and $j  \geq \length{\aword_1}+(M-N)\length{\awordbis}$ \commzone{$j$ is in Zone E}. Let $j'=j-(M-M')\length{\awordbis}$
then $j' \geq \length{\aword_1}+(M'-N)\length{\awordbis}$ \commzone{$j$ is in Zone E} and consequently $(\aword,j) \equivrel{N} (\aword',j')$ and using Lemma \ref{lemma-word} we get $(\aword,j) \equivrel{N-1} (\aword',j')$. Then let $k' \in \set{i',\ldots,j'-1}$. If $k'\geq \length{\aword_1}+(M'-N)\length{\awordbis}$ \commzone{$k'$ is in Zone D or E}, then let $k=k'+(M-M')\length{\awordbis}$; we have in this case that $k  \geq \length{\aword_1}+(M-N)\length{\awordbis}$ and this allows us to deduce that $(\aword,k) \equivrel{N-1} (\aword',k')$. Now assume $k'< \length{\aword_1}+(M'-N)\length{\awordbis}$ \commzone{$k'$ is in Zone C} and $k'-i' < \length{\awordbis} -\ell$ ($k'$ and $i'$ are in the same word $\awordbis$), then let $k=i+k'-i'$. In this case we have $k < \length{\aword_1}+(M-N)\length{\awordbis}$ \commzone{$k$ is in Zone C} and since $|i-i'|=0 \mod \length{\awordbis}$, we also have $|k-k'|=0 \mod \length{\awordbis}$, hence $(\aword,k) \equivrel{N-1} (\aword',k')$. Now assume $k'< \length{\aword_1}+(M'-N)\length{\awordbis}$ \commzone{$k'$ is in Zone C} and $k'-i' \geq \length{\awordbis} -\ell$  ($k'$ and $i'$ are not in the same word $\awordbis$). We denote by $\ell'=(k'-(\length{\aword_1}+N\length{\awordbis})) \mod \length{\awordbis}$ the relative position of $k'$ in $\awordbis$ and let $k=i+(\length{\awordbis}-\ell)+\ell'$ ($k$ is at the same position as $k'$ of $k$ in the word $\awordbis$ following the word $\awordbis$ $i$ belongs to). Then $k \in \set{i,\ldots,j-1}$ (because $\ell' < \length{\awordbis}$ and $j-i \geq \length{\awordbis}-\ell+\length{\awordbis}$) and $k < \length{\aword_1}+(M-(N-1))\length{\awordbis}$ (because $i+(\length{\awordbis}-\ell) <\length{\aword_1}+(M-N)\length{\awordbis}$ and $\ell' < \length{\awordbis}$) and $|k-k'|=0 \mod \length{u}$ ($k$ and $k'$ are both pointing on the $\ell'$-th position in word $\awordbis$). This allows us to deduce that $(\aword,k) \equivrel{N-1} (\aword',k')$.
\item If $j-i \geq \length{\awordbis}-\ell+\length{\awordbis}$ ($j$ is not in the same word $\awordbis$ as $i$ neither in the next word $\awordbis$) and $j < \length{\aword_1}+(M-N)\length{\awordbis}$ \commzone{$j$ is in Zone C}. Then let $\ell' = (j-(\length{\aword_1} + N \length{\awordbis})) \mod \length{\awordbis}$ the relative position of $j \in \awordbis$. We choose $j'=i'+(\length{\awordbis}-\ell)+\ell'$ ($j'$ is in the same position of $j$ in the word $\awordbis$ following the word $\awordbis$ $i$ belongs to). We have then that $j' < \length{\aword_1}+(M'-(N-1))\length{\awordbis}$ \commzone{$j'$ is in Zone C or D} (because $i'+(\length{\awordbis}-\ell) <\length{\aword_1}+(M-N)\length{\awordbis}$ and $\ell' < \length{\awordbis}$) and $|j-j'|=0 \mod \length{\awordbis}$ ($j$ and $j'$ are both pointing on the $\ell'$-th position in word $\awordbis$), hence $(\aword,j) \equivrel{N-1} (\aword',j')$. Let $k' \in \set{i',\ldots,j'-1}$. If $k'-i' < \length{\awordbis} -\ell$ ($k'$ and $i'$ are in the same word $\awordbis$), then let $k=i+k'-i'$. In this case we have $k < \length{\aword_1}+(M-N)\length{\awordbis}$ \commzone{$k$ is in Zone C} and since $|i-i'|=0 \mod \length{\awordbis}$, we also have $|k-k'|=0 \mod \length{\awordbis}$, hence $(\aword,k) \equivrel{N-1} (\aword',k')$. If $k'-i' \geq \length{\awordbis} -\ell$ ($k'$ and $i'$ are not in the same word $\awordbis$), then $j'-k' < \ell'$ and let $k=j-j'-k'$. In this case we have $k < \length{\aword_1}+(M-N)\length{\awordbis}$ \commzone{$k$ is in Zone C} and since $|j-j'|=0 \mod \length{\awordbis}$, we also have $|k-k'|=0 \mod \length{\awordbis}$, hence $(\aword,k) \equivrel{N-1} (\aword',k')$.

\end{itemize}
\end{itemize}
\textbf{Property (iv).} If $M=M'$ then take for all $j \geq i$, $j'=j$, this will implies that $(\aword,j) \equivrel{N-1} (\aword'j')$ and for all for all $k' \in \set{j'-1,\ldots,i'}$, take $k=k'$, hence we will have $k \in  \set{j-1,\ldots,i}$ and $(\aword, k) \equivrel{N-1} (\aword', k')$. Assume now $M \neq M'$ with $M >M'$ (the case $M'>M$ can be treated similarly). We proceed by a case analysis on the position  $i$ and $j$:
\begin{itemize}
\item If $i < \length{\aword_1}+N\length{\awordbis}$ \commzone{$i$ is in Zone A or B} then $j< \length{\aword_1}+N\length{\awordbis}$ \commzone{$j$ is in Zone A or B} and $i' < \length{\aword_1} + N\length{\awordbis}$ \commzone{$i'$ is in Zone A or B} and $i=i'$. We define $j'=j$. Then it is clear that $j' < i'$ and also that $(\aword,j) \equivrel{N} (\aword'j')$ and by Lemma \ref{lemma-word} we get $(\aword,j) \equivrel{N-1} (\aword'j')$. Let $k' \in \set{j'-1,\ldots,i'}$ and let $k=k'$, then we have that $k \in \set{j-1,\ldots,i}$ and also $(\aword,k) \equivrel{N} (\aword',k')$, hence by Lemma \ref{lemma-word}, $(\aword,k) \equivrel{N-1} (\aword',k')$.
%

%
\item If $i \geq \length{\aword_1}+(M-N)\length{\awordbis}$ \commzone{$i$ is Zone D or E} then $i'\geq \length{\aword_1}+(M'-N)\length{\awordbis}$ \commzone{$i'$ is in Zone E} and $i=i'+(M-M')\length{\awordbis}$ and we have the following possibilities for the position of $j \leq i$:
\begin{itemize}
\item if $j \geq \length{\aword_1}+(M-N) \length{\awordbis}$ \commzone{$j$ is in Zone D or E}, then let $j'=j-(M-M')\length{\awordbis}$. Consequently we have $(\aword,j) \equivrel{N} (\aword'j')$ and by Lemma \ref{lemma-word} we get $(\aword,j) \equivrel{N-1} (\aword'j')$. Let $k' \in \set{j'-1,\ldots,i'}$ and $k=k'+(M-M')\length{\awordbis}$. Then we have that $k \in \set{j-1,\ldots,i}$ and also $(\aword,k) \equivrel{N} (\aword',k')$ and by Lemma \ref{lemma-word}, $(\aword,k) \equivrel{N-1} (\aword',k')$.
\item if $\length{\aword_1}+N \length{\awordbis} \leq j < \length{\aword_1}+(M-N) \length{\awordbis}$ \commzone{$j$ is in Zone C}, then let $\ell=(j-(\length{\aword_1}+N \length{\awordbis})) \mod \length\awordbis$ ($\ell$ is the relative position of $j$ in the word $\awordbis$ it belongs to). Consequently $0 \leq \ell < \length{\awordbis}$. Let $j'= \length{\aword_1} + (M'-N) \length{\awordbis}- (\length{\awordbis}-\ell$ ($j'$ is at the same position as $j$ in the last word $\awordbis$ of the Zone C). Then $\length{\aword_1}+N \length{\awordbis} \leq j' < \length{\aword_1}+(M'-N) \length{\awordbis}$ \commzone{$j'$ is in Zone C} (because $(M'-M)>N$) and $|j - j'| = 0 \mod \length{\awordbis}$ (they are at the same position in the word $\awordbis$). We deduce that $(\aword,j) \equivrel{N} (\aword',j')$ and by Lemma \ref{lemma-word} we get $(\aword,j) \equivrel{N-1} (\aword'j')$.
Then let $k' \in \set{j'-1,\ldots,i}$ and let $k=k'+(M-M') \length{\awordbis}$. Then we have that $k \in \set{j-1,\ldots,i}$. Furthermore if $k' \geq \length{\aword_1}+(M'-N) \length{\awordbis}$ \commzone{$k'$ is in Zone D or E} then $k \geq \length{\aword_1}+(M-N) \length{\awordbis}$ \commzone{$k$ is in Zone D or E} and we obtain $(\aword,k) \equivrel{N} (\aword',k')$ and by Lemma \ref{lemma-word}, $(\aword,k) \equivrel{N-1} (\aword',k')$. And if $k'< \length{\aword_1}+(M'-N) \length{\awordbis}$  then necessarily $ \length{\aword_1}+N \length{\awordbis} \leq k'$ \commzone{$k'$ is in Zone C} (because $j'<k'$) and $\length{\aword_1}+N \length{\awordbis}\leq k< \length{\aword_1}+(M-N) \length{\awordbis}$ (because $M>M'$) and also $|k - k'| = 0 \mod \length{\awordbis}$ (because $k=k'+(M-M') \length{\awordbis}$). Hence $(\aword,k) \equivrel{N} (\aword', k')$ and Lemma \ref{lemma-word} tells us that $(\aword,k) \equivrel{N-1} (\aword', k')$.
\item if $j< \length{\aword_1}+N \length{\awordbis}$ \commzone{$j$ is in Zone A}, let $j'=j$. We have then $j' < \length{\aword_1}+N\length{\awordbis}$ \commzone{$j'$ is in Zone A} and we deduce that $(\aword,j) \equivrel{N} (\aword'j')$ and by Lemma \ref{lemma-word} we get $(\aword,j) \equivrel{N-1} (\aword'j')$. Then let $k' \in \set{j'-1,\ldots,i'}$. If $k' < \length{\aword_1}+N \length{\awordbis}$ \commzone{$k'$ is in Zone A}, for $k=k'$, we obtain $(\aword,k) \equivrel{N} (\aword',k')$ and by Lemma \ref{lemma-word}, $(\aword,k) \equivrel{N-1} (\aword',k')$. If $k' \geq \length{\aword_1}+(M'-N) \length{\awordbis}$ \commzone{$k'$ is in Zone D or E}, we choose $k=k'+(M-M') \length{\awordbis}$ and here also we deduce $(\aword,k) \equivrel{N} (\aword',k')$ and by Lemma \ref{lemma-word}, $(\aword,k) \equivrel{N-1} (\aword',k')$. If ${\aword_1}+N \length{\awordbis} \leq k' < \length{\aword_1}+(M'-N) \length{\awordbis}$ \commzone{$k'$ is in Zone C}, let $\ell=(k'-(\length{\aword_1}+N \length{\awordbis})) \mod \length{\awordbis}$ ($\ell$ is the relative position of $k'$ in the word $\awordbis$ it belongs to) and let $k=\length{\aword_1}+N \length{\awordbis}+\ell$ ($k$ is at the same position of $k'$ in the first word of the zone C). Then we have ${\aword_1}+N \length{\awordbis} \leq k < \length{\aword_1}+(M-N) \length{\awordbis}$ \commzone{$k$ is in the Zone C} and $|k-k'|=0 \mod \length{\awordbis}$ which allows to deduce that $(\aword,k) \equivrel{N} (\aword',k')$ and by Lemma \ref{lemma-word}, $(\aword,k) \equivrel{N-1} (\aword',k')$.
\end{itemize}
\item If $\length{\aword_1}+N\length{\awordbis} \leq i < \length{\aword_1}+(M-N)\length{\awordbis}$ \commzone{$i$ in Zone C} then $\length{\aword_1}+N\length{\awordbis} \leq i' < \length{\aword_1}+(M'-N)\length{\awordbis}$ \commzone{$i'$ in Zone C} and $|i-i'|=0 \mod \length{\awordbis}$. Let $\ell=(i-(\length{\aword_1}+N\length{\awordbis})) \mod \length{\awordbis}$ (the relation position of $i$ in the word $\awordbis$ it belongs to). We have the following possibilities for the position of $j \leq i$:
\begin{itemize}
\item If $i-j <\ell+\length{\awordbis}$ ($j$ is  in the same word $\awordbis$ as $i$ or in the previous word $\awordbis$) then $j \geq \length{\aword_1}+(N-1)\length{\awordbis}$ \commzone{$j$ is in Zone B or C}. We define $j'=i'-(i-j)$ and we have that $\length{\aword_1}+(N-1)\length{\awordbis} \leq j' < \length{\aword_1}+(M'-N)\length{\awordbis}$ \commzone{$j'$ is in Zone B or C} and since $|i-i'|=0 \mod \length{\awordbis}$, we deduce  $|j-j'|=0 \mod \length{\awordbis}$. From this we obtain  $(\aword,j) \equivrel{N-1} (\aword',j')$. Let $k' \in \set{j'-1,\ldots,i}$ and $k=i-(i'-k')$. We have then that $k \in \set{j-1,\ldots,i}$ and $\length{\aword_1}+(N-1)\length{\awordbis} \leq k' < \length{\aword_1}+(M'-N)\length{\awordbis}$ \commzone{$k'$ is in Zone B or C} and $\length{\aword_1}+(N-1)\length{\awordbis} \leq k < \length{\aword_1}+(M-N)\length{\awordbis}$ \commzone{$k$ is in Zone B or C} and since $|i-i'|=0 \mod \length{\awordbis}$, we also have $|k-k'|=0 \mod \length{\awordbis}$. Consequently $(\aword,k) \equivrel{N-1} (\aword',k')$.
\item If $i-j \geq \ell+\length{\awordbis}$ ($j$ is not in the same word $\awordbis$ as $i$ neither in the previous word $\awordbis$) and $j  < \length{\aword_1}+N\length{\awordbis}$ \commzone{$j$ is in zone A or B}. Let $j'=j$ then $j' < \length{\aword_1}+N\length{\awordbis}$ and consequently $(\aword,j) \equivrel{N} (\aword',j')$ and using Lemma \ref{lemma-word} we get $(\aword,j) \equivrel{N-1} (\aword',j')$. Then let $k' \in \set{j'-1,\ldots,i'}$. If $k'< \length{\aword_1}+N\length{\awordbis}$ \commzone{$k'$ is i Zone A or B}, then let $k=k'$; we have in this case that $k  < \length{\aword_1}+N\length{\awordbis}$ and this allows us to deduce that $(\aword,k) \equivrel{N-1} (\aword',k')$. Now assume $k'\geq  \length{\aword_1}+N\length{\awordbis}$ \commzone{$k'$ is in Zone C} and $i'-k' \leq \ell$ ($k'$ and $i'$ are in the same word $\awordbis$), then let $k=i-(i'-k')$. In this case we have $k \geq \length{\aword_1}+N\length{\awordbis}$ \commzone{$k$ is in Zone C} and since $|i-i'|=0 \mod \length{\awordbis}$, we also have $|k-k'|=0 \mod \length{\awordbis}$, hence $(\aword,k) \equivrel{N-1} (\aword',k')$. Now assume $k'\geq \length{\aword_1}+N\length{\awordbis}$ \commzone{$k'$ is in Zone C} and $i'-k' > \ell$ ($k'$ and $i'$ are not in the same word $\awordbis$). We denote by $\ell'=(k'-(\length{\aword_1}+N\length{\awordbis})) \mod \length{\awordbis}$ the relation position of $k'$ in $\awordbis$ and let $k=i-\ell-(\length{\awordbis}-\ell')$ ($k$ is at the same position as $k'$ of $k$ in the word $\awordbis$  preceding the word $\awordbis$ $i$ belongs to). Then $k \in \set{j-1,\ldots,i}$ (because $\length{\awordbis}-\ell'< \length{\awordbis}$ and $i-j \geq \ell+\length{\awordbis}$) and $k \geq \length{\aword_1}+(N-1)\length{\awordbis}$ (because $i+(\length{\awordbis}-\ell) \geq \length{\aword_1}+(M-N)\length{\awordbis}$ and $\length{\awordbis}-\ell < \length{\awordbis}$) and $|k-k'|=0 \mod \length{u}$ ($k$ and $k'$ are both pointing on the $\ell'$-th position in word $\awordbis$). This allows us to deduce that $(\aword,k) \equivrel{N-1} (\aword',k')$.
\item If $j-i \geq \ell+\length{\awordbis}$ ($j$ is not in the same word $\awordbis$ as $i$ neither in the previous word $\awordbis$) and $j \geq \length{\aword_1}+N\length{\awordbis}$ \commzone{$j$ is in zone C}. Then let $\ell' = (j-(\length{\aword_1} + N \length{\awordbis})) \mod \length{\awordbis}$ the relative position of $j \in \awordbis$. We choose $j'=i'-\ell-(\length{\awordbis}-\ell')$ ($j'$ is in the same position of $j$ in the word $\awordbis$ preceding the word $\awordbis$ $i$ belongs to). We have then that $j' \geq \length{\aword_1}+(N-1)\length{\awordbis}$ \commzone{$j'$ is zone B or C} (because $i'-\ell \geq \length{\aword_1}+N)\length{\awordbis}$ and $\length{\awordbis}-\ell' \leq \length{\awordbis}$) and $|j-j'|=0 \mod \length{\awordbis}$ ($j$ and $j'$ are both pointing on the $\ell'$-th position in word $\awordbis$), hence $(\aword,j) \equivrel{N-1} (\aword',j')$. Let $k' \in \set{j'-1,\ldots,i'}$. If $i'-k' \leq \ell$ ($k'$ and $i'$ are in the same word $\awordbis$), then let $k=i-(i'-k')$. In this case we have $k \geq \length{\aword_1}+N\length{\awordbis}$ \commzone{$k$ is in Zone C} and since $|i-i'|=0 \mod \length{\awordbis}$, we also have $|k-k'|=0 \mod \length{\awordbis}$, hence $(\aword,k) \equivrel{N-1} (\aword',k')$. If $i'-k' > \ell$ ($k'$ and $i'$ are not in the same word $\awordbis$), then $k'-j' <\length{\awordbis}- \ell'$ and let $k=j+k'-j'$. In this case we have $ \length{\aword_1}+N\length{\awordbis} \leq k < \length{\aword_1}+(M-N)\length{\awordbis}$ \commzone{$k$ is in Zone C} and since $|j-j'|=0 \mod \length{\awordbis}$, we also have $|k-k'|=0 \mod \length{\awordbis}$, hence $(\aword,k) \equivrel{N-1} (\aword',k')$.

\end{itemize}
\end{itemize}

\end{proof}

\fi
We can now state our stuttering theorem for $\PLTL[\cdot]$.
\begin{theorem}[Stuttering theorem]
\label{theorem-stuttering}
Let  $\asys$ be a Kripke structure and $N \in \nat$.  If $\arun$ and $\arun'$ are two runs of $\asys$  and $i,i' \in \nat$ two positions such that $(\wordof{\arun},i) \equivrel{N} (\wordof{\arun'},i')$ then for all $\PLTL[\cdot]$ formulae $\aformula$ of temporal depth at most $N$, we have $\arun,i \models \aformula$ if and only if $\arun',i \models \aformula$.
\end{theorem}

\begin{proof}
The proof of this lemma is done by induction on the height of  $\aformula$ using the result of Lemmas \ref{lemma-word} and \ref{lemma-word-bis}. 
\end{proof}
}
\else
Stuttering of finite words or single letters has been instrumental to show \iflong several \fi 
results about the expressive power of $\PLTL[\emptyset]$ fragments, see e.g.~\cite{Peled&Wilke97,Kucera&Strejcek05};
for instance, $\PLTL[\emptyset]$ restricted to the temporal operator $\until$ characterizes \iflong exactly \fi 
the class of formulae defining  classes of models invariant under stuttering. 
This is refined in~\cite{Kucera&Strejcek05} for $\PLTL[\emptyset]$ restricted to $\until$ and 
$\mynext$, by taking into account not only the $\until$-depth but also the $\mynext$-depth of formulae
and by introducing a principle of stuttering that involves both letter stuttering and word stuttering. 
In this section, we establish another substantial generalization that involves $\PLTL[\emptyset]$ with past-time
temporal operators. Roughly speaking, we show that if
$\amodel_1 \mathbf{s}^M \amodel_2, 0 \models \aformula$ where 
$\amodel_1 \mathbf{s}^M \amodel_2$ is a $\PLTL[\emptyset]$ model ($\amodel_1,\mathbf{s}$ being finite words), 
$\aformula \in \PLTL[\emptyset]$,
$\tempdepth{\aformula} \leq N$ and $M \geq 2N+1$,
then $\amodel_1 \mathbf{s}^{2N+1} \amodel_2, 0 \models \aformula$ (and other related properties). 
\iflong
Hence,  if there is a run (a) satisfying a path schema $\aschema$ (see Section~\ref{section-path-schemas}) 
and (b) verifying
a $\PLTL[\emptyset]$ formula $\aformula$, then there is a run satisfying (a), (b) and each loop is visited
at most $2 \times \tempdepth{\aformula}+ 5$ times, leading to an \np \ upper bound (see
Proposition~\ref{proposition-iter-flatks}).
\fi 
This extends a result without past-time operators~\cite{Kuhtz&Finkbeiner11}.
Moreover, this turns out to be a key property (Theorem~\ref{theorem-stuttering})
to establish the \np \ upper bound even in the
presence of 
\iflong
counters (but a bit more work needs to be done). 
\else
counters.
\fi 
Note that Theorem~\ref{theorem-stuttering} below is interesting for its own sake, independently of our
investigation on flat counter systems. 
By lack of space, we state below the main definitions and result. 
\iflong
However, the full proof  can be found in 
Appendix~\ref{section-appendix-stuttering}. 
\fi 

Given $M,M',N \in \Nat$, we write $M \equivrel{N} M'$ iff
$\mathtt{Min}(M,N) = \mathtt{Min}(M',N)$. Given 
$\aword= \aword_1 \awordbis^M \aword_2, \aword'=\aword_1 \awordbis^{M'} \aword_2 \in \aalphabet^{\omega}$
and $i,i' \in \Nat$, we define an equivalence relation $\pair{\aword}{i} \equivrel{N} \pair{\aword'}{i'}$
(implicitly parameterized by $\aword_1$, $\aword_2$ and $\awordbis$) such that 
$\pair{\aword}{i} \equivrel{N} \pair{\aword'}{i'}$ means that
the number of copies of $\awordbis$ before position $i$ and 
the number of copies of $\awordbis$ before position $i'$
are related by $~\equivrel{N}~$ and the same applies for the number of copies after the 
positions. Moreover, if $i$ and $i'$ occur in the part where $\awordbis$ is repeated, then they correspond to 
identical positions in $\awordbis$. More formally,
$\pair{\aword}{i} \equivrel{N} \pair{\aword'}{i'}$ $\equivdef$ $M \equivrel{2N} M'$ and one of the conditions
holds true: 
\iflong
\begin{enumerate}
\itemsep 0 cm
\item $i,i' < \length{\aword_1} + N. \length{\awordbis}$ and $i=i'$. 
\item $i \geq \length{\aword_1} +  (M-N)\length{\awordbis}$ and $i' \geq \length{\aword_1} +  (M'-N)\length{\awordbis}$ and $(i - i') = (M - M') \length{\awordbis}$.
\item $\length{\aword_1} + N. \length{\awordbis} \leq i < \length{\aword_1} +  (M-N)\length{\awordbis}$ and $\length{\aword_1} + N.\length{\awordbis} \leq i' < \length{\aword_1} +  (M'-N)\length{\awordbis}$ and $|i - i'| = 0 \mod \length{\awordbis}$.
\end{enumerate}
\else
(1) $i,i' < \length{\aword_1} + N \cdot \length{\awordbis}$ and $i=i'$,  
(2) $i \geq \length{\aword_1} +  (M-N) \cdot \length{\awordbis}$, 
$i' \geq \length{\aword_1} +  (M'-N) \cdot \length{\awordbis}$ and $(i - i') = (M - M') \cdot \length{\awordbis}$,
(3) $\length{\aword_1} + N \cdot \length{\awordbis} \leq i < \length{\aword_1} +  (M-N) \cdot \length{\awordbis}$, 
$\length{\aword_1} + N \cdot \length{\awordbis} \leq i' < \length{\aword_1} +  (M'-N) \cdot \length{\awordbis}$ and 
$|i - i'| = 0 \mod \length{\awordbis}$.
\fi 
We state our stuttering theorem for $\PLTL[\emptyset]$ that is tailored for our future needs.
\begin{theorem}[Stuttering]
\label{theorem-stuttering}
Let $\amodel = \amodel_1 \mathbf{s}^M \amodel_2, \amodel'= \amodel_1  \mathbf{s}^{M'} 
\amodel_2 \in 
(\powerset{\varprop})^{\omega}$
and  $i,i' \in \Nat$ such that $N \geq 2$, $M,M' \geq 2N+1$ and
$\pair{\amodel}{i} \equivrel{N} \pair{\amodel'}{i'}$. Then,
for every  $\PLTL[\emptyset]$ formula $\aformula$ with $\tempdepth{\aformula} \leq N$, we have
$\amodel, i \models \aformula$ iff $\amodel', i \models \aformula$. 
\end{theorem}
\begin{proof} (sketch) The proof is by structural induction on the formula but first we need to establish 
properties whose proofs can be found in~\cite{Demri&Dhar&Sangnier12}. 
Let 
$\aword= \aword_1 \awordbis^M \aword_2, \aword'=\aword_1 \awordbis^{M'} \aword_2 \in \aalphabet^{\omega}$, 
$i,i' \in \Nat$ and $N \geq 2$ such that $M,M' \geq 2N+1$ and $\pair{\aword}{i} \equivrel{N} \pair{\aword'}{i'}$.
We can show the following properties:
\begin{description}
\itemsep 0 cm
\item[(Claim 1)]  $\pair{\aword}{i} \equivrel{N-1} \pair{\aword'}{i'}$ and $\aword(i) = \aword'(i')$.
\item[(Claim 2)]  $\pair{\aword}{i+1} \equivrel{N-1} \pair{\aword'}{i'+1}$ and 
                  $i,i'>0$ implies $\pair{\aword}{i-1} \equivrel{N-1} \pair{\aword'}{i'-1}$.
\item[(Claim 3)]  For all $j \geq i$, there is $j' \geq i'$ such that
                  $\pair{\aword}{j} \equivrel{N-1} \pair{\aword'}{j'}$ and 
                  for all $k' \in \interval{i'}{j'-1}$, there is
                  $k \in  \interval{i}{j-1}$ such that 
                  $\pair{\aword}{k} \equivrel{N-1} \pair{\aword'}{k'}$.
\item[(Claim 4)] For all $j \leq i$, there is $j' \leq i'$ 
      such that  $\pair{\aword}{j} \equivrel{N-1} \pair{\aword'}{j'}$ and 
       for all $k' \in \interval{j'-1}{i'}$, there is
            $k \in  \interval{j-1}{i}$ such that $\pair{\aword}{k} \equivrel{N-1} \pair{\aword'}{k'}$.
\end{description}
By way of example, let us 
present the induction step for subformulae of the form $\aformulabis_1 \until \aformulabis_2$. 
We show that $\amodel, i \models \aformulabis_1 \until \aformulabis_2$ implies 
 $\amodel', i' \models \aformulabis_1 \until \aformulabis_2$. 
Suppose there is  $j \geq i$ such that $\amodel, j \models \aformulabis_2$ and 
for every $k \in \interval{i}{j-1}$, we have $\amodel, k \models \aformulabis_1$. 
There is $j' \geq i'$ satisfying (Claim 3).  
Since $\tempdepth{\aformulabis_1}, \tempdepth{\aformulabis_2} \leq N-1$, 
by (IH), we have  $\amodel', j' \models \aformulabis_2$.
Moreover, for every $k' \in \interval{i'}{j'-1}$, there is
                  $k \in  \interval{i}{j-1}$ such that 
                  $\pair{\aword}{k} \equivrel{N-1} \pair{\aword'}{k'}$ and 
by (IH), we have $\amodel', k' \models \aformulabis_1$ 
for every $k' \in \interval{i'}{j'-1}$. 
Hence, $\amodel', i' \models \aformulabis_1 \until \aformulabis_2$.
\end{proof}
An alternative proof consists in using Ehrenfeucht-Fra\"{\i}ss{\'e} games~\cite{Etessami&Wilke00}.
\fi 

\section{Fundamental Structures: Minimal Path Schemas}
\label{section-path-schemas}
\newcommand{\lengthpathschema}[1]{{\rm len}(#1)}
In this section, we introduce  the notion of a fundamental structure for flat
counter systems, namely path schemas. Indeed, every flat counter system
can be decomposed into a finite set of minimal path schemas and there are only
an exponential number of them. So, all 
 our nondeterministic
algorithms \iflong to solve model-checking problems \fi  
on flat counter systems have a 
preliminary step that first guesses  a minimal path schema. 

\subsection{Minimal Path Schemas}
\label{section-minimal-path-schemas}

Let $\asys=\tuple{\states,\counters_n,\edges,\alabelling}$ be a flat counter system. 
A \defstyle{path segment} $\aseg$ of $\asys$ is a finite sequence of transitions
from $\edges$  such that $\target{\aseg(i)}=\source{\aseg(i+1)}$ for all 
$0 \leq i < \length{\aseg}-1$. We write $\first{\aseg}$ [resp.
 $\last{\aseg}$] to denote the first [resp. last] control state of a 
path segment, in other words $\first{\aseg}=\source{\aseg(0)}$ and 
$\last{\aseg}=\target{\aseg(\length{\aseg}-1)}$. We also write $\effect{\aseg}$ to denote 
the sum vector 
$\sum_{0 \leq i < \length{\aseg}} \update{\aseg(i)}$ representing the total
effect of the  updates along the path segment. A path segment $\aseg$ is said to be \defstyle{simple} if $\length{\aseg} >0$  and for all $0 \leq i,j < \length{\aseg}$, 
$\aseg(i)=\aseg(j)$ implies $i=j$ (no repetition of transitions). 
A \defstyle{loop} is a simple path segment $\aseg$ such that 
$\first{\aseg}=\last{\aseg}$. 
\iflong If a path segment is not a loop it is called a \emph{non-loop segment}. \fi 
A \defstyle{path schema} $\aschema$ is an $\omega$-regular expression 
built over 
\iflong the alphabet of transitions \else $\edges$ \fi 
such that its language represents an overapproximation of the set of labels obtained from infinite runs 
following the transitions of $\aschema$. 
\iflong 
More precisely, 
a path schema $\aschema$ is
of the form $\aseg_1 \aloop_1^+ \aseg_2 \aloop_2^+ \ldots \aseg_k \aloop_k^{\omega}$ 
verifying the following conditions:
\begin{enumerate}
\itemsep 0 cm
\item $\aloop_1$, \ldots, $\aloop_k$ are loops,
\item $\aseg_1 \aloop_1 \aseg_2 \aloop_2 \ldots \aseg_k \aloop_k$ 
is a path segment.
\end{enumerate}
\else
A path schema $\aschema$ is
of the form $\aseg_1 \aloop_1^+ \aseg_2 \aloop_2^+ \ldots \aseg_k \aloop_k^{\omega}$ 
where (1)  $\aloop_1$, \ldots, $\aloop_k$ are loops and (2) 
$\aseg_1 \aloop_1 \aseg_2 \aloop_2 \ldots \aseg_k \aloop_k$ 
is a path segment.
\fi

We write $\lengthpathschema{\aschema}$ to denote
$\length{\aseg_1 \aloop_1 \aseg_2 \aloop_2 \ldots \aseg_k \aloop_k}$  and
$\nbloops{\aschema}$ as its number $k$ of loops.  Let $\languageof{\aschema}$ denote the set of infinite words in $\edges^\omega$ which belong to the language defined
by $\aschema$. Note that some elements of $\languageof{\aschema}$ may not 
correspond to any run because of constraints on counter values.  Given
$\aword \in \languageof{\aschema}$, we write $\loopsof{\aschema}{\aword}$ to denote the unique tuple in $(\nat \setminus \set{0})^{k-1}$ such that
$\aword=\aseg_1 \aloop_1^{\loopsof{\aschema}{\aword}[1]}\aseg_2 \aloop_2^{\loopsof{\aschema}{\aword}[2]}\ldots
\aseg_k\aloop_k^\omega$. So,  for every $i \in \interval{1}{k-1}$, 
$\loopsof{\aschema}{\aword}[i]$ is the number of times the loop $\aloop_i$ is taken.  Then, for a configuration $\aconf_0$, the set $\loopsof{\aschema}{\aconf_0}$ is
the set of vectors $\set{\loopsof{\aschema}{\aword} \in (\nat \setminus \set{0})^{k-1}
\mid \aword \in \wordof{\aschema,\aconf_0}}$. Finally, we  say that a run $\arun$ 
starting in a configuration $\aconf_0$ \defstyle{respects} a path schema $\aschema$ if $\wordof{\arun} \in \wordof{\aschema,\aconf_0}$ and for such a run, we write
$\loopsof{\aschema}{\arun}$ to denote $\loopsof{\aschema}{\wordof{\arun}}$.  Note that by definition, if $\arun$ respects $\aschema$, then each loop $\aloop_i$ is
visited at least once, and the last one infinitely.

So far, a flat counter system may have an infinite set of path schemas.
\iflong To see this, it is sufficient to unroll loops in path segments. \fi 
However, we can impose minimality conditions on path schemas without sacrificing
completeness. 
%
%
\iflong
A path schema $\aseg_1 \aloop_1^+ \aseg_2 \aloop_2^+ \ldots \aseg_k \aloop_k^{\omega}$ is 
\defstyle{minimal}
whenever  
\begin{enumerate}
\itemsep 0 cm
\item $\aseg_1 \cdots \aseg_k$ is  either the empty word or a simple non-loop segment,
\item $\aloop_1$, \ldots, $\aloop_k$ are loops with disjoint sets of transitions.
\end{enumerate} 
\else
A path schema $\aseg_1 \aloop_1^+ \aseg_2 \aloop_2^+ \ldots \aseg_k \aloop_k^{\omega}$ is 
\defstyle{minimal}
whenever  
$\aseg_1 \cdots \aseg_k$ is  either the empty word or a simple non-loop segment,
and $\aloop_1$, \ldots, $\aloop_k$ are loops with disjoint sets of transitions.
\fi

\begin{lemma}
\label{lemma-schemata-finite}
Given a flat counter system $\asys=\tuple{\states,\counters_n,\edges,\alabelling}$,
the total number of minimal path schemas of $\asys$ is finite and is smaller than 
$\card{\edges}^{(2 \times \card{\edges})}$. 
\end{lemma}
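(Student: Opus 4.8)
The plan is to count minimal path schemas by encoding each one as a word over the transition alphabet $\edges$ and bounding the number of such words.

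\medskip

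\noindent\textbf{Approach.} A minimal path schema has the shape $\aseg_1 \aloop_1^+ \aseg_2 \aloop_2^+ \cdots \aseg_k \aloop_k^{\omega}$, where $\aseg_1 \cdots \aseg_k$ is either empty or a simple non-loop segment, and $\aloop_1, \ldots, \aloop_k$ are loops with pairwise disjoint transition sets. First I would observe that the underlying finite word $\aseg_1 \aloop_1 \aseg_2 \aloop_2 \cdots \aseg_k \aloop_k$ already determines the path schema completely (the positions of the $\aloop_i$ factors are recoverable because a loop is the minimal factor returning to its start state, or more simply: a path schema is uniquely given by the sequence of transitions together with the markers separating segments from loops). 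So it suffices to bound the number of possible finite words $\aseg_1 \aloop_1 \aseg_2 \aloop_2 \cdots \aseg_k \aloop_k \in \edges^*$ together with the data of where each loop begins and ends.

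\medskip

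\noindent\textbf{Key steps.} (1) Show that the concatenation $w := \aseg_1 \aloop_1 \aseg_2 \aloop_2 \cdots \aseg_k \aloop_k$ uses each transition of $\edges$ at most once when restricted to the loop parts, and that the non-loop part $\aseg_1 \cdots \aseg_k$ is a simple segment hence also uses each transition at most once. Since the loops have disjoint transition sets and a simple non-loop segment shares no transition with... — actually one must be slightly careful: a transition may appear both in $\aseg_1\cdots\aseg_k$ and in some loop. So the bound I would prove is: $w$ has length at most $2 \cdot \card{\edges}$, because the total number of transitions appearing in the loops is at most $\card{\edges}$ (disjointness), and the total number of transitions in the non-loop part $\aseg_1\cdots\aseg_k$ is at most $\card{\edges}$ (it is simple). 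Hence $\lengthpathschema{\aschema} = \length{w} \leq 2 \card{\edges}$. (2) Count words: the number of words over $\edges$ of length at most $2\card{\edges}$ is at most $\sum_{j=0}^{2\card{\edges}} \card{\edges}^j \leq \card{\edges}^{2\card{\edges}}$ (using $\card{\edges} \geq 2$ for the geometric-sum bound, with trivial small cases checked separately). (3) Argue that the word $w$ determines the schema: the decomposition into $\aseg_i$ and $\aloop_i$ is forced because each $\aloop_i$ is a maximal simple loop occurring at that point; alternatively, since we only need an upper bound, note each schema yields a distinct $w$ (two distinct minimal path schemas cannot give the same concatenation, as the loop/segment boundaries are determined by first-return structure), so the count of schemas is at most the count of words, giving the claimed bound $\card{\edges}^{2\card{\edges}}$.

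\medskip

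\noindent\textbf{Main obstacle.} The delicate point is step (3): justifying that the finite concatenation $w$, or at worst $w$ plus a bounded amount of bookkeeping, injectively encodes the minimal path schema. One has to rule out that the same sequence of transitions could be parsed into segments and loops in two genuinely different minimal ways. Minimality (disjoint loops, simple non-loop part) is exactly what makes this work, and I expect the cleanest argument is: given $w$, scan left to right; the first maximal simple cyclic factor returning to the current state that is entered is $\aloop_1$ (since in a flat system each state lies on at most one simple cycle, loops are unambiguous), everything before it is $\aseg_1$, and recurse. I would also double-check the edge cases where $\aseg_1 \cdots \aseg_k$ is empty and where $k=1$, and confirm the arithmetic $\sum_{j=0}^{2m} m^j \leq m^{2m}$ holds for $m = \card{\edges} \geq 2$, handling $\card{\edges} = 1$ or $0$ directly (these give finitely many or zero schemas, consistent with the bound).
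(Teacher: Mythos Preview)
Your approach coincides with the paper's: its entire justification is the single observation that in a minimal path schema each transition occurs at most twice, hence $\length{w}\le 2\card{\edges}$. Your three-step expansion is a faithful fleshing-out of that remark, and your handling of step~(3) via flatness (each state lies on at most one simple cycle, so the loop/segment parsing of $w$ is forced) is more careful than anything the paper spells out.

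There is, however, one genuine slip. The inequality you invoke in step~(2),
\[
\sum_{j=0}^{2m} m^{j}\ \le\ m^{2m},
\]
is false for every $m\ge 1$: the left-hand side already contains the term $m^{2m}$ plus strictly positive lower-order terms (for $m=2$ one gets $31>16$). So counting \emph{all} words of length at most $2\card{\edges}$ does not deliver the stated bound. This is not fatal---what the paper actually uses downstream is only finiteness and a singly-exponential bound, and the paper's one-line proof does not address this arithmetic point either---but if you want the constant as written you must do more: either exploit the additional structure (consecutive transitions in $w$ must match source/target, and $w$ must parse as a minimal schema), or change the encoding so that each schema maps injectively into $\edges^{2\card{\edges}}$ rather than into $\bigcup_{\ell\le 2\card{\edges}}\edges^{\ell}$.
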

%
%
This is a simple consequence of the fact that in a minimal path schema, each transition occurs 
at most twice. 
In Figure~\ref{figure-example-minimal-path-schemas}, we present a flat counter system $\asys$ with a unique
counter  and one of its
minimal path schemas. Each transition $\atransition_i$ labelled by $+i$ corresponds to a transition with
the guard $\top$ and the update value $+i$. The minimal path schema shown in  
Figure~\ref{figure-example-minimal-path-schemas} corresponds
to the $\omega$-regular expression 
$\atransition_1  (\atransition_2 \atransition_3)^+ 
\atransition_4 \atransition_5 
(\atransition_6 \atransition_5)^{\omega}$. 
\iflong
In order to avoid confusions between path schemas and flat counter systems
that look like path schemas, simple loops in the representation are labelled by  \fbox{$\omega$} or \fbox{$\geq 1$}
depending whether the simple loop is the last one or not.
\fi
 Note that in the representation of
path schemas, a state may occur several times, as it is the case for $\astate_3$
(this cannot occur in the representation of counter systems). 
\begin{figure}
\begin{center}
\scalebox{0.8}{
\begin{tikzpicture}[->,>=stealth',shorten >=1pt,
node distance=1cm, thick,auto,bend angle=60]

  \tikzstyle{every state}=[fill=white,draw=black,text=black]
  \node[state] (q0) [below]    {$\astate_0$};
  \node[state] (q1) [right= of q0]    {$\astate_1$};
  \node[state] (q2) [above= of q1]    {$\astate_2$};
  \node[state] (q3) [right=2cm of q1]    {$\astate_3$};
  \node[state] (q4) [above= of q3]    {$\astate_4$};

  
  \node[state] (q02) [right=2.2cm of q3]    {$\astate_0$};
  \node[state] (q12) [right= of q02]    {$\astate_1$};
  \node[state] (q22) [above= of q12]    {$\astate_2$};
  \node[state] (q32) [right= of q12]    {$\astate_3$};
  \node[state] (q42) [right=1.4cm of q32]    {$\astate_4$};
  \node[state] (q52) [above= of q42]    {$\astate_3$};
  \node (q62) [above=0.3cm of q12] {{\small \fbox{$\mathbf{\geq 1}$}}};
  \node (q72) [above=0.3cm of q42] {{\small \fbox{$\mathbf{\omega}$}}};

  \path[->] (q0) edge node {{\small $+1$}} (q1);
  \path[->] (q1) edge [bend left] node [right] {{\small $+2$}} (q2);
  \path[->] (q2) edge [bend left] node  [right] {{\small $+3$}} (q1);
  \path[->] (q1) edge node  [above] {{\small $+4$}} (q3);
  \path[->] (q3) edge  [bend left] node  [left] {{\small $+5$}} (q4);
  \path[->] (q4) edge  [bend left] node  [right] {{\small $+6$}} (q3);     

 
  \path[->] (q02) edge node {{\small $+1$}} (q12);
  \path[->] (q12) edge  [bend left] node {{\small $+2$}} (q22);
  \path[->] (q22) edge  [bend left] node {{\small $+3$}} (q12);
  \path[->] (q12) edge  node [below] {{\small $+4$}} (q32);
  \path[->] (q32) edge node  [above] {{\small $+5$}} (q42);
  \path[->] (q42) edge  [bend left] node  [left] {{\small $+5$}} (q52);
  \path[->] (q52) edge  [bend left] node  [right] {{\small $+6$}} (q42);

\end{tikzpicture}
}
\end{center}
\caption{A flat counter system and one of its minimal path schemas}
\label{figure-example-minimal-path-schemas}
\end{figure}
\iflong
Minimal path schemas play a crucial role in the sequel, mainly because of the properties
stated below.%
\begin{lemma} Let $\aschema$ be a path schema. There is a minimal path schema $\aschema'$ such that
every run respecting $\aschema$ respects $\aschema'$ too. 
\end{lemma}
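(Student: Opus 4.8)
(proof idea)
A run $\arun$ from $\aconf_0$ respects a path schema exactly when its label $\wordof{\arun}$ lies in the $\omega$-regular language of that schema, so it suffices to build a minimal path schema $\aschema'$ with $\languageof{\aschema}\subseteq\languageof{\aschema'}$; any run respecting $\aschema$ then respects $\aschema'$. Write $\aschema=\aseg_1\aloop_1^+\aseg_2\aloop_2^+\cdots\aseg_k\aloop_k^\omega$ and put $\awordter:=\aseg_1\aloop_1\aseg_2\aloop_2\cdots\aseg_k\aloop_k$, the \emph{core walk} of $\aschema$, a path segment of $\asys$ ending with the loop $\aloop_k$.

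The tool I would use is a \emph{canonical decomposition of finite walks in a flat counter system}: every path segment $w$ can be written $w=u_1\ell_1^{t_1}u_2\ell_2^{t_2}\cdots u_m\ell_m^{t_m}u_{m+1}$ with each $\ell_i$ a loop, $t_i\ge1$, the $\ell_i$ pairwise distinct simple cycles (hence with pairwise disjoint transition sets), and $u_1\cdots u_{m+1}$ empty or a simple non-loop path segment. The proof is by induction on $\length{w}$, resting on one structural fact about flatness: a closed sub-walk based at a control state $q$ lies entirely on the unique simple cycle through $q$ --- otherwise two distinct simple cycles would be mutually reachable, contradicting the acyclicity of the cycle-DAG mentioned in Section~\ref{section-definitions} --- so any such closed sub-walk is a positive number of turns around $q$'s cycle. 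Scanning $w$ up to its first repeated state exposes one such turn $\ell$; using the word identity $\alpha(\beta\alpha)^t=(\alpha\beta)^t\alpha$ one normalises $\ell$ to be based where the walk leaves that cycle, factors out $\ell^{t_i}$, and recurses on the shorter remaining walk; flatness forces the collected $\ell_i$ to be pairwise distinct and the residual segments to be without repeated states.

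Apply this to $\awordter$. Since $\awordter$ ends with the loop $\aloop_k$ while the residual segments carry no closed sub-walk, the decomposition's last block must have the form $\ell_m^{t_m}u_{m+1}$ with $u_{m+1}$ a proper arc of $\aloop_k$'s cycle; re-normalising once more, $\awordter=u_1\ell_1^{t_1}\cdots u_{m-1}\ell_{m-1}^{t_{m-1}}u_m'\aloop_k^{t_m}$, the last loop being exactly $\aloop_k$ because it is the cycle read from $\last{\awordter}=\first{\aloop_k}$. Set $\aschema':=u_1\ell_1^+u_2\ell_2^+\cdots u_{m-1}\ell_{m-1}^+u_m'\aloop_k^\omega$; it is a minimal path schema, and $\lengthpathschema{\aschema'}=\sum_i\length{u_i}+\sum_i\length{\ell_i}\le\length{\awordter}=\lengthpathschema{\aschema}$. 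For the inclusion, let $\aword=\aseg_1\aloop_1^{a_1}\cdots\aseg_{k-1}\aloop_{k-1}^{a_{k-1}}\aseg_k\aloop_k^\omega\in\languageof{\aschema}$ with all $a_i\ge1$. Each $\aloop_i$ is one turn inside exactly one block $\ell_{j(i)}^{t_{j(i)}}$ of the decomposition of $\awordter$; replacing that one turn by $a_i$ turns only increases $t_{j(i)}$ (again by the structural fact and the rotation identity), so $\aword$ acquires the shape $u_1\ell_1^{s_1}\cdots u_{m-1}\ell_{m-1}^{s_{m-1}}u_m'\aloop_k^\omega$ with every $s_j\ge1$, hence $\aword\in\languageof{\aschema'}$.

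The real work --- and the main obstacle --- is the canonical decomposition together with the rotation bookkeeping: one must pin down the representatives $\ell_i$ and the cut points $u_i$ precisely enough that simultaneously (a) $u_1\cdots u_{m+1}$ is genuinely simple and loop-free, (b) the tail reduces to $\aloop_k^\omega$ so that $\aschema'$ is a bona fide path schema, and (c) inflating a loop exponent in $\aschema$ keeps the word inside $\languageof{\aschema'}$. Once the decomposition is established the rest is routine.
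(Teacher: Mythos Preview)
Your approach is correct and essentially matches the paper's: both extract the simple-loop structure hidden in the core walk $\aseg_1\aloop_1\cdots\aseg_k\aloop_k$ and turn each maximal loop-power factor into a $\ell^+$ block, yielding a minimal path schema with a larger language. The paper compresses this into a single sentence (``whenever a maximal number of copies of a simple loop is identified as a factor \ldots\ this factor is replaced by the simple loop''), whereas you spell out the underlying canonical decomposition of walks in a flat graph and honestly flag the rotation bookkeeping as the place where the real care is needed.
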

The proof of the above lemma is by an easy verification. Indeed, whenever a maximal
number of copies of a simple loop is identified
as a factor of  $\aseg_1 \aloop_1 \cdots \aseg_k \aloop_k$, this factor is replaced by the simple loop unless
it is already present in the path schema.
\else
Minimal path schemas play a crucial role in the sequel. Indeed,  given a path schema $\aschema$, there 
is a minimal path schema $\aschema'$ such that
every run respecting $\aschema$ respects $\aschema'$ too. 
This can be easily shown since  whenever a maximal
number of copies of a simple loop is identified
as a factor of  $\aseg_1 \aloop_1 \cdots \aseg_k \aloop_k$, this factor is replaced by the simple loop unless
it is already present in the path schema.
\fi 

Finally, the conditions imposed on the structure of path schemas implies the following corollary which states that 
the number of minimal path schemas for a given flat counter system is at most exponential in the size of the 
system (see similar statements in~\cite{Leroux&Sutre05}).
\begin{corollary}
\label{corollary-ps-cover-runs} 
Given a flat counter system $\asys$ and a configuration $\aconf_0$,
there is a finite set of minimal path schemas $\aset$ of cardinality at most 
$\card{\edges}^{(2 \times \card{\edges})}$ 
such that
$\wordof{\aconf_0} = \wordof{\bigcup_{\aschema \in \aset} \aschema, \aconf_0}$. 
\end{corollary}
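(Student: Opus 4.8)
The plan is to take $\aset$ to be the set of \emph{all} minimal path schemas of $\asys$, so that the cardinality bound $\card{\aset} \le \card{\edges}^{(2 \times \card{\edges})}$ is immediate from Lemma~\ref{lemma-schemata-finite}, and then to prove the equality $\wordof{\aconf_0} = \wordof{\bigcup_{\aschema \in \aset}\aschema,\aconf_0}$ by two inclusions. The inclusion $\supseteq$ is routine: since $\languageof{\bigcup_{\aschema\in\aset}\aschema} = \bigcup_{\aschema\in\aset}\languageof{\aschema}$, one gets $\wordof{\bigcup_{\aschema\in\aset}\aschema,\aconf_0} = \bigcup_{\aschema\in\aset}\wordof{\aschema,\aconf_0}$ by unfolding the definition of $\wordof{\aregexp,\aconf_0}$, and each $\wordof{\aschema,\aconf_0}$ is contained in $\wordof{\aconf_0}$ directly by definition.

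The content is in the direction $\subseteq$. Given $\aword \in \wordof{\aconf_0}$, I would fix an infinite run $\arun$ from $\aconf_0$ with $\wordof{\arun} = \aword$ and look only at its sequence of control states $\astate_0 \astate_1 \astate_2 \cdots$ in the flat underlying graph $G$ of $\asys$. First, the set $S_\infty$ of control states occurring infinitely often along $\arun$ induces a strongly connected subgraph of $G$, and by flatness this subgraph is a single simple cycle (possibly a self-loop); call its transition sequence $\aloop_k$. Hence there is a position $m$ from which $\arun$ stays inside $S_\infty$ and therefore iterates $\aloop_k$ forever, so $\aword = \awordbis \cdot \aloop_k^{\omega}$ where $\awordbis$ is the finite prefix read up to position $m$. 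Next I would decompose $\awordbis$: walking along it and, each time a control state is revisited, extracting the (by flatness, unique) simple loop at that state, one obtains $\awordbis = \aseg_1 \aloop_1^{n_1} \aseg_2 \aloop_2^{n_2} \cdots \aseg_{k-1} \aloop_{k-1}^{n_{k-1}} \aseg_k$ with each $n_i \ge 1$, each $\aloop_i$ a loop, each $\aseg_i$ either empty or a simple non-loop segment, and $\aseg_1 \aloop_1 \cdots \aseg_k \aloop_k$ a path segment. It then remains to check that $\aloop_1,\ldots,\aloop_k$ are pairwise distinct with pairwise disjoint transition sets: if two of them shared a transition, its source state would lie on two distinct simple cycles of $G$, contradicting flatness; and a simple cycle cannot reappear in the decomposition after being left, since a non-trivial path leaving the cycle and returning to it would again produce a second simple cycle through one of its states. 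Therefore $\aschema := \aseg_1 \aloop_1^{+} \aseg_2 \aloop_2^{+} \cdots \aseg_k \aloop_k^{\omega}$ is a \emph{minimal} path schema, $\aword \in \languageof{\aschema}$, hence $\aword \in \wordof{\aschema,\aconf_0}$ with $\aschema \in \aset$. (Alternatively one could first argue only that $\aword$ lies in the language of \emph{some} path schema and then invoke the lemma preceding the corollary to replace it by a minimal one; the argument above shows the minimal schema can be read off directly.)

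Combining the two inclusions gives $\wordof{\aconf_0} = \bigcup_{\aschema\in\aset}\wordof{\aschema,\aconf_0} = \wordof{\bigcup_{\aschema\in\aset}\aschema,\aconf_0}$, with the required bound on $\card{\aset}$ from Lemma~\ref{lemma-schemata-finite}. The main obstacle is the decomposition step, i.e.\ turning the informal picture ``a run in a flat system threads through a DAG of simple loops, iterating each finitely often and the last one forever'' into a clean inductive extraction; the key technical point making this work is that flatness forces the simple loop attached to each control state to be unique, which is exactly what makes the extraction deterministic and the extracted loops edge-disjoint.
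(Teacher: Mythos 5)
Your proof is correct and follows the route the paper intends: the cardinality bound is exactly Lemma~\ref{lemma-schemata-finite}, and the inclusion $\subseteq$ is the standard decomposition of a run of a flat system into connecting segments and iterated simple loops, which the paper leaves implicit (it only remarks that any path schema can be replaced by a minimal one respected by the same runs). Your direct extraction of the minimal schema from the run --- using flatness to force uniqueness of the simple cycle attached to each control state, edge-disjointness of the extracted loops, and the fact that the infinitely-often-visited part is a single simple cycle --- just fills in the details the paper omits.
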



\subsection{Complexity Results}

We write $\cps$ [resp. $\kps$] to denote the class of path schemas from counter systems
[resp. the class of path schemas from Kripke structures].  
\iflong
As a first step towards our main result, we consider the model-checking 
problem for $\PLTL[\emptyset]$ over a path schema for a flat Kripke structure. We write 
$\mc{\PLTL[\emptyset]}{\kps}$ to denote the problem  defined  below:
\begin{description}
\itemsep 0 cm 
\item[Input:] A flat Kripke structure $\asys$, a path schema $\aschema$ of 
$\asys$, a configuration $\aconf_0$ and a formula $\aformula$ of $\PLTL[\emptyset]$;
\item[Output:] Does there exist a run $\arun$ starting from $\aconf_0$ which 
respects $\aschema$ and such that $\arun,0 \models \aformula$?
\end{description}
If the answer is "yes", we will write $\aschema,\aconf_0 \models \aformula$.
\else
As a preliminary step, we consider the problem $\mc{\PLTL[\emptyset]}{\kps}$
that takes as inputs a path schema $\aschema$ in  $\kps$, and 
$\aformula \in \PLTL[\emptyset]$ and asks whether there is a run respecting $\aschema$ that satisfies
$\aformula$. 
\fi 
Let $\arun$ and $\arun'$ be  runs respecting  $\aschema$.
For $\alpha \geq 0$, we write $\arun \equiv_{\alpha} \arun'$ $\equivdef$
for every $i \in \interval{1}{\nbloops{\aschema}-1}$, we have
$\mathtt{Min}(\loopsof{\aschema}{\arun}[i], \alpha) = \mathtt{Min}(\loopsof{\aschema}{\arun'}[i], \alpha)$. 
We state below a result concerning the runs of 
flat counter systems \iflong (including flat Kripke structures) \fi  when respecting the same path schema.

\begin{proposition}
\label{proposition-iter-flatks}
Let  $\asys$ be a flat counter system, $\aschema$ be a path schema, 
and $\aformula \in \PLTL[\emptyset]$.
For all runs $\arun$ and $\arun'$ respecting $\aschema$ 
such that  
$\arun \equiv_{ 2 \tempdepth{\aformula} +5} \arun'$, we have $\arun, 0 \models
\aformula$ iff  $\arun', 0 \models
\aformula$. 
\end{proposition}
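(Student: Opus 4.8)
The plan is to reduce Proposition~\ref{proposition-iter-flatks} to the Stuttering Theorem (Theorem~\ref{theorem-stuttering}) applied once per loop of the path schema. Fix a path schema $\aschema = \aseg_1 \aloop_1^+ \aseg_2 \aloop_2^+ \cdots \aseg_k \aloop_k^\omega$ and two runs $\arun, \arun'$ respecting it, with $\arun \equiv_{2\tempdepth{\aformula}+5} \arun'$. Write $N = \tempdepth{\aformula}$ and $\alpha = 2N+5$, and let $m_i = \loopsof{\aschema}{\arun}[i]$, $m_i' = \loopsof{\aschema}{\arun'}[i]$ for $i \in \interval{1}{k-1}$; note $\loopsof{\aschema}{\arun}[k]$ is irrelevant since $\aloop_k$ is repeated $\omega$ times in both. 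The equivalence $\arun \equiv_\alpha \arun'$ says $\mathtt{Min}(m_i,\alpha) = \mathtt{Min}(m_i',\alpha)$ for every $i$. First I would observe that since all runs respecting $\aschema$ follow the same sequence of transitions (only the loop-iteration counts vary) and the words $\wordof{\arun}$ and $\wordof{\arun'}$ differ only in those counts, the underlying $\PLTL[\emptyset]$ models $\amodel_\arun$ and $\amodel_{\arun'}$ over $\powerset{\varprop}$ are of the form required by Theorem~\ref{theorem-stuttering} when we vary one loop at a time.

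The key step is a hybrid argument: define intermediate runs $\arun = \arun^{(0)}, \arun^{(1)}, \dots, \arun^{(k-1)} = \arun'$ where $\arun^{(j)}$ uses loop counts $m_1', \dots, m_j', m_{j+1}, \dots, m_{k-1}$ (and $\omega$ for the last). Each consecutive pair $\arun^{(j-1)}, \arun^{(j)}$ differs only in the iteration count of loop $\aloop_j$: one takes it $m_j$ times, the other $m_j'$ times. I want to apply Theorem~\ref{theorem-stuttering} with $\aword_1$ the prefix word $\aseg_1 \aloop_1^{m_1'} \cdots \aseg_j$, with $\mathbf{s} = \aloop_j$, with $M = m_j$, $M' = m_j'$, and $\aword_2$ the infinite suffix $\aseg_{j+1} \aloop_{j+1}^{m_{j+1}} \cdots \aseg_k \aloop_k^\omega$, at positions $i = i' = 0$. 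For position $0$ we trivially have $\pair{\amodel}{0} \equivrel{N} \pair{\amodel'}{0}$ since $0$ lies in Zone A (it is before $\length{\aword_1} + N\cdot\length{\awordbis}$ as long as the schema is nonempty, and if $\aword_1$ is empty the position $0$ still lies in the first $N$ copies of $\awordbis$ hence is identified with itself) — this needs a small check, and in the degenerate case $\aword_1 = \epsilon$ and $0$ in the repeated part, the condition $\pair{\amodel}{0}\equivrel{N}\pair{\amodel'}{0}$ with $i = i' = 0$ still holds by clause (1) of the definition. The hypothesis $M \equivrel{2N} M'$ needed by $\equivrel{N}$ follows from $\mathtt{Min}(m_j,\alpha) = \mathtt{Min}(m_j',\alpha)$ since $\alpha = 2N+5 \geq 2N$. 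The side condition $M, M' \geq 2N+1$ needs care: if both $m_j, m_j' \geq 2N+1$, Theorem~\ref{theorem-stuttering} applies directly and gives $\arun^{(j-1)}, 0 \models \aformula$ iff $\arun^{(j)}, 0 \models \aformula$; if instead $m_j < 2N+1$, then $\mathtt{Min}(m_j,\alpha) = m_j < \alpha$ forces $m_j' = m_j$, so $\arun^{(j-1)} = \arun^{(j)}$ and there is nothing to prove. Either way the equivalence holds, and chaining through all $j$ gives $\arun, 0 \models \aformula$ iff $\arun', 0 \models \aformula$.

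The main obstacle I anticipate is the bookkeeping around the constant $2N+5$ rather than $2N+1$, and around the boundary/degenerate cases of the stuttering relation at position $0$. Theorem~\ref{theorem-stuttering} is stated with threshold $2N+1$ on the number of copies and with a specific relation $\equivrel{N}$; Proposition~\ref{proposition-iter-flatks} uses $2\tempdepth{\aformula}+5$. The extra slack ($+5$ versus $+1$) is presumably there to make the dichotomy "either both loop counts exceed the threshold, or they are equal" clean and to leave room in later proofs (where loop counts get further manipulated, e.g. to bound them), so I would simply verify that $\alpha = 2N+5 \geq 2N+1$ suffices for the case split above — which it does, since whenever $\min(m_j,\alpha)=\min(m_j',\alpha)$ and one of them is $< 2N+1 \leq \alpha$, they coincide. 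A secondary subtlety is confirming that when we change the count of loop $\aloop_j$, the positions of interest (here only $0$, the evaluation point of $\aformula$) stay in a "stable" zone so that $\pair{\amodel}{0} \equivrel{N} \pair{\amodel'}{0}$; since we evaluate at the very start and the modification is strictly after the prefix $\aword_1$ (or, if $\aword_1 = \epsilon$, position $0$ is still mapped to itself by clause (1) as $0 < \length{\awordbis}\cdot N$ when $N \geq 1$), this is immediate. With these checks in place the proposition follows by the telescoping chain $\arun^{(0)} \models \aformula \iff \arun^{(1)} \models \aformula \iff \cdots \iff \arun^{(k-1)} \models \aformula$.
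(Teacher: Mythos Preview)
Your proposal is correct and follows exactly the approach the paper sketches (``apply Theorem~\ref{theorem-stuttering} repeatedly''); the hybrid/telescoping argument is the natural way to make that sketch precise.

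Two small remarks. First, your speculation about why the constant is $2\tempdepth{\aformula}+5$ rather than $2\tempdepth{\aformula}+1$ misses the actual reason: Theorem~\ref{theorem-stuttering} requires $N\geq 2$, so when $\tempdepth{\aformula}<2$ you must take $N=2$ (which is fine since only $\tempdepth{\aformula}\leq N$ is needed), and then the threshold $2N+1=5=2\cdot 0+5$; the $+5$ is tight precisely at $\tempdepth{\aformula}=0$. Second, your intermediate $\arun^{(j)}$ need not be genuine runs of the counter system (guards could fail for those loop counts), but this is harmless: $\PLTL[\emptyset]$ satisfaction depends only on the sequence of state labels, so it suffices to carry out the argument at the level of the induced $\PLTL[\emptyset]$ models $\amodel_{\arun^{(j)}}$, which is exactly the setting of Theorem~\ref{theorem-stuttering}.
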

This property can be proved by applying Theorem~\ref{theorem-stuttering} 
repeatedly in order to get rid of the unwanted iterations of the loops. 
Our algorithm for  $\mc{\PLTL[\emptyset]}{\kps}$ takes advantage of a result 
from~\cite{Laroussinie&Markey&Schnoebelen02}
for model-checking ultimately periodic models with formulae from Past LTL. 
An \defstyle{ultimately periodic path} 
is an infinite word in $\edges^\omega$ of the form $\awordbis \awordter^\omega$ were  
$\awordbis \awordter$ is a path segment and consequently $\first{\awordter} = \last{\awordter}$. 
More generally,  an \defstyle{ultimately periodic} word over \iflong the alphabet \fi  $\aalphabet$
is an $\omega$-word in $\aalphabet^{\omega}$ that can be written as
$\aword \cdot \awordbis^{\omega}$ where $\aword$ is the \defstyle{prefix} and
$\awordbis$ is the \defstyle{loop}.
According to~\cite{Laroussinie&Markey&Schnoebelen02}, 
given an ultimately periodic path $\aword$,  and a formula 
$\aformula \in \PLTL[\emptyset]$, the problem of checking whether there exists a run $\arun$  
such that $\wordof{\arun}=\aword$ and $\arun,0 \models \aformula$ is in $\ptime$ (a tighter bound of \nc~can be obtained by combining 
results from~\cite{Kuhtz10} and Theorem~\ref{theorem-stuttering}). 
Observe that $\arun$ is unique if such a run exists.

\begin{lemma}
\label{lemma-modelcheckps-np}
$\mc{\PLTL[\emptyset]}{\kps}$ is in \np.
\end{lemma}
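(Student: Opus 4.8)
The plan is to give a nondeterministic polynomial-time algorithm that, on input a path schema \(\aschema = \aseg_1 \aloop_1^+ \cdots \aseg_k \aloop_k^\omega\) of a flat Kripke structure \(\asys\), a configuration \(\aconf_0\), and a formula \(\aformula \in \PLTL[\emptyset]\), guesses a small witness run and verifies it in polynomial time. First I would invoke Proposition~\ref{proposition-iter-flatks}: if there exists \emph{any} run \(\arun\) respecting \(\aschema\) with \(\arun, 0 \models \aformula\), then there exists a run \(\arun'\) respecting \(\aschema\) with \(\arun \equiv_{2\tempdepth{\aformula}+5} \arun'\) and \(\arun', 0 \models \aformula\). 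In particular we may take \(\arun'\) to be the run in which each loop \(\aloop_i\) (for \(i \in \interval{1}{k-1}\)) is iterated exactly \(\min(\loopsof{\aschema}{\arun}[i],\, 2\tempdepth{\aformula}+5)\) times — a number bounded by \(2\tempdepth{\aformula}+5 \leq 2\size{\aformula}+5\), hence polynomial in the size of the input. Since in a Kripke structure there are no guards to satisfy, \emph{every} word in \(\languageof{\aschema}\) corresponds to an actual run; so the only question is whether the chosen iteration vector yields a run satisfying \(\aformula\).

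The algorithm therefore proceeds as follows. Nondeterministically guess a vector \(\vec{n} \in \interval{1}{2\size{\aformula}+5}^{k-1}\) of loop-iteration counts. This determines an ultimately periodic path \(\aword_{\vec{n}} = \aseg_1 \aloop_1^{\vec{n}[1]} \cdots \aseg_{k-1}\aloop_{k-1}^{\vec{n}[k-1]} \aseg_k \aloop_k^\omega\), whose prefix \(\aword = \aseg_1 \aloop_1^{\vec{n}[1]} \cdots \aseg_k\) and loop \(\awordbis = \aloop_k\) both have length at most \(\lengthpathschema{\aschema} \cdot (2\size{\aformula}+5)\), which is polynomial in the size of the input. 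This \(\aword_{\vec{n}}\) is the label of a unique run \(\arun_{\vec{n}}\) starting at \(\aconf_0\) (uniqueness and existence hold because there are no counter constraints). Finally, apply the \ptime \ model-checking procedure of~\cite{Laroussinie&Markey&Schnoebelen02} for Past LTL over ultimately periodic paths to decide whether \(\arun_{\vec{n}}, 0 \models \aformula\); accept iff it does. The guessing step uses \(O((k-1)\log(2\size{\aformula}+5))\) bits, the path \(\aword_{\vec{n}}\) has polynomial size, and the Past-LTL check runs in polynomial time in the size of \(\aword_{\vec{n}}\) and \(\aformula\), so the whole procedure is in \np.

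Correctness is then a two-way argument. If the algorithm accepts, the guessed run \(\arun_{\vec{n}}\) respects \(\aschema\) (each \(\aloop_i\) is taken at least once, \(\aloop_k\) infinitely often) and satisfies \(\aformula\), so the answer is "yes". Conversely, if some run \(\arun\) respecting \(\aschema\) satisfies \(\aformula\), set \(\vec{n}[i] = \min(\loopsof{\aschema}{\arun}[i], 2\size{\aformula}+5)\); then \(\arun_{\vec{n}} \equiv_{2\tempdepth{\aformula}+5} \arun\), so by Proposition~\ref{proposition-iter-flatks} we get \(\arun_{\vec{n}}, 0 \models \aformula\), and this \(\vec{n}\) is an accepting guess. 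The only mild subtlety — which is where I would be most careful — is the bookkeeping that turns the abstract existence statement of Proposition~\ref{proposition-iter-flatks} into the concrete truncated iteration vector, and confirming that the resulting ultimately periodic path genuinely has polynomial length (so that the \ptime \ oracle of~\cite{Laroussinie&Markey&Schnoebelen02} is invoked on a polynomial-size input); both are routine once the bound \(2\tempdepth{\aformula}+5\) is in hand. \qed
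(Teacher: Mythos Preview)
Your proof is correct and follows essentially the same approach as the paper: guess a bounded iteration vector in $\interval{1}{2\tempdepth{\aformula}+5}^{k-1}$, build the resulting ultimately periodic path, and invoke the \ptime\ Past-LTL checker of~\cite{Laroussinie&Markey&Schnoebelen02}, with correctness coming from Proposition~\ref{proposition-iter-flatks}. Your write-up is somewhat more detailed (you spell out both directions of correctness and the size bookkeeping), and your bound $2\size{\aformula}+5$ is a harmless over-approximation of the paper's $2\tempdepth{\aformula}+5$.
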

The proof is a consequence of Proposition~\ref{proposition-iter-flatks} and~\cite{Laroussinie&Markey&Schnoebelen02}. 
Indeed, given $\aformula \in \PLTL[\emptyset]$ and $\aschema = \aseg_1 \aloop_1^+ \aseg_2 \aloop_2^+ \ldots \aseg_k \aloop_k^\omega$, first guess 
$\vect{m} \in \interval{1}{2 \tempdepth{\aformula}+5}^{k-1}$
and check whether $\arun, 0 \models \aformula$ where $\arun$ is the obvious ultimately periodic word such that 
$\wordof{\arun}= \aseg_1 \aloop_1^{\vect{m}[1]} \aseg_2 \aloop_2^{\vect{m}[2]} \ldots \aseg_k \aloop_k^\omega$. 
Since  $\vect{m}$ is of polynomial size and  $\arun, 0 \models \aformula$
can be checked in polynomial time by~\cite{Laroussinie&Markey&Schnoebelen02}, we get the \np \ upper bound.

From \cite{Kuhtz&Finkbeiner11}, we have the lower bound for $\mc{\PLTL[\emptyset]}{\kps}$.
%
%
\begin{lemma}
\label{lemma-nphardness} \cite{Kuhtz&Finkbeiner11}
$\mc{\PLTL[\emptyset]}{\kps}$ is \np-hard even if restricted to  $\mynext$ and $\eventually$. 
\end{lemma}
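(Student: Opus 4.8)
Since Lemma~\ref{lemma-nphardness} is taken from~\cite{Kuhtz&Finkbeiner11}, the plan is to recall the reduction, which starts from \textsc{3-sat}. Given a $3$-CNF formula $\phi = C_1 \wedge \cdots \wedge C_m$ over variables $x_1,\ldots,x_n$, I would produce in logarithmic space a flat Kripke structure $\asys$, one of its minimal path schemas $\aschema$, an initial configuration $\aconf_0$ and a formula $\aformula \in \PLTL[\emptyset]$ using only $\mynext$ and $\eventually$, so that $\phi$ is satisfiable iff some run of $\asys$ from $\aconf_0$ respects $\aschema$ and satisfies $\aformula$ at $0$. The idea is that a run respecting a schema with $k$ loops is free to choose, nondeterministically, the $k-1$ iteration counts of the non-final loops, and a $\PLTL[\emptyset]$ formula can extract one bit out of each such choice.

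Concretely, I would take $\aschema = \aseg_0 \aloop_1^{+} \aseg_1 \aloop_2^{+} \cdots \aloop_n^{+} \aseg_n \aloop_{n+1}^{\omega}$ (an instance of the general shape), where $\aloop_i$ for $i \in \interval{1}{n}$ is a self-loop on a control state $a_i$ carrying a fresh variable $q_i$, $\aloop_{n+1}$ is a self-loop closing the run, and each $\aseg_i$ is the two-edge segment $a_i \to b_i \to a_{i+1}$ (with $\aseg_0 = \aconf_0 \to b_0 \to a_1$), the intermediate state $b_i$ carrying a fresh variable $d_i$ that is visited exactly once along any run respecting $\aschema$. The only simple cycles of $\asys$ are the pairwise transition-disjoint self-loops and $\aseg_0 \aseg_1 \cdots \aseg_n$ is a simple non-loop segment, so $\asys$ is flat and $\aschema$ is a minimal path schema of $\asys$ of size $O(n)$. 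A run $\arun$ respecting $\aschema$ is determined by the iteration counts $t_1,\ldots,t_n \geq 1$, and I would set $\nu_\arun(x_i) = \true$ iff $t_i \geq 2$. The point to check is the local fact that, along such a run, $d_{i-1}$ holds at a unique position $p$, $q_i$ holds exactly at positions $p+1,\ldots,p+1+t_i$, and therefore $q_i$ holds at position $p+3$ iff $t_i \geq 2$. Hence, putting $X_i := \eventually\bigl(d_{i-1} \wedge \mynext\mynext\mynext q_i\bigr)$, one gets $\arun,0 \models X_i$ iff $\nu_\arun(x_i) = \true$; translating literals by $\widehat{x_i} = X_i$, $\widehat{\neg x_i} = \neg X_i$ and setting $\aformula := \bigwedge_{j=1}^{m} \bigvee_{\ell \in C_j} \widehat{\ell}$ gives a formula over $\mynext$, $\eventually$ and the Boolean connectives, of size polynomial in $|\phi|$.

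Correctness is then a two-way check: from a satisfying assignment $\nu$ I would take the run with $t_i = 2$ if $\nu(x_i) = \true$ and $t_i = 1$ otherwise, which respects $\aschema$, has $\nu_\arun = \nu$, hence satisfies every clause and so $\aformula$; conversely, any run respecting $\aschema$ that satisfies $\aformula$ yields through $\nu_\arun$ an assignment satisfying every $C_j$, hence $\phi$. I expect the one finicky point to be the position bookkeeping — fixing the segment and loop lengths so that the markers $d_i$ are genuinely unique along the run and the offset $3$ in $\mynext\mynext\mynext q_i$ exactly characterizes $t_i \geq 2$ — but once the lengths are pinned down as above this is routine, and the rest is immediate; the construction even gives hardness for the fragment with temporal operators restricted to $\mynext$ and $\eventually$, as required.
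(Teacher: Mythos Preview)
Your proposal is correct and follows essentially the same approach as the paper's proof: both reduce from propositional satisfiability by encoding the truth of the $i$-th variable as ``the $i$-th loop is taken at least twice'', detected by a formula of the shape $\eventually(\text{marker} \wedge \mynext^k\,\text{marker}')$. The differences are cosmetic: you use self-loops together with dedicated marker states $b_{i-1}$ (hence offset $\mynext^3$), whereas the paper uses length-$2$ loops and tests $\eventually(\astate_i \wedge \mynext\mynext\,\astate_i)$, and additionally conjoins a bounding formula restricting each loop to at most two iterations (which is in fact unnecessary for correctness but harmless).
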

For the sake of completeness, we provide the proof presented in \cite{Kuhtz&Finkbeiner11} adapted to our context.
\ifshort
\else

\begin{proof}
The proof is by reduction from the SAT problem and it is included for the sake of being
self-contained. 
Let $\abigformula$ be a Boolean formula built over the propositional variables
$AP = \set{\avarprop_{1},\cdots,\avarprop_{n}}$. We build a path schema $\aschema$ and a formula $\aformulabis$
such that $\abigformula$ is satisfiable iff there is a run respecting $\aschema$ and satisfying $\aformulabis$. 
The path schema $\aschema$ is the one described in Figure~\ref{fig:nphardness} so that the
truth of the propositional variable $\avarprop_i$ is encoded by the fact that the loop containing
$\astate_i$ is visited twice, otherwise it is visited once.
\begin{figure}
  \begin{center}
    \begin{tikzpicture}[shorten >=1pt,node distance=1.1cm,auto,>=stealth']

  \node[state] (q_1)                  {};
  \node (a1)[above=0.3cm of q_1]  {\fbox{{\bf \tiny  $\geq 1$}}};
  \node[state]         (q_2)  [right=of q_1]  {};
  \node (a2)[above=0.3cm of q_2]  {\fbox{{\bf \tiny  $\geq 1$}}}; 
  \node[state]         (q_3)  [right=of q_2]  {};
   \node (a3)[above=0.3cm of q_3]  {\fbox{{\bf \tiny  $\geq 1$}}};
  \node[state]         (q_4)  [above=of q_1]  {$\astate_1$};
  \node[state]         (q_5)  [above=of q_2]  {$\astate_2$};
  \node[state]         (q_6)  [above=of q_3]  {$\astate_3$};
  \node[state]         (q_n)  [right=of q_3]  {};
   \node (a4)[above=0.3cm of q_n]  {\fbox{{\bf \tiny  $\geq 1$}}};
  \node[state]         (q_n+1)[above=of q_n]  {$\astate_n$};
  \node[state]         (q_f)  [right=of q_n]{};
   \node (a5)[above=0.2cm of q_f]  {\fbox{{\bf \small $\omega$}}};
  \path[->] (q_1)   edge [bend left]  node  {} (q_4)
                    edge              node  {} (q_2)
            (q_2)   edge [bend left]  node  {} (q_5)
                    edge              node  {} (q_3)
            (q_3)   edge [bend left]  node  {} (q_6)
            (q_4)   edge [bend left]  node  {} (q_1)
            (q_5)   edge [bend left]  node  {} (q_2)
            (q_6)   edge [bend left]  node  {} (q_3)
            (q_n)   edge [bend left]  node  {} (q_n+1)
                    edge              node  {} (q_f)
            (q_n+1) edge [bend left]  node  {} (q_n);
  \path[->] (q_f)  edge [out=125,in=55,loop]  node  {} (q_f);
  \draw[dashed] (4.9,0) -- (5.9,0);
\end{tikzpicture}
  \end{center}
  \caption{A simple path schema $\aschema$}
  \label{fig:nphardness}
\end{figure}
The formula $\aformulabis$ is defined as a conjunction $\aformulabis_{1 \vee 2} \wedge 
\aformulabis_{truth}$ where  $\aformulabis_{1 \vee 2}$ states that each loop is visited at most twice
and $\aformulabis_{truth}$ establishes the correspondence between the truth of $p_i$ and the number of times
the loop containing $\astate_i$ is visited. 
Formula  $\aformulabis_{1 \vee 2}$ is equal to
$[\bigwedge_{i}(\always(\astate_{i}\wedge \mynext\mynext \astate_{i}\Rightarrow 
\mynext\mynext\mynext\always\neg \astate_{i}))]
$
whereas $\aformulabis_{truth}$ is defined from $\abigformula$ by replacing each occurrence
of $\avarprop_i$ by $\eventually (\astate_{i}\wedge \mynext\mynext \astate_{i})$. 

Let us check the correctness of the reduction. Let $v: AP \rightarrow \set{\top, \perp}$
be a valuation satisfying  $\abigformula$. Let us consider the run  $\arun$ respecting
$\aschema$
such that $\loopsof{\aschema}{\arun}[i] \egdef 2$ if $v(\avarprop_i) = \top$, otherwise 
 $\loopsof{\aschema}{\arun}[i] \egdef 1$ for all $i \in \interval{1}{n}$. 
It is easy to check that $\arun, 0 \models \aformulabis$.
Conversely, if there is a run $\arun$  respecting $\aschema$ such that
$\arun, 0 \models \aformula$, the valuation $v$ satisfies $\abigformula$ where
for all $i \in \interval{1}{n}$, we have $v(\avarprop_i) = \top$ $\equivdef$  $\loopsof{\aschema}{\arun}[i] = 2$.
\end{proof}

\fi 
\iflong
The \np-completeness of $\mc{\PLTL[\emptyset]}{\kps}$ can then be deduced from the two previous lemmas.
%
%
\fi 
We also consider the case where we restrict the class of path schemas by bounding the number of allowed loops. Hence, for a fixed $k \in \Nat$, we write  $\mc{\PLTL[\emptyset]}{\kps(k)}$ to denote the restriction
of  $\mc{\PLTL[\emptyset]}{\kps}$ to path schemas with at most $k$ loops. 
Note that when $k$ is fixed, the number of
 ultimately periodic paths $\aword$ in $\languageof{P}$ such that
each loop (except the last one) is visited is at most  $2 \tempdepth{\aformula} + 5$ times
is bounded by $(2 \tempdepth{\aformula} + 5)^k$, which is polynomial in the size of
the input (because $k$ is fixed). From these considerations, we deduce the following result.
%

\begin{theorem}
\label{theorem-kps-fixed} 
$\mc{\PLTL[\emptyset]}{\kps}$ is \np-complete. \\
Given a fixed  $k \in \nat$, $\mc{\PLTL[\emptyset]}{\kps(k)}$ is in \ptime.
\end{theorem}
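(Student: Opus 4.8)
The plan is to derive both statements from results already in hand. For the first one, membership in \np\ is exactly Lemma~\ref{lemma-modelcheckps-np}, and \np-hardness is Lemma~\ref{lemma-nphardness} (which holds already for the fragment using only $\mynext$ and $\eventually$); hence $\mc{\PLTL[\emptyset]}{\kps}$ is \np-complete.

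For the second statement, I would fix the constant $k \in \nat$ and take an input consisting of a path schema $\aschema = \aseg_1 \aloop_1^+ \aseg_2 \aloop_2^+ \cdots \aseg_k \aloop_k^\omega$ of a flat Kripke structure together with a formula $\aformula \in \PLTL[\emptyset]$; set $\alpha = 2\tempdepth{\aformula}+5$. The decisive ingredient is Proposition~\ref{proposition-iter-flatks}: whether a run respecting $\aschema$ satisfies $\aformula$ at position $0$ depends only on its $\equiv_{\alpha}$-class, that is, only on the tuple $\big(\mathtt{Min}(\loopsof{\aschema}{\arun}[i],\alpha)\big)_{i \in \interval{1}{k-1}}$. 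Since a Kripke structure has no counters, $\aschema$ carries no guards, so every vector $\vect{m} \in \interval{1}{\alpha}^{k-1}$ is realized by a (unique) run $\arun_{\vect{m}}$ respecting $\aschema$, namely the one whose label is the ultimately periodic word $\aseg_1 \aloop_1^{\vect{m}[1]} \aseg_2 \cdots \aseg_k \aloop_k^\omega$.

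The algorithm then enumerates all $\vect{m} \in \interval{1}{\alpha}^{k-1}$; for each, it constructs $\arun_{\vect{m}}$ --- a path segment followed by a loop, both of polynomial size --- and tests whether $\arun_{\vect{m}}, 0 \models \aformula$ using the \ptime\ procedure of~\cite{Laroussinie&Markey&Schnoebelen02} for model checking ultimately periodic paths against Past LTL. It accepts iff some test succeeds. Correctness: if a run $\arun$ respecting $\aschema$ satisfies $\aformula$, then taking $\vect{m}[i] = \mathtt{Min}(\loopsof{\aschema}{\arun}[i],\alpha)$ for $i \in \interval{1}{k-1}$ yields $\arun \equiv_{\alpha} \arun_{\vect{m}}$, so $\arun_{\vect{m}}, 0 \models \aformula$ by Proposition~\ref{proposition-iter-flatks}; the converse is immediate since each $\arun_{\vect{m}}$ respects $\aschema$. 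Because $k$ is a fixed constant while $\alpha$ is linear in $\size{\aformula}$, there are only $\alpha^{k-1}$ candidates, a polynomial number, each checked in polynomial time; hence the whole procedure is in \ptime.

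As for difficulty, essentially no obstacle is left at this point: all the substance lives in Proposition~\ref{proposition-iter-flatks} (itself resting on the stuttering Theorem~\ref{theorem-stuttering}) and in Lemma~\ref{lemma-modelcheckps-np}. The only points deserving a word of care are that over Kripke structures --- unlike general flat counter systems --- every word produced by the enumeration is the label of a genuine run, so no candidate $\vect{m}$ is spurious, and that $\alpha^{k-1}$ is polynomial precisely because $k$ belongs to the problem definition rather than to the input.
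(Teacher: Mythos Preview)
Your proposal is correct and follows essentially the same approach as the paper: \np-completeness is obtained directly from Lemmas~\ref{lemma-modelcheckps-np} and~\ref{lemma-nphardness}, and the \ptime\ bound for fixed $k$ comes from enumerating the at most $(2\tempdepth{\aformula}+5)^{k-1}$ candidate loop-iteration vectors and checking each resulting ultimately periodic path in polynomial time via~\cite{Laroussinie&Markey&Schnoebelen02}, with Proposition~\ref{proposition-iter-flatks} guaranteeing correctness. Your additional remark that over Kripke structures every candidate $\vect{m}$ yields a genuine run (no guards to fail) is a useful clarification the paper leaves implicit.
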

Note that it can be proved that $\mc{\PLTL[\emptyset]}{\kps(n)}$ is in \nc, hence giving a tighter upper bound for the problem. This can be obtained by observing that we can run the \nc~algorithm from~\cite{Kuhtz10}~for model checking $\PLTL[\emptyset]$ over ultimately periodic paths 
parallelly on $(2 \tempdepth{\aformula} + 5)^k$ (polynomially many) 
different paths.

Now, we present  how to solve $\mc{\PLTL[\emptyset]}{\flatks}$ using Lemma \ref{lemma-modelcheckps-np}. From Lemma~\ref{lemma-schemata-finite}, we know that the number of minimal path schemas in a 
flat Kripke structure $\asys=\tuple{\states,\edges,\alabelling}$ 
is finite and the length of a minimal path schema is 
at most $2 \times \card{\edges}$.
Hence, for solving the model-checking problem for a state 
$\astate$ and a $\PLTL[\emptyset]$ formula $\aformula$, a possible algorithm consists in 
choosing non-deterministically a minimal path schema $\aschema$  starting at $\astate_0$ of the given initial configuration $\aconf_0$ and then 
apply the algorithm used to establish Lemma~\ref{lemma-modelcheckps-np}. 
This new algorithm would be in \np. 
Furthermore, thanks to  Corollary~\ref{corollary-ps-cover-runs}, we 
know that if there exists a run $\arun$ of $\asys$ such that $\arun,0 \models \aformula$ then there 
exists a minimal path schema $\aschema$ such that $\arun$ respects $\aschema$. 
Consequently there is an algorithm in \np \  to solve $\mc{\PLTL[\emptyset]}{\flatks}$ and \np-hardness can be established as a variant of the proof of Lemma~\ref{lemma-nphardness}.

\begin{theorem}
$\mc{\PLTL[\emptyset]}{\flatks}$ is \np-complete.
\end{theorem}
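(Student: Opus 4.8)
The plan is to prove the two directions separately, assembling results already available for path schemas.

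For membership in \np, the algorithm works as follows. Given a flat Kripke structure $\asys = \tuple{\states,\edges,\alabelling}$, an initial configuration $\aconf_0$ with control state $\astate_0$, and a formula $\aformula \in \PLTL[\emptyset]$, first guess a minimal path schema $\aschema$ of $\asys$ whose first control state is $\astate_0$. By Lemma~\ref{lemma-schemata-finite} --- more precisely, by the fact that each transition occurs at most twice in a minimal path schema --- we have $\lengthpathschema{\aschema} \leq 2\card{\edges}$, so $\aschema$ has polynomial size and can be guessed (and checked to be a genuine minimal path schema starting at $\astate_0$) in polynomial time. Then run the \np \ procedure of Lemma~\ref{lemma-modelcheckps-np} on $\aschema$, $\aconf_0$, $\aformula$, and accept iff it accepts. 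Soundness is immediate: a run witnessing $\aschema,\aconf_0 \models \aformula$ is a run of $\asys$ satisfying $\aformula$ at position $0$. Completeness uses Corollary~\ref{corollary-ps-cover-runs}: if some run $\arun$ of $\asys$ from $\aconf_0$ satisfies $\aformula$, then $\wordof{\arun}$ belongs to $\wordof{\aschema',\aconf_0}$ for one of the finitely many minimal path schemas $\aschema'$ that together cover $\wordof{\aconf_0}$, and that $\aschema'$ is an accepting guess. Composing a polynomially bounded nondeterministic guess with an \np \ subroutine stays within \np.

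For \np-hardness, I would adapt the reduction from \textsc{SAT} used for Lemma~\ref{lemma-nphardness}. There, the hardness of $\mc{\PLTL[\emptyset]}{\kps}$ is witnessed by the path schema of Figure~\ref{fig:nphardness}, in which the truth of a propositional variable $\avarprop_i$ is encoded by the loop through $\astate_i$ being taken twice rather than once, together with the formula $\aformulabis = \aformulabis_{1 \vee 2} \wedge \aformulabis_{truth}$. The key point is that this gadget can be viewed directly as the control graph of a flat Kripke structure $\asys$: no control state lies on more than one simple cycle, and the infinite runs of $\asys$ from the designated initial configuration are precisely the words generated by that path schema. The same formula $\aformulabis$ then works verbatim, so $\abigformula$ is satisfiable iff $\asys,\aconf_0 \models \aformulabis$, and the construction is computable in logarithmic space.

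The step I expect to require the most care is the completeness half of the upper bound, namely making sure that every \emph{infinite} run of a flat Kripke structure is captured by some minimal path schema (not merely every finite path) and that the collection of such schemas to range over is only exponential, so that the guess stays polynomial --- this is exactly what Corollary~\ref{corollary-ps-cover-runs} delivers, ultimately relying on flatness to organise the simple cycles as a DAG. Once that is granted, the statement is a direct combination of Lemmas~\ref{lemma-schemata-finite}, \ref{lemma-modelcheckps-np} and~\ref{lemma-nphardness} with Corollary~\ref{corollary-ps-cover-runs}.
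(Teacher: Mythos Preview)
Your proposal is correct and follows essentially the same route as the paper: guess a minimal path schema of polynomial length (Lemma~\ref{lemma-schemata-finite}), invoke the \np\ procedure of Lemma~\ref{lemma-modelcheckps-np}, use Corollary~\ref{corollary-ps-cover-runs} for completeness, and obtain hardness by reading the gadget of Lemma~\ref{lemma-nphardness} as a flat Kripke structure. The paper phrases the hardness step as ``a variant of the proof of Lemma~\ref{lemma-nphardness}'' without spelling it out, which is exactly what you do.
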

%
%
%
\subsection{Some lower bounds in the presence of counters}

We will now provide some complexity lower bounds when considering path schemas over counter systems. 
%
%
As for path schemas from Kripke structures, we use $\cps(k)$ to denote the class of path schemas obtained from flat counter systems with number of loops 
bounded by $k$. Surprisingly, in the presence of counters, bounding the number of loops entails \np-hardness. 
\begin{lemma}
\label{lemma-constant-loops2}
For $k \geq 2$, 
$\mc{\PLTL[\counters]}{\cps(k)}$ is \np-hard.
\end{lemma}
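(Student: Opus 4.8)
The plan is to reduce an \np-complete problem---the natural candidate is \textsc{SubsetSum} or \textsc{Partition}---to $\mc{\PLTL[\counters]}{\cps(k)}$ for $k = 2$ (the case $k \geq 2$ follows by padding with trivial loops). The intuition is that a single counter together with a loop whose update is negative (or positive) lets us ``guess'' a number by iterating the loop some number of times, and an arithmetical guard or an arithmetical atomic formula in the specification lets us check that the guessed number equals a target. Since a path schema with only two loops still allows the first loop to be iterated an arbitrary (exponentially large) number of times, we get enough nondeterministic power to encode an \np-hard arithmetic constraint, even though the temporal structure is extremely limited.

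Concretely, given an instance of \textsc{SubsetSum} with weights $a_1,\dots,a_n$ and target $t$, first I would build a flat counter system (hence a path schema) with a single counter $\acounter$ as follows: a linear chain of $n$ ``gadget'' blocks, where block $i$ consists of a simple loop whose update adds $a_i$ to $\acounter$, so that iterating that loop once versus twice corresponds to ``not choosing $a_i$'' versus ``choosing $a_i$''. Wait---a path schema with $n$ loops is not of bounded size; so instead I would collapse all the choices into the iteration count of a \emph{single} loop. The cleaner encoding: use one loop $\aloop_1$ with update $+1$ on $\acounter$, iterated $m$ times so that after the loop $\acounter = m$ for some $m \in \interval{1}{M}$ chosen nondeterministically, followed by a second (final) loop $\aloop_2$ with update $\vec{0}$, i.e. $\aschema = \aseg_1 \aloop_1^+ \aseg_2 \aloop_2^\omega$. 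But a single number $m$ does not encode a subset. So the right source problem is one whose only datum is ``is there an integer $m$ in a range satisfying a quantifier-free arithmetic predicate of polynomial size?'' --- and in fact plain \textsc{SubsetSum} can be recast this way only with care. A robust choice that is known to work in this line of literature is to reduce from the \np-hard problem of deciding solvability of a system of the form guessing $n$ bits via $n$ successive loops; to keep the loop count at $2$, I would instead reduce from a single-variable problem such as: given $a, b, c$ in binary, is there $m$ with $0 \le m \le c$ and $a \cdot m \equiv 0 \pmod{b}$ together with $m \ne$ forbidden values --- i.e. a problem whose \np-hardness comes from the binary encoding of the bound $c$.

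Therefore the cleanest route, and the one I would actually pursue, is the following. Reduce from \textsc{SubsetSum} but encode the chosen subset digit-by-digit inside the temporal formula rather than the schema: the schema has one loop $\aloop_1$ (update $+1$ on $\acounter_1$) that runs $m$ times, then a second loop $\aloop_2$. Reading the iteration count $m$ in binary, bit $j$ of $m$ decides whether $a_j$ is taken; the \PLTL\ formula then has to assert that $\sum_{j : \text{bit } j \text{ of } m = 1} a_j = t$. The obstacle is that \PLTL\ over a path schema cannot directly ``see'' the bits of $m$; but it \emph{can} observe, using past-time operators and the guards, at which positions inside the single long loop-block certain counter thresholds are crossed. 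So I would add auxiliary counters that get incremented in a way that lets a quantifier-free guard of polynomial size express ``$a_1 x_1 + \cdots + a_n x_n = t$'' where the $x_j$ are recovered from $m$; crucially, the guards of a counter system are quantifier-free Presburger formulae over the counters, and by Lemma~\ref{lemma-constraint-system} runs are characterised by such formulae, so the whole constraint can be placed on a single transition. I expect the main obstacle to be exactly this encoding step: making the subset-sum constraint expressible by a \emph{bounded} number of loops and a \emph{polynomial-size} guard/formula, since the naive encoding wants one loop per item.

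Once the reduction is set up, correctness has two directions, both routine: if the \textsc{SubsetSum} instance is positive with witnessing subset $S$, set the iteration count of $\aloop_1$ to the integer whose binary digits indicate $S$ (padding as needed so the count is at least $1$), and check that the induced run satisfies the guards and the \PLTL\ formula; conversely, any run respecting $\aschema$ fixes an iteration count $m$, hence a subset $S$, and satisfaction of the formula forces $\sum_{j \in S} a_j = t$. Finiteness/size bounds are immediate: the schema has $2$ loops, its length and the formula are polynomial in the instance size (binary encodings of the $a_j$ and $t$). For general $k \ge 2$ just append $k-2$ additional final-position loops with zero update that are forced to be visited exactly once by a trivial conjunct, which does not change anything. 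I would present the \textsc{SubsetSum}-style reduction in full detail, flag the bounded-loop encoding as the delicate point, and note that the matching upper bound ($\mc{\PLTL[\counters]}{\cps}$ in \np) follows from the later results so that $\mc{\PLTL[\counters]}{\cps(k)}$ is in fact \np-complete for every fixed $k \ge 2$.
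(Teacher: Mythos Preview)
Your overall strategy---encode the nondeterministic choices in the iteration count of the first loop and then read off its bits---is exactly the right one, and it is also the route the paper takes (from SAT rather than \textsc{SubsetSum}, but that is immaterial). However, the proposal has a genuine gap at precisely the point you yourself flag as ``the delicate point'': you never say \emph{how} to extract bit $j$ of the iteration count $m$ using only the resources available in $\PLTL[\counters]$ over a two-loop schema. Your suggestion of ``placing the whole constraint on a single transition'' via a quantifier-free guard does not work: the guard language of these counter systems is just Boolean combinations of linear inequalities $\sum_i a_i \acounter_i \sim b$, with no divisibility or modular predicates, so ``bit $j$ of $\acounter$ is $1$'' is simply not expressible as a guard or as an atomic formula. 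Invoking Lemma~\ref{lemma-constraint-system} does not help either---that lemma characterises runs externally by a Presburger system, it does not enlarge what a guard inside the schema can say.

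The missing idea, which the paper supplies, is to use the \emph{second} loop actively rather than as a dummy $\omega$-loop. Keep $n$ counters, all equal to $m$ after the first loop; in the second loop increment $\acounter_i$ by $2^{n+1-i}$ at every step. Then bit $i$ of $m-1$ is $1$ iff the trajectory of $\acounter_i$ eventually hits the interval $[2^n+2^{n-i},\,2^n+2^{n+1-i}-1]$, a property expressible by a single $\eventually$-formula with two linear inequalities. Replacing each propositional variable $p_i$ in the SAT instance by this $\eventually$-formula yields the reduction. Without this (or an equivalent) bit-extraction gadget your argument is incomplete; once you add it, the rest of your write-up (correctness in both directions, polynomial size, padding to $k\ge 2$) goes through as you describe.
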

\ifshort
The proof is by reduction from SAT 
and it is less  straightforward than the proof for Lemma~\ref{lemma-nphardness} or 
the reduction presented in~\cite{Kuhtz&Finkbeiner11} when path schemas are involved. Indeed, 
we cannot  encode the nondeterminism in the structure itself and 
 the structure has only a constant number of loops. Actually, we cannot use a separate loop for 
each counter; the reduction is done by encoding the nondeterminism in the
(possibly exponential) number of times  a single loop is taken, and then using its binary encoding  
as an assignment for the propositional variables (see~\ref{section-appendix-constant-loops} for details). 
Hence, the reduction uses in an essential way the counter values and the arithmetical
constraints in the formula.
By contrast, $\mc{\PLTL[\counters]}{\cps(1)}$ can be shown in \ptime (see Appendix~\ref{section-appendix-ptime}). 
\else 
The proof is by reduction from SAT  
and it is less  straightforward than the proof for Lemma~\ref{lemma-nphardness} or 
the reduction presented in~\cite{Kuhtz&Finkbeiner11} when path schemas are involved. Indeed, 
we cannot  encode the nondeterminism in the structure itself and 
 the structure has only a constant number of loops. Actually, we cannot use a separate loop for 
each counter; the reduction is done by encoding the nondeterminism in the
(possibly exponential) number of times  a single loop is taken, and then using its binary encoding  
as an assignment for the propositional variables.

\begin{proof}
The proof is by reduction from the  problem SAT.  Let $\abigformula$ be a
Boolean formula built over the propositional variables in $
\set{\avarprop_{1},\cdots,\avarprop_{n}}$. We build a path schema
$\aschema \in \cps(2)$, an initial configuration (all counters will be equal to
zero)
and a formula $\aformulabis$ such that
$\abigformula$ is satisfiable iff there is a run respecting $\aschema$
and starting at the initial configuration such that
it satisfies $\aformulabis$.  The path schema $\aschema$ is the one
described in Figure~\ref{fig:constnphardness}; it has one internal loop
and a second loop that is visited infinitely. 
The guard $\acounter_1 \leq 2^n$ enforces that the first loop is visited $\alpha$
times with $\alpha \in \interval{1}{2^n}$, which corresponds to guess
a propositional valuation such that the truth
value of the propositional variable $\avarprop_i$ is $\top$
whenever
the $i$th bit of $\alpha -1$ is equal to 1. 
When $\alpha-1$ is encoded in binary with $n$ bits, we assume the first bit is the most significant
bit. Note that the internal
loop has to be visited at least once since $\aschema$ is a path schema. 

\begin{figure}
  \begin{center}
    \tikzset{every loop/.style={min distance=15mm,looseness=10}}
\begin{tikzpicture}[shorten >=1pt,node distance=3.1cm,auto,>=stealth']

  \node[state] (q0)                  {$\astate_0$};
  \node[state]         (q1)  [right=of q0]  {$\astate_1$};
  \path[->] (q0)  edge [out=125,in=55,loop]  node  {$\acounter_1 \leq 2^n$, {\tiny $\left(
\begin{array}{c}
1 \\ 1 \\ \vdots \\ 1 \\
\end{array}
\right)$}} node [swap]{\fbox{$\geq 1$}} (q0)
                  edge              node  {$\top$, {\tiny $\left(
\begin{array}{c}
0 \\ 0 \\ \vdots \\ 0 \\
\end{array}
\right)$}}  (q1)
            (q1)   edge [out=125,in=55,loop] node  {$\top$, {\tiny $\left(
\begin{array}{c}
2^n \\ 2^{n-1} \\  \vdots \\ 2^1 \\
\end{array}
\right)$}} node [swap]{\fbox{$\omega$}} ();
\end{tikzpicture}
  \end{center}
  \caption{Path schema $\aschema$}
  \label{fig:constnphardness}
\end{figure}

Since the logical language does not allow to access to the 
$i$th bit of a counter value, we  simulate the test by arithmetical constraints
in the formula when the second loop of the path schema is visited. 
For every $\alpha \in  \interval{1}{2^n}$ and every $i \in \interval{1}{n}$, we write
$\alpha^i_{u}$ to denote the value in $\interval{0}{2^i-1}$ corresponding to the
$i-1$ first bits of $\alpha-1$. Similarly, we write $\alpha^i_{d}$ to denote the value in
$\interval{0}{2^{n+1-i}-1}$ corresponding to the $(n+1-i)$ last bits of $\alpha-1$. 
Observe that $ \alpha - 1 = \alpha_u^i \times 2^{n-i+1} + \alpha^i_d$.
\iflong
One can show that the propositions below
are equivalent:
\begin{enumerate}
\itemsep 0 cm
\item $i$th bit of $\alpha-1$ is 1,
\item there is some $k \geq 0$ such that $k \times 2^{n+1-i} + (\alpha-1) \in
      \interval{2^n + 2^{n-i}}{2^n + 2^{n+1-i}-1}$. 
\end{enumerate} 
\else
One can show that (1.) $i$th bit of $\alpha-1$ is 1 iff
(2.) there is some $k \geq 0$ such that $k \times 2^{n+1-i} + (\alpha-1) \in
      \interval{2^n + 2^{n-i}}{2^n + 2^{n+1-i}-1}$. 
\fi 
Actually, we shall show that $k$ is unique and the only possible value is $2^{i-1} - \alpha^i_{u}$.
Before showing the equivalence between (1.) and (2.), we can observe that condition (2.) 
can be expressed by the formula
$
\eventually (\astate_1 \wedge
((\acounter_i - 1) \geq 2^n + 2^{n-i}) \wedge 
((\acounter_i - 1) \leq 2^n + 2^{n-i+1} -1))
$.

First, note that  $\interval{2^n + 2^{n-i}}{2^n + 2^{n+1-i}-1}$ contains $2^{n-i}$ distinct values
and therefore satisfaction of (2.) implies unicity of $k$ since $2^{n+1-i} > 2^{n-i}$. 
Second, $i$th bit of $\alpha-1$ is equal to 1 iff $\alpha^i_d \in \interval{2^{n-i}}{2^{n+1-i}-1}$.
Now, observe that $(2^{i-1} - \alpha^i_{u})  2^{n+1-i} + (\alpha-1) = 2^n + \alpha_d^i$.
So, if (1.), then  $\alpha^i_d \in \interval{2^{n-i}}{2^{n+1-i}-1}$ and consequently
$2^n + \alpha_d^i \in \interval{2^n + 2^{n-i}}{2^n + 2^{n+1-i}-1}$. So, 
there is some $k \geq 0$ such that $k \times 2^{n+1-i} + (\alpha-1) \in
      \interval{2^n + 2^{n-i}}{2^n + 2^{n+1-i}-1}$ (take $k = 2^{i-1} - \alpha^i_{u}$). 
Now, suppose that (2.) holds true. There is $k \geq 0$ such that $k \times 2^{n+1-i} + (\alpha-1) \in
      \interval{2^n + 2^{n-i}}{2^n + 2^{n+1-i}-1}$. 
So, $k \times 2^{n+1-i} + (\alpha-1) - 2^n \in  \interval{2^{n-i}}{2^{n+1-i}-1}$
 and
therefore $k \times 2^{n+1-i} + \alpha_d^i - (2^{i-1} - \alpha_u^i) \times 2^{n+1-i}
\in \interval{2^{n-i}}{2^{n+1-i}-1}$. Since the expression denotes a non-negative
value, we have $k \geq  (2^{i-1} - \alpha_u^i)$ (indeed $\alpha_d^i < 2^{n+1-i}$) 
and since it denotes a value less or equal to
$2^{n+1-i}-1$, we have $k \leq  (2^{i-1} - \alpha_u^i)$. Consequently, $k = 2^{i-1} - \alpha_u^i$
and therefore $\alpha_d^i \in \interval{2^{n-i}}{2^{n+1-i}-1}$, which is precisely equivalent
to the fact that the $i$th bit of $\alpha-1$ is equal to 1. 

The formula $\aformulabis$ is defined from $\abigformula$ by
replacing each occurrence of $\avarprop_i$ by 
$
\eventually (\astate_1 \wedge
((\acounter_i - 1) \geq 2^n + 2^{n-i}) \wedge 
((\acounter_i - 1) \leq 2^n + 2^{n-i+1} -1))
$. Intuitively, $\aschema$ contains one counter by propositional variable 
and all the counters hold the same value after
the first loop. Next, in the second loop, 
we check that the $i$th bit of $\alpha -1$ is one by incrementing $\acounter_i$
by $2^{n+1-i}$. We had to consider $n$ counters since the increments differ.
In order to check whether the $i$th  bit of counter $\acounter_i$ is one, we add
repeatedly  $2^{n+1-i}$ to the counter. Note that this ensures that 
 the bits at positions $i$ to $n$ remains the same for the
counter whereas the counter is incremented till its value is greater
or equal to $2^n$. Eventually, we may deduce that the counter value
will  belong to  $\interval{2^n +2^{n-i}}{2^n + 2^{n-i+1} -1}$.  
This
is explained Table~\ref{table:constnphardness} with $n=4$.
\begin{center}
  \begin{table}
  {\scriptsize
  \begin{tabular}{|l||l|l|l|l|l|l|l|l|l|l|l|l|l|l|l|l}
    \hline
     &0&1&2&3&4&5&6&7&8&9&10&11&12&13&14&15 \\
    \hline
    \hline
    & & & & & & & & & & & & & & & &  \\
    \hline
    $p_1$&0&0&0&0&0&0&0&0&1&1&1&1&1&1&1&1\\
    \hline
    $p_2$&0&0&0&0&1&1&1&1&0&0&0&0&1&1&1&1\\
    \hline
    $p_3$&0&0&1&1&0&0&1&1&0&0&1&1&0&0&1&1\\
    \hline
    $p_4$&0&1&0&1&0&1&0&1&0&1&0&1&0&1&0&1\\
    \hline
  \end{tabular}
 }
  \end{table}
  \begin{table}
  {\scriptsize
  \begin{tabular}{l|l|l|l|l|l|l|l|l|l|l|l|l|l|l|l|l|}
    \hline
     &16&17&18&19&20&21&22&23&24&25&26&27&28&29&30&31 \\
    \hline
    \hline
    &1&1&1&1&1&1&1&1&1&1&1&1&1&1&1&1 \\
    \hline
    &0&0&0&0&0&0&0&0&1&1&1&1&1&1&1&1\\
    \hline
    &0&0&0&0&1&1&1&1&0&0&0&0&1&1&1&1\\
    \hline
    &0&0&1&1&0&0&1&1&0&0&1&1&0&0&1&1\\
    \hline
    &0&1&0&1&0&1&0&1&0&1&0&1&0&1&0&1\\
    \hline
  \end{tabular}
  \caption{Table showing the effect of last loop for 4 variables}
  \label{table:constnphardness}
 }
  \end{table}

 \end{center}
Let us check the correctness of the reduction. 
Let $v: \set{\avarprop_1, \ldots,\avarprop_n}
\rightarrow
\set{\top, \perp}$.  be a valuation satisfying $\abigformula$. Let us
consider the run $\arun$ respecting $\aschema$ such that the first
loop is taken $\alpha =(v(p_1)v(p_2)\cdots v(p_n))_2 + 1$ times and the initial counter values are all equal to
zero. $\top$ is read as $1$, $\perp$ as $0$ and $(v(p_1)v(p_2)\cdots v(p_n))_2$ denotes the value of the
natural number made of $n$ bits in binary encoding. 
Hence, for every $i \in \interval{1}{n}$, 
the counter $\acounter_i$ contains the value $\alpha$ after the
first loop. As noted earlier, $v(p_i)=1$ implies that adding $2^{n-i+1}$
repeatedly to $\acounter_i$ in the last loop, we will hit 
$\interval{2^n+2^{n-i}}{2^n+2^{n-i+1}-1}$.  Hence,
the formula $
\eventually (\astate_1 \wedge
((\acounter_i - 1) \geq 2^n + 2^{n-i}) \wedge 
((\acounter_i - 1) \leq 2^n + 2^{n-i+1} -1))
$ will be
satisfied by $\arun$ iff $v(p_i)=1$.  It is easy to check thus, that
$\arun, 0 \models \aformulabis$.  Conversely, if there is a run
$\arun$ respecting $\aschema$ such that $\arun, 0 \models \aformula$ and the initial counter values are all equal to
zero,
the valuation $v$ satisfies $\aformula$ where for all $i \in
\interval{1}{n}$, we have $v(p_i)$ iff the $i^{th}$ bit in the binary
encoding of $\alpha-1$ is 1, where $\alpha$ is the number of times the first loop is taken.
\end{proof}

\fi

We will now see that also simple properties on flat counter systems, such as reachability can be proved to be already \np-hard.
\newcommand{\reach}[1]{{\rm REACH}(#1)}
\newcommand{\conf}[1]{conf_0(#1)}
First we note that, a path schema in $\cps$ can also be seen as a 
flat counter system with the additional condition
of taking each loop at least once. For any state $\astate$, we write 
 $\conf{\astate}$ to denote the configuration $\pair{\astate}{\tuple{0,\cdots,0}}$ (all counter values
are equal to zero). 
The reachability problem $\reach{\csfrag}$ for a class of counter system $\csfrag$ is defined as:
\begin{description}
\itemsep 0 cm 
\item[Input:] A counter system $\asys \in \csfrag$ and two states $\astate_0$ and $\astate_f$;
\item[Output:] Does there exist a finite run 
from $\conf{\astate_0}$ to  $\conf{\astate_f}$?
\end{description}
%
%
We have then the following result concerning the lower bound of reachability in flat counter systems and path schemas from flat counter systems.
\begin{lemma}
\label{lemma-reachability}
$\reach{\cps}$ and $\reach{\flatcs}$ are \np-hard.
\end{lemma}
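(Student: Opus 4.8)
The plan is to prove \np-hardness of $\reach{\cps}$ by reduction from SAT, reusing the gadget already constructed in the proof of Lemma~\ref{lemma-constant-loops2}, and then to derive \np-hardness of $\reach{\flatcs}$ essentially for free since a path schema in $\cps$ can be regarded as a flat counter system in which every loop is traversed at least once.

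First I would recall the path schema $\aschema$ from Figure~\ref{fig:constnphardness}: it has an initial loop at $\astate_0$ with guard $\acounter_1 \leq 2^n$ incrementing every counter by $1$, a transition to $\astate_1$, and a second loop at $\astate_1$ incrementing $\acounter_i$ by $2^{n+1-i}$. The key observation is that the number of times the first loop is taken encodes (via $\alpha-1$ in binary) a truth assignment for the propositional variables $\avarprop_1,\dots,\avarprop_n$ of a given Boolean formula $\abigformula$. For the reachability problem I cannot use a temporal formula to read off the bits, so instead I would \emph{build the bit-extraction and the evaluation of $\abigformula$ directly into the control graph}. Concretely, I would append to the second loop a finite gadget (a simple non-loop path segment, with possibly a few additional loops controlled by guards) that, for each $i$, increments $\acounter_i$ repeatedly by $2^{n+1-i}$ until it enters the interval $\interval{2^n+2^{n-i}}{2^n+2^{n+1-i}-1}$ — exactly the arithmetic already justified in the proof of Lemma~\ref{lemma-constant-loops2} — thereby recording the $i$th bit of $\alpha-1$ in the final value of $\acounter_i$. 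Then I would add a sequence of control states that simulates the Boolean connectives of $\abigformula$ by branching on guards of the form $\acounter_i \geq 2^n+2^{n-i}$ (``variable true'') versus the complementary guard (``variable false''), with a distinguished final state $\astate_f$ reachable exactly along the branch corresponding to the sub-assignments making $\abigformula$ true. Since $\abigformula$ can be rewritten (in polynomial time) so that its Boolean structure is a tree/DAG of bounded-fan-in gates, this evaluation gadget is flat: it contains no cycles except the few bit-extraction loops, and those loops have disjoint edge sets, so the whole object is a genuine flat counter system and, after minimalization, a path schema. The reduction is clearly polynomial because all constants $2^{n+1-i}$, $2^n$, etc.\ have binary size $O(n)$.

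For correctness I would argue both directions exactly as in Lemma~\ref{lemma-constant-loops2}: if $v$ satisfies $\abigformula$, take the run taking the first loop $\alpha=(v(\avarprop_1)\cdots v(\avarprop_n))_2+1$ times; the bit-extraction loops then force each $\acounter_i$ into its target interval iff $v(\avarprop_i)=\top$, and the Boolean-evaluation gadget admits a path to $\astate_f$; conversely any finite run from $\conf{\astate_0}$ to $\conf{\astate_f}$ fixes a value $\alpha$ and hence a valuation $v$, and reachability of $\astate_f$ forces $\abigformula$ to evaluate to $\true$ under $v$. This establishes $\reach{\cps}$ is \np-hard. For $\reach{\flatcs}$, I would observe that a path schema $\aseg_1\aloop_1^+\cdots\aseg_k\aloop_k^\omega$ is, structurally, a flat counter system; the only semantic difference is the ``at least once per loop'' requirement, which can be enforced by a standard trick — e.g.\ replicate the first edge of each loop as a mandatory off-loop copy so that any run reaching the state after the loop has necessarily traversed the loop body at least once. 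Under this encoding a finite run reaching $\astate_f$ in the resulting flat counter system corresponds exactly to one respecting the path schema, so \np-hardness transfers.

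The main obstacle I expect is making the Boolean-evaluation gadget genuinely \emph{flat} while still being able to branch on the truth value of each $\avarprop_i$ more than once (a variable may occur many times in $\abigformula$). Naively branching each time would reuse control states across several cycles or create sharing that violates flatness. The fix is that the guard tests $\acounter_i \geq 2^n+2^{n-i}$ are \emph{stateless} — the counters are frozen after the bit-extraction phase — so the evaluation part contains no loops at all and is just a DAG of guarded edges; flatness is then immediate. One must only be careful that the bit-extraction loops for distinct $i$ are edge-disjoint and arranged in sequence (they are, since each acts on its own counter and precedes the purely acyclic evaluation part), and that the guard $\acounter_1\le 2^n$ on the first loop together with the increment vectors keeps all reachable counter values non-negative and within the intended ranges, which follows verbatim from the arithmetic already verified for Lemma~\ref{lemma-constant-loops2}.
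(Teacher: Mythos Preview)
Your proposal has two genuine gaps.

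\textbf{First, a DAG-shaped evaluation gadget is not a path schema.} By definition a path schema is an expression $\aseg_1\aloop_1^{+}\aseg_2\aloop_2^{+}\cdots\aseg_k\aloop_k^{\omega}$, i.e.\ a \emph{linear} alternation of segments and loops; there is no branching. Your ``sequence of control states that simulates the Boolean connectives of $\abigformula$ by branching on guards'' produces a flat counter system with a DAG-shaped control graph, not a single path schema. The sentence ``after minimalization, a path schema'' conflates two different operations: minimalization rewrites one path schema into another, whereas a flat counter system with branching decomposes into \emph{many} minimal path schemas (Corollary~\ref{corollary-ps-cover-runs}). An instance of $\reach{\cps}$ must be a single path schema, so your reduction does not land in the right problem. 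The point you are missing is that branching is unnecessary: guards in this paper already allow arbitrary Boolean combinations $\wedge,\vee$ of atomic constraints, so the whole formula $\abigformula$ can be encoded as a \emph{single guard} on one transition.

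\textbf{Second, you never reach the zero configuration.} The target is $\conf{\astate_f}=\pair{\astate_f}{\vec{0}}$, and your bit-extraction loops drive the counters up to values around $2^n$; you give no mechanism to bring them back to $0$. The paper's construction is precisely designed around this: it uses $2n$ loops in a line, where loop $i$ increments $\acounter_i$ by $1$ and loop $n+i$ decrements $\acounter_i$ by $1$, with the substituted formula $\abigformula[\avarprop_i\mapsto(\acounter_i\geq 2)]$ placed as a guard in between. Truth of $\avarprop_i$ is encoded by taking loop $i$ at least twice, the single guard checks satisfaction of $\abigformula$, and taking loop $n+i$ the same number of times as loop $i$ resets all counters to $0$, allowing $\conf{\astate_f}$ to be reached. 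This is a bona fide path schema (no branching), and \np-hardness of $\reach{\flatcs}$ then follows immediately because a path schema is in particular a flat counter system.
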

\begin{proof}
 The proofs are by reduction from the SAT problem.
Using the fact that $\cps$ is a special and constrained $\flatcs$, we will only prove \np-hardness of 
$\reach{\cps}$ and hence,
as a corollary, have the result for $\reach{\flatcs}$. 
Let $\abigformula$ be a Boolean formula built over the propositional variables
$AP = \set{p_{1},\cdots,p_{n}}$. We build a path schema $\aschema$
such that $\abigformula$ is satisfiable iff there is a run respecting $\aschema$ starting with the configuration 
$\conf{\astate_0}$ visits the configuration $\conf{\astate_f}$. 
The path schema $\aschema$ is the one described in Figure~\ref{fig:reachnphardness} so that the
truth of the propositional variable $p_i$ is encoded by the fact that the loop incrementing
$\acounter_i$ is visited at least twice.
\begin{figure}
  \begin{center}
    \scalebox{0.8}{
    \begin{tikzpicture}[shorten >=1pt,node distance=0.65cm,auto,>=stealth']

  \node[state]         (q_1)   at (0,0)       {$\astate_0$};
  \node[state]         (q_2)  [right=of q_1]  {};
  \node[state]         (q_3)  [right=of q_2]  {};
  \node[state]         (q_4)  [right=of q_3]  {};
  \node[state]         (q_5)  [right=of q_4]  {};
  \node[state]         (q_6)  [right=of q_5]  {};
  \node[state]         (q_n)  [right=of q_6]  {};
  \node[state]         (q_f)  [right=of q_n]  {$\astate_f$};
  \path[->] (q_1)   edge [loop]  node {{\tiny \fbox{$\mathbf{\geq 1}$}}} node [swap] {{\tiny $\left(
						\begin{array}{c}
							1 \\ 0 \\ \vdots \\ 0 \\
						\end{array}
						\right)$}} ()
                    edge         node  {} (q_2)
            (q_2)   edge [loop]  node {{\tiny \fbox{$\mathbf{\geq 1}$}}} node [swap] {{\tiny $\left(
						\begin{array}{c}
							0 \\ 1 \\ \vdots \\ 0 \\
						\end{array}
						\right)$}} ()
            (q_3)   edge [loop]  node {{\tiny \fbox{$\mathbf{\geq 1}$}}} node [swap] {{\tiny $\left(
						\begin{array}{c}
							0 \\ 0 \\ \vdots \\ 1 \\
						\end{array}
						\right)$}} ()
	            edge         node [swap] {$\aguard$,{\tiny $\left(
						\begin{array}{c}
							0 \\ 0 \\ \vdots \\ 0 \\
						\end{array}
						\right)$}} (q_4)
            (q_4)   edge         node  {} (q_5)
            (q_5)   edge [loop]  node {{\tiny \fbox{$\mathbf{\geq 1}$}}} node [swap] {{\tiny $\left(
						\begin{array}{c}
							-1 \\ 0 \\ \vdots \\ 0 \\
						\end{array}
						\right)$}}() 
	            edge         node  {} (q_6)
            (q_6)   edge [loop]  node {{\tiny \fbox{$\mathbf{\geq 1}$}}} node [swap] {{\tiny $\left(
						\begin{array}{c}
							0 \\ -1 \\ \vdots \\ 0 \\
						\end{array}
						\right)$}} ()
            (q_n)   edge [loop]  node {{\tiny \fbox{$\mathbf{\geq 1}$}}} node [swap] {{\tiny $\left(
						\begin{array}{c}
							0 \\ 0 \\ \vdots \\ -1 \\
						\end{array}
						\right)$}} ()
                    edge         node  {} (q_f)
            (q_f)   edge [loop]  node  {{\small \fbox{$\mathbf{\omega}$}}} ();
  \draw[dashed] (2.2,0) -- (2.9,0);
  \draw[dashed] (9,0) -- (9.7,0);
\end{tikzpicture}
    }
  \end{center}
  \caption{A simple path schema}
  \label{fig:reachnphardness}
\end{figure}
The guard $\aguard$ is defined as a formula that establishes the correspondence between the truth value of $p_i$ and the number of times
the loop incrementing $\acounter_i$ is visited. It is defined from $\abigformula$ by replacing each occurrence
of $p_i$ by $\acounter_{i}\geq 2$. Note that, since the $i^{th}$ and $(n+i)^{th}$ loops perform the complementary operation
on the same counters, both of the loops can be taken equal number of times. 

Let us check the correctness of the reduction. Let $v: AP \rightarrow \set{\top, \perp}$
be a valuation satisfying  $\abigformula$. Let us consider the run  $\arun$ respecting
$\aschema$
such that $\loopsof{\aschema}{\arun}[i] = k$ and $\loopsof{\aschema}{\arun}[n+i] = k$ for some $k\geq 2$, if $v(p_i) = \top$, otherwise 
 $\loopsof{\aschema}{\arun}[i] = 1$ and $\loopsof{\aschema}{\arun}[n+i] = 1$ for all $i \in \interval{1}{n}$. 
It is easy to check that the guard $\aguard$ is satisfied by the run and taking $i^{th}$ loop and $(n+i)^{th}$ loop equal number times
ensures resetting the counter values to zero. Hence the configuration $\conf{\astate_f}$ is reachable.
Conversely, if there is a run $\arun$  respecting $\aschema$ and starting with configuration $\conf{\astate_0}$ such that the configuration
$\conf{\astate_f}$ is reachable, then the guard $\aguard$ ensures that the valuation $v$ satisfies $\abigformula$ where
for all $i \in \interval{1}{n}$, we have $v(p_i) = \top$ $\equivdef$  $\loopsof{\aschema}{\arun}[i] \geq 2$.
\end{proof}

\iflong

\section{Characterizing Infinite Runs with a System of Equations}
\label{section-characterization}

In this section, we show how to build a system of equations from a path schema $\aschema$ and
a configuration $\aconf_0$
such that the system of equations encodes the set of all runs respecting $\aschema$
from $\aconf_0$. This can be done for path schemas without disjunctions in guards
that satisfy an additional \defstyle{validity} property. 
A path schema  $\aschema = \aseg_1 \aloop_1^+ \aseg_2 \aloop_2^+ \ldots \aseg_k \aloop_k^{\omega}$
is \defstyle{valid} 
\iflong
whenever it satisfies the conditions below:
\begin{enumerate}
\itemsep 0 cm
\item[1.] $\effect{\aloop_k} \geq \vect{0}$, 
\item[2.] if all the guards in transitions in $\aloop_k$ are conjunctions of atomic guards,
then  for each guard   occurring in the loop $\aloop_k$ 
of the form $\sum_i \afactor_i \avariable_i \sim b$ 
with $\sim \in \set{\leq, <}$ [resp. with $\sim \in \set{=}$, with 
$\sim \in \set{\geq, >}$] , we have 
$\sum_i \afactor_i \times \effect{\aloop_k}[i] \leq 0$
[resp. $\sum_i \afactor_i \times \effect{\aloop_k}[i] = 0$, 
 $\sum_i \afactor_i \times \effect{\aloop_k}[i] \geq 0$].
\end{enumerate}
\else
whenever $\effect{\aloop_k}[i] \geq 0$  for every $i\in \interval{1}{n}$ 
(see Section~\ref{section-path-schemas} 
for the definition of $\effect{\aloop_k}$) and 
if all the guards in transitions in $\aloop_k$ are conjunctions of atomic guards,
then  for each guard   occurring in the loop $\aloop_k$ 
of the form $\sum_i \afactor_i \avariable_i \sim b$ 
with $\sim \in \set{\leq, <}$ [resp. with $\sim \in \set{=}$, with 
$\sim \in \set{\geq, >}$] , we have 
$\sum_i \afactor_i \times \effect{\aloop_k}[i] \leq 0$
[resp. $\sum_i \afactor_i \times \effect{\aloop_k}[i] = 0$, 
 $\sum_i \afactor_i \times \effect{\aloop_k}[i] \geq 0$].
\fi 
It is easy to check that these conditions are necessary to visit the last
loop $\aloop_k$ infinitely. More specifically, if a path schema is not valid, then
no infinite run can respect it. Moreover, given a path schema, one can decide in polynomial
time whether it is valid. Note that below we deal with path schemas $\aschema$ that are not
necessarily minimal. 

Now, let us consider a (not necessarily minimal) valid path schema $\aschema = 
\aseg_1 \aloop_1^+ \aseg_2 \aloop_2^+ \ldots \aseg_k \aloop_k^{\omega}$  ($k \geq 1$)
obtained from a flat counter system $\asys$ such that all the guards on
transitions are conjunctions of atomic guards of the form
$\sum_{i} \afactor_i \acounter_i \sim \aconstant$ where 
$\afactor_i \in \Zed$, $\aconstant \in \Zed$ and $\sim \in \set{=,\leq,\geq,<,>}$.
Hence, disjunctions are \emph{disallowed} in guards.
The goal of this section (see Lemma~\ref{lemma-constraint-system} below) 
is to characterize the set $\loopsof{\aschema}{\aconf_0}
\subseteq \Nat^{k-1}$ for some \iflong initial \fi  configuration $\aconf_0$ as the set of solutions
of a constraint system. 
For each loop $\aloop_i$, we introduce a 
variable $\avariablebis_i$, 
whence the number of variables of the system/formula is 
precisely $k-1$.
A \defstyle{constraint system} $\aconstraintsystem$ over the set of
variables $\set{\avariablebis_1,
\ldots, \avariablebis_n}$ is a quantifier-free Presburger formula built over 
 $\set{\avariablebis_1, \ldots, \avariablebis_n}$ as a conjunction of atomic constraints
of the form $\sum_{i} \afactor_i \avariablebis_i \sim \aconstant$ where
$\afactor_i, \aconstant \in \Zed$ and  $\sim \in \set{=,\leq,\geq,<,>}$. 
Conjunctions of atomic counter constraints and constraint systems
are essentially the same objects but the  distinction in this place allows
to emphasize the different purposes: guard on counters in operational
models and symbolic representation of sets of tuples.%

Let us build a constraint system $\aconstraintsystem$ defined from $\aschema$ that characterizes 
the set $\loopsof{\aschema}{\aconf_0}$
included in $\Nat^{k-1}$ for some initial configuration $\aconf_0 = 
\pair{\astate_0}{\vec{v_0}}$. 
For all $\alpha \in \interval{1}{k}$ and all $i \in \interval{1}{n}$, we write 
$\preeffect{\aloop_{\alpha}}[i]$ to denote the term below:
$$\vec{v_0}[i] + (\effect{\aseg_1} + \cdots + \effect{\aseg_{\alpha}})[i] +
 \effect{\aloop_1}[i] \avariablebis_1 + \ldots + \effect{\aloop_{\alpha-1}}[i] 
\avariablebis_{\alpha-1}$$ 
It corresponds to the value of 
the counter $i$ just before entering in the loop $\aloop_{\alpha}$.
Similarly, for all $\alpha \in \interval{1}{k}$  and all $i \in \interval{1}{n}$, 
we write $\preeffect{\aseg_{\alpha}}[i]$ to denote
$$\vec{v_0}[i] + (\effect{\aseg_1} + \cdots + \effect{\aseg_{\alpha-1}})[i] +
 \effect{\aloop_1}[i] \avariablebis_1 + \ldots + \effect{\aloop_{\alpha-1}}[i] 
\avariablebis_{\alpha-1}$$
 It corresponds to the value of 
the counter $i$ just before entering in the  segment $\aseg_{\alpha}$.
In this way, for each segment $\aseg$ in $\aschema$ (either a loop or a non-loop segment) and 
each $\beta \in \interval{0}{\length{\aseg}-1}$ the term below
refers to the value of counter $i$ just before entering
for the first time in the $(\beta+1)$th transition of $\aseg$:
$$
\preeffect{\aseg}[i] + \effect{\aseg[0] \cdots \aseg[\beta-1]}[i]
$$
Similarly,  the value of counter $i$ just before entering
for the last time in the $(\beta+1)$th transition of $\aloop_{\alpha}$ is represented
by the term below:
$$
\preeffect{\aseg}[i] +  \effect{\aloop_{\alpha}}[i] (\avariablebis_{\alpha}
- 1) + \effect{\aloop_{\alpha}[0] \cdots \aloop_{\alpha}[\beta-1]}[i]
$$
The set of conjuncts in $\aconstraintsystem$ is defined as follows. Each conjunct
corresponds to a specific constraint in runs  respecting $\aschema$. 

\begin{description}
\itemsep 0 cm 

\item[$\aconstraintsystem_1$:]  Each loop is visited at least once:
      $$
      \avariablebis_1 \geq 1 \wedge \cdots \wedge  \avariablebis_{k-1} \geq 1
      $$

\item[$\aconstraintsystem_2$:] Counter values are non-negative. Let us consider
the following constraints.
     \begin{itemize}
     \item For each segment $\aseg$ and  each 
           $\beta \in \interval{0}{\length{\aseg}-1}$,  
           the value of counter $i$ just before entering
           for the first time in the $(\beta+1)$th transition of $\aseg$ is non-negative:
      $$
\preeffect{\aseg}[i] + \effect{\aseg[0] \cdots \aseg[\beta-1]}[i] \geq 0
      $$
      The segment $\aseg$ can be either a loop or a non-loop segment. 
      \item For each $\alpha \in \interval{1}{k-1}$ and 
             each 
           $\beta \in \interval{0}{\length{\aloop_{\alpha}}-1}$,  
           the value of counter $i$ just before entering
           for the last time in the $(\beta+1)$th transition of 
          $\aloop_{\alpha}$ 
          is non-negative:
          $$
          \preeffect{\aloop_{\alpha}}[i] +  
          \effect{\aloop_{\alpha}}[i] (\avariablebis_{\alpha}
          - 1) + \effect{\aloop_{\alpha}[0] \cdots \aloop_{\alpha}[\beta-1]}[i]
          \geq 0 
         $$
     \end{itemize}
     Convexity guarantees that this is sufficient for preserving non-negativity.
\item[$\aconstraintsystem_3$:] Counter values should satisfy the guards
the first time when a transition is visited. 
For each segment $\aseg$ in $\aschema$, 
       each $\beta \in \interval{0}{\length{\aseg}-1}$ 
      and each atomic guard 
      $\sum_i \afactor_i \avariable_i \sim \aconstant$ occurring in $\guard{\aseg(\beta)}$,   
      we add the atomic 
      constraint:
$$
\sum_i \afactor_i (\preeffect{\aseg}[i] + 
                   \effect{\aseg[0] \cdots 
\aseg[\beta-1]}[i]) \sim \aconstant
$$


\item[$\aconstraintsystem_4$:]
Counter values should satisfy the guards
the last time when a transition is visited. This applies to loops only. 
For each $\alpha \in \interval{1}{k-1}$,
       each $\beta \in \interval{0}{\length{\aloop_{\alpha}}-1}$ 
      and each atomic guard 
      $\sum_i \afactor_i \avariable_i \sim \aconstant$ occurring in $\guard{\aloop_{\alpha}(\beta)}$,   
      we add the atomic 
      constraint:
$$
\sum_i \afactor_i (\preeffect{\aloop_{\alpha}}[i] + 
\effect{\aloop_{\alpha}}[i] (\avariablebis_{\alpha}
          - 1) +
                   \effect{\aloop_{\alpha}[0] \cdots 
\aloop_{\alpha}[\beta-1]}[i]) \sim b
$$
No condition is needed for the last loop since the path schema $\aschema$ is valid. 

\end{description}

Now, let us bound the number of equalities or inequalities above.
To do so, we write
$N_1$ to denote the number of atomic guards in $\aschema$. 
\begin{itemize}
\itemsep 0 cm
\item The number of conjuncts in $\aconstraintsystem_1$ is $k$. 
\item 
      The number of conjuncts in $\aconstraintsystem_2$ is bounded
      by
      $$
        \length{\aschema} \times n +
        \length{\aschema} \times n = 2 n \times  \length{\aschema}.
      $$
\item The number of conjuncts in $\aconstraintsystem_3$ [resp. 
      $\aconstraintsystem_4$] is bounded
      by
      $
      \length{\aschema} \times N_1
      $.
\end{itemize}
So, the number of conjuncts in $\aconstraintsystem$ is bounded by
$2 \times \length{\aschema} \times (1 + n + N_1) \leq 2 \times \length{\aschema} \times n (1 + N_1)$. 
Since $n, 1+N_1 \leq \sizeof{\aschema}$,
we get that this number is bounded by $\length{\aschema} \times 2 \times \size{\aschema}^2$.

Let  $K$ be
the  maximal absolute value of constants occurring in 
either in $\aschema$ or in $\vec{v_0}$.
Let us bound the maximal absolute value of constants in $\aconstraintsystem$.
To do so, we start by a few observations.
\begin{itemize}
\item A  path segment $\aseg$ has at most $\length{\aschema}$ transitions and therefore
      the maximal absolute value occurring in $\effect{\aseg}$ is at most
     $K \times \length{\aschema}$. 
\item The maximal absolute value occurring in $\preeffect{\aseg}$ is at most
      $K + K \times \length{\aschema} = K (1 + \length{\aschema})$. The first occurence of $K$ comes from the counter
      values in the initial configuration. 
\end{itemize}
Consequently, we can make the following conclusions.
\begin{itemize}
\item The maximal absolute values of constants in $\aconstraintsystem_1$ is 1.
\item The maximal absolute values of constants in the first part of $\aconstraintsystem_2$ is  bounded by
     $K (1 + \length{\aschema}) + K \length{\aschema} \leq (K+1) (\length{\aschema}+1)$.
\item The maximal absolute values of constants in the second part of $\aconstraintsystem_2$ is  bounded by
     $K (1 + \length{\aschema}) + K \length{\aschema} + K \length{\aschema} \leq 2 (K+1) (\length{\aschema}+1)$.
     So, the maximal absolute values of constants in  $\aconstraintsystem_2$ is  bounded by
     $2 (K+1) (\length{\aschema}+1)$.
\item  The maximal absolute values of constants in  $\aconstraintsystem_3$ or  $\aconstraintsystem_4$ 
      is  bounded by
     $n \times K \times  2 (K+1) (\length{\aschema}+1)  + K$. The last occurrence of $K$ is due to the constant 
     $\aconstant$ in the atomic constraint. 
\end{itemize}

Consequently, the maximal absolute value of constants in $\aconstraintsystem$
is bounded by 
$2n \times  K (K+2) \times  (\length{\aschema}+1)$.
When $\aschema$ is a minimal path schema, note that $\length{\aschema} \leq
2 \times \card{\edges} \leq 2 \times \size{\asys}$ and $k \leq \card{\states} \leq \size{\asys}$. 


\iflong
\begin{lemma} \label{lemma-constraint-system}
Let $\asys =  \tuple{\states,\counters_n,\edges,\alabelling}$ be a flat counter system
without disjunctions in guards, $\aschema = 
\aseg_1 \aloop_1^+ \aseg_2 \aloop_2^+ \ldots \aseg_k \aloop_k^{\omega}$
be one of its valid path schemas and $\aconf_0$ be a configuration. 
One can compute a constraint system $\aconstraintsystem$ such that
\begin{itemize}
\itemsep 0 cm 
\item the set of solutions of 
$\aconstraintsystem$ is equal to  $\loopsof{\aschema}{\aconf_0}$,
\item $\aconstraintsystem$ has $k-1$ variables, 
\item $\aconstraintsystem$ has at most $\length{\aschema} \times 2 \times \size{\aschema}^2$ conjuncts,
\item the greatest absolute value from constants in $\aconstraintsystem$  is bounded
by $2n \times  K (K+2) \times  (\length{\aschema}+1)$.
\end{itemize}
\end{lemma}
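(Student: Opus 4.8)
(plan)
The constraint system is the one built in the discussion preceding the lemma, namely
$\aconstraintsystem \egdef \aconstraintsystem_1 \wedge \aconstraintsystem_2 \wedge \aconstraintsystem_3 \wedge \aconstraintsystem_4$, with fresh variables $\avariablebis_1,\ldots,\avariablebis_{k-1}$ standing for the numbers of iterations of the loops $\aloop_1,\ldots,\aloop_{k-1}$; so $\aconstraintsystem$ has exactly $k-1$ variables. The two remaining quantitative claims (at most $\length{\aschema}\times 2\times\size{\aschema}^2$ conjuncts, constants of absolute value at most $2n\times K(K+2)\times(\length{\aschema}+1)$) are precisely what the counting carried out just above establishes, so the only thing left is to prove that the solution set of $\aconstraintsystem$ equals $\loopsof{\aschema}{\aconf_0}$.

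First I would fix $\vect{m}\in(\nat\setminus\set{0})^{k-1}$ and let $\aword_{\vect{m}}\egdef \aseg_1\aloop_1^{\vect{m}[1]}\aseg_2\aloop_2^{\vect{m}[2]}\cdots\aseg_k\aloop_k^{\omega}$ be the unique word of $\languageof{\aschema}$ with $\loopsof{\aschema}{\aword_{\vect{m}}}=\vect{m}$, so that by definition $\vect{m}\in\loopsof{\aschema}{\aconf_0}$ iff $\aword_{\vect{m}}$ is fireable from $\aconf_0$, i.e. iff at each visit of each transition the source configuration has non-negative counters satisfying the transition's guard (non-negativity of the whole run being equivalent to non-negativity just before every transition visit, the initial configuration being non-negative by assumption). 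The key observation is that, along $\aword_{\vect{m}}$, the value of counter $i$ just before the $(t{+}1)$-st visit of the $(\beta{+}1)$-st transition of $\aloop_\alpha$ is the affine function $t\mapsto \preeffect{\aloop_\alpha}[i]+\effect{\aloop_\alpha[0]\cdots\aloop_\alpha[\beta-1]}[i]+t\cdot\effect{\aloop_\alpha}[i]$, once the $\avariablebis$'s are instantiated to $\vect{m}$; its value at $t=0$ is exactly the term used in $\aconstraintsystem_2$/$\aconstraintsystem_3$ for the first visit, and its value at $t=\vect{m}[\alpha]-1$ is exactly the term used in $\aconstraintsystem_2$/$\aconstraintsystem_4$ for the last visit (with $\avariablebis_\alpha$ set to $\vect{m}[\alpha]$). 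Since $\sim{}\in\set{=,\leq,\geq,<,>}$ and an affine function of an integer $t$ ranging over $\interval{0}{\vect{m}[\alpha]-1}$ attains its extrema at the endpoints (and an affine map taking a fixed value at two distinct points is constant between them, the case $\vect{m}[\alpha]=1$ being trivial), a constraint $\sum_i\afactor_i x_i(t)\sim\aconstant$ holds for all $t$ in that interval iff it holds at $t=0$ and at $t=\vect{m}[\alpha]-1$. The non-loop segments $\aseg_\alpha$ are traversed only once, so a single conjunct per transition suffices, which is what $\aconstraintsystem_2$ and $\aconstraintsystem_3$ provide for them.

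It then remains to treat the last loop $\aloop_k$, which is visited infinitely often and for which $\aconstraintsystem$ keeps only the ``first visit'' conjuncts. Here I would invoke validity of $\aschema$: since $\effect{\aloop_k}[i]\geq 0$ for every $i$, the value of each counter before a fixed transition of $\aloop_k$ is non-decreasing in the iteration index, so non-negativity at the first visit propagates to all later ones; and the sign conditions validity imposes on $\sum_i\afactor_i\effect{\aloop_k}[i]$ make each guard of $\aloop_k$ monotone in the right direction along the iterations, so satisfaction at the first visit likewise propagates. Combining these facts, $\aword_{\vect{m}}$ is fireable from $\aconf_0$ iff $\vect{m}$ satisfies $\aconstraintsystem_1$ (as $\vect{m}\in(\nat\setminus\set{0})^{k-1}$), $\aconstraintsystem_2$, $\aconstraintsystem_3$ and $\aconstraintsystem_4$, i.e. iff $\vect{m}$ is a solution of $\aconstraintsystem$; conversely any solution of $\aconstraintsystem$ lies in $(\nat\setminus\set{0})^{k-1}$ because of $\aconstraintsystem_1$ and is therefore of this form, so it belongs to $\loopsof{\aschema}{\aconf_0}$ by the same equivalence. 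I expect the main obstacle to be making the affine-dependence/convexity argument fully precise through the nesting of loops --- in particular checking by induction on the position in $\aseg_1\aloop_1\cdots\aseg_k\aloop_k$ that the terms $\preeffect{\cdot}$ indeed record the accumulated effect of \emph{all} preceding segments and loops --- but once those terms are read off as above the argument is routine.
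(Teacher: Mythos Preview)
Your proposal is correct and follows essentially the same approach as the paper: the constraint system is the one constructed in the preceding discussion, the size bounds are read off from that construction, and correctness rests on the convexity/affine argument that guards and non-negativity hold at every iteration of a loop iff they hold at the first and last one, together with validity of $\aschema$ to handle $\aloop_k$. The only organizational difference is that you prove both inclusions simultaneously by characterizing when $\aword_{\vect{m}}$ is fireable, whereas the paper treats the two directions separately; the underlying ideas are identical.
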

\else
\begin{lemma} \label{lemma-constraint-system}
Let $\asys =  \tuple{\states,\counters_n,\edges,\alabelling}$ be a flat counter system
without disjunctions in guards, $\aschema$
be a valid path schema and $\aconf_0$ be a configuration. 
One can compute in polynomial time a constraint system $\aconstraintsystem$ such that
 the set of solutions of 
$\aconstraintsystem$ is equal to  $\loopsof{\aschema}{\aconf_0}$,
$\aconstraintsystem$ has $\nbloops{\aschema}-1$ variables, 
$\aconstraintsystem$ has at most $\length{\aschema} \times 2 \times \size{\asys}^2$ conjuncts
and the greatest absolute value from constants in $\aconstraintsystem$  is bounded
by $n \times \nbloops{\aschema} \times K^4 \times \length{\aschema}^3$ where $K$ is the greatest absolute
value for constants occurring in $\asys$. 
\end{lemma}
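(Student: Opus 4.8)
(sketch)
The plan is to take for $\aconstraintsystem$ exactly the conjunction $\aconstraintsystem_1 \wedge \aconstraintsystem_2 \wedge \aconstraintsystem_3 \wedge \aconstraintsystem_4$ described above, with one variable $\avariablebis_\alpha$ per non-final loop $\aloop_\alpha$ of $\aschema$, so that $\aconstraintsystem$ has $\nbloops{\aschema}-1$ variables; the bounds on the number of conjuncts and on the absolute values of the constants, together with the fact that each coefficient is computable in polynomial time (being a sum of updates along a sub-segment of $\aschema$, of which there are polynomially many), have been obtained in the discussion above. It then remains to establish the equivalence: for $\vec{m} \in \Nat^{\nbloops{\aschema}-1}$, $\vec{m}$ satisfies $\aconstraintsystem$ iff $\vec{m} \in \loopsof{\aschema}{\aconf_0}$. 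The forward implication is routine bookkeeping: if $\arun$ respects $\aschema$ from $\aconf_0$ with $\loopsof{\aschema}{\arun} = \vec{m}$, then each conjunct of $\aconstraintsystem_1$ holds because every loop of a path schema is visited at least once, while each conjunct of $\aconstraintsystem_2$, $\aconstraintsystem_3$, $\aconstraintsystem_4$ is, evaluated at $\vec{m}$, the value of some counter of $\arun$ just before a designated (first or last) firing of a designated transition, and that value is $\geq 0$ and satisfies the relevant guard because $\arun$ is a genuine run.

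The converse implication is the delicate part. Given $\vec{m}$ satisfying $\aconstraintsystem$, one builds the run following $\aword = \aseg_1 \aloop_1^{\vec{m}[1]} \aseg_2 \aloop_2^{\vec{m}[2]} \cdots \aseg_k \aloop_k^{\omega}$, which is well formed since $\aconstraintsystem_1$ forces each $\vec{m}[\alpha] \geq 1$, and one must check that every visited configuration has non-negative components and every fired transition has a satisfied guard, whereas $\aconstraintsystem$ only constrains the first and last firing of each transition inside a loop. The key observation is an affineness/convexity argument: for a fixed transition occurring at offset $\beta$ inside a non-final loop $\aloop_\alpha$, the value of counter $i$ just before its $(j+1)$-th firing equals the affine function of $j \in \interval{0}{\vec{m}[\alpha]-1}$ given by $\preeffect{\aloop_\alpha}[i] + \effect{\aloop_\alpha}[i]\cdot j + \effect{\aloop_\alpha[0]\cdots\aloop_\alpha[\beta-1]}[i]$; being $\geq 0$ at $j=0$ and at $j=\vec{m}[\alpha]-1$ — the two conjuncts placed in $\aconstraintsystem_2$ — it is then $\geq 0$ for all intermediate $j$. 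The same argument applied to each linear form $\sum_i \afactor_i \avariable_i$ occurring in a guard (a half-space, or a hyperplane when $\sim$ is $=$, hence again an interval condition on an affine function of $j$) shows that $\aconstraintsystem_3$ and $\aconstraintsystem_4$ force every guard along the loop segments; a transition on a non-loop segment $\aseg_\alpha$ is fired once and is covered directly by $\aconstraintsystem_2$ and $\aconstraintsystem_3$.

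Finally, validity of $\aschema$ is what keeps $\aloop_k^{\omega}$ fireable forever from the configuration reached after $\aseg_k$: $\effect{\aloop_k} \geq \vect{0}$ prevents the counters from dropping below their values on the first pass through $\aloop_k$ (which are $\geq 0$ by $\aconstraintsystem_2$), and the sign conditions on $\sum_i \afactor_i \effect{\aloop_k}[i]$ ensure that every guard of $\aloop_k$, once satisfied on the first pass (guaranteed by $\aconstraintsystem_3$), stays satisfied on every subsequent pass. Hence $\aword$ labels a run respecting $\aschema$ with loop-count vector $\vec{m}$, so $\vec{m} \in \loopsof{\aschema}{\aconf_0}$, which closes the equivalence and proves the lemma. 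I expect the main obstacle to be making this convexity step watertight — in particular, checking that the \emph{first firing} and \emph{last firing} of a transition inside a loop are exactly $j=0$ and $j=\vec{m}[\alpha]-1$, so that the conjuncts of $\aconstraintsystem_2$--$\aconstraintsystem_4$ really leave no intermediate position unconstrained.
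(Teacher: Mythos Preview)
Your proposal is correct and follows essentially the same approach as the paper: the constraint system is exactly $\aconstraintsystem_1 \wedge \aconstraintsystem_2 \wedge \aconstraintsystem_3 \wedge \aconstraintsystem_4$ from the preceding construction, the run-to-solution direction is straightforward verification, and the solution-to-run direction relies on the convexity argument (affine dependence on the iteration index $j$, so non-negativity and guard satisfaction at $j=0$ and $j=\vec{m}[\alpha]-1$ propagate to all intermediate $j$), with validity of $\aschema$ handling the final loop $\aloop_k^{\omega}$. The only cosmetic slip is that you have swapped the labels ``forward'' and ``converse'' relative to the equivalence as you stated it, but the content of each direction is correctly argued.
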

\fi

\ifshort
\else 

\begin{proof} 
\iflong
The constraint system $\aconstraintsystem$ is the one built above. 
\else

\fi 

($\star$)
Let $\arun = \pair{\astate_0}{\vec{v_0}} \pair{\astate_1}{\vec{v_1}} 
\pair{\astate_2}{\vec{v_2}} \cdots$ be an infinite run respecting the path schema $\aschema$
with $\aconf_0 = \pair{\astate_0}{\vec{v_0}}$. 
We write $V: \set{\avariablebis_1, \ldots, \avariablebis_{k-1}} \rightarrow \Nat$ to denote
the valuation such that for every $\alpha \in \interval{1}{k-1}$, we
have $V(\avariablebis_{\alpha}) =  \loopsof{\aschema}{\arun}[\alpha]$. 
$V$ is extended naturally to terms built over variables in 
$\set{\avariablebis_1, \ldots, \avariablebis_{k-1}}$, the range becoming $\Zed$. 
Let us check that $V \models \aconstraintsystem$. 
\begin{enumerate}
\itemsep 0 cm 
\item Since $\arun$ respects $\aschema$, 
each loop $\aloop_i$ is visited at least once and therefore $V \models \aconstraintsystem_1$.
\item  We have seen that the value below
$$
V(\preeffect{\aseg}[i] + \effect{\aseg[0] \cdots \aseg[\beta-1]}[i])
$$
is equal to the value of counter $i$ just before entering
           for the first time in the $(\beta+1)$th transition of $\aseg$.
Similarly,  the value below
$$
V(\preeffect{\aloop_{\alpha}}[i] +  
          \effect{\aloop_{\alpha}}[i] (\avariablebis_{\alpha}
          - 1) + \effect{\aloop_{\alpha}[0] \cdots \aloop_{\alpha}[\beta-1]}[i])
$$
is equal to the value of counter $i$  before entering
           for the last time in the $(\beta+1)$th transition of 
          $\aloop_{\alpha}$.
Since $\arun$ is a run, these values are non-negative, whence
$V \models \aconstraintsystem_2$. 
\item Since $\arun$ is a run, whenever a transition is fired, all its guards are satisfied.
Hence, for each segment $\aseg$ in $\aschema$, 
       for each $\beta \in \interval{0}{\length{\aseg}-1}$ 
      and each atomic guard 
      $\sum_i \afactor_i \avariable_i \sim b \in \guard{\aseg(j)}$,
we have 
$$
\sum_i \afactor_i V(\preeffect{\aseg}[i] + 
                   \effect{\aseg[0] \cdots 
\aseg[\beta-1]}[i]) \sim b
$$
Similarly,
for each $\alpha \in \interval{1}{k-1}$,
each $\beta \in \interval{0}{\length{\aloop_{\alpha}}-1}$ 
      and each atomic guard 
      $\sum_i \afactor_i \avariable_i \sim b \in \guard{\aloop_{\alpha}(\beta)}$,   
      we have
$$
\sum_i \afactor_i V(\preeffect{\aloop_{\alpha}}[i] + 
\effect{\aloop_{\alpha}}[i] (\avariablebis_{\alpha}
          - 1) +
                   \effect{\aloop_{\alpha}[0] \cdots 
\aloop_{\alpha}[\beta-1]}[i]) \sim b
$$
Consequently, $V \models \aconstraintsystem_3 \wedge \aconstraintsystem_4$.
\end{enumerate}

($\star \star$) It remains to show the property in the other direction. 

Let  $V: \set{\avariablebis_1, \ldots, \avariablebis_{k-1}} \rightarrow \Nat$
be a solution of $\aconstraintsystem$. Let $$\aword = 
\aseg_1 \aloop_1^{V(\avariablebis_1)} \cdots \aseg_{k-1} \aloop_1^{V(\avariablebis_{k-1})} 
 \aseg_{k} \aloop_k^{\omega} \in \edges^{\omega}$$ and
let us build an $\omega$-sequence 
$\arun' = \pair{\astate_0}{\vec{x_0}}
\pair{\astate_1}{\vec{x_1}} \pair{\astate_2}{\vec{x_2}} \cdots \in (\states \times \Zed^n)^{\omega}$,
that will be later shown to be an infinite run respecting the path schema $\aschema$
with $\aconf_0 = \pair{\astate_0}{\vec{v_0}}$. Here is how $\arun'$ is defined (note that the definition does not assume that $\arun'$ needs to be a run):
\begin{itemize}
\itemsep 0 cm 
\item For every $i \geq 0$, $\astate_i \egdef \source{\aword(i)}$,
\item $\vec{x_0} \egdef \vec{v_0}$ and for every $i \geq 1$, we have
      $\vec{x_i} \egdef \vec{x_{i-1}} + \update{\aword(i)}$. 
\end{itemize}
In order to show that $\arun'$ is an infinite run respecting $\aschema$, we have to check three
main properties.
\begin{enumerate}
\item Since $V \models \aconstraintsystem_2$,  for each segment $\aseg$ in $\aschema$ 
      and  each  $\beta \in \interval{0}{\length{\aseg}-1}$,  
           counter values  just before entering
           for the first time in the $(\beta+1)$th transition of $\aseg$ are non-negative.
      Moreover, for each $\alpha \in \interval{1}{k-1}$ and 
             each 
           $\beta \in \interval{0}{\length{\aloop_{\alpha}}-1}$,  
           counter values  just before entering
           for the last time in the $(\beta+1)$th transition of 
          $\aloop_{\alpha}$ 
          are non-negative too. We have also to guarantee that for 
          $j \in \interval{2}{V(\avariablebis_{\alpha})-1}$, counter values just before
          entering for the $j$th time in  the $(\beta+1)$th transition of 
          $\aloop_{\alpha}$ 
          are non-negative. This is a consequence of the fact that
         if $\vec{z}, \vec{z} + V(\avariablebis_{\alpha}) \effect{\aloop_{\alpha}} \geq 0$, then
         for 
          $j \in \interval{2}{V(\avariablebis_{\alpha})-1}$, we have 
          $\vec{z} + j \times \effect{\aloop_{\alpha}} \geq 0$ (convexity). 
         Consequently, for $i \geq 0$, we have $\vec{x_i} \geq \vec{0}$. 
\item Similarly,  counter values should satisfy the guards for each fired transition. 
Since  $V \models \aconstraintsystem_3$, for each segment $\aseg$ in $\aschema$, 
each $\beta \in \interval{0}{\length{\aseg}-1}$
 and each atomic guard 
 $\sum_i \afactor_i \avariable_i \sim b \in \guard{\aseg(j)}$, 
counter values satisfy it 
the first time the transition is visited. 
Moreover, since  $V \models \aconstraintsystem_3$, for each $\alpha \in \interval{1}{k-1}$,
       each $\beta \in \interval{0}{\length{\aloop_{\alpha}}-1}$ 
      and each atomic guard 
      $\sum_i \afactor_i \avariable_i \sim b \in \guard{\aloop_{\alpha}(\beta)}$ occurs, 
counter values satisfy it 
the first time the transition is visited. 
However, we have also to guarantee that for 
          $j \in \interval{2}{V(\avariablebis_{\alpha})-1}$, counter values just before
          entering for the $j$th time in  the $(\beta+1)$th transition of 
          $\aloop_{\alpha}$,
all the guards are satisfied. 
This is a consequence of the fact that
         if $\sum_i \afactor_i \vec{z}[i] \sim b$ and $\sum_i \afactor_i 
         (\vec{z} + V(\avariablebis_{\alpha}) \effect{\aloop_{\alpha}})[i] \sim b$, then
         for 
          $j \in \interval{2}{V(\avariablebis_{\alpha})-1}$, we have 
         $\sum_i \afactor_i 
         (\vec{z} + j \times \effect{\aloop_{\alpha}})[i] \sim b$ (convexity). 
Hence, $\arun'$ is a run starting at $\aconf_0$. 
\item It remains to show that $\arun'$ respects $\aschema$. Since $\arun'$ is a run
(see (1) and (2) above), by construction of $\arun'$, it  respects $\aschema$ thanks to
$V \models \aconstraintsystem_1$. Indeed, by definition, each loop has to be visited
at least once. 
\end{enumerate}
\end{proof}

\fi 

\fi 
\iflong

\section{From One Minimal Schema to Several Schemas}
\label{section-disjunction}


Given a flat  counter system 
$\asys = \tuple{\states,\counters_n,\edges,\alabelling}$, 
a configuration $\aconf_0= \pair{\astate_0}{\vec{v_0}}$ and
a minimal path schema $\aschema$ 
starting from the configuration $\aconf_0$, we build a finite set $\asetbis_{\aschema}$ of path schemas
such that \\
\iflong
\begin{enumerate}
\itemsep 0 cm
\item each path schema in $\asetbis_{\aschema}$ has transitions without disjunctions in guards,
\item existence of a run respecting $\aschema$ is equivalent to the existence
of a path schema in $\asetbis_{\aschema}$ having a run respecting it,
\item each path schema in $\asetbis_{\aschema}$ is obtained from $\aschema$ by unfolding loops so that
the terms in each loop satisfy the same atomic guards. 
\end{enumerate}
\else
1. each path schema in $\asetbis_{\aschema}$ has transitions without disjunctions in guards,\\
2. existence of a run respecting $\aschema$ is equivalent to the existence
of a path schema in $\asetbis_{\aschema}$ having a run respecting it,\\
3.  each path schema in $\asetbis_{\aschema}$ is obtained from $\aschema$ by unfolding loops so that
the terms in each loop satisfy the same atomic guards. 
\fi
\iflong
Moreover, we shall see how the cardinal of $\asetbis_{\aschema}$ is at most exponential in the size of
$\aschema$. Note that each path schema in $\asetbis_{\aschema}$ comes with an implicit counter system
(typically containing exactly the states and transitions occurring in the path schema). 
So, below, 
we  explain how we could get rid of disjunctions.
Note also that disjunctions can be easily eliminated at the cost
of adding new transitions between states but this type of transformation may easily
destroy flatness. That is why, we shall follow a different path.
\else 
Note that disjunctions can be easily eliminated at the cost
of adding new transitions between states but this type of transformation may easily
destroy flatness. Hence, the elimination .
\fi 

\subsection{Term maps}
\label{section-term-maps}
Before defining $\asetbis_{\aschema}$, let us introduce a few definitions.
Let $B$ be a finite non-empty set of integers containing all the constants $\aconstant$ 
occurring in guards
of $\asys$ of the form $\aterm \sim \aconstant$ and $T$ be a finite set of terms containing all the terms $\aterm$
occurring in guards of $\asys$ of the form $\aterm \sim \aconstant$. 
Assuming that $B = \set{\aconstant_1, \ldots,\aconstant_m}$ with $\aconstant_1 < \cdots < \aconstant_m$,
we write $I$ to denote the finite set of intervals
$I=\{(-\infty,\aconstant_{1}-1],\interval{\aconstant_{1}}{\aconstant_{1}},\interval{\aconstant_{1}+1}{\aconstant_{2}-1},
\interval{\aconstant_{2}}{\aconstant_{2}}, \allowbreak \cdots , 
\interval{\aconstant_{m}}{\aconstant_{m}}, [\aconstant_{m}+1,\infty)\} \setminus \set{\emptyset}$.
Note that $\interval{\aconstant_{j}+1}{\aconstant_{j+1}-1} = \emptyset$ if $\aconstant_{j+1} = 
\aconstant_{j} + 1$. 
Note that $I$ contains at most $2m+1$ intervals and at least
$m+2$ intervals. We consider the natural
linear ordering $\leq$ on intervals in $I$ that respects the standard relation $\leq$ on integers.
A \defstyle{term map} $\atermmap$ is a map $\atermmap: T \rightarrow I$ that abstracts term
values. 

\begin{definition} Given a loop effect $\anupdate \in \Zed^n$, we define the relation $\preceq_{\anupdate}$
such that $\atermmap \preceq_{\anupdate} \atermmap'$ $\equivdef$
for every term $\aterm = \sum_i \afactor_i \avariable_i  \in T$, we have 
\iflong
\begin{itemize}
\itemsep 0 cm
\item $\atermmap(\aterm) \leq \atermmap'(\aterm)$ if $\sum_i \afactor_i \anupdate[i] \geq 0$,
\item $\atermmap(\aterm) \geq \atermmap'(\aterm)$ if $\sum_i \afactor_i \anupdate[i] \leq 0$,
\item $\atermmap(\aterm) = \atermmap'(\aterm)$ if $\sum_i \afactor_i \anupdate[i] = 0$.
\end{itemize}
\else
$\atermmap(\aterm) \leq \atermmap'(\aterm)$ if $\sum_i \afactor_i \anupdate[i] \geq 0$,
$\atermmap(\aterm) \geq \atermmap'(\aterm)$ if $\sum_i \afactor_i \anupdate[i] \leq 0$
and 
$\atermmap(\aterm) = \atermmap'(\aterm)$ if $\sum_i \afactor_i \anupdate[i] = 0$.
\fi 
We write $\atermmap \prec_{\anupdate} \atermmap'$ whenever  $\atermmap \preceq_{\anupdate} \atermmap'$ 
and  $\atermmap \neq \atermmap'$. 
\end{definition}

Sequences of strictly increasing term maps have bounded length. 

\begin{lemma} 
\label{lemma-bounded-map}
Let $\anupdate \in \Zed^n$ and 
$\atermmap_1 \prec_{\anupdate} \atermmap_2 \prec_{\anupdate} \cdots \prec_{\anupdate}  \atermmap_L$.
Then, $L \leq \card{I}   \times  \card{T} \leq 2 \times \card{T} \times \card{B} + \card{T} $. 
\end{lemma}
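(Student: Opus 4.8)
The plan is to exploit the fact that $\preceq_{\anupdate}$ decomposes termwise. For each term $\aterm = \sum_i \afactor_i \avariable_i \in T$, the required behaviour of $\atermmap(\aterm)$ along a $\prec_{\anupdate}$-chain depends only on the fixed sign of $\sum_i \afactor_i \anupdate[i]$: it must be nondecreasing if this quantity is positive, nonincreasing if it is negative, and constant if it is zero, all with respect to the linear order $\leq$ on the interval set $I$. So the first step is to record that, given the chain $\atermmap_1 \prec_{\anupdate} \atermmap_2 \prec_{\anupdate} \cdots \prec_{\anupdate} \atermmap_L$, for every $\aterm \in T$ the sequence $\atermmap_1(\aterm), \ldots, \atermmap_L(\aterm)$ is monotone in $(I,\leq)$.

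The second step is a counting argument over ``changes''. Since $I$ is finite and linearly ordered, a monotone sequence in $I$ can change value strictly at most $\card{I}-1$ times (and $0$ times when $\sum_i \afactor_i \anupdate[i]=0$); crucially, once a term has moved forward in its monotone direction it can never move back. For $j \in \interval{1}{L-1}$ and $\aterm \in T$ consider the indicator of $\atermmap_j(\aterm) \neq \atermmap_{j+1}(\aterm)$. For fixed $\aterm$, the sum over $j$ is at most $\card{I}-1$ by monotonicity. For fixed $j$, the sum over $\aterm$ is at least $1$, because $\atermmap_j \prec_{\anupdate} \atermmap_{j+1}$ entails $\atermmap_j \neq \atermmap_{j+1}$, i.e. the two maps differ on at least one term. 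Combining the two directions of the double sum gives
\[
L-1 \;=\; \sum_{j=1}^{L-1} 1 \;\leq\; \sum_{j=1}^{L-1}\ \sum_{\aterm \in T} [\,\atermmap_j(\aterm) \neq \atermmap_{j+1}(\aterm)\,] \;=\; \sum_{\aterm \in T}\ \sum_{j=1}^{L-1} [\,\cdots\,] \;\leq\; \card{T}\,(\card{I}-1),
\]
hence $L \leq \card{T}(\card{I}-1)+1 \leq \card{T}\,\card{I}$ since $\card{T} \geq 1$. Finally, recalling that $I$ was built from $B = \set{\aconstant_1,\ldots,\aconstant_m}$ and contains at most $2m+1$ intervals, we have $\card{I} \leq 2\card{B}+1$, so $L \leq \card{T}(2\card{B}+1) = 2\,\card{T}\,\card{B} + \card{T}$, which is the claimed bound.

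I do not expect a real obstacle here: the argument is a one-line double counting once the termwise monotonicity is observed. The only points requiring a little care are keeping the direction of the inequality in the double sum straight (upper bound per term, lower bound per step), and handling the degenerate cases uniformly, namely terms with $\sum_i \afactor_i \anupdate[i] = 0$ (which simply contribute $0$ changes) and the fact that $\card{I}$ may be as small as $m+2$ (which is harmless since we only ever use the upper estimate $\card{I} \leq 2\card{B}+1$ and the per-step lower bound of one changed term).
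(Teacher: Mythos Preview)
Your proof is correct and follows essentially the same approach as the paper: observe that for each term the sequence $\atermmap_1(\aterm),\ldots,\atermmap_L(\aterm)$ is monotone in the linearly ordered set $I$, then bound $L$ by $\card{T}\cdot\card{I}$ and invoke $\card{I}\leq 2\card{B}+1$. Your double-counting of ``change'' indicators makes the step from monotonicity to the bound $L\leq \card{T}\cdot\card{I}$ more explicit than the paper's somewhat terse argument, but the underlying idea is the same.
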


\iflong

\begin{proof} 
Given a term map $\atermmap$ and a term $\aterm$, 
$\atermmap(\aterm)$ can obviously take one of the
$\card{I}$ values from $I$. 
For each term $\aterm$, 
\begin{description}
\itemsep 0 cm 
\item[(increasing)] either $\atermmap_1(\aterm) \preceq_{\anupdate} \cdots \preceq_{\anupdate} 
\atermmap_L(\aterm)$
\item[(decreasing)] or $\atermmap_L(\aterm) \preceq_{\anupdate} \cdots \preceq_{\anupdate} 
\atermmap_1(\aterm)$.
\end{description}
Also, there are $\card{T}$ number of terms. Hence, the number of different maps that are either decreasing or increasing can be
$\card{T}\times\card{I}$. Again, we know that $\card{I} \leq 2\times\card{B}+1$.
 Hence, $L$, the number of 
different term maps in a sequence which is either
increasing or decreasing, can be at most 
$\card{I} \times \card{T} \leq 2 \times \card{T} \times \card{B} + \card{T}$.
\end{proof}

\fi 

Given a guard $\aguard$ using the syntactic resources from  $T$ and $B$, 
and a term map $\atermmap$, we write
$\atermmap \vdash \aguard$ with the following inductive definition:
\begin{itemize}
\itemsep 0 cm
\item $\atermmap\vdash \aterm = \aconstant$ $\equivdef$ $\atermmap(\aterm)= \interval{\aconstant}{\aconstant}$;
\item $\atermmap\vdash \aterm \leq \aconstant$ $\equivdef$ $\atermmap(\aterm)\subseteq(-\infty,\aconstant]$; 
\item $\atermmap\vdash \aterm \geq \aconstant$ $\equivdef$ $\atermmap(\aterm)\subseteq[\aconstant,+\infty)$,
\item $\atermmap\vdash \aterm < \aconstant$ $\equivdef$ $\atermmap(\aterm)\subseteq(-\infty,\aconstant)$;
\item  $\atermmap\vdash \aterm > \aconstant$ $\equivdef$ $\atermmap(\aterm)\subseteq(\aconstant,+\infty)$,
\item $\atermmap\vdash \aguard_1 \wedge \aguard_2$ $\equivdef$
      $\atermmap\vdash \aguard_1$ and  $\atermmap\vdash \aguard_2$;
\item $\atermmap\vdash \aguard_1 \vee \aguard_2$ $\equivdef$
      $\atermmap\vdash \aguard_1$ or  $\atermmap\vdash \aguard_2$.
\end{itemize}

The relation $\vdash$ is nothing else than the symbolic satisfaction relation
between term values and guards. Since term maps and guards are built over the same sets
of terms and constants, completeness is obtained as stated in
Lemma~\ref{relation-property}(II) below. Furthermore, 
Lemma~\ref{relation-property}(I)  states that the relation $\vdash$ is easy to check.

\begin{lemma} \label{relation-property}
\iflong
\
\begin{itemize}
  \item[(I)] $\atermmap \vdash \aguard$ can be checked in \ptime \ in $\size{\atermmap} + \size{\aguard}$.
  \item[(II)] $\atermmap\vdash\aguard$ iff for all $v:\set{\avariable_1,\avariable_2,
\cdots,\avariable_n}\rightarrow \Nat$ and for all $\aterm\in T$, 
$v(\aterm) \in \atermmap(\aterm)$ implies  $v \models \aguard$.
\end{itemize}
\else
(I) $\atermmap \vdash \aguard$ can be checked in \ptime \ in $\size{\atermmap} + \size{\aguard}$. \\
(II) $\atermmap\vdash\aguard$ iff for all $v:\set{\avariable_1,\avariable_2,
\cdots,\avariable_n}\rightarrow \Nat$,  (for all $\aterm\in T$, 
$v(\aterm) \in \atermmap(\aterm)$) implies  $v \models \aguard$.
\fi 
\end{lemma}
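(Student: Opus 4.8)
\textbf{Proof plan for Lemma~\ref{relation-property}.}

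The plan is to prove the two parts separately, both by straightforward structural induction on the guard $\aguard$. For part (I), I would observe that the syntax tree of $\aguard$ has size linear in $\size{\aguard}$, and at each leaf of the form $\aterm \sim \aconstant$ the check $\atermmap \vdash \aterm \sim \aconstant$ amounts to comparing the interval $\atermmap(\aterm)$ (which is retrievable in time polynomial in $\size{\atermmap}$, since $\atermmap$ is given as an explicit finite table) against the singleton or half-line determined by $\aconstant$; this is an interval-containment test decidable in time polynomial in the bit-lengths of the endpoints, hence in $\size{\atermmap}+\size{\aguard}$. The inductive cases for $\wedge$ and $\vee$ are immediate since $\vdash$ distributes over them exactly as propositional conjunction and disjunction; recursion over the at most $\size{\aguard}$ internal nodes keeps the total running time polynomial. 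I expect this part to be essentially routine bookkeeping.

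For part (II), I would again induct on $\aguard$. The base case is the heart of the argument: fix a leaf $\aterm \sim \aconstant$. For the ``$\Rightarrow$'' direction, suppose $\atermmap \vdash \aterm \sim \aconstant$, i.e.\ $\atermmap(\aterm)$ is contained in the appropriate set (e.g.\ $(-\infty,\aconstant]$ when $\sim$ is $\leq$, the singleton $\interval{\aconstant}{\aconstant}$ when $\sim$ is $=$, etc.); then for any valuation $v$ with $v(\aterm) \in \atermmap(\aterm)$ we get $v(\aterm) \sim \aconstant$ directly from set membership, so $v \models \aterm \sim \aconstant$. For ``$\Leftarrow$'', I would prove the contrapositive: if $\atermmap \not\vdash \aterm \sim \aconstant$, then $\atermmap(\aterm)$ is \emph{not} contained in the defining set, so there is an integer $z \in \atermmap(\aterm)$ with $z \not\sim \aconstant$; here I use that every interval in $I$ is a non-empty set of integers (empty intervals were removed in the construction of $I$), and I need to exhibit a valuation $v : \set{\avariable_1,\dots,\avariable_n} \to \Nat$ with $v(\aterm) = z$ simultaneously for \emph{the single term} $\aterm$ under consideration while all other terms land in whatever intervals $\atermmap$ assigns them --- but actually the statement only requires $v(\aterm)\in\atermmap(\aterm)$ as the hypothesis to be falsified, so it suffices to find $v$ with $v(\aterm) \in \atermmap(\aterm)$ yet $v \not\models \aterm\sim\aconstant$, which is exactly $v(\aterm) = z$ for the bad $z$ above.

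The main obstacle --- and the only genuinely non-trivial point --- is realizability of term values by $\Nat$-valued valuations: given a target value $z$ for a term $\aterm = \sum_i \afactor_i \avariable_i$ with integer coefficients $\afactor_i$, one must produce non-negative integers $v(\avariable_i)$ achieving $\sum_i \afactor_i v(\avariable_i) = z$. If $\aterm$ has at least one positive and one negative coefficient, or a single coefficient $\pm 1$, this is easy; the awkward case is when, say, $\aterm = 2\avariable_1$ and $z$ is odd, in which case $z$ is simply not in the \emph{range} of $\aterm$ over $\Nat$. The clean way around this is to note that the intervals in $I$ that actually matter are those intersecting the range of the relevant term; more carefully, I would argue that $\atermmap(\aterm)$ being non-contained in the defining set of $\sim,\aconstant$ gives a witness $z$ in $\atermmap(\aterm)$, and since $\atermmap(\aterm)$ is either a singleton $\interval{\aconstant_j}{\aconstant_j}$ (where $\aconstant_j$ is one of the constants from guards, hence a value we should think of as attainable in the intended semantics) or a half-open/open interval containing infinitely many integers, one can always slide $z$ to a value in the additive monoid $\{\sum_i \afactor_i m_i : m_i \in \Nat\}$ while staying inside the interval --- using that this monoid is, for intervals of positive length, eventually all integers above some threshold (if some $\afactor_i>0$) or all below (if some $\afactor_i<0$), or cofinal in both directions, and for singleton intervals one takes the semantics to guarantee the constant is reachable. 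Once this realizability lemma is in hand, the inductive steps for $\wedge$ and $\vee$ follow: for $\wedge$ one intersects witness conditions (and here one does need a \emph{single} $v$ placing \emph{all} terms appropriately, which again reduces to simultaneous realizability over disjoint sets of auxiliary variables, or one simply reuses the term maps' internal consistency), and for $\vee$ one takes a witness for whichever disjunct fails. I would present the realizability argument as a short separate claim before the main induction to keep the proof modular.
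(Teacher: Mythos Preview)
Your plan for part~(I) and for the forward direction of~(II) is fine and matches the paper.

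For the backward direction of~(II) you miss the one observation that makes the argument short: the intervals in $I$ are cut \emph{exactly} at the constants in $B$, and every constant $\aconstant$ occurring in $\aguard$ lies in $B$. Consequently, for each atomic $\aterm \sim \aconstant$ the interval $\atermmap(\aterm)\in I$ is either entirely inside the region $\{z : z \sim \aconstant\}$ or entirely outside it; there is no ``straddling''. So $\atermmap \not\vdash \aterm\sim\aconstant$ does not just give \emph{some} bad $z\in\atermmap(\aterm)$, it makes \emph{every} point of $\atermmap(\aterm)$ bad. The paper exploits this directly: it takes any single valuation $v$ with $v(\aterm')\in\atermmap(\aterm')$ for all $\aterm'\in T$ and $v\models\aguard$, and argues by induction that $\atermmap\vdash\aguard$; at the base case, $v(\aterm)\in\atermmap(\aterm)$ together with $v(\aterm)\sim\aconstant$ forces $\atermmap(\aterm)\subseteq\{z:z\sim\aconstant\}$ by the alignment just described. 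No realizability lemma is needed for a \emph{specific} $z$.

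Once you see alignment, your realizability detour (the $2\avariable_1$ example, sliding within the monoid, etc.) evaporates: you never need to hit a prescribed value, only to land anywhere in the interval. What does remain is whether \emph{some} $v$ with $v(\aterm')\in\atermmap(\aterm')$ for all $\aterm'\in T$ exists at all; the paper simply assumes this (its term maps always come from actual runs), and your sketch does not establish it either --- indeed it can fail for arbitrary $\atermmap$ over $\Nat$, so trying to prove it in general is hopeless.

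Finally, your contrapositive induction has the Boolean cases swapped. For $\wedge$, $\atermmap\not\vdash\aguard_1\wedge\aguard_2$ means one conjunct fails, and the IH on that one gives the witness --- that is the easy case. For $\vee$, $\atermmap\not\vdash\aguard_1\vee\aguard_2$ means \emph{both} fail, and you need a \emph{single} $v$ violating both; your IH produces two possibly different witnesses $v_1,v_2$, and nothing in your sketch merges them. Alignment again fixes this (any $v$ consistent with $\atermmap$ evaluates every atomic guard the same way, hence fails both $\aguard_1$ and $\aguard_2$), but without it your $\vee$ step is a genuine gap.
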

It is worth noting that $\size{\atermmap}$ is in $\mathcal{O}(\card{I} \times \card{T})$. 
\iflong

\begin{proof}
\begin{itemize}
\itemsep 0 cm 
\item[(I)] For the \ptime \ algorithm we follow the following steps. 
First, for each constraint 
$\aterm\sim b$ 
appearing in $\aguard$, we replace 
it either $\top$ (true) or $\perp$ (false) depending whether 
 $\atermmap \vdash \aterm \sim b$ or not. 
After replacing all constraints, we are left with a positive Boolean formula
whose atomic formulae are either $\top$ or $\perp$. It can be evaluated in logarithmic space in the size
of the resulting formula (less than $\size{\aguard}$), see e.g.~\cite{Lynch77}.

Note that given a term map $\atermmap$ and a constraint $\aterm\sim b$, checking
 $\atermmap \vdash \aterm \sim b$ amounts to checking the containement of 
interval $\atermmap(\aterm)$ 
in a specified interval depending on $\sim$. This can be achieved by comparing the 
end-points of the intervals, which can be done in polynomial time in 
$\size{\aterm} + \size{\atermmap}$. As the number of constraints is also bounded
by   $\size{\aguard}$, the replacement of atomic constraints can be performed
in polynomial time in $\size{\atermmap} + \size{\aguard}$.
Thus, the procedure completes in time polynomial in  $\size{\atermmap} + \size{\aguard}$.

\item[(II)] Consider that $\atermmap\vdash\aguard$ and some 
$v:\{\avariable_1,\avariable_2,\cdots,\avariable_n\} \rightarrow \Nat$
such that $v(\aterm)$ lies in the interval $\atermmap(\aterm)$ for each term $\aterm\in T$. Now 
we  prove inductively on the structure of $\aguard$ that $v \models \aguard$.
\begin{itemize}
\itemsep 0 cm 
  \item \textbf{Base Case:} As base case we have arithmetical constraints of the guard. Consider the constraint is of the 
form $\aterm\leq b$. Since, $\atermmap\vdash\aguard$, we have that $\atermmap(\aterm)\subseteq(-\infty,b]$. Since, 
$v(\aterm)$ lies in the interval $\atermmap(\aterm)$, $v(\aterm)\in(-\infty,b]$. Note that, in this case 
$v\models\aterm\leq b$. Similarly, for other type of constraints $\aterm\sim b$, observe that if 
$v(\aterm)\in\atermmap(\aterm)$ then $v(\aterm)$ lies in the interval specified in the definition of 
$\vdash$ and thus, $v\models\aterm\sim b$.
  \item \textbf{Inductive step:} The induction step for $\wedge$ and $\vee$, follows easily.
\end{itemize}

On the other hand, consider some valuation $v$ with $v(\aterm)\in\atermmap(\aterm)$ for each $\aterm\in T$ and $v\models\aguard$. Similar to above, we will use inductive argument to show that $\atermmap\vdash\aguard$
\begin{itemize}
  \item \textbf{Base Case:} Again consider arithmetical constraints of the guard. Specifically, we consider 
constraints of the form $\aterm \geq \aconstant$. As $v\models\aterm\geq b$, we know that 
$v(\aterm)\in[\aconstant,+\infty)$. Since, $v(\aterm)\in\atermmap(\aterm)$, we have that, 
$\atermmap(\aterm)\subseteq[\aconstant,+\infty)$. Hence, $\atermmap\vdash\aterm\geq b$. Similarly, 
for constraints of other forms $\aterm\sim b$, $v(\aterm)$ lies in the interval exactly 
specified in the definition of $\vdash$. Thus, $\atermmap\vdash\aterm\sim b$.
  \item \textbf{Inductive step:} Again, the induction step for $\wedge$ and $\vee$ follows easily.
\end{itemize}
\end{itemize} 
\end{proof}

\fi 

A \defstyle{resource} $\aresource$ is a triple $\triple{\aset}{T}{B}$ such that
$\aset$ is a finite set of propositional variables,
$T$ is a finite set of terms and $B$ is a finite set of integers.
Without any loss of generality, we assume that all these sets are non-empty in order
to avoid treatments of (easy) particular cases. 
A formula $\aformula \in \PLTL[\counters]$ is \defstyle{built over} 
$\aresource$ whenever the atomic formulae are of the form
either $\avarprop \in \aset$ or $\aterm \sim b$ with $\aterm \in T$ and $b \in B$.
A \defstyle{footprint} is an abstraction of a model for $\PLTL[\counters]$ 
restricted to elements from the resource $\aresource$. 
More precisely, a fooprint $\afootprint$ is of the form
$\afootprint: \Nat \rightarrow \powerset{\aset} \times I^T$ where
$I$ is the set of intervals built from $B$, whence the first element of $\afootprint(i)$
is a propositional valuation and the second one is a term map.
The satisfiability relation $\models$ involving models or runs can be adapted 
to footprints as follows where formulae and footprints are obtained
from the same resource $\aresource$:
\begin{itemize}
\itemsep 0 cm
\iflong 
\item $\afootprint, i \symbmodels \avarprop$ $\equivdef$
      $\avarprop \in \pi_1(\afootprint(i))$,
\item $\afootprint, i \symbmodels \aterm \geq \aconstant$  $\equivdef$
      $\pi_2(\afootprint(i))(\aterm) \subseteq [\aconstant, + \infty)$,
\else
\item $\afootprint, i \symbmodels \avarprop$ $\equivdef$
      $\avarprop \in \pi_1(\afootprint(i))$;
      $\afootprint, i \symbmodels \aterm \geq \aconstant$  $\equivdef$
      $\pi_2(\afootprint(i))(\aterm) \subseteq [\aconstant, + \infty)$,
\fi 
\item $\afootprint, i \symbmodels \aterm \leq \aconstant$  $\equivdef$
      $\pi_2(\afootprint(i))(\aterm) \subseteq (- \infty, \aconstant]$,
\item $\afootprint, i \symbmodels \mynext \aformula$ $\equivdef$ 
       $\afootprint, i+1 \symbmodels  \aformula$,
\item $\afootprint, i \symbmodels \aformula \until \aformulabis$ $\equivdef$ 
      there is $j \geq i$ such that  $\afootprint, j \symbmodels  \aformulabis$
      and for $j' \in \interval{i}{j-1}$, we have 
       $\afootprint, j' \symbmodels  \aformula$.
\end{itemize}
\iflong
We omit the clauses for Boolean connectives, past-time operators and
other arithmetical constraints since their definitions are as expected.
Actually, $\symbmodels$ is exactly the satisfaction relation for plain Past LTL 
when arithmetical constraints are understood as abstract propositions. 
\else
We omit the other obvious clauses. 
$\symbmodels$ is the satisfaction relation for  LTL with past
when arithmetical constraints are understood as abstract propositions.
\fi 

\begin{definition} 
\label{definition-footprint}
Let  $\aresource = \triple{\aset}{T}{B}$ be a resource
and $\arun = \pair{\astate_0}{\vec{v_0}}, \pair{\astate_1}{\vec{v_1}} \cdots$ 
be an infinite run of $\asys$. The \defstyle{footprint} of $\arun$
with respect to $\aresource$ is the footprint $\footprint{\arun}$ 
such that for $i \geq 0$, we have $\footprint{\arun}(i) 
\egdef \pair{\alabelling(\astate_i) \cap \aset}{\atermmap_i}$
where  
for every term $\aterm = \sum_j \afactor_j \avariable_j  \in T$, we have 
 $\sum_j \afactor_j \vec{v_i}[j] \in \atermmap_i(\aterm)$. 
\end{definition}

Note that $\sum_j \afactor_j \vec{v_i}[j]$ belongs to a unique element of $I$ since
$I$ is a partition of $\Zed$. Hence, Definition~\ref{definition-footprint} makes sense. 
Lemma~\ref{lemma-footprint-equivalence} below roughly states that 
satisfaction of a formula on a run can be checked symbolically from the footprint
(this turns out to be useful for the correctness of forthcoming 
Algorithm~\ref{algorithm-main}). 

\begin{lemma} 
\label{lemma-footprint-run}
Let $\aresource = \triple{\aset}{T}{B}$ be a resource,
$\arun = \pair{\astate_0}{\vec{v_0}}, \pair{\astate_1}{\vec{v_1}} \cdots$  be an infinite run,
$i \geq 0$ be a position 
and $\aformula$ be a formula in $\PLTL[\counters]$ built over $\aresource$.
Then $\arun, i \models \aformula$ iff $\footprint{\arun}, i \symbmodels \aformula$. 
\end{lemma}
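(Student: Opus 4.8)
\textbf{Proof plan for Lemma~\ref{lemma-footprint-run}.}
The plan is to proceed by structural induction on $\aformula$, with $i$ quantified universally (so the induction hypothesis applies at every position). The key point, which I would isolate first, is a \emph{pointwise matching lemma}: for every position $j \geq 0$ and every atomic guard $\aguard$ occurring as a subformula of $\aformula$, the values $\vec{v_j}$ satisfy $\aguard$ if and only if $\footprint{\arun}(j)$ symbolically satisfies $\aguard$. This is exactly where Definition~\ref{definition-footprint} and Lemma~\ref{relation-property}(II) enter: by construction of $\footprint{\arun}$, the term map $\atermmap_j$ is chosen so that $\sum_k \afactor_k \vec{v_j}[k] \in \atermmap_j(\aterm)$ for each $\aterm \in T$; since $\aguard$ is built over the same set $T$ of terms and the same set $B$ of constants, Lemma~\ref{relation-property}(II) gives that $\atermmap_j \vdash \aguard$ iff $\vec{v_j} \models \aguard$, and $\atermmap_j \vdash \aguard$ is precisely the unfolding of $\footprint{\arun}, j \symbmodels \aguard$ over the Boolean structure of $\aguard$. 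For propositional atoms $\avarprop \in \aset$, the matching is immediate from $\pi_1(\footprint{\arun}(j)) = \alabelling(\astate_j) \cap \aset$ together with the definitions of $\arun, j \models \avarprop$ (via $\amodel_{\arun}$) and $\footprint{\arun}, j \symbmodels \avarprop$.

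With the matching lemma in hand, the induction is routine. The base cases are the atomic formulae, handled above. For the Boolean connectives $\neg$, $\wedge$, $\vee$, both $\models$ and $\symbmodels$ are defined by the same clauses, so the step is trivial. For the temporal operators $\mynext$, $\until$, $\myprevious$, $\since$ (and hence the derived $\sometimes$, $\always$, etc.), one observes that the semantic clauses for $\models$ on runs (i.e.\ for $\amodel_{\arun}$) and the clauses for $\symbmodels$ on footprints are \emph{syntactically identical} — both quantify over positions of $\Nat$ in the same way and only invoke the relation at strictly smaller subformulae — so each step reduces directly to the induction hypothesis applied at the relevant positions. For instance, $\arun, i \models \aformulabis_1 \until \aformulabis_2$ iff there is $j \geq i$ with $\arun, j \models \aformulabis_2$ and $\arun, k \models \aformulabis_1$ for all $i \leq k < j$; by (IH) at $j$ and at each such $k$ this is equivalent to $\footprint{\arun}, j \symbmodels \aformulabis_2$ and $\footprint{\arun}, k \symbmodels \aformulabis_1$ for $i \leq k < j$, i.e.\ $\footprint{\arun}, i \symbmodels \aformulabis_1 \until \aformulabis_2$; the $\since$ and $\mynext$, $\myprevious$ cases are analogous, using finiteness of the past exactly as in the definition of $\symbmodels$ (which inherits it from $\models$).

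I do not expect a genuine obstacle here: the lemma is essentially a bookkeeping statement asserting that the footprint abstraction loses no information relevant to the resource $\aresource$ over which $\aformula$ is built. The only point requiring a little care is to make sure the \emph{quantification over $i$} is built into the induction from the start (otherwise the temporal cases do not go through), and to invoke Lemma~\ref{relation-property}(II) in the correct direction for each atomic guard — but since $\vdash$ is a two-sided equivalence with the valuation semantics, this is automatic. The statement that the footprint is well defined (each $\sum_k \afactor_k \vec{v_j}[k]$ lands in a unique interval of $I$) has already been noted after Definition~\ref{definition-footprint}, so no further justification is needed.
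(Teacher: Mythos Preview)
Your proposal is correct and follows essentially the same approach as the paper: a structural induction on $\aformula$ with the position universally quantified, handling propositional atoms and atomic guards in the base case and each temporal operator by the identical shape of the $\models$ and $\symbmodels$ clauses. The only cosmetic difference is that the paper dispatches the arithmetical base case by directly unfolding the definitions of $\footprint{\arun}$ and $\symbmodels$ for a single atomic constraint $\aterm \leq \aconstant$, whereas you route through Lemma~\ref{relation-property}(II); note that strictly speaking Lemma~\ref{relation-property}(II) gives you ``$\atermmap_j \vdash \aguard$ iff \emph{every} $v$ landing in the intervals satisfies $\aguard$'', so to extract the equivalence with the single valuation $\vec{v_j}$ you are implicitly using that the intervals in $I$ are fine enough to determine each atomic $\aterm \sim \aconstant$ with $\aterm \in T$, $\aconstant \in B$ --- an obvious point, but worth stating if you keep this route.
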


\iflong

\begin{proof} The proof is by structural induction.
\begin{itemize}
\itemsep 0 cm
\item Base Case 1 ($\avarprop \in \aset$): we have the following equivalences:
\begin{itemize}
\itemsep 0 cm
\item $\arun, i \models \avarprop$,
\item $\avarprop \in \alabelling(\astate_i)$ (by definition of $\models$),
\item $\avarprop \in  \pi_1(\afootprint(i))$ (by definition of $\footprint{\arun}$),
\item $\footprint{\arun}, i \symbmodels \avarprop$  (by definition of $\symbmodels$).
\end{itemize}

\item Base Case 2 ($\sum_j \afactor_j \avariable_j \leq \aconstant$ with 
      $\sum_j \afactor_j \avariable_j \in T$ and $\aconstant \in B$):  we have the following equivalences:
\begin{itemize}
\itemsep 0 cm
\item $\arun, i \models \sum_j \afactor_j \avariable_j \leq \aconstant$,
\item  $\sum_j \afactor_j \vec{v_i}[j] \leq \aconstant$ (by definition of $\models$),
\item $\pi_2(\afootprint(i))(\atermmap_i( \sum_j \afactor_j \avariable_j))
      \subseteq (- \infty, \aconstant]$ (by definition of  $\footprint{\arun}$),
\item $\footprint{\arun}, i \symbmodels \sum_j \afactor_j \avariable_j \leq \aconstant$  
      (by definition of $\symbmodels$).
\end{itemize}
The base cases for the other arithmetical constraints can be shown similarly. 
\item For the induction
step, by way of example we deal with the case $\aformula=\mynext\aformulabis$ (the cases
for the Boolean operators or for the other temporal operators are analogous).
We have the following equivalences:
\begin{itemize}
\itemsep 0 cm
\item $\arun, i \models \mynext \aformulabis$,
\item  $\arun, i+1 \models  \aformulabis$  (by definition of $\models$),
\item  $\footprint{\arun}, i+1 \symbmodels  \aformulabis$  (by induction hypothesis),
\item  $\footprint{\arun}, i \symbmodels  \mynext \aformulabis$  (by definition $\symbmodels$).
\end{itemize}
\end{itemize}

\end{proof}

\fi 

As a corollary, we obtain.

\begin{lemma} 
\label{lemma-footprint-equivalence}
Let $\aresource = \triple{\aset}{T}{B}$ be a resource and 
$\arun$ and $\arun'$ be two infinite runs with identical
footprints with respect to $\aresource$. 
For all formulae $\aformula$ built over $\aresource$ and positions
$i \geq 0$, we have $\arun, i \models \aformula$ iff $\arun', i \models \aformula$. 
\end{lemma}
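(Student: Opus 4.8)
The proof of Lemma~\ref{lemma-footprint-equivalence} is an immediate consequence of Lemma~\ref{lemma-footprint-run}, which was just established by structural induction. The plan is to invoke that lemma twice and chain the equivalences through the common footprint.

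\begin{proof}
Let $\arun$ and $\arun'$ be two infinite runs of $\asys$ such that $\footprint{\arun} = \footprint{\arun'}$, where footprints are taken with respect to $\aresource = \triple{\aset}{T}{B}$. Let $\aformula$ be a formula built over $\aresource$ and let $i \geq 0$. By Lemma~\ref{lemma-footprint-run} applied to $\arun$, we have $\arun, i \models \aformula$ iff $\footprint{\arun}, i \symbmodels \aformula$. Since $\footprint{\arun} = \footprint{\arun'}$, this is equivalent to $\footprint{\arun'}, i \symbmodels \aformula$. Finally, by Lemma~\ref{lemma-footprint-run} applied to $\arun'$, we have $\footprint{\arun'}, i \symbmodels \aformula$ iff $\arun', i \models \aformula$. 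Chaining these equivalences yields $\arun, i \models \aformula$ iff $\arun', i \models \aformula$.
\end{proof}

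There is essentially no obstacle here: the real work was done in proving Lemma~\ref{lemma-footprint-run}, where one checks that the symbolic satisfaction relation $\symbmodels$ on footprints agrees with the concrete relation $\models$ on runs for every atomic formula (propositional variables via $\pi_1$, arithmetical constraints via the fact that each linear term value lands in a unique interval of the partition $I$) and then lifts this through the Boolean and temporal operators by a routine induction. Once that bridge is in place, two runs with the same footprint are indistinguishable by any formula over the same resource, which is exactly the statement of Lemma~\ref{lemma-footprint-equivalence}. This corollary is precisely what is needed later to justify replacing a run by another run respecting the same path schema with an identical symbolic behaviour (as used in the correctness argument for the main algorithm).
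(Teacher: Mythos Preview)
Your proof is correct and matches the paper's approach exactly: the paper presents Lemma~\ref{lemma-footprint-equivalence} simply as a corollary of Lemma~\ref{lemma-footprint-run} without even spelling out the chaining of equivalences, which is precisely what you do.
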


Given a minimal path schema 
$\aschema = \aseg_1 \aloop_1^+ \aseg_2 \aloop_2^+ \ldots \aseg_k \aloop_k^{\omega}$ and a 
run $\arun$ respecting $\aschema$, 
$\footprint{\arun}$ (with respect to  a resource  $\aresource = \triple{\aset}{T}{B}$)  
is an ultimately periodic word that can be written of 
the form $\aword  \cdot \awordbis^{\omega}$ 
where $\length{\awordbis} = \length{\aloop_k}$.
Note that $\aschema$ denotes also a class of ultimately periodic words but over a different alphabet. 

\subsection{Unfolding}

Let $\aresource = \triple{\aset}{T}{B}$ be a resource and 
$\aschema = \aseg_1 \aloop_1^+ \aseg_2 \aloop_2^+ \ldots \aseg_k \aloop_k^{\omega}$
be a minimal path schema.
In order to define the set of path schemas $\asetbis_{\aschema}$, we need to define
other objects such as guards (from the set $\guards^{\star}(T,B,U)$ defined below),
control states (from the set  $\states' = \states\times I^{T}$), 
transitions 
from the set $\edges'$ defined below  and using $\guards^{\star}(T,B,U)$, $\states'$ and other objects
from $\aschema$. 
 
Let $\edges_{\aschema}$ be the set of 
transitions occurring in $\aschema$ and $\states'$ be $\states\times I^{T}$. 
Given $\aterm = \sum_j \afactor_j \avariable_j \in T$, $\anupdate \in \Zed^n$ and a term map
$\atermmap$, 
we write $\aformulabis(\aterm, \anupdate,\atermmap(\aterm))$ to denote the formula below
($\aconstant, \aconstant' \in B$):
\begin{itemize}
\itemsep 0 cm
\iflong
\item $\aformulabis(\aterm, \anupdate,(-\infty,\aconstant]) \egdef
       \sum_j \afactor_j (\avariable_j + \anupdate(j)) \leq \aconstant$,
\item $\aformulabis(\aterm, \anupdate,[\aconstant,+\infty)) \egdef
       \sum_j \afactor_j (\avariable_j + \anupdate(j)) \geq \aconstant$,
\else
\item $\aformulabis(\aterm, \anupdate,(-\infty,\aconstant]) \egdef
       \sum_j \afactor_j (\avariable_j + \anupdate(j)) \leq \aconstant$; 
      $\aformulabis(\aterm, \anupdate,[\aconstant,+\infty)) \egdef
       \sum_j \afactor_j (\avariable_j + \anupdate(j)) \geq \aconstant$,
\fi 
\item $\aformulabis(\aterm, \anupdate,\interval{\aconstant}{\aconstant'}) \egdef
       ((\sum_j \afactor_j (\avariable_j + \anupdate(j)) \leq \aconstant') \wedge 
       ((\sum_j \afactor_j (\avariable_j + \anupdate(j)) \geq \aconstant)$.  
\end{itemize}

The formulae of the form $\aformulabis(\aterm, \anupdate,int)$, where $int \in I$ have been designed to satisfy
the property below.

\begin{lemma} 
\label{guard-constraint}
Let $v:\set{\acounter_1, \ldots, \acounter_n} \rightarrow \Nat$ and
$v':\set{\acounter_1, \ldots, \acounter_n} \rightarrow \Nat$ be such that for every
$i \in \interval{1}{n}$, $v'(\acounter_i) = v(\acounter_i) + \anupdate(i)$. 
For every interval $int \in I$, for every term $\aterm \in T$, 
$v \models 
\aformulabis(\aterm, \anupdate,int)$ iff $v'(\aterm) \in int$. 
\end{lemma}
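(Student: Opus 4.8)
The statement is elementary, so the proof amounts to unfolding the definition of $\aformulabis(\aterm, \anupdate, int)$ and observing that the substitution $\acounter_i \mapsto \acounter_i + \anupdate(i)$ is exactly what turns ``$v \models \aformulabis(\aterm,\anupdate,int)$'' into ``$v'(\aterm) \in int$''. First I would fix the notation: write $\aterm = \sum_j \afactor_j \avariable_j$, so that $v(\aterm) = \sum_j \afactor_j v(\acounter_j)$ and, by the hypothesis $v'(\acounter_i) = v(\acounter_i) + \anupdate(i)$, we have $v'(\aterm) = \sum_j \afactor_j v'(\acounter_j) = \sum_j \afactor_j (v(\acounter_j) + \anupdate(j))$. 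The crucial remark is that for \emph{any} atomic guard of the form $\sum_j \afactor_j (\avariable_j + \anupdate(j)) \sim \aconstant$, we have $v \models \sum_j \afactor_j (\avariable_j + \anupdate(j)) \sim \aconstant$ iff $\sum_j \afactor_j (v(\acounter_j) + \anupdate(j)) \sim \aconstant$ iff $v'(\aterm) \sim \aconstant$. I would isolate this as the single computational observation on which everything rests.

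Then the proof is a case analysis on the shape of $int$, following the three cases in the definition of $\aformulabis(\aterm,\anupdate,int)$. If $int = (-\infty, \aconstant]$, then $\aformulabis(\aterm,\anupdate,int)$ is $\sum_j \afactor_j(\avariable_j + \anupdate(j)) \leq \aconstant$, and by the observation $v \models \aformulabis(\aterm,\anupdate,int)$ iff $v'(\aterm) \leq \aconstant$ iff $v'(\aterm) \in (-\infty,\aconstant] = int$. If $int = [\aconstant,+\infty)$, the same argument with $\geq$ gives $v \models \aformulabis(\aterm,\anupdate,int)$ iff $v'(\aterm) \geq \aconstant$ iff $v'(\aterm) \in int$. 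If $int = \interval{\aconstant}{\aconstant'}$, then $\aformulabis(\aterm,\anupdate,int)$ is the conjunction $(\sum_j \afactor_j(\avariable_j + \anupdate(j)) \leq \aconstant') \wedge (\sum_j \afactor_j(\avariable_j + \anupdate(j)) \geq \aconstant)$, so $v$ satisfies it iff $v'(\aterm) \leq \aconstant'$ and $v'(\aterm) \geq \aconstant$, i.e.\ iff $\aconstant \leq v'(\aterm) \leq \aconstant'$, i.e.\ iff $v'(\aterm) \in \interval{\aconstant}{\aconstant'} = int$. Since $I$ contains only intervals of these three shapes (the elements of $I$ are, by construction, $(-\infty,\aconstant_1-1]$, the singletons $\interval{\aconstant_j}{\aconstant_j}$, the open-ended intervals $\interval{\aconstant_j+1}{\aconstant_{j+1}-1}$, and $[\aconstant_m+1,\infty)$, all of which are of one of the three forms above — a singleton being the case $\aconstant = \aconstant'$), this exhausts all possibilities and the equivalence holds for every $int \in I$.

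There is essentially no obstacle here: the lemma is true by definition and the only thing to be careful about is that the enumeration of interval shapes in the three-case split genuinely covers every element of the set $I$ as it was defined just before the lemma. I would make that coverage explicit (pointing out that the half-open intervals $\interval{\aconstant_j+1}{\aconstant_{j+1}-1}$ and the singletons are both instances of the bounded case $\interval{\aconstant}{\aconstant'}$, and $(-\infty,\aconstant_1-1]$ and $[\aconstant_m+1,\infty)$ are the two unbounded cases) so that no reader worries about a missing case. The whole argument is a few lines and uses nothing beyond the definitions of $\aformulabis$, of $I$, and of the satisfaction relation $\models$ on linear constraints.
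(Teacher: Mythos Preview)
Your proposal is correct and matches the paper's approach exactly: the paper simply states ``The proof is by an easy verification'' without giving details, and what you have written is precisely that verification spelled out --- unfolding the definition of $\aformulabis(\aterm,\anupdate,int)$, using the hypothesis $v'(\acounter_i) = v(\acounter_i) + \anupdate(i)$ to rewrite the evaluated constraint, and doing the three-case split on the shape of $int$.
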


The proof is by an easy verification. 

We write $\guards^{\star}(T,B,U)$ to denote the set of guards of the form 
 $\aformulabis(\aterm, \anupdate,\atermmap(\aterm))$ where
$\aterm \in T$, $U$ is the finite set of updates from $\aschema$
and $\atermmap: T \rightarrow I$. 
Each guard in  $\guards^{\star}(T,B,U)$
is of linear size in the size of $\aschema$. 

We define $\edges'$ as a finite  subset of $\states' \times \edges_{\aschema} \times 
\guards^{\star}(T,B,U)
\times U \times \states'$ such that
for every $\pair{\astate}{\atermmap} \step{\atransition,\pair{\aguard_{\atermmap'}}{\anupdate}} 
\pair{\astate'}{\atermmap'} \in \edges'$, the conditions below are satisfied:
\iflong 
\begin{itemize}
\itemsep 0 cm
\item $\astate=\source{\atransition}$ and $\astate'=\target{\atransition}$, 
\item $\aguard_{\atermmap'}$  is a guard that states that after the update $\anupdate$,
 for each $\aterm \in T$, 
 its value belongs to $\atermmap'(\aterm)$.  $\aguard_{\atermmap'}$ is equal to 
$
\bigwedge_{\aterm \in T} \aformulabis(\aterm, \anupdate,\atermmap(\aterm))$
   \item Term values belong to intervals that make true $\guard{\atransition}$, i.e. 
         $\atermmap \vdash \guards(\atransition)$.
\item $\anupdate = \update{\atransition}$. 
\end{itemize}
\else
\begin{itemize}
\itemsep 0 cm
\item $\astate=\source{\atransition}$, $\astate'=\target{\atransition}$ and  
      $\anupdate = \update{\atransition}$,
\item $\aguard_{\atermmap'} \egdef \bigwedge_{\aterm \in T} \aformulabis(\aterm, \anupdate,\atermmap(\aterm))$  
      is a guard that states that after the update $\anupdate$,
      for each $\aterm \in T$, its value belongs to $\atermmap'(\aterm)$,
\item Term values belong to intervals that make true $\guards(\atransition)$, i.e. 
         $\atermmap \vdash \guards(\atransition)$.
\end{itemize}
\fi 

We extend the definition of $\source{\atransition}$ to $\atransition'=\pair{\astate}{\atermmap} \step{\atransition,\pair{\aguard_{\atermmap'}}{\anupdate}}\pair{\astate'}{\atermmap'} \in \edges'$. We define $\source{\atransition'}=\pair{\astate}{\atermmap}$ and $\target{\atransition'}=\pair{\astate'}{\atermmap'}$. Similarly, for a finite word $\aword\in (\edges')^*$, we define $\source{\aword}=\source{\aword(1)}$ and $\target{\aword}=\target{\aword(\length{\aword})}$.

We define skeletons as slight variants of path schemas in $\asetbis_{\aschema}$ with slight differences
explained below. 
A \defstyle{skeleton} (compatible with $\aschema$ and $\pair{\astate_0}{\vec{v_0}}$) 
$\askeleton$, say  $\pair{\astate_1}{\atermmap_1} \step{\atransition_1,\pair{\aguard_{\atermmap'}^1}{\anupdate_1}} 
\pair{\astate_2}{\atermmap_2} \step{\atransition_2,\pair{\aguard_{\atermmap'}^2}{\anupdate_2}} 
\pair{\astate_3}{\atermmap_3} \cdots \step{\atransition_K,\pair{\aguard_{\atermmap'}^K}{\anupdate_K}}  
\pair{\astate_{K+1}}{\atermmap_{K+1}}$, 
 is a finite word over $\edges'$ such that
\begin{description}
\itemsep 0 cm
\item[(init)] For every term $\aterm = \sum_j \afactor_j \avariable_j \in T$,
we have $\sum_j \afactor_j \vec{v_0}[j] \in \atermmap_1(\aterm)$ where $\vec{v_0}$ is the initial vector. 
\item[(schema)] Let $\amap: (\edges')^* \rightarrow \edges^*$ be the map
      such that $\amap(\varepsilon) = \varepsilon$,
      $\amap(\aword \cdot \aword') = \amap(\aword) \cdot \amap(\aword')$
      and $\amap(\pair{\astate}{\atermmap} \step{\atransition,\pair{\aguard_{\atermmap'}}{\anupdate}} 
\pair{\astate'}{\atermmap'}) = \atransition$.
      We require that $\amap(\askeleton) \in
      \aseg_1 \aloop_1^+ \aseg_2 \aloop_2^+ \ldots \aseg_k \aloop_k^+$.
\item[(minimality)] For every factor
$\aword =  \pair{\astate_I}{\atermmap_I} \step{\atransition_I,\pair{\aguard_{\atermmap'}^I}{\anupdate_I}} 
\pair{\astate_{I+1}}{\atermmap_{I+1}} \cdots$ \allowbreak 
$\step{\atransition_{J-1},\pair{\aguard_{\atermmap'}^{J-1}}{\anupdate_{J-1}}}  
\pair{\astate_J}{\atermmap_J}
$ of $\askeleton$ such that $\amap(\aword) = (\aloop)^3$ for some loop $\aloop$ of 
$\aschema$ (therefore $J = I + 3 \times \length{\aloop}$),
there is $\alpha \in \interval{1}{\length{\aloop}}$ such that
$\atermmap_{I+\alpha} \prec_{\effect{\aloop}} \atermmap_{I+\alpha + 2 \times \length{\aloop}}$.
\item[(last-loop)] For the unique suffix $\aword$ of $\askeleton$ of length
$\length{\aloop_k}$, we have $\amap(\aword)=\aloop_k$
and $\source{\aword} = \target{\aword}$. 
\end{description}

\begin{lemma} 
\label{lemma-bounded-skeleton}
For a skeleton $\askeleton$, $\length{\askeleton} \leq (\length{\aseg_1} + \cdots + \length{\aseg_k})
+ 2 \times ( 2 \times \card{T} \times \card{B} + \card{T}) 
\times  (\length{\aloop_1} + \cdots + \length{\aloop_k})$ 
\end{lemma}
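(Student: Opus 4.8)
The plan is to bound, for each $i\in\interval{1}{k}$, the number of times the loop $\aloop_i$ is repeated along $\askeleton$, and then add these bounds. First I would exploit that the projection $\amap$ is length-preserving and that $\aschema$ is minimal: by (schema) we have $\amap(\askeleton)\in\aseg_1\aloop_1^+\aseg_2\aloop_2^+\cdots\aseg_k\aloop_k^+$, and since $\aseg_1\cdots\aseg_k$ is a simple non-loop segment and the loops $\aloop_1,\dots,\aloop_k$ use pairwise disjoint sets of transitions, this word factorises uniquely as $\aseg_1\aloop_1^{n_1}\aseg_2\aloop_2^{n_2}\cdots\aseg_k\aloop_k^{n_k}$ with each $n_i\geq 1$; condition (last-loop) is what guarantees that the suffix $\aloop_k^{n_k}$ is genuinely present. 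Hence $\length{\askeleton}=\length{\amap(\askeleton)}=\sum_{i=1}^k\length{\aseg_i}+\sum_{i=1}^k n_i\length{\aloop_i}$, so it suffices to prove $n_i\leq 2(2\card{T}\card{B}+\card{T})$ for every $i$.

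Fix $i$, put $\ell=\length{\aloop_i}$, and let $\atermmap_0,\dots,\atermmap_{n_i\ell}$ be the term maps occurring along the factor of $\askeleton$ projecting to $\aloop_i^{n_i}$ (so $\atermmap_{j\ell}$ is the term map entering the $(j{+}1)$-st copy of $\aloop_i$). For every $j$ with $0\leq j\leq n_i-3$ the sub-factor spanning the three consecutive copies $j{+}1,j{+}2,j{+}3$ projects to $\aloop_i^3$, so (minimality) supplies a phase $\alpha_j\in\interval{1}{\ell}$ with $\atermmap_{j\ell+\alpha_j}\prec_{\effect{\aloop_i}}\atermmap_{(j+2)\ell+\alpha_j}$. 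The key step is to assemble these local strict steps into a single strictly $\prec_{\effect{\aloop_i}}$-increasing chain of length comparable to $n_i$. To that end I would use that, along this factor, the term maps at a fixed phase of $\aloop_i$ evolve weakly monotonically from one copy to the next for the preorder $\preceq_{\effect{\aloop_i}}$: moving forward one copy shifts the underlying configuration by $\effect{\aloop_i}$, which can only move each linear term's value monotonically in the direction fixed by the sign of its coefficient vector applied to $\effect{\aloop_i}$, i.e. exactly as prescribed by $\preceq_{\effect{\aloop_i}}$. Combining this monotonicity with the strict steps above yields a $\preceq_{\effect{\aloop_i}}$-increasing sequence of term maps that makes a strict jump at least once every two copies; keeping one representative per strict jump produces a chain $\atermmap^{(1)}\prec_{\effect{\aloop_i}}\cdots\prec_{\effect{\aloop_i}}\atermmap^{(L)}$ with $L$ of order $n_i/2$. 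Lemma~\ref{lemma-bounded-map} then bounds $L$ by $\card{I}\times\card{T}\leq 2\card{T}\card{B}+\card{T}$, giving $n_i\leq 2(2\card{T}\card{B}+\card{T})$ and, after summing over $i$, the statement.

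The main obstacle is precisely this assembly step: the witness phases $\alpha_j$ delivered by (minimality) a priori sit at different positions inside $\aloop_i$, and a crude phase-by-phase treatment would lose a spurious factor $\length{\aloop_i}$. The way around it is to charge the $n_i-2$ windows not to term maps directly but to interval-boundary crossings of the (at most $\card{T}$) linear terms: along the block, the value of a fixed term at a fixed phase is an arithmetic progression in the copy index, hence crosses each interval endpoint at most once, and a strict step witnessed by (minimality) forces such a crossing; one then has to verify that the total number of crossings, and the number of windows that may be charged to each crossing, stay within the bounds coming from $\card{I}$ and $\card{T}$. Carrying out this charging carefully is the technical heart of the proof; the remaining facts — that $\amap$ is length-preserving, that $\amap(\askeleton)$ has the stated unique shape, and that the degenerate cases $n_i\leq 2$ are covered since $\card{T},\card{B}\geq 1$ — are routine.
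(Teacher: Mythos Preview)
Your overall plan matches the paper's: write $\amap(\askeleton)=\aseg_1\aloop_1^{n_1}\cdots\aseg_k\aloop_k^{n_k}$ and bound each $n_i$ via \textbf{(minimality)} and Lemma~\ref{lemma-bounded-map}; the paper's proof says nothing beyond this one line, so you are right to single out the varying witness phases $\alpha_j$ as the crux. The gap is in how you resolve that obstacle. Both of your fixes --- weak $\preceq_{\effect{\aloop_i}}$-monotonicity of the term maps at a fixed phase across successive copies, and the claim that ``the value of a fixed term at a fixed phase is an arithmetic progression in the copy index'' --- presuppose that the term maps along a skeleton track actual counter values from an underlying run. But a skeleton, as defined here, is a purely combinatorial word over $\edges'$: the definition of $\edges'$ imposes \emph{no} relation between the source map $\atermmap$ and the target map $\atermmap'$ of a step (only $\atermmap\vdash\guard{\atransition}$ is required), and none of \textbf{(init)}, \textbf{(schema)}, \textbf{(minimality)}, \textbf{(last-loop)} supplies any such consistency either. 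So neither the monotonicity nor the arithmetic-progression structure follows from the definition. Concretely, with $\card{T}=1$, trivial guards on $\aloop_i$, $\length{\aloop_i}=2$, and two interval values $a<b$ in $I$, one can decorate the two phases with the periodic patterns $a,a,b,b,a,a,\ldots$ and $b,b,a,a,b,b,\ldots$; every window of three copies then has a $\prec_{\effect{\aloop_i}}$-witness at one of the two phases while $n_i$ is unbounded. Hence the bound cannot be derived from \textbf{(minimality)} alone for arbitrary skeletons, and your charging argument, which rests on the arithmetic-progression picture, does not go through in that generality.

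What \emph{is} true is that the monotonicity you want holds for the skeletons the paper actually cares about, namely those extracted from genuine runs in the proof of Proposition~\ref{run-equivalence}(I): there the term maps abstract real counter values, and one traversal of $\aloop_i$ shifts each term value by the corresponding linear form of $\effect{\aloop_i}$, so at every fixed phase the sequence of term maps is $\preceq_{\effect{\aloop_i}}$-monotone across copies. Restricted to such skeletons your argument (and the paper's terse one) is sound, and that restriction suffices for every later use of the lemma; but as the statement is phrased, neither proof covers the general case.
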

\iflong

\begin{proof}
Since $\amap(\askeleton) \in
      \aseg_1 \aloop_1^+ \aseg_2 \aloop_2^+ \ldots \aseg_k \aloop_k^+$, 
let $\amap(\askeleton) = \aseg_1 \aloop_1^{n_1} \aseg_2 \aloop_2^{n_2} \ldots \aseg_k \aloop_k^{n_k}$ for some
$n_1, \ldots, n_k \geq 1$. We have 
$\length{\askeleton} \leq (\length{\aseg_1} + \cdots + \length{\aseg_k})
+ max(n_i) \times  (\length{\aloop_1} + \cdots + \length{\aloop_k})$. It remains to bound
the values  $n_1, \ldots, n_k$. 
For each factor $\aword$ of $\askeleton$ such that
$\amap(\aword)=(l_i)^{n_i}$ with $i \in \interval{1}{k}$,  
by the $\textbf{(minimality)}$ condition and  Lemma~\ref{lemma-bounded-map},
we conclude that $n_i \leq  2 \times ( 2 \times \card{T} \times \card{B} + \card{T})$.
Consequently, $\length{\askeleton} \leq (\length{\aseg_1} + \cdots + \length{\aseg_k})
+ 2 \times ( 2 \times \card{T} \times \card{B} + \card{T}) 
\times  (\length{\aloop_1} + \cdots + \length{\aloop_k})$.
\end{proof}

\fi 
From skeletons, we shall define path schemas built over the alphabet
$\states' \times 
\guards^{\star}(T,B,U)
\times U \times \states'$  (transitions are not anymore formally labelled by elements in $\edges_{\aschema}$;
sometimes we keep these labels for convenience). 
As for the definition of $\amap$, let $\nodistrans$ be a finite subset of $(\states' \times
\guards^{\star}(T,B,U) \times U \times \states')$ and 
let $\amapbis: (\edges')^* \rightarrow (\nodistrans)^*$ be the map
      such that $\amapbis(\varepsilon) = \varepsilon$,
      $\amapbis(\aword \cdot \aword') = \amapbis(\aword) \cdot \amapbis(\aword')$
      and $\amapbis(\pair{\astate}{\atermmap} \step{\atransition,\pair{\aguard_{\atermmap'}}{\anupdate}} 
\pair{\astate'}{\atermmap'}) = \pair{\astate}{\atermmap} \step{\pair{\aguard_{\atermmap'}}{\anupdate}} 
\pair{\astate'}{\atermmap'}$. 
This time, elements of $\edges_{\aschema}$ are removed instead of being kept as for $\amap$. 
Given a skeleton $\askeleton$, we shall define
a path schema $\aschema_{\askeleton} = \aseg'_1 (\aloop'_1)^+ \aseg'_2 (\aloop'_2)^+ \ldots \aseg'_{k'}(\aloop'_{k'})^{\omega}$
such that 
$
\amapbis(\askeleton) 
= 
\aseg'_1  \aloop'_1 \aseg'_2 \aloop'_2  \ldots \aseg'_{k'} \aloop'_{k'}
$. Hence, skeletons slightly differ from the path schemas. 
It remains to specify how the loops in  $\aschema_{\askeleton}$ are identified. 

Every factor
$\aword =  \pair{\astate_I}{\atermmap_I} \step{\atransition_I,\pair{\aguard_{\atermmap'}^I}{\anupdate_I}} 
\pair{\astate_{I+1}}{\atermmap_{I+1}} \cdots \step{\atransition_{J-1},\pair{\aguard_{\atermmap'}^{J-1}}{\anupdate_{J-1}}}  
\pair{\astate_J}{\atermmap_J}
$ 
of $\askeleton$ such that 
\begin{enumerate}
\itemsep 0 cm
\item $\amap(\aword) = \aloop$ for some loop $\aloop$ of 
$\aschema$,
\item $\aword$ is not the suffix of $\askeleton$ of length $\length{\aloop_k}$,
\item the sequence of the $\length{\aloop}$ next elements after $\aword$  is also equal to $\aword$,
\end{enumerate}
 is replaced by $(\amapbis(\aword))^+$.
Finally, $\aloop'_{k'}$ is equal to $\amapbis(\aword)$ where $\aword$ is the unique suffix of $\askeleton$
of length $\length{\aloop_k}$. Note that the path schema $\aschema_{\askeleton}$ is unique by the condition ({\bf minimality}).
Indeed, there is no factor of $\askeleton$ of the form $\aword^3$ such that 
$\amap(\aword) = \aloop$ for some loop $\aloop$ of 
$\aschema$. As far as the labelling function is concerned, 
the labels of $\astate$ and $\pair{\astate}{\atermmap}$
are identical with respect to the set $\aset$, i.e. $\alabelling'(\pair{\astate}{\atermmap}) 
\egdef \alabelling(\astate) \cap \aset$.  
Hence, 
\begin{enumerate}
\itemsep 0 cm
\item  $k' \leq k \times ( 2 \times \card{T} \times \card{K} + \card{T})$, 
\item  $\length{\aschema_{\askeleton}} \leq 
(\length{\aseg_1} + \cdots + \length{\aseg_k})
+ 2 \times ( 2 \times \card{T} \times \card{K} + \card{T}) 
\times  (\length{\aloop_1} + \cdots + \length{\aloop_k})$,
\item $\aschema_{\askeleton}$ has no guards with disjunctions. 
\end{enumerate}

Note that construction of a path schema from a skeleton, cannot be done by simply taking the path segments as 
before and the copies of the loop segments as alternating path and loop segments in the new path schema. 
For example consider this system with one counter $\acounter$, 
$I=\set{(-\infty,-1],\interval{0}{0},\interval{1}{1},\allowbreak \interval{2}{2},[3,\infty)}$ and 
$T=\set{\avariable+1}$.
\begin{figure}
  \begin{center}
    \begin{tikzpicture}[shorten >=1pt,node distance=1cm,auto,>=stealth']
  \node[state] (q_0)                      {$\astate_0$};
  \node[state] (q_1) [right      =of q_0] {$\astate_1$};
  \node[state] (q_2) [right      =of q_1] {$\astate_2$};
  \path[->] (q_0) edge  node   [swap]{$\top,+1$} (\astate_1)
  	    (q_1) edge  node   [swap]{$\top,+1$} (\astate_2)
                  edge [out=125,in=55,loop] node [above]{$\top,+1$} ()
	    (q_2) edge [out=125,in=55,loop] node {} ();
  \node at (2.1,1) {\fbox{\tiny $\geq 1$}};
  \node at (4.1,1) {\fbox{$\omega$}};
  \node at (2,-0.85) {$\aschema$};  
\end{tikzpicture}  
\scalebox{0.95}{
\begin{tikzpicture}[shorten >=1pt,node distance=1.7cm,auto,>=stealth']
  \node[state] (q_00)                      {$\astate_0,[0]$};
  \node[state] (q_11) [right      =of q_00] {$\astate_1,[1]$};
  \node [above=0.3cm of q_11]  {\fbox{$\geq 1$}};
  \node[state] (q_12) [right      =of q_11] {$\astate_1,[2]$};
  \node [above=0.3cm of q_12]  {\fbox{$\geq 1$}};
  \node[state] (q_13) [right      =of q_12] {$\astate_1,[3,\infty)$};
  \node [above=0.3cm of q_13]  {\fbox{$\geq 1$}};
  \node[state] (q_24) [right      =of q_13] {$\astate_2,[3,\infty)$};
   \node [above=0.3cm of q_24]  {\fbox{$\omega$}};
  \path[->] (q_00) edge  node   [swap]{\tiny $\begin{array}{c}\avariable+1=1,+1\end{array}$} (q_11)
  	    (q_11) edge  node   [swap]{\tiny $\begin{array}{c}\avariable+1=2,+1\end{array}$} (q_12)
                  edge [out=125,in=55,loop] node [above]{\avariable+1=1,+1} ()
	    (q_12) edge  node   [swap]{\tiny $\begin{array}{c}\avariable+1\geq 3,+1\end{array}$} (q_13)
                  edge [out=125,in=55,loop] node [above]{\avariable+1=2,+1} ()
	    (q_13) edge  node   [swap]{\tiny $\begin{array}{c}\avariable+1\geq 3,+1\end{array}$} (q_24)
                  edge [out=125,in=55,loop] node [above]{\avariable+1>2,+1} ()
	    (q_24) edge [out=125,in=55,loop] node {} ();
  \node at (4.9,-0.95) {$\aschema'$};
\end{tikzpicture}  
}
\scalebox{0.95}{
\begin{tikzpicture}[shorten >=1pt,node distance=1.7cm,auto,>=stealth']
  \node[state] (q00)                      {$\astate_0,[0]$};
  \node[state] (q11) [right      =of q00] {$\astate_1,[1]$};
  \node[state] (q12) [right      =of q11] {$\astate_1,[2]$};
  \node[state] (q13) [right      =of q12] {$\astate_1,[3,\infty)$};
  \node [above=0.3cm of q_13]  {\fbox{$\geq 1$}};
  \node[state] (q24) [right      =of q13] {$\astate_2,[3,\infty)$};
  \node [above=0.3cm of q_24]  {\fbox{$\omega$}};
  \path[->] (q00) edge  node   [swap]{\tiny $\begin{array}{c}\avariable+1=1,+1\end{array}$} (q11)
  	    (q11) edge  node   [swap]{\tiny $\begin{array}{c}\avariable+1=2,+1\end{array}$} (q12)
	    (q12) edge  node   [swap]{\tiny $\begin{array}{c}\avariable+1\geq 3,+1\end{array}$} (q13)
	    (q13) edge  node   [swap]{\tiny $\begin{array}{c}\avariable+1\geq 3,+1\end{array}$} (q24)
                  edge [out=125,in=55,loop] node [above]{\avariable+1>2,+1} ()
	    (q24) edge [out=125,in=55,loop] node {} ();
  \node at (4.9,-0.95) {$\aschema''$};
\end{tikzpicture}  
}
  \end{center}
  \caption{A path $\aschema$ and two unfolded path schemas in $\asetbis_{\aschema}$}
  \label{fig:unfolding}
\end{figure}

In Figure \ref{fig:unfolding}, 
\begin{itemize}
\itemsep 0 cm  
 \item $p_1(l_1^1)^+l_1^2(l_1^3)^+l_1^4(l_1^5)^+p_2(l_2)^{\omega}$ does not have any valid run 
 respecting it as a path schema, as the loops $l_1^1,l_1^3$ cannot be taken even once in any run.
  \item $p_1l_1^2l_1^4(l_1^5)^+p_2(l_2)^{\omega}$ has a valid run respecting it as a path schema. But, here all the 
  unfoldings of the loop $l_1$ are taken as path segments.
\end{itemize}

We write $\asetbis_{\aschema}$ to denote the set of unfolded path schemas obtained
from $\aschema$ with respect to $\aresource$. A skeleton is \defstyle{compatible} with
$\aschema$ whenever its corresponding path schema belong to  $\asetbis_{\aschema}$.

\begin{lemma}
\label{lemma-path-check}
Checking whether a word $\aword\in(\states' \times \edges \times \guards^{\star}(T,B,U)
\times U \times \states')^*$ is a skeleton compatible with
$\aschema$  and $\pair{\astate_0}{\vec{v_0}}$ assuming that 
$\length{\aword}\leq (\length{\aseg_1} + \ \cdots \ + \length{\aseg_k})
+ 2 ( 2 \times \card{T} \times \card{B} + \card{T}) 
\times  (\length{\aloop_1} + \cdots + \length{\aloop_k})$ can be done 
in polynomial time in the size of $\pair{\astate_0}{\vec{v_0}}$, $\aschema$, $T$ and $B$.
\end{lemma}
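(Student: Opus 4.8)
The plan is to decide membership of $\aword$ in the set of skeletons compatible with $\aschema$ and $\pair{\astate_0}{\vec{v_0}}$ by verifying, clause by clause, the definition of such a skeleton: that $\aword$ is a word over $\edges'$ (so in particular every transition component lies in $\edges_{\aschema}$), that consecutive letters compose, and that the four conditions \textbf{(init)}, \textbf{(schema)}, \textbf{(minimality)} and \textbf{(last-loop)} hold. Under the assumed bound on $\length{\aword}$, which is precisely the bound of Lemma~\ref{lemma-bounded-skeleton}, the word $\aword$ together with all the objects attached to it — the guards in $\guards^{\star}(T,B,U)$, the term maps labelling its states, the loops of $\aschema$ and their effects $\effect{\aloop}$ — has size polynomial in the combined size of $\pair{\astate_0}{\vec{v_0}}$, $\aschema$, $T$ and $B$; so it suffices to check that each of these finitely many clauses is testable in polynomial time.

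For each letter $\pair{\astate}{\atermmap} \step{\atransition,\pair{\aguard_{\atermmap'}}{\anupdate}} \pair{\astate'}{\atermmap'}$ of $\aword$, testing membership in $\edges'$ amounts to: verifying $\atransition \in \edges_{\aschema}$, the syntactic equalities $\astate=\source{\atransition}$, $\astate'=\target{\atransition}$, $\anupdate = \update{\atransition}$, that $\aguard_{\atermmap'}$ is literally the guard prescribed by the definition of $\edges'$, and that $\atermmap \vdash \guard{\atransition}$ — the last of which is decidable in polynomial time by Lemma~\ref{relation-property}(I), all the others being comparisons between objects of polynomial size. Composability ($\target{}$ of a letter equals $\source{}$ of the next) is immediate, and so is \textbf{(init)}: for each term $\aterm \in T$ one evaluates $\aterm$ at $\vec{v_0}$ and tests membership of the resulting integer in the interval $\atermmap_1(\aterm)$, a constant number of comparisons since $I$ partitions $\Zed$ into intervals with endpoints in $B$. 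For \textbf{(schema)} one computes $\amap(\askeleton) \in \edges^{*}$ — the projection of $\aword$ onto its transition components — and checks that it lies in the language of the $\omega$-free regular expression $\aseg_1 \aloop_1^{+} \aseg_2 \aloop_2^{+} \cdots \aseg_k \aloop_k^{+}$; this expression has size $\mathcal{O}(\length{\aschema})$, so one builds from it a nondeterministic automaton of polynomial size and simulates it on $\amap(\askeleton)$, which is polynomial.

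The remaining two clauses are \textbf{(last-loop)}, which is trivial — inspect the suffix $\aword'$ of $\aword$ of length $\length{\aloop_k}$ and check $\amap(\aword') = \aloop_k$ and $\source{\aword'} = \target{\aword'}$ — and \textbf{(minimality)}, which is the only slightly delicate point and the one I would flag as the main obstacle. The key is that \textbf{(minimality)} can be tested by purely local inspection: since \textbf{(schema)} already forces the canonical factorisation $\amap(\askeleton) = \aseg_1 \aloop_1^{n_1} \cdots \aseg_k \aloop_k^{n_k}$, and since each $n_i$ is bounded, by Lemma~\ref{lemma-bounded-map}, by a quantity polynomial in $\card{T}$ and $\card{B}$, there are only polynomially many factors $\aword'$ of $\askeleton$ with $\amap(\aword') = \aloop^{3}$ for a loop $\aloop$ of $\aschema$, each detected by a sliding window of length $3 \length{\aloop}$; for each such $\aword'$ starting at index $I$ one computes $\effect{\aloop}$ and searches for an $\alpha \in \interval{1}{\length{\aloop}}$ with $\atermmap_{I+\alpha} \prec_{\effect{\aloop}} \atermmap_{I+\alpha + 2\length{\aloop}}$, and testing $\prec_{\effect{\aloop}}$ reduces, for each term $\aterm = \sum_j \afactor_j \avariable_j \in T$, to computing the sign of $\sum_j \afactor_j \effect{\aloop}[j]$, comparing $\atermmap_{I+\alpha}(\aterm)$ with $\atermmap_{I+\alpha + 2\length{\aloop}}(\aterm)$ accordingly, and checking that the two maps differ — all polynomial. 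Summing the costs of these finitely many polynomial-time checks yields the statement; note that once $\aword$ passes all of them it is a skeleton, hence compatible with $\aschema$ by definition, so no further verification about the induced path schema $\aschema_{\askeleton}$ is needed.
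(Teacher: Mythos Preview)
Your proposal is correct and follows essentially the same approach as the paper: verify membership of each letter in $\edges'$ via Lemma~\ref{relation-property}(I), then check the four skeleton conditions one by one, each in polynomial time. The only cosmetic difference is in \textbf{(minimality)}: the paper simply bounds the number of factors to inspect by $\length{\aword}^2$, whereas you invoke the canonical factorisation and Lemma~\ref{lemma-bounded-map} to bound the $n_i$ --- this is unnecessary (the assumed length bound on $\aword$ already makes $\length{\aword}^2$ polynomial), but not wrong.
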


\iflong

\begin{proof} 
Let $\aword$ be a word over $\states' \times \edges_{\aschema} \times  \guards^{\star}(T,B,U)
\times U \times \states'$ whose length is bounded by $(\length{\aseg_1} + \cdots + \length{\aseg_k})
+ 2 ( 2 \times \card{T} \times \card{B} + \card{T}) 
\times  (\length{\aloop_1} + \cdots + \length{\aloop_k})$. 
Let $N$ be the sum of the respective sizes of  $\pair{\astate_0}{\vec{v_0}}$, $\aschema$, $T$ and $B$.
Since the length of $\aword$ is bounded, its size is also polynomial in $N$. 

Checking whether an element in $\states' \times \edges_{\aschema} \times  \guards^{\star}(T,B,U)
\times U \times \states'$ belongs to $\edges'$ can be done in
polynomial time in $N$ thanks to  Lemma~\ref{relation-property}(I). 
Hence, checking whether $\aword$ belongs to $(\edges')^*$ can be done in polynomial time in $N$ too
since its length is also polynomial in $N$. It remains to check the conditions for skeletons.
\begin{itemize}
\itemsep 0 cm
\item Condition \textbf{(schema)} can be checked by building first $\amap(\aword)$ (this requires linear time in
      $N$) and then by checking
      whether it belongs to $\aseg_1 \aloop_1^+ \aseg_2 \aloop_2^+ \ldots \aseg_k \aloop_k^+$ (requires also
      linear time in $N$).
\item Condition \textbf{(last-loop)} can be checked by extracting the suffix of $\aword$ of length $\length{\aloop_k}$.
\item Condition \textbf{(minimality)} can be checked by considering all the factors $\aword'$ of $\aword$ (there are less
      than $\length{\aword}^2$ of them) and whenever $\amap(\aword') = \aloop^3$ for some loop $\aloop$, we verify
that the condition is satisfied. All these operations can be done in polynomial time in $N$. 
\item Finally, condition \textbf{(init)} is also easy to check in polynomial time in $N$. 
\end{itemize} 

\end{proof}

\fi 

The main property about  $\asetbis_{\aschema}$ is stated below. 

\iflong
\begin{proposition} \
\label{run-equivalence}
\begin{description}
\itemsep 0 cm 
\item[(I)] Let $\arun$ be an infinite run respecting $\aschema$ and starting at $\pair{\astate_0}{\vec{v_0}}$.
Then, there is  a path schema $\aschema'$ in $\asetbis_{\aschema}$ and
an infinite run $\arun'$  respecting $\aschema'$ such that $\footprint{\arun} = \footprint{\arun'}$. 
\item[(II)] Let $\arun$ be an infinite run respecting $\aschema'$ for some  $\aschema' \in 
\asetbis_{\aschema}$.
Then, there is an infinite run $\arun'$  respecting $\aschema$ such that  
$\footprint{\arun} = \footprint{\arun'}$. 
\end{description}
\end{proposition}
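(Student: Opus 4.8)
The plan is to prove the two directions by tracking, along a run, the sequence of term maps that record into which interval of $I$ each term value falls, and matching this against the states $\states' = \states \times I^T$ of the unfolded path schemas. For part~(I), I would start with an infinite run $\arun = \pair{\astate_0}{\vec{v_0}} \pair{\astate_1}{\vec{v_1}} \cdots$ respecting $\aschema = \aseg_1 \aloop_1^+ \cdots \aseg_k \aloop_k^\omega$. For each position $i$, let $\atermmap_i$ be the unique term map with $\sum_j \afactor_j \vec{v_i}[j] \in \atermmap_i(\aterm)$ for every $\aterm = \sum_j \afactor_j \avariable_j \in T$; this is exactly $\pi_2(\footprint{\arun}(i))$. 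I would then lift $\arun$ to a word $\arun^\sharp$ over $\edges'$ by replacing the $i$-th transition $\atransition_i$ with $\pair{\astate_i}{\atermmap_i} \step{\atransition_i, \pair{\aguard_{\atermmap_{i+1}}}{\update{\atransition_i}}} \pair{\astate_{i+1}}{\atermmap_{i+1}}$, and check it is a legal $\edges'$-word using Lemma~\ref{guard-constraint} (the guard $\aformulabis(\aterm,\anupdate,\atermmap_{i+1}(\aterm))$ holds at $\vec{v_i}$ because $\vec{v_{i+1}} = \vec{v_i} + \anupdate$ lands in $\atermmap_{i+1}(\aterm)$) and Lemma~\ref{relation-property}(II) (since $\vec{v_i}$ satisfies $\guard{\atransition_i}$ and $\vec{v_i}(\aterm) \in \atermmap_i(\aterm)$, we get $\atermmap_i \vdash \guard{\atransition_i}$ --- here I need the converse direction, that a concrete satisfying valuation forces the abstract relation, which also follows since the intervals partition $\Zed$).

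The next step is to \emph{contract} $\arun^\sharp$ into a skeleton. The word $\amap(\arun^\sharp)$ is $\aseg_1 \aloop_1^{n_1} \cdots \aseg_k \aloop_k^\omega$ for some $n_i \geq 1$. Inside each block $\aloop_i^{n_i}$, the term maps form a sequence $\atermmap, \atermmap', \ldots$ that is monotone with respect to $\preceq_{\effect{\aloop_i}}$ on each coordinate (because one full traversal of $\aloop_i$ changes each term value by a fixed sign determined by $\sum_j \afactor_j \effect{\aloop_i}[j]$); by Lemma~\ref{lemma-bounded-map} this sequence of \emph{distinct} maps has length at most $2\card{T}\card{B}+\card{T}$. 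Whenever the same term map recurs three times consecutively at loop boundaries (which must happen once $n_i$ exceeds $2(2\card{T}\card{B}+\card{T})$), all but two of those copies can be deleted: deleting a copy of a loop whose entering and leaving term maps coincide preserves the property of being a run respecting the contracted schema, because the concrete counter values still satisfy all guards by convexity exactly as in the proof of Lemma~\ref{lemma-constraint-system}. Iterating, I obtain a skeleton $\askeleton$ in the sense of the \textbf{(init)}, \textbf{(schema)}, \textbf{(minimality)}, \textbf{(last-loop)} conditions, hence a path schema $\aschema_\askeleton \in \asetbis_\aschema$, and a run $\arun'$ respecting it. The footprint is unchanged by the contractions on the deleted loop iterations and is correct on the remaining positions by construction of the $\atermmap_i$, so $\footprint{\arun'} = \footprint{\arun}$ up to the repeated-block identification; I would make this precise by noting both footprints are the ultimately periodic words $\aword \awordbis^\omega$ with $\awordbis$ of length $\length{\aloop_k}$ and equal prefixes on the surviving positions.

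For part~(II), the direction is easier: given $\arun$ respecting some $\aschema' = \aschema_\askeleton \in \asetbis_\aschema$, the map $\amap$ on transitions sends $\wordof{\arun}$ (or rather its underlying $\edges'$-labels) to an element of $\languageof{\aschema}$, and the counter updates carried by $\arun$ agree with the original updates of $\asys$ since $\aschema_\askeleton$ uses updates from $U$, the updates of $\aschema$; the guards of $\aschema_\askeleton$ of the form $\aformulabis(\aterm,\anupdate,\atermmap(\aterm))$ and the side-condition $\atermmap \vdash \guard{\atransition}$ together imply, via Lemma~\ref{relation-property}(II) and Lemma~\ref{guard-constraint}, that every original guard $\guard{\atransition}$ is satisfied along the corresponding run $\arun'$ in $\asys$. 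Reinserting the deleted loop iterations is unnecessary here --- $\arun'$ just follows $\amap(\wordof{\arun}) \in \languageof{\aschema}$ directly --- and the footprint is preserved because the term-map component of each state of $\aschema_\askeleton$ records precisely the interval the counter value of $\arun$ lies in. The main obstacle I anticipate is bookkeeping in part~(I): making the loop-contraction argument airtight, in particular verifying that after deleting loop copies the resulting word is genuinely a skeleton (the \textbf{(minimality)} condition must be rechecked globally, not just at the block where we deleted) and that convexity indeed guarantees non-negativity and guard satisfaction on \emph{all} surviving iterations, exactly mirroring steps (1)--(2) in the proof of Lemma~\ref{lemma-constraint-system}.
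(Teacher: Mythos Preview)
Your treatment of part~(II) is fine and matches the paper. The gap is in part~(I), in how you construct $\arun'$.

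You build the enriched word $\arun^\sharp$ over $\edges'$ correctly, and contracting it to a skeleton $\askeleton$ (hence a path schema $\aschema_{\askeleton} \in \asetbis_\aschema$) is also right. The problem is that you then take $\arun'$ to be the \emph{contracted} run, i.e.\ the run obtained by physically deleting loop iterations, and invoke convexity to argue it is still a run. Even if that argument could be made to work, it does not prove the statement: deleting iterations of a loop with nonzero effect changes the counter value sequence, hence changes the footprint. Your own phrasing ``$\footprint{\arun'} = \footprint{\arun}$ up to the repeated-block identification'' and ``equal prefixes on the surviving positions'' concedes that the two footprints are not literally equal as $\omega$-words, which is exactly what the proposition demands.

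The fix is simpler than what you attempt: the contraction is used \emph{only} to define $\askeleton$ and hence $\aschema_{\askeleton}$. A path schema has loops decorated with~$^+$, so any number of iterations $\geq 1$ is allowed. Therefore the \emph{original} enriched word $\arun^\sharp$ (no deletions) already belongs to $\languageof{\aschema_{\askeleton}}$: wherever you contracted $(\aword^\star)^N$ to $(\aword^\star)^2$ in building $\askeleton$, the resulting loop $(\amapbis(\aword^\star))^+$ in $\aschema_{\askeleton}$ accepts $N$ iterations just as well. So define $\arun'$ by
\[
\arun' \;=\; \pair{\pair{\astate_0}{\atermmap_0}}{\vec{v_0}} \step{\pair{\aguard_{\atermmap_1}}{\update{\atransition_0}}} \pair{\pair{\astate_1}{\atermmap_1}}{\vec{v_1}} \step{\pair{\aguard_{\atermmap_2}}{\update{\atransition_1}}} \cdots
\]
with the \emph{same} counter values $\vec{v_i}$ as $\arun$. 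That $\arun'$ is a run of $\aschema_{\askeleton}$ follows from Lemma~\ref{guard-constraint} (the guards $\aguard_{\atermmap_{i+1}}$ hold at $\vec{v_i}$ because $\vec{v_{i+1}}$ lands in the right intervals), and $\footprint{\arun} = \footprint{\arun'}$ is then immediate: identical counter values, and $\alabelling'(\pair{\astate}{\atermmap}) = \alabelling(\astate) \cap \aset$. No convexity argument is needed anywhere.
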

\else
\begin{proposition} \label{run-equivalence}
(I) Let $\arun$ be an infinite run respecting $\aschema$ and starting at $\pair{\astate_0}{\vec{v_0}}$.
Then, there are  a path schema $\aschema'$ in $\asetbis_{\aschema}$ and
an infinite run $\arun'$  respecting $\aschema'$ such that $\footprint{\arun} = \footprint{\arun'}$. 
(II) Let $\arun$ be an infinite run respecting $\aschema'$ for some  $\aschema' \in 
\asetbis_{\aschema}$.
Then, there is an infinite run $\arun'$  respecting $\aschema$ such that  
$\footprint{\arun} = \footprint{\arun'}$. 
\end{proposition}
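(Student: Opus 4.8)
The plan is to prove the two directions of Proposition~\ref{run-equivalence} by translating, back and forth, between a run of the original path schema $\aschema$ and the finer "footprint-annotated" structure carried by the unfolded path schemas in $\asetbis_{\aschema}$. The guiding principle is that the extra information a skeleton carries is exactly the term map $\atermmap_i \in I^T$ at each position, and by Lemma~\ref{guard-constraint} the annotation is uniquely determined by the run along the transitions of $\aschema$. So the correspondence between a run of $\aschema$ and a run of some $\aschema' \in \asetbis_{\aschema}$ is essentially canonical on the level of counter values; the only real content is (a) that the resulting annotated word satisfies the syntactic constraints defining a skeleton (conditions \textbf{(init)}, \textbf{(schema)}, \textbf{(minimality)}, \textbf{(last-loop)}), and (b) that the annotated guards $\aguard_{\atermmap'}$ are exactly satisfied iff the original guards were, which is Lemma~\ref{guard-constraint} together with $\atermmap \vdash \guards(\atransition)$ from Lemma~\ref{relation-property}(II).

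For direction (I): given $\arun = \pair{\astate_0}{\vec{v_0}} \pair{\astate_1}{\vec{v_1}} \cdots$ respecting $\aschema = \aseg_1 \aloop_1^+ \cdots \aseg_k \aloop_k^\omega$, first compute its footprint $\footprint{\arun}$ with respect to $\aresource$, giving at each position $i$ a term map $\atermmap_i$. Lift $\arun$ to a word $\widehat\arun$ over $\edges'$ by replacing the $i$-th transition $\atransition$ of $\wordof{\arun}$ by the $\edges'$-transition $\pair{\astate}{\atermmap_i} \step{\atransition, \pair{\aguard_{\atermmap_{i+1}}}{\update{\atransition}}} \pair{\astate'}{\atermmap_{i+1}}$; Lemma~\ref{guard-constraint} guarantees $\aguard_{\atermmap_{i+1}}$ is satisfied, and $\atermmap_i \vdash \guards(\atransition)$ holds because $\arun$ fires $\atransition$, so $\widehat\arun \in (\edges')^\omega$. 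This infinite word still has the shape $\aseg_1 \aloop_1^{n_1} \cdots \aseg_k \aloop_k^\omega$ with $n_i \geq 1$. Now I would truncate: keep only as many copies of each loop $\aloop_j$ as needed so that no factor of the form $\widehat\aloop^3$ (with $\amap$-image $\aloop^3$) has constant term map along consecutive copies — this is possible because, by Lemma~\ref{lemma-bounded-map}, along repeated copies of a loop the sequence of term maps is monotone ($\preceq_{\effect{\aloop}}$) and strictly increases at most $\card{I}\times\card{T}$ times, so after bounded many copies the term map stabilizes and further copies may be deleted without changing the footprint of the tail. Formally one shows: if $\widehat\aloop$ is a lift of a single loop traversal and two consecutive copies have the same term-map sequence, then deleting one copy yields a run (counter values shift uniformly but guards and non-negativity are preserved by convexity, exactly as in the proof of Lemma~\ref{lemma-constraint-system}) with the same footprint from that point on, using Lemma~\ref{lemma-footprint-equivalence} for the satisfaction-invariance. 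Iterating gives a word that, after applying $\amapbis$, is the interleaving $\aseg'_1 \aloop'_1 \cdots \aseg'_{k'} \aloop'_{k'}$ underlying some $\aschema' \in \asetbis_{\aschema}$; the run $\arun'$ it represents has $\footprint{\arun'} = \footprint{\arun}$ on the infinite tail, and by adjusting we may take them equal everywhere. Conditions \textbf{(init)} and \textbf{(last-loop)} are immediate from the construction (the last loop $\aloop_k$ of $\aschema$ is never truncated since it is iterated $\omega$ times and $\aschema$ is valid, so $\effect{\aloop_k} \geq \vec 0$ keeps term maps eventually constant there).

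For direction (II): given $\arun$ respecting $\aschema' \in \asetbis_{\aschema}$, apply the map $\amap$ (extended from transitions to runs in the obvious way, forgetting the term-map component and the guard $\aguard_{\atermmap'}$, keeping only the update) to obtain an $\omega$-word $\aword = \amap(\wordof{\arun})$ over $\edges_{\aschema} \subseteq \edges$. By condition \textbf{(schema)}, $\aword \in \aseg_1 \aloop_1^+ \cdots \aseg_k \aloop_k^+$, and extending the last loop to $\omega$ (which $\aschema'$ already does, since its last loop $\aloop'_{k'}$ maps to $\aloop_k$ by \textbf{(last-loop)}) we get $\aword \in \languageof{\aschema}$. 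The counter values along $\arun$ are unaffected by $\amap$ (only guards and updates matter, and updates are preserved), and each guard $\guard{\atransition}$ of $\aschema$ is satisfied at the corresponding position because $\atermmap_i \vdash \guards(\atransition)$ (built into $\edges'$) together with Lemma~\ref{relation-property}(II) and the fact that the run's counter valuation lies in $\atermmap_i$ (this last point is the content of \textbf{(init)} plus Lemma~\ref{guard-constraint} propagated inductively along the run). Hence $\amap(\arun)$ — meaning the configuration sequence of $\arun$ relabelled by the $\states$-projection of states — is a genuine run $\arun'$ of $\asys$ respecting $\aschema$, and clearly $\footprint{\arun'} = \footprint{\arun}$ since they share counter values and the projection of labels agrees with $\aset$.

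The main obstacle, and where the real work lies, is the truncation argument in direction (I): proving that one can delete surplus loop iterations without either (i) violating non-negativity of counters or guards mid-loop, or (ii) changing the footprint of the suffix — all while respecting the \textbf{(minimality)} bookkeeping so that the resulting word is genuinely a skeleton and not merely a run. The non-negativity/guard part is convexity exactly as in Lemma~\ref{lemma-constraint-system}; the footprint-invariance part needs the observation that when the term-map sequence along a loop copy has stabilized under $\preceq_{\effect{\aloop}}$, shifting all counter values by a multiple of $\effect{\aloop}$ leaves every term in $T$ in the same interval of $I$, hence leaves the footprint unchanged, and then Lemma~\ref{lemma-footprint-equivalence} transfers satisfaction of the PLTL formula. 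Packaging this as an induction on the number of loops, or on the total surplus $\sum_j (n_j - $ bound$)$, and checking that the stabilization indeed occurs within $2(2\card{T}\card{B}+\card{T})$ copies (Lemma~\ref{lemma-bounded-map} gives the monotone-chain bound, doubled to account for an initial decreasing segment), is the technically delicate step; the rest is routine translation.
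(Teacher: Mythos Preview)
Your direction (II) is essentially the paper's: project out the term-map annotation via $\amap$, keep the same counter values, and observe that the original guards of $\aschema$ hold because $\atermmap_i \vdash \guards(\atransition)$ is built into $\edges'$ and Lemma~\ref{relation-property}(II) applies.

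Direction (I), however, has a real gap. You truncate the \emph{run} itself---deleting surplus iterations of $\aloop_j$ and shifting subsequent counter values by multiples of $\effect{\aloop_j}$---and then try to argue that the footprint is preserved. But this cannot give $\footprint{\arun} = \footprint{\arun'}$ as $\omega$-words. Once $m$ iterations are removed, every later position is shifted by $m\cdot\length{\aloop_j}$, and, more seriously, the counter values entering the next segment $\aseg_{j+1}$ differ by $m\cdot\effect{\aloop_j}$; the term maps along $\aseg_{j+1}$ and the later loops have no reason to agree with those of $\arun$. Stabilisation of the term maps \emph{inside} $\aloop_j$ says nothing about what happens after you leave it. The phrase ``by adjusting we may take them equal everywhere'' is exactly where the argument breaks.

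The paper avoids this entirely by separating two roles you have conflated. Truncation is applied only to produce the finite \emph{skeleton} $\askeleton$ (collapsing any maximal block $(\aword^{\star})^N$, $N>2$, of identical lifted loop copies down to $(\aword^{\star})^2$); from $\askeleton$ one then builds $\aschema' = \aschema_{\askeleton}$, in which each such collapsed block contributes a $(\cdot)^+$-loop. The run $\arun'$ is \emph{not} truncated: it is defined to have the \emph{same} sequence of counter values $\vec{v_0},\vec{v_1},\ldots$ as $\arun$, with states relabelled $\pair{\astate_i}{\atermmap_i}$. It respects $\aschema'$ because the $(\cdot)^+$-loops absorb however many iterations $\arun$ originally performed. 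Footprint equality is then immediate (same counter values, same propositional labels), and no convexity or interval-stability argument is needed at all; the only check is that the new guards $\aguard_{\atermmap_{i+1}}$ are satisfied, which is Lemma~\ref{guard-constraint}. Your convexity and monotone-chain considerations are relevant only to bounding the \emph{length of the skeleton} (that is Lemma~\ref{lemma-bounded-skeleton}), not to constructing $\arun'$.
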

\fi 

 \iflong

\begin{proof}
\textbf{(I)}
Let $\arun = \pair{\astate_0}{\vec{v_0}} \step{\atransition_0} \pair{\astate_1}{\vec{v_1}}\step{\atransition_1} \cdots$
be an infinite run respecting $\aschema$ with footprint $\footprint{\arun}: \Nat \rightarrow \powerset{AT} \times
I^{T}$. We write $\pair{\asetter_i}{\atermmap_i}$ to denote $\footprint{\arun}(i)$. 
In order to build $\arun'$ and $\aschema'$, first we enrich the structure $\arun$ and then we define
a skeleton from the enriched structure that allows us to define $\aschema'$. 
The run $\arun'$ is then defined from $\arun$ so that the sequences of counter values are identical.
From $\arun$, we consider the  
\iflong infinite sequence below:
$$
\aword = \pair{\astate_0}{\atermmap_0} \step{\atransition_0, \pair{\aguard_{\atermmap_1}}{\update{\atransition_0}}} 
\pair{\astate_1}{\atermmap_1} \step{\atransition_1, \pair{\aguard_{\atermmap_2}}{\update{\atransition_1}}}
\cdots 
$$
\else
infinite sequence 
$
\aword = \pair{\astate_0}{\atermmap_0} \step{\atransition_0, \pair{\aguard_{\atermmap_1}}{\update{\atransition_0}}} 
\pair{\astate_1}{\atermmap_1} \step{\atransition_1, \pair{\aguard_{\atermmap_2}}{\update{\atransition_1}}}
\cdots 
$.
\fi 
It is easy to check that $\aword$ can be viewed as an element of  $(\edges')^{\omega}$ where 
$\edges'$ is defined as a finite subset of $\states' \times \edges_{\aschema} \times 
\guards^{\star}(T,B,U) \times U \times \states'$ where $U$ is the finite set of updates from $\aschema
= \aseg_1 (\aloop_1)^{+} \aseg_2  (\aloop_2)^{+} \cdots (\aloop_{k-1})^{+} \aseg_k  
 (\aloop_{k})^{\omega}$. 
Moreover, we have $\amap(\aword) \in \languageof{\aschema}$, that is $\amap(\aword) =
\aseg_1 (\aloop_1)^{n_1} \aseg_2  (\aloop_2)^{n_2} \cdots (\aloop_{k-1})^{n_{k-1}} \aseg_k  
 (\aloop_{k})^{\omega}$ for some $n_1, \ldots, n_{k-1} \geq 1$. 
From $\aword$, one can build a skeleton $\askeleton$ compatible with $\aschema$ and $\pair{\astate_0}{\vec{v_0}}$. 
$\askeleton$ is formally a subword of $\aword$ such that 
$$\amap(\askeleton) =  
\aseg_1 (\aloop_1)^{n_1'} \aseg_2  (\aloop_2)^{n_2'} \cdots (\aloop_{k-1})^{n_{k-1}'} \aseg_k  
(\aloop_{k})^{n_k'}$$ with  $1 \leq n_i' \leq min(n_i, 2 \times ( 2 \times \card{T} \times \card{B} + \card{T})$
for every $i \in \interval{1}{k-1}$ 
and $1 \leq n_k' \leq   2 \times ( 2 \times \card{T} \times \card{B} + \card{T})$.
We have $\aword = \aword' \cdot \aword_0 \cdot \aword_0 \cdot (\aword_0)^{\omega}$ with 
$\amap(\aword_0) = \aloop_k$.
The skeleton $\askeleton$ is obtained from $\aword' \cdot \aword_0 \cdot \aword_0$ 
by deleting copies of loops as soon as two copies are consecutive.
More precisely, every maximal factor of $\aword' \cdot \aword_0 \cdot \aword_0$ of the form
$(\aword^{\star})^N$ with $N >2$ such that $\amap(\aword^{\star}) = \aloop_i$ for some loop 
$\aloop_i$ of $\aschema$, is replaced by $(\aword^{\star})^2$. 
This type of replacement can be done at most $k \times (2 \times ( 2 \times \card{T} \times \card{B} + \card{T}))$
times. One can check thay $\askeleton$ is indeed a skeleton compatible with  $\aschema$ and $\pair{\astate_0}{\vec{v_0}}$. 
Let us consider that $\askeleton$ can be written as 
$$\pair{\astate_1}{\atermmap_1} \step{\atransition_1,\pair{\aguard_{\atermmap'}^1}{\anupdate_1}} 
\pair{\astate_2}{\atermmap_2} \step{\atransition_2,\pair{\aguard_{\atermmap'}^2}{\anupdate_2}} 
\pair{\astate_3}{\atermmap_3} \cdots \step{\atransition_K,\pair{\aguard_{\atermmap'}^K}{\anupdate_K}}  
\pair{\astate_{K+1}}{\atermmap_{K+1}}
$$
Considering the path schema $\aschema_{\askeleton}$ built from $\askeleton$, one can show that
the sequence $\arun'$ below is an infinite run respecting $\aschema_{\askeleton}$: 
$$
\pair{\pair{\astate_0}{\atermmap_0}}{\vec{v_0}} 
\step{\pair{\aguard_{\atermmap_1}}{\update{\atransition_0}}} 
\pair{\pair{\astate_1}{\atermmap_1}}{\vec{v_1}} 
\step{\pair{\aguard_{\atermmap_2}}{\update{\atransition_1}}}
\pair{\pair{\astate_2}{\atermmap_2}}{\vec{v_2}} 
\cdots 
$$
so that $\footprint{\arun} = \footprint{\arun'}$. When entering in the last loop of $\aschema_{\askeleton}$,
counter values still evolve but the sequence of control states forms a periodic word made of
the last $\length{\aloop_k}$  control states of $\askeleton$.
By construction of $\askeleton$ and 
 $\aschema_{\askeleton}$, it is clear that $\arun$ and $\arun'$ have the same sequences of counter values
(they have actually the same sequences of updates)
and by definition of the labellings, they have also the same sequences of sets of atomic propositions.
It remains to check that $\arun'$ is indeed a run, which amounts to verify that guards are satisfied but this
is guaranteed by the way guards are defined and by the completeness result in Lemma~\ref{relation-property}(II).

\textbf{(II)} Let $\arun$ be some run respecting some $\aschema' \in \asetbis_{\aschema}$ of the form below:
$$
\pair{\pair{\astate_0}{\atermmap_0}}{\vec{v_0}} 
\step{\atransition_0, \pair{\aguard_{\atermmap_1}}{\update{\atransition_0}}} 
\pair{\pair{\astate_1}{\atermmap_1}}{\vec{v_1}} 
\step{\atransition_1, \pair{\aguard_{\atermmap_2}}{\update{\atransition_1}}}
\pair{\pair{\astate_2}{\atermmap_2}}{\vec{v_2}} 
\cdots 
$$
In the above run, we have decorated the steps by transitions from $\aschema$ as $\aschema'$ is defined
from a skeleton in which transitions are decorated by such transitions. 
\iflong
After a tedious verification, one can show that 
$$ 
\arun' = \pair{\astate_0}{\vec{v_0}} \step{\atransition_0} \pair{\astate_1}{\vec{v_1}}\step{\atransition_1} \cdots
$$
is a run respecting $\aschema$ such that $\footprint{\arun} = \footprint{\arun'}$. 
\else
After a tedious verification, one can show that 
$
\arun = \pair{\astate_0}{\vec{v_0}} \step{\atransition_0} 
\pair{\astate_1}{\vec{v_1}}\step{\atransition_1} \cdots
$
is a run respecting $\aschema$ such that $\footprint{\arun} = \footprint{\arun'}$. 
\fi 
Satisfaction of guards is guaranteed by the way $\edges'$ is defined. The fact that $\arun'$ respects
$\aschema$ is even easier to justify since all the path schemas in $\asetbis_{\aschema}$ can be viewed
as specific instances of $\aschema$ that differ in the way the term maps  evolve. Details are omitted. 

\end{proof}

 \fi 

Let $\aschema' = 
\aseg'_1 (\aloop'_1)^+ \aseg'_2 (\aloop'_2)^+ \ldots \aseg'_{k'}(\aloop'_{k'})^{\omega}$ 
be a path schema in $\asetbis_{\aschema}$ and 
$\arun$ be a run
$
\pair{\pair{\astate_0}{\atermmap_0}}{\vec{v_0}} 
\step{\pair{\aguard_{\atermmap_1}}{\update{\atransition_0}}} 
\pair{\pair{\astate_1}{\atermmap_1}}{\vec{v_1}} 
\step{\pair{\aguard_{\atermmap_2}}{\update{\atransition_1}}}
\pair{\pair{\astate_2}{\atermmap_2}}{\vec{v_2}} 
\cdots 
$ respecting $\aschema'$. 
It is easy to show that for $i \geq 0$, we have $\pi_2(\footprint{\arun}(i)) = \atermmap_i$
and $\footprint{\arun}$ is an ultimately periodic word of the form
$\aword \cdot \awordbis^{\omega}$ where
$\length{\awordbis} = \length{\aloop_{k'}'} = 
\length{\aloop_{k}}$ and
$\length{\aword} = 
(\length{\aseg_1'} + \cdots + \length{\aseg_{k'}'}) + 
(\loopsof{\aschema'}{\arun}[1] \times \length{\aloop_1'} + \cdots +  \loopsof{\aschema'}{\arun}[k'-1]
\times \length{\aloop_{k'-1}'} )$. 
As seen previously, we have $\arun, 0 \models \aformula$ iff
$\footprint{\arun}, 0 \symbmodels \aformula$. 
We also define a function $\proj$ which associates to $\aword \in  \nodistrans^{\omega}$ the $\omega$-sequence $\proj(\aword) : \Nat \rightarrow \powerset{\aset} \times I^T$ such that for all $i \in \Nat$, if $\aword(i)=\
\tuple{\tuple{\astate,\atermmap},\aguard,\anupdate,\tuple{\astate',\atermmap'}}$ 
and $\alabelling(\astate) \cap \aset=\labels$ then $\proj(\aword)(i) \egdef \tuple{\labels,\atermmap}$. 
Now, we can state the main theorem about removing disjunction in the guards by unfolding of loops.
It entails the main properties we expect from $\asetbis_{\aschema}$. 

\begin{theorem} 
\label{theorem-main-disjunction}
Given a flat counter system $\asys$, a minimal path schema $\aschema$, a set of terms $T$
including those in $\aschema$, a set of constants $B$ including those in $\aschema$ and
an initial configuration $\pair{\astate_0}{\vec{v_0}}$, 
there is a finite set of path schemas $\asetbis_{\aschema}$, such that:
\begin{enumerate}
\itemsep 0 cm
  \item No path schema in $\asetbis_{\aschema}$ contains guards with disjunctions in it. 
  \iflong
  \item For every path schema $\aschema' \in \asetbis_{\aschema}$,  its length 
        $\length{\aschema'}$ is polynomial in $\length{\aschema} + \card{T} + \card{B}$.
   \else 
    \item For every $\aschema' \in \asetbis_{\aschema}$, 
        $\length{\aschema'}$ is polynomial in $\length{\aschema} + \card{T} + \card{B}$.
   \fi 
  \item Checking whether a path schema $\aschema'$ belongs to $\asetbis_{\aschema}$ can be 
        done in polynomial time in $\size{\aschema} + \card{T} + \card{B}$.
  \item For every run $\arun$ respecting $\aschema$ and starting at $\pair{\astate_0}{\vec{v_0}}$, 
        we can find a run $\arun'$ respecting some $\aschema'\in\asetbis_{\aschema}$ such that 
         $\arun\models\aformula$ iff $\arun'\models\aformula$ for every $\aformula$ built
        over some resource $\triple{\aset}{T}{B}$.
  \item For every run $\arun'$ respecting some $\aschema'\in \asetbis_{\aschema}$ with initial 
        counter values
        $\vec{v_0}$, we can find a run $\arun$ respecting $\aschema$ such that 
        $\arun\models\aformula$ iff $\arun'\models\aformula$ for every $\aformula$ built
        over some resource $\triple{\aset}{T}{B}$.
  \item For every ultimately periodic word $\aword \cdot \awordbis^{\omega}\in \languageof{P'}$, 
      for every $\aformula$ built
        over $\aresource$ 
       checking whether $\proj(\aword \cdot \awordbis^{\omega}),0 \symbmodels \aformula$ can be done in polynomial time in the size of $\aword \cdot \awordbis$ and in the size of $\aformula$. 
\end{enumerate}
\end{theorem}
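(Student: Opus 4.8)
The plan is to take $\asetbis_{\aschema}$ to be the set of path schemas $\aschema_{\askeleton}$ obtained from the skeletons $\askeleton$ compatible with $\aschema$ and $\pair{\astate_0}{\vec{v_0}}$, built with respect to the resource $\aresource=\triple{\aset}{T}{B}$ where $\aset$ is chosen to contain the propositional variables of the formulae of interest (the choice of $\aset$ only affects the labelling $\alabelling'$, not the structure). Items~1--3 then follow essentially from the constructions already in place. For item~1, every transition of $\edges'$ carries a guard $\bigwedge_{\aterm\in T}\aformulabis(\aterm,\anupdate,\atermmap(\aterm))$, a conjunction of atomic constraints, so no $\aschema_{\askeleton}$ has a disjunction. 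For item~2, I would use the length estimate $\length{\aschema_{\askeleton}}\leq(\length{\aseg_1}+\cdots+\length{\aseg_k})+2(2\card{T}\card{B}+\card{T})(\length{\aloop_1}+\cdots+\length{\aloop_k})$ and the bound $k'\leq k(2\card{T}\card{B}+\card{T})$ recorded after the definition of skeletons, together with the fact that each guard of $\guards^{\star}(T,B,U)$ has size linear in that of $\aschema$ plus $T$ and $B$; this makes both the length and the size of every $\aschema'\in\asetbis_{\aschema}$ polynomial in $\length{\aschema}+\card{T}+\card{B}$. For item~3, by definition a path schema belongs to $\asetbis_{\aschema}$ iff it is $\aschema_{\askeleton}$ for some compatible skeleton $\askeleton$; given item~2, such a skeleton is of polynomial length, so it can be recovered from $\aschema'$ (greedily fixing an underlying transition of $\edges_{\aschema}$ for each step) and then checked by Lemma~\ref{lemma-path-check} in polynomial time in $\size{\aschema}+\card{T}+\card{B}$.

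Items~4 and~5 I would get directly from Proposition~\ref{run-equivalence} combined with Lemma~\ref{lemma-footprint-equivalence}. Given a run $\arun$ respecting $\aschema$ from $\pair{\astate_0}{\vec{v_0}}$, part~(I) of Proposition~\ref{run-equivalence} yields $\aschema'\in\asetbis_{\aschema}$ and a run $\arun'$ respecting $\aschema'$ with $\footprint{\arun}=\footprint{\arun'}$, and then Lemma~\ref{lemma-footprint-equivalence} gives $\arun,0\models\aformula$ iff $\arun',0\models\aformula$ for every $\aformula$ built over $\aresource$; item~5 is the symmetric use of part~(II), where condition \textbf{(init)} of skeletons guarantees that a run respecting $\aschema'$ starts with the prescribed initial counter values $\vec{v_0}$. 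The only point to watch is that $\aresource$ must contain every term and constant occurring in $\aformula$; this is exactly the hypothesis that $T$ and $B$ include the terms and constants of $\aschema$ (and we are free to enlarge $\aset$), so the statement is correctly quantified over all $\aformula$ built over $\triple{\aset}{T}{B}$.

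Item~6 is the part I expect to need the most care, even though it remains short. Fix an ultimately periodic $\aword\cdot\awordbis^{\omega}\in\languageof{\aschema'}$ over $\nodistrans$. By construction $\length{\awordbis}=\length{\aloop_k}$ and the length of $\aword$ is $(\length{\aseg_1'}+\cdots+\length{\aseg_{k'}'})+\sum_{i}\loopsof{\aschema'}{\aword}[i]\cdot\length{\aloop_i'}$, which may be large because of the loop iteration counts; so I would phrase the bound in terms of the size of $\aword\cdot\awordbis$ itself and of $\aformula$, as the statement does. The sequence $\proj(\aword\cdot\awordbis^{\omega})$ is an ultimately periodic element of $(\powerset{\aset}\times I^T)^{\omega}$, and evaluating $\symbmodels$ on it is plain Past LTL model-checking over an ultimately periodic model once the term maps are frozen: for each position of $\aword\cdot\awordbis$ and each atomic constraint $\aterm\sim b$ of $\aformula$, the truth value of $\aterm\sim b$ is decided by the interval-containment test of Lemma~\ref{relation-property}(I) in polynomial time, and replacing every atomic constraint by a fresh propositional variable carrying that precomputed value turns $\proj(\aword\cdot\awordbis^{\omega})$ into an ultimately periodic $\PLTL[\emptyset]$ model and $\aformula$ into an equivalent $\PLTL[\emptyset]$ formula (with the same satisfaction value at position $0$, by the very definition of $\symbmodels$). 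Running the polynomial-time algorithm of~\cite{Laroussinie&Markey&Schnoebelen02} for ultimately periodic paths against Past LTL then gives the claimed bound. The main obstacle, overall, is bookkeeping: ensuring every polynomial bound in items~2,~3 and~6 is measured against the right parameters (in particular, not against loop iteration counts, which may be exponential), and that the footprint identities produced by Proposition~\ref{run-equivalence} genuinely survive the $\proj$ abstraction used to state item~6.
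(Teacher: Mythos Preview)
Your proposal is correct and mirrors the paper's own proof essentially step for step: items~1--3 from the shape of $\edges'$ and Lemmas~\ref{lemma-bounded-skeleton} and~\ref{lemma-path-check}, items~4--5 from Proposition~\ref{run-equivalence} together with Lemma~\ref{lemma-footprint-equivalence}, and item~6 by reducing $\symbmodels$ on $\proj(\aword\cdot\awordbis^{\omega})$ to plain Past LTL on an ultimately periodic model and invoking~\cite{Laroussinie&Markey&Schnoebelen02}. One minor inaccuracy: in item~5, condition \textbf{(init)} only fixes the initial term map, not the counter values---the hypothesis ``with initial counter values $\vec{v_0}$'' is what you actually use, and that is already assumed.
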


\iflong

\begin{proof} Let $\asetbis_{\aschema}$ be the set of path schemas defined from the minimal path schema
$\aschema$.
%
\begin{enumerate}
\itemsep 0 cm 
\item For every path schema in $\asetbis_{\aschema}$, the guards on transitions are of the form
      $\bigwedge_{\aterm \in T} \aformulabis(\aterm, \anupdate,\atermmap(\aterm))$
      and each guard $\aformulabis(\aterm, \anupdate,\atermmap(\aterm))$ is itself
      an atomic guard and a conjunction of two atomic guards. 
      Hence, no path schema in $\asetbis_{\aschema}$ contains guards with disjunctions in it. 
 \item By Lemma~\ref{lemma-bounded-skeleton}, every skeleton defining a path schema 
       in $\asetbis_{\aschema}$ has polynomial length in  $\length{\aschema} + \card{T} + \card{B}$.
       Each path schema in $\asetbis_{\aschema}$ has a linear length in the length of its corresponding skeleton.
       Consequently, for  $\aschema' \in \asetbis_{\aschema}$,  its length 
        $\length{\aschema'}$ is polynomial in $\length{\aschema} + \card{T} + \card{B}$.
  \item Given a path schema $\aschema'$ in $\asetbis_{\aschema}$, one can easily identify its underlying
        skeleton $\askeleton$ by removing iteration operators such as $^+$ and $^{\omega}$ (easy at the cost of keeping track
        of transitions from $\edges_{\aschema}$). 
        By Lemma~\ref{lemma-path-check},  checking whether $\askeleton$ is compatible
        with $\aschema$ and $\pair{\astate_0}{\vec{v_0}}$  can be 
        done in polynomial time in $\size{\aschema} + \card{T} + \card{B}$.
        In particular, if $\askeleton$ is too long, this can be checked in polynomial time too. 
  \item By Proposition~\ref{run-equivalence}(I), 
        for every run $\arun$ respecting $\aschema$ and starting at $\pair{\astate_0}{\vec{v_0}}$, there 
        are $\aschema' \in\asetbis_{\aschema}$  and a run $\arun'$ respecting $\aschema'$ 
        such that $\footprint{\arun} = \footprint{\arun'}$. By 
        \iflong
        Lemma~\ref{lemma-footprint-equivalence},
        \else 
        Lemma~\ref{lemma-footprint-run},
        \fi 
        $\arun\models\aformula$ iff $\arun'\models\aformula$.
  \item Similar to (4.) by using Proposition~\ref{run-equivalence}(II).
  \item We consider an ultimately periodic word $\aword \cdot \awordbis^{\omega}\in \languageof{P'}$.
        From it we can build in linear time the 
        ultimately periodic word $\aword' \cdot \awordbis'^\omega=\proj(\aword \cdot \awordbis^{\omega})$ over 
        the alphabet $\powerset{\aset} \times I^T$ and the size of the word $\aword'$ [resp. $\awordbis'$] is linear in the size of the word $\aword$ [resp. $\aword'$].  By~\cite{Laroussinie&Markey&Schnoebelen02}, we know that 
     $\aword' \cdot \awordbis'^\omega, 0 \symbmodels \aformula$ can be checked
      in time $\mathcal{O}(\sizeof{\aformula}^2 \times 
      \length{\aword' \cdot \awordbis'})$. Indeed, $\symbmodels$ is analogous to the satisfiability
       relation for plain Past LTL.
\end{enumerate}
\end{proof}

\fi 

\fi
\section{Model-checking $\PLTL[\counters]$ over Flat Counter Systems}
\label{section-np-no-counters}
In this section, we provide a nondeterministic polynomial-time algorithm to solve
$\mc{\PLTL[\counters]}{\flatcs}$ (see Algorithm~\ref{algorithm-main}). 
\iflong
To do so, we combine
the properties of the general stuttering theorem for LTL with past-time 
operators (see Theorem~\ref{theorem-stuttering}) 
with small solutions of constraint systems.
\else
To do so, we combine
Theorem~\ref{theorem-stuttering}
with small solutions of constraint systems.
\fi 
\ifshort
\subsection{Characterizing Runs by System of Equations}
\label{system-of-equation}

\fi
\ifshort
\input{section-disjunction-short}
\fi
\ifshort
\subsection{Main Algorithm}
\fi
\iflong
In Algorithm~\ref{algorithm-main} below, 
we have chosen to perform the nondeterministic steps (guesses) at the
beginning of the algorithm only. Note that a polynomial $p^{\star}(\cdot)$ is used. 
\else
In Algorithm~\ref{algorithm-main} below, 
 a polynomial $p^{\star}(\cdot)$ is used. 
\fi 
\ifshort
In Appendix~\ref{section-calculation-for-pstar}, we explain how  $p^{\star}(\cdot)$ 
is defined (this is the place where Lemma~\ref{lemma-constraint-system}
and small solutions for constraint systems~\cite{Borosh&Treybig76} are used). 
Note that $\vec{y'}$ is a refinement of 
$\vec{y}$ (for all 
\iflong $i \in \interval{1}{k-1}$, 
\else $i$,
\fi 
we have
$\vec{y'}[i] \approx_{2 \td{\aformula} + 5} \vec{y}[i]$) in which counter values
are taken into account.%
\vspace{-0.3in}
\else
Let us explain below how 
\iflong
it 
\else
$p^{\star}(\cdot)$
\fi
is
defined. Let $\asys$ be a flat counter system, $\aconf_0 = \pair{\astate_0}{\vec{v_0}}$ be an
initial  configuration and $\aformula \in \PLTL[\counters]$. Let 
$N = \size{\asys} + \allowbreak \size{\pair{\astate_0}{\vec{v_0}}}\allowbreak + \size{\aformula}$.
Let $\aschema$ be a minimal path schema of $\asys$. We have:
\begin{itemize}
\itemsep 0 cm
\item $\length{\aschema}  \leq 2 \times \card{\edges} \leq 2N$,
\item $\nbloops{\aschema} \leq \card{\states} \leq N$.
\end{itemize}
Let $T$ be the set of terms $\aterm$ occurring in $\asys$ and $\aformula$ in guards of the form
$\aterm \sim b$. We have $\card{T} \leq  \size{\asys} + \size{\aformula} \leq N$.
Let $B$ be the set of constants $b$ occurring in $\asys$ and 
$\aformula$ in guards of the form $\aterm \sim b$. We have $\card{B} \leq  
\size{\asys} + \size{\aformula} \leq N$. Let $\aresource = \triple{\aset}{T}{B}$ be
the resource such that $\aset$ is the finite set of propositional variables occurring in $\aformula$.

Let $MAX$ be the maximal absolute value of a constant occurring in $\asys$, $\aformula$, $\vec{v_0}$
(either as an element of $B$ or as a coefficient in front of a counter as a value in $\vec{v_0}$). 
We have $MAX \leq 2^N$.

Now, let $\aschema'$ be a path schema in $\asetbis_{\aschema}$ with
$\aschema' = \aseg_1 (\aloop_1)^+ \aseg_2 (\aloop_2)^+ \cdots \aseg_{k} (\aloop_{k})^{\omega}$. 
Since $\length{\aschema'}\leq (\length{\aseg_1} + \cdots + \length{\aseg_k})
+ 2 \times ( 2 \times \card{T} \times \card{B} + \card{T}) 
\times  (\length{\aloop_1} + \cdots + \length{\aloop_k})$,
we have $\length{\aschema'} \leq 5 \times \card{T} \times \card{B} \times \length{\aschema} \leq 5 N^3$. 
Similarly, $\nbloops{\aschema'} \leq 5 N^3$. The number of guards occurring in $\aschema'$
is bounded by $\length{\aschema'} \times 2 \times \card{T} \leq 10 \times N^4$. 
The maximal constant $MAX'$ occurring in $\aschema'$ is bounded by $MAX + n \times MAX^2$ which
is bounded by $N \times 2^{2 \times N}$. 
Let $\aconstraintsystem$ be the constraint system defined from $\aschema'$.
\begin{itemize}
\itemsep 0 cm
\item The number of variables is equal to $\nbloops{\aschema'}$ which is bounded by $5 N^3$.
\item The number of conjuncts is bounded by 
      $ 2 \times \length{\aschema'} \times n \times (1 + N_1)$ where $N_1$ is the 
      number of atomic guards in $\aschema'$. 
      Hence, this number is bounded by $2 \times  5 N^3 \times N \times (1 + 10 \times N^4) 
      \leq 110 N^8$. 
\item The greatest absolute value from constants in  $\aconstraintsystem$
      is bounded by $n \times \nbloops{\aschema'} \times (MAX')^4 \times \length{\aschema'}^3$,
      which is bounded by $N (5 N^3) (N \times 2^{2 \times N})^4 \times 5^3 N^9 \leq
      625 \times N^{17} \times 2^{8 \times N}$.   
\end{itemize}

Let us show that $\aconstraintsystem \wedge \aformulabis_1 \wedge \cdots
       \wedge \aformulabis_{k-1}$ admits a small solution using the theorem below
for any  $\aformulabis_1 \wedge \cdots
       \wedge \aformulabis_{k-1}$ built from  Algorithm~\ref{algorithm-main}.

\begin{theorem}
\label{theorem:Borosh-Treybig}
\cite{Borosh&Treybig76}
Let $\amatrix \in \interval{-M}{M}^{U \times V}$
and $\vect{b} \in \interval{-M}{M}^U$,
where $U, V, M \in \Nat$.
If there is $\vect{x} \in \Nat^V$
such that $\amatrix \vect{x} \geq \vect{b}$,
then there is $\vect{y} \in [0, (\max \{V, M\})^{\mathsf{C} U}]^V$
such that $\amatrix \vect{y} \geq \vect{b}$,
where $\mathsf{C}$ is some constant.
\end{theorem}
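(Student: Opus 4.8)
The plan is to reduce the statement to a system of \emph{equalities} and then to exhibit a small nonnegative integer solution by a minimality-plus-pigeonhole argument.

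\emph{Step 1: from inequalities to equalities.} Given $\amatrix \vect{x} \geq \vect{b}$ with $\vect{x} \in \Nat^V$, introduce slack variables $\vect{s} = \amatrix\vect{x} - \vect{b}$ and pass to the equality system $\amatrix' \vect{z} = \vect{b}$ with $\amatrix' = [\,\amatrix \mid -\mathrm{Id}_U\,]$, which has $U$ rows, $V+U$ columns, and entries still bounded by $M$ in absolute value (assuming $M\geq 1$). A vector $\vect{z}=(\vect{x},\vect{s})\in\Nat^{V+U}$ solves it iff $\amatrix\vect{x}\geq\vect{b}$, and its first $V$ components are the desired $\vect{y}$. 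Deleting linearly dependent rows does not change the solution set, so we may assume $\amatrix'$ has full row rank $r\leq U$.

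\emph{Step 2: take a minimal solution and confine the walk.} Among all $\vect{z}\in\Nat^{V+U}$ with $\amatrix'\vect{z}=\vect{b}$ (a nonempty set), pick $\vect{z}^{\star}$ minimizing $\sum_i \vect{z}^{\star}[i]=:N$. View $\vect{z}^{\star}$ as a multiset of $N$ columns of $\amatrix'$, each lying in the $r$-dimensional column space with $\ell_\infty$-norm $\leq M$, whose sum is $\vect{b}$. By the Steinitz rearrangement lemma (in its $\ell_\infty$ form, whose constant is the dimension $r$) we may order these columns so that each of the $N{+}1$ partial sums stays within $\ell_\infty$-distance $rM$ of the segment $[\vect{0},\vect{b}]$; since $\|\vect{b}\|_\infty\leq M$, all partial sums then lie in the box $\interval{-(r{+}1)M}{(r{+}1)M}^{r}$, which contains at most $(3rM)^{r}$ lattice points.

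\emph{Step 3: pigeonhole and bookkeeping.} If $N>(3rM)^{r}$, two partial sums coincide, and the columns strictly between them form a vector $\vect{u}$ with $\vect{0}\leq\vect{u}\leq\vect{z}^{\star}$, $\vect{u}\neq\vect{0}$, and $\amatrix'\vect{u}=\vect{0}$; then $\vect{z}^{\star}-\vect{u}$ is a strictly smaller nonnegative integer solution, contradicting minimality. Hence $N\leq(3rM)^{r}$, so every component of $\vect{z}^{\star}$, in particular of $\vect{y}$, is at most $(3UM)^{U}$; folding the factor $3U$ into the base (and, when $U$ dominates $\max\{V,M\}$, first replacing the system by an equivalent one whose number of essential constraints is $O(\min\{U,V\})$ via an integer-hull/vertex argument) yields the bound $(\max\{V,M\})^{\mathsf{C}U}$ for a suitable absolute constant $\mathsf{C}$.

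The main obstacle is the confinement in Step 2: a priori the partial sums could wander through a box of side $\sim N\cdot M$, defeating the pigeonhole, so one genuinely needs the Steinitz lemma (or an equivalent Carath\'eodory-type rearrangement). A competing route is to first extract a small \emph{rational} solution supported on $\leq r$ independent columns, with denominators bounded by an $r\times r$ subdeterminant via Cramer's rule, and then bootstrap to an integer solution; but the rational-to-integer step is itself delicate, so the minimal-solution argument above looks the most economical, the remaining work being purely the numerical bookkeeping needed to match the exact form of the stated bound.
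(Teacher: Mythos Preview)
The paper does not prove this theorem: it is quoted verbatim from \cite{Borosh&Treybig76} and used as a black box (see the line immediately preceding the theorem statement and its application right after). There is therefore no ``paper's own proof'' to compare your attempt against.

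On the merits of your sketch: the strategy---slack variables, pick an $\ell_1$-minimal nonnegative integer solution, rearrange the column multiset via the Steinitz lemma so that partial sums live in a box of side $O(rM)$, then pigeonhole to contradict minimality---is a legitimate and well-known route to polynomial bounds of this flavour (it is essentially the argument behind the Papadimitriou/Eisenbrand--Weismantel style bounds). Steps 1 and 2 are fine; the Steinitz constant in $\ell_\infty$ is indeed the ambient dimension, and the nonzero-sum variant you invoke follows from the zero-sum case by the standard centring trick.

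The one soft spot is the very last clause of Step~3. Your argument delivers a bound of the shape $(c\,U M)^{U}$. Folding this into $(\max\{V,M\})^{\mathsf{C}U}$ is unproblematic when $M\geq U$ (which is the only regime the paper actually needs, since in its application the constants are exponential while $U,V$ are polynomial), but when $U$ dominates both $V$ and $M$ your parenthetical ``integer-hull/vertex argument'' is doing real work that you have not spelled out: after adding slacks, $[\amatrix\mid -\mathrm{Id}_U]$ has full row rank $U$, so you cannot simply discard rows, and a genuine reduction of the effective number of constraints to $O(V)$ requires a separate argument (e.g.\ via Carath\'eodory on the cone of tight constraints at a vertex). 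If you want the statement exactly as written, that gap should be closed; for the paper's purposes, however, the bound you actually establish already suffices.
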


By Theorem~\ref{theorem:Borosh-Treybig}, $\aconstraintsystem \wedge \aformulabis_1 \wedge \cdots
       \wedge \aformulabis_{k-1}$ has a solution iff $\aconstraintsystem \wedge \aformulabis_1 \wedge \cdots
       \wedge \aformulabis_{k-1}$
has a solution whose counter values are bounded by
$$
(625 \times N^{17} \times 2^{8 \times N})^{\mathsf{C} \times  2 \times (110 \times N^8 + 5 \times N^3)}
$$
which can be easily shown to be bounded by $2^{p^{\star}(N)}$ for some polynomial $p^{\star}(\cdot)$ (of degree 9). 
This is precisely, the polynomial  $p^{\star}(\cdot)$ that is used in Algorithm~\ref{algorithm-main}
(for obvious reasons). 
In order to justify the coefficient $2$ before $110$, note that any constraint of the 
form $\sum_{i} \afactor_i \avariablebis_i \sim \aconstant$ 
with $\sim \in \set{=,\leq,\geq,<,>}$ can be equivalently replaced by 1 or 2 atomic 
constraints of the form  $\sum_{i} \afactor_i \avariablebis_i \geq \aconstant$.
 
\fi 
\begin{algorithm}
{\footnotesize
\caption{The main algorithm in \np \ with inputs $\asys$, 
$\aconf_0 = \pair{\astate}{\vec{v_0}}$, $\aformula$}
\label{algorithm-main}
\begin{algorithmic}[1]
\STATE guess a minimal  path schema $\aschema$
       of $\asys$ 
\STATE build a resource $\aresource = \triple{\aset}{T}{B}$ coherent with $\aschema$ and $\aformula$
\STATE guess a valid \iflong path \fi  schema  $\aschema' = 
       \aseg_1 \aloop_1^+ \aseg_2 \aloop_2^+ \ldots \aseg_k \aloop_k^\omega$ 
       such that {\footnotesize $\length{\aschema'} \leq q^{\star}(\length{\aschema} + \card{T} + \card{B})$}
\iflong 
\STATE guess $\vec{y} \in \interval{1}{2 \td{\aformula} + 5}^{k-1}$
\STATE guess $\vec{y'} \in \interval{1}{2^{p^{\star}(\size{\asys} + \size{\aconf_0} +  \size{\aformula})}}^{k-1}$ 
\else
\STATE guess $\vec{y} \in \interval{1}{2 \td{\aformula} + 5}^{k-1}$;
       guess $\vec{y'} \in \interval{1}{2^{p^{\star}(\size{\asys} + \size{\aconf_0} +  \size{\aformula})}}^{k-1}$ 
\fi 
\STATE check that $\aschema'$ belongs to $\asetbis_\aschema$ 
\STATE check that $\proj(\aseg_1 \aloop_1^{\vec{y}[1]} \aseg_2 \aloop_2^{\vec{y}[2]} \ldots 
       \aloop_{k-1}^{\vec{y}[k-1]} 
       \aseg_k \aloop_k^\omega), 0 \symbmodels \aformula$ 
\STATE  build \iflong the constraint system \fi $\aconstraintsystem$ 
        over \iflong the variables \fi $\avariablebis_1$, \ldots, $\avariablebis_{k-1}$
        for $\aschema'$ with initial
        \iflong counter \fi  values $\vec{v_0}$ (obtained from Lemma~\ref{lemma-constraint-system})
\FOR{$i = 1 \to k-1$} 
\iflong
\IF{$\vec{y}[i] = 2 \td{\aformula} + 5$}
   \STATE $\aformulabis_i \gets ``\avariablebis_i \geq 2 \td{\aformula} + 5"$
   \ELSE
   \STATE  $\aformulabis_i \gets ``\avariablebis_i = \vec{y}[i]"$
\ENDIF
\else
\STATE {\bf if} $\vec{y}[i] = 2 \td{\aformula} + 5$ 
{\bf then} $\aformulabis_i \gets ``\avariablebis_i \geq 2 \td{\aformula} + 5"$
{\bf else}  $\aformulabis_i \gets ``\avariablebis_i = \vec{y}[i]"$
\fi 
\ENDFOR
\STATE check that $\vec{y'} \models \aconstraintsystem \wedge \aformulabis_1 \wedge \cdots
       \wedge \aformulabis_{k-1}$
\end{algorithmic}
}
\end{algorithm}

Algorithm~\ref{algorithm-main} starts by guessing a path schema  $\aschema$ (line 1)
and an unfolded path schema $\aschema'= \aseg_1 \aloop_1^+ \aseg_2 \aloop_2^+ \ldots \aseg_k \aloop_k^\omega$ (line 3)
and check whether $\aschema'$ belongs to  $\asetbis_\aschema$ (line 5). 
It remains to check whether there is a run $\arun$ respecting $\aschema'$ such that
$\arun \models \aformula$. Suppose there is such a run $\arun$;  
let $\vec{y}$ be the unique tuple in $\interval{1}{2 \td{\aformula}+5}^{k-1}$ such that
$\vec{y} \approx_{2 \td{\aformula} + 5} \loopsof{\aschema'}{\arun}$. 
By Proposition~\ref{proposition-iter-flatks}, we have
 $\proj(\aseg_1 \aloop_1^{\vec{y}[1]} \aseg_2 \aloop_2^{\vec{y}[2]} \ldots 
       \aloop_{k-1}^{\vec{y}[k-1]} 
       \aseg_k \aloop_k^\omega), 0 \symbmodels \aformula$.
Since the set of tuples of the form $\loopsof{\aschema'}{\arun}$ is characterized by a system of equations,
by the existence of  small solutions from~\cite{Borosh&Treybig76}, we can assume that $\loopsof{\aschema'}{\arun}$
contains only small values. 
Hence line 4 guesses  $\vec{y}$ and $\vec{y'}$ (corresponding to  $\loopsof{\aschema'}{\arun}$ 
with small values). Line 6 precisely checks $\proj(\aseg_1 \aloop_1^{\vec{y}[1]} \aseg_2 \aloop_2^{\vec{y}[2]} \ldots 
       \aloop_{k-1}^{\vec{y}[k-1]} 
       \aseg_k \aloop_k^\omega), 0 \symbmodels \aformula$
whereas line 11 checks whether $\vec{y'}$ encodes a run respecting $\aschema'$ 
with $\vec{y'} \approx_{2 \td{\aformula} + 5} \vec{y}$. 
\begin{lemma} 
\label{lemma-main-algo-in-np}
Algorithm~\ref{algorithm-main} runs in nondeterministic polynomial time. 
\end{lemma}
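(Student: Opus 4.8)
(plan)
The plan is to walk through Algorithm~\ref{algorithm-main} step by step, bounding in each case the bit-size of the data guessed and the running time of each deterministic check, and to observe that every one of these bounds is polynomial in $N := \size{\asys} + \size{\aconf_0} + \size{\aformula}$; since there are only constantly many steps plus one loop of polynomially many iterations, this yields the $\np$ upper bound.

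First I would handle the guessing steps. By Lemma~\ref{lemma-schemata-finite} every minimal path schema $\aschema$ of $\asys$ satisfies $\length{\aschema} \leq 2 \times \card{\edges} \leq 2N$, so $\aschema$ is guessed with polynomially many bits, and the resource $\aresource = \triple{\aset}{T}{B}$ coherent with $\aschema$ and $\aformula$ is then computed deterministically in polynomial time, with $\card{T}, \card{B} \leq N$. The schema $\aschema'$ is guessed under the a priori length bound $\length{\aschema'} \leq q^{\star}(\length{\aschema} + \card{T} + \card{B})$ given by Theorem~\ref{theorem-main-disjunction}(2), so both $\aschema'$ and its number of loops $k$ have polynomial size. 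The vector $\vec{y}$ lies in $\interval{1}{2 \td{\aformula} + 5}^{k-1}$, each component of magnitude linear in $\size{\aformula}$ and the number of components polynomial; and $\vec{y'}$ lies in $\interval{1}{2^{p^{\star}(N)}}^{k-1}$, each component of \emph{bit}-length at most $p^{\star}(N)$, which is polynomial because $p^{\star}$ is a polynomial (the concrete degree-$9$ bound derived in this section from Lemma~\ref{lemma-constraint-system} together with the small-solution theorem, Theorem~\ref{theorem:Borosh-Treybig}). Hence all the guessed data is of polynomial total size.

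Then I would turn to the deterministic checks. Membership $\aschema' \in \asetbis_{\aschema}$ is decidable in polynomial time by Theorem~\ref{theorem-main-disjunction}(3), and validity of $\aschema'$ in polynomial time as recorded in Section~\ref{section-characterization}. For the model-checking test, the word $\aseg_1 \aloop_1^{\vec{y}[1]} \aseg_2 \aloop_2^{\vec{y}[2]} \cdots \aloop_{k-1}^{\vec{y}[k-1]} \aseg_k \aloop_k^{\omega}$ is ultimately periodic with a prefix of length at most $(2 \td{\aformula} + 5) \times \length{\aschema'}$ and period $\length{\aloop_k} \leq \length{\aschema'}$, hence of polynomial size; computing $\proj$ of it costs linear time, and deciding $\proj(\cdots), 0 \symbmodels \aformula$ is polynomial by Theorem~\ref{theorem-main-disjunction}(6) (this is model-checking Past LTL over an ultimately periodic word, cf.~\cite{Laroussinie&Markey&Schnoebelen02}). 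Building the constraint system $\aconstraintsystem$ takes polynomial time and yields a quantifier-free Presburger formula with polynomially many variables and conjuncts and constants of polynomial bit-length, by Lemma~\ref{lemma-constraint-system}. The loop produces the conjuncts $\aformulabis_1, \ldots, \aformulabis_{k-1}$, each of constant size, in polynomial time. The final test evaluates the fixed-structure formula $\aconstraintsystem \wedge \aformulabis_1 \wedge \cdots \wedge \aformulabis_{k-1}$ of polynomial size at the polynomial-bit-length integer point $\vec{y'}$, which is again polynomial. Adding up the costs gives the claim.

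The one step that is more than bookkeeping is the symbolic model-checking test: one must be certain that the finite object handed to the Past-LTL checker has polynomial size, which is precisely why $\vec{y}$ is drawn from the small cube $\interval{1}{2 \td{\aformula} + 5}^{k-1}$ rather than from the exponentially large range of $\vec{y'}$, and why Proposition~\ref{proposition-iter-flatks} was invoked to license that separation; the polynomial-time bound for $\symbmodels$ on such a word is then supplied by Theorem~\ref{theorem-main-disjunction}(6). All remaining verifications are routine size and arithmetic estimates.
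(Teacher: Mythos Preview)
Your proposal is correct and follows essentially the same approach as the paper's own proof: both walk through Algorithm~\ref{algorithm-main} step by step, bounding the size of each guessed object (minimal $\aschema$, unfolded $\aschema'$, $\vec{y}$, $\vec{y'}$) and the running time of each deterministic check (membership in $\asetbis_{\aschema}$, the symbolic $\symbmodels$ test via Theorem~\ref{theorem-main-disjunction}(6), construction of $\aconstraintsystem$, and evaluation at $\vec{y'}$), invoking the same supporting results. If anything, your write-up is slightly more explicit about why the separation of $\vec{y}$ and $\vec{y'}$ is needed for the model-checking step to stay polynomial.
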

\ifshort
\else

\begin{proof} First, let us check that all the guesses can be done in 
polynomial time.
\begin{itemize}
\itemsep 0 cm 
\item A minimal path schema $\aschema$ of $\asys$ is of polynomial size
with respect to  the size of $\asys$.
\item The path schema $\aschema'$ is of polynomial size
with respect to the size of $\aschema$, $\aformula$ and $\aconf_0$
(Theorem~\ref{theorem-main-disjunction}(2)).
\item $\vec{y}$ and $\vec{y'}$ are obviously of polynomial size since
their components have values bounded by some exponential expression
(values in $\vec{y}$ can be much smaller than the values in $\vec{y'}$).
\end{itemize}
Now, let us verify that all the checks can be in done in polynomial
time too. 
\begin{itemize}
\itemsep 0 cm
\item Both $\aschema$ and $\aschema'$ are in polynomial size with respect
to the size of the inputs and checking compatibility amounts to verify
that  $\aschema'$ is an unfolding of $\aschema$, which can be done
in polynomial time (see Lemma~\ref{lemma-path-check}).
\item Checking whether  
      $\proj(\aseg_1 \aloop_1^{\vec{y}[1]} \aseg_2 \aloop_2^{\vec{y}[2]} \ldots 
       \aloop_{k-1}^{\vec{y}[k-1]} 
       \aseg_k \aloop_k^\omega), 0 \symbmodels \aformula$ can be done in
polynomial time using Theorem~\ref{theorem-main-disjunction}(6) since 
$\aseg_1 \aloop_1^{\vec{y}(1)} \aseg_2 \aloop_2^{\vec{y}(2)} \ldots \aloop_{k-1}^{\vec{y}(k-1)} \allowbreak \aseg_k \aloop_k$ is of polynomial size with respect to the size of $\aschema'$ and $\aformula$.
%
%
%
\item Building $\aconstraintsystem \wedge \aformulabis_1 \wedge \cdots
       \wedge \aformulabis_{k-1}$
      can be done in polynomial time since 
      $\aconstraintsystem$ can be built in polynomial time with respect
      to the size of $\aschema'$ \iflong (see Section~\ref{section-characterization}) \fi 
      and  $\aformulabis_1 \wedge \cdots
       \wedge \aformulabis_{k-1}$ can be built in 
      polynomial time with respect
      to the size of $\aformula$ ($\td{\aformula} \leq \sizeof{\aformula}$).
\item $\vec{y'} \models \aconstraintsystem \wedge \aformulabis_1 \wedge \cdots
       \wedge \aformulabis_{k-1}$ can be finally checked in polynomial time
     since the values in $\vec{y'}$ are of exponential magnitude
     and the combined constraint system is of polynomial size.
\end{itemize}
\end{proof}

\fi 
It remains to check that Algorithm~\ref{algorithm-main} is correct, which is stated below.
\begin{lemma} 
\label{lemma-correctness-main-algorithm}
$\asys, \aconf_0 \models \aformula$ iff 
Algorithm~\ref{algorithm-main} on inputs $\asys$, $\aconf_0$, $\aformula$
has an accepting run.
\end{lemma}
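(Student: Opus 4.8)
The plan is to prove the two implications separately, passing in each case between runs of $\asys$, runs of path schemas in $\asetbis_{\aschema}$, and their footprints, and invoking the stuttering machinery on footprints (where the arithmetical constraints of $\aformula$ are read as plain propositions over the finite alphabet $\powerset{\aset}\times I^{T}$).

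For the ``only if'' direction, suppose $\asys,\aconf_0\models\aformula$, witnessed by a run $\arun$ starting at $\aconf_0=\pair{\astate_0}{\vec{v_0}}$ with $\arun,0\models\aformula$. First I would apply Corollary~\ref{corollary-ps-cover-runs} to get a minimal path schema $\aschema$ of $\asys$, starting at $\astate_0$, that $\arun$ respects; this is the schema guessed at line~1, and line~2 builds a resource $\aresource=\triple{\aset}{T}{B}$ containing all terms and constants of $\aschema$ and $\aformula$, so that $\aformula$ is built over $\aresource$. Then Theorem~\ref{theorem-main-disjunction}(4) yields $\aschema'=\aseg_1\aloop_1^+\cdots\aseg_k\aloop_k^{\omega}\in\asetbis_{\aschema}$ and a run $\arun_1$ respecting $\aschema'$ with $\arun_1,0\models\aformula$; by Theorem~\ref{theorem-main-disjunction}(1)--(2), $\aschema'$ has no disjunctive guards and polynomial length, so the size test of line~3 passes for a suitable $q^{\star}$, and $\aschema'$ is valid since $\arun_1$ is infinite. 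Set $\vec{m}=\loopsof{\aschema'}{\arun_1}\in(\nat\setminus\set{0})^{k-1}$ and let $\vec{y}\in\interval{1}{2\td{\aformula}+5}^{k-1}$ be the unique tuple with $\vec{y}\equivrel{2\td{\aformula}+5}\vec{m}$ (componentwise $\vec{y}[i]=\mathtt{Min}(\vec{m}[i],2\td{\aformula}+5)$); these are guessed at line~4. Line~5 holds by construction. For lines~7--11, note $\vec{m}\models\aconstraintsystem$ by Lemma~\ref{lemma-constraint-system} (since $\vec{m}\in\loopsof{\aschema'}{\aconf_0}$) and $\vec{m}\models\aformulabis_1\wedge\cdots\wedge\aformulabis_{k-1}$ by the choice of $\vec{y}$; hence $\aconstraintsystem\wedge\aformulabis_1\wedge\cdots\wedge\aformulabis_{k-1}$ is satisfiable, so by Theorem~\ref{theorem:Borosh-Treybig} and the choice of $p^{\star}$ it has a solution $\vec{y'}\in\interval{1}{2^{p^{\star}(\cdot)}}^{k-1}$, which we guess at line~4. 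Finally, line~6: since $\footprint{\arun_1}=\proj(\aseg_1\aloop_1^{\vec{m}[1]}\cdots\aseg_k\aloop_k^{\omega})$ and $\arun_1,0\models\aformula$, Lemma~\ref{lemma-footprint-run} gives $\footprint{\arun_1},0\symbmodels\aformula$; applying Proposition~\ref{proposition-iter-flatks} to $\aschema'$ under the footprint labelling (using $\vec{y}\equivrel{2\td{\aformula}+5}\vec{m}$) yields $\proj(\aseg_1\aloop_1^{\vec{y}[1]}\cdots\aseg_k\aloop_k^{\omega}),0\symbmodels\aformula$, which is exactly line~6. Hence the algorithm accepts.

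For the ``if'' direction, fix guessed $\aschema,\aschema',\vec{y},\vec{y'}$ passing all checks. From line~11, $\vec{y'}\models\aconstraintsystem$, so Lemma~\ref{lemma-constraint-system} gives a run $\arun_1$ respecting $\aschema'$, starting at $\aconf_0$, with $\loopsof{\aschema'}{\arun_1}=\vec{y'}$; and $\vec{y'}\models\aformulabis_1\wedge\cdots\wedge\aformulabis_{k-1}$ gives $\vec{y'}\equivrel{2\td{\aformula}+5}\vec{y}$. Then $\footprint{\arun_1}=\proj(\aseg_1\aloop_1^{\vec{y'}[1]}\cdots\aseg_k\aloop_k^{\omega})$, so by Proposition~\ref{proposition-iter-flatks} applied as above and line~6, $\footprint{\arun_1},0\symbmodels\aformula$, whence $\arun_1,0\models\aformula$ by Lemma~\ref{lemma-footprint-run}. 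Since $\aschema'\in\asetbis_{\aschema}$ (line~5), Theorem~\ref{theorem-main-disjunction}(5) produces a run $\arun$ respecting $\aschema$ with initial counter values $\vec{v_0}$ — hence starting at $\aconf_0$ — such that $\arun,0\models\aformula$. Therefore $\asys,\aconf_0\models\aformula$.

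The main obstacle I anticipate is the bookkeeping at the symbolic layer: one must check that for a run $\arun_1$ respecting a path schema $\aschema'\in\asetbis_{\aschema}$ the footprint $\footprint{\arun_1}$ is precisely $\proj$ of the generating word $\aseg_1\aloop_1^{\loopsof{\aschema'}{\arun_1}[1]}\cdots\aseg_k\aloop_k^{\omega}$ (so that the loop counts visible to the stuttering argument are exactly the $\loopsof{\aschema'}{\cdot}$ components and the periodic block has length $\length{\aloop_k}$), that $\symbmodels$ on this footprint coincides with $\models$ on $\arun_1$ via Lemma~\ref{lemma-footprint-run}, and that Proposition~\ref{proposition-iter-flatks} — stated for $\PLTL[\emptyset]$ over counter systems — legitimately applies to $\aschema'$ equipped with the finite abstract-proposition labelling $\powerset{\aset}\times I^{T}$. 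Once these identifications are made, the statement is a routine chaining of Corollary~\ref{corollary-ps-cover-runs}, Theorem~\ref{theorem-main-disjunction}, Lemma~\ref{lemma-constraint-system}, Proposition~\ref{proposition-iter-flatks} and Theorem~\ref{theorem:Borosh-Treybig}.
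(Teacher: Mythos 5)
Your proposal is correct and follows essentially the same route as the paper's proof: both directions chain Corollary~\ref{corollary-ps-cover-runs}, Theorem~\ref{theorem-main-disjunction}(4)/(5), Lemma~\ref{lemma-constraint-system}, the small-solution bound behind $p^{\star}$, Proposition~\ref{proposition-iter-flatks} applied at the footprint level, and Lemma~\ref{lemma-footprint-run} in exactly the way the paper does. (Your componentwise $\mathtt{Min}$ definition of $\vec{y}$ is in fact the intended reading of the paper's ``$max$'', which is a typo there.)
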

In the proof of Lemma~\ref{lemma-correctness-main-algorithm}, we take advantage of all our preliminary
results. 
\begin{proof}
\ifshort
By way of example, we show that 
 if Algorithm~\ref{algorithm-main} on inputs 
$\asys$, $\aconf_0 = \pair{\astate_0}{\vec{v_0}}$, $\aformula$ has an accepting computation, then 
 $\asys, \aconf_0 \models \aformula$.
\else
 First, let us show that if Algorithm~\ref{algorithm-main} on inputs 
$\asys$, $\aconf_0 = \pair{\astate_0}{\vec{v_0}}$, $\aformula$ has an accepting computation, then 
 $\asys, \aconf_0 \models \aformula$. 
\fi 
This means that there 
are $\aschema$, $\aschema'$, $\vec{y}$, $\vec{y'}$ that satisfy all the checks.
 Let $\aword = 
\aseg_1 \aloop_1^{\vec{y'}[1]} \cdots \aseg_{k-1} \aloop_{k-1}^{\vec{y'}[k-1]} 
 \aseg_{k} \aloop_k^{\omega}$ and
$\arun = \pair{\pair{\astate_0}{\atermmap_0}}{\vec{v_0}}
\pair{\pair{\astate_1}{\atermmap_1}}{\vec{x_1}} 
\pair{\pair{\astate_2}{\atermmap_2}}{\vec{x_2}} \cdots \in (\states' \times \Zed^n)^{\omega}$
be defined as follows: 
\iflong
\begin{itemize}
\itemsep 0 cm 
\item For every $i \geq 0$, $\astate_i \egdef \pi_1(\source{\aword(i)})$,
\item $\vec{x_0} \egdef \vec{v_0}$ and for every $i \geq 1$, we have
      $\vec{x_i} \egdef \vec{x_{i-1}} + \update{\aword(i)}$. 
\end{itemize}
\else
for every $i \geq 0$, $\astate_i \egdef \pi_1(\source{\aword(i)})$,
and for every $i \geq 1$, we have
      $\vec{x_i} \egdef \vec{x_{i-1}} + \update{\aword(i)}$. 
\fi 
By Lemma~\ref{lemma-constraint-system}, since $\vec{y'} \models \aconstraintsystem  \wedge
\aformulabis_1 \wedge \cdots
       \wedge \aformulabis_{k-1}$, 
$\arun$ is a run respecting $\aschema'$
starting at the configuration $\pair{\pair{\astate_0}{\atermmap_0}}{\vec{v_0}}$. 
Since $\vec{y'} \models \aformulabis_1 \wedge \cdots
       \wedge \aformulabis_{k-1}$ and $\vec{y} \models \aformulabis_1 \wedge \cdots
       \wedge \aformulabis_{k-1}$, by Proposition~\ref{proposition-iter-flatks},
\iflong
the propositions below are equivalent:
\begin{itemize}
\itemsep 0 cm
\item[($\maltese$)]
$\proj(\aseg_1 \aloop_1^{\vec{y}[1]} \aseg_2 \aloop_2^{\vec{y}[2]} \ldots 
       \aloop_{k-1}^{\vec{y}[k-1]} 
       \aseg_k \aloop_k^\omega), 0 \symbmodels \aformula$,
\item[($\maltese \maltese$)] 
$\proj(\aseg_1 \aloop_1^{\vec{y'}[1]} \aseg_2 \aloop_2^{\vec{y'}[2]} \ldots 
       \aloop_{k-1}^{\vec{y'}[k-1]} 
       \aseg_k \aloop_k^\omega), 0 \symbmodels \aformula$. 
\end{itemize}
\else
($\maltese$) $\proj(\aseg_1 \aloop_1^{\vec{y}[1]} \aseg_2 \aloop_2^{\vec{y}[2]} \ldots 
       \aloop_{k-1}^{\vec{y}[k-1]} 
       \aseg_k \aloop_k^\omega), 0 \symbmodels \aformula$,
iff ($\maltese \maltese$)
$\proj(\aseg_1 \aloop_1^{\vec{y'}[1]} \aseg_2 \aloop_2^{\vec{y'}[2]} \ldots 
       \aloop_{k-1}^{\vec{y'}[k-1]} 
       \aseg_k \aloop_k^\omega), 0 \symbmodels \aformula$. 
\fi 
\iflong Line 6 from \fi 
Algorithm~\ref{algorithm-main} guarantees that 
$\proj(\aseg_1 \aloop_1^{\vec{y}[1]} \aseg_2 \aloop_2^{\vec{y}[2]} \ldots 
       \aloop_{k-1}^{\vec{y}[k-1]} 
       \aseg_k \aloop_k^\omega), 0 \symbmodels \aformula$, whence we have
 ($\maltese \maltese$).
        Since $\proj(\aseg_1 \aloop_1^{\vec{y'}[1]} \aseg_2 \aloop_2^{\vec{y'}[2]} \ldots 
       \aloop_{k-1}^{\vec{y'}[k-1]} 
       \aseg_k \aloop_k^\omega)=\footprint{\arun}$, by Lemma~\ref{lemma-footprint-run}, we deduce that $\arun, 0 \models \aformula$.
%
%
%
%
By Theorem~\ref{theorem-main-disjunction}(5), 
there is an infinite 
run $\arun'$, starting at the configuration $\pair{\astate_0}{\vec{v_0}}$ and respecting $\aschema$, 
such that $\arun', 0 \models \aformula$. 

Now, suppose that  $\asys, \aconf_0 \models \aformula$. We shall show that there exist
$\aschema$, $\aschema'$, $\vec{y}$, $\vec{y'}$ that allow to build an accepting  computation of 
Algorithm~\ref{algorithm-main}. 
There is a run $\arun$ starting at $\aconf_0$ such that $\arun, 0 \models \aformula$.
By Corollary~\ref{corollary-ps-cover-runs}, $\arun$ respects some minimal
path schema of $\asys$, say $\aschema$. 
By Theorem~\ref{theorem-main-disjunction}(4), 
there is a path schema $\aschema' = \aseg_1 \aloop_1^+ \aseg_2 \aloop_2^+ \ldots
 \aseg_k \aloop_k^\omega$ in $\asetbis_{\aschema}$ for which there is a run 
$\arun'$ satisfying $\aformula$. Furthermore, since $\aschema' \in \asetbis_\aschema$, $\length{\aschema'}\leq q^{\star}(\length{\aschema} + \card{T} + \card{B})$ for some polynomial $q^{\star}(\cdot)$. 
From $\loopsof{\aschema'}{\arun'} \in (\Nat \setminus \set{0})^{k-1}$,
for every $i \in \interval{1}{k-1}$, we consider $\aformulabis_i$ such that
$\aformulabis_i$ is equal to $\avariablebis_i = \loopsof{\aschema'}{\arun'}[i]$ 
if $\loopsof{\aschema'}{\arun'}[i] \leq 2 \td{\aformula} + 5$, otherwise
$\aformulabis_i$ is equal to $\avariablebis_i \geq 2 \td{\aformula} + 5$. 
Since $\aschema'$ admits at least one infinite run $\arun'$
such that $\loopsof{\aschema'}{\arun'}$ satisfies  $\aformulabis_1 \wedge \cdots \wedge \aformulabis_{k-1}$, 
the constraint system $\aconstraintsystem$
obtained from $\aschema'$  (thanks to  Lemma~\ref{lemma-constraint-system}) but 
augmented with $\aformulabis_1 \wedge \cdots \wedge \aformulabis_{k-1}$   
admits at least one solution.
Let us define  $\vec{y'} \in \interval{1}{2^{p^{\star}(\size{\asys} + \size{\aconf_0} +  \size{\aformula})}}^{k-1}$ 
as a small solution of $\aconstraintsystem  \wedge \aformulabis_1 \wedge \cdots
       \wedge \aformulabis_{k-1}$ and $\vec{y} \in \interval{1}{2 \td{\aformula} +5}^{k-1}$
be defined such that for $i \in \interval{1}{k-1}$, $\vec{y}[i] = max(\vec{y'}[i], 
2 \td{\aformula} + 5)$.
As shown 
\iflong previously, 
\else in Appendix~\ref{section-calculation-for-pstar},
\fi 
\iflong the bound \fi $2^{p^{\star}(\size{\asys} + \size{\aconf_0} +  \size{\aformula})}$
is sufficient if there is a solution. 
Clearly,  $\vec{y'} \models \aconstraintsystem \wedge \aformulabis_1 \wedge \cdots
       \wedge \aformulabis_{k-1}$.
So $\aseg_1 \aloop_1^{\vec{y'}[1]} \aseg_2 \aloop_2^{\vec{y'}[2]} \ldots \aloop_{k-1}^{\vec{y'}[k-1]} 
       \aseg_k \aloop_k^\omega$ generates a genuine run. Since $\footprint{\arun'}=\proj(\aseg_1 \aloop_1^{\vec{y'}[1]} \aseg_2 \aloop_2^{\vec{y'}[2]} \ldots \aloop_{k-1}^{\vec{y'}[k-1]} 
       \aseg_k \aloop_k^\omega)$ (see Lemma~\ref{guard-constraint}) and since by Lemma~\ref{lemma-footprint-run}, we have $\footprint{\arun'}\symbmodels \aformula$, we get that 
$\proj(\aseg_1 \aloop_1^{\vec{y'}[1]} \aseg_2 \aloop_2^{\vec{y'}[2]} \ldots \aloop_{k-1}^{\vec{y'}[k-1]} 
       \aseg_k \aloop_k^\omega), 0 \symbmodels \aformula$.
This also implies that $\aschema'$ is valid. Hence
\iflong 
$$\proj(\aseg_1 \aloop_1^{\vec{y}[1]} \aseg_2 \aloop_2^{\vec{y}[2]} \ldots \aloop_{k-1}^{\vec{y}[k-1]} 
       \aseg_k \aloop_k^\omega), 0 \symbmodels \aformula$$
thanks to Proposition~\ref{proposition-iter-flatks}. 
\else
$\proj(\aseg_1 \aloop_1^{\vec{y}[1]} \aseg_2 \aloop_2^{\vec{y}[2]} \ldots \aloop_{k-1}^{\vec{y}[k-1]} 
       \aseg_k \aloop_k^\omega), 0 \symbmodels \aformula$
thanks to Proposition~\ref{proposition-iter-flatks}. 
\fi 
\iflong
Consequently, we have all the ingredients to build safely an accepting run for 
Algorithm~\ref{algorithm-main} on inputs $\asys$, $\aconf_0$, $\aformula$.
\else
Consequently, we have everything to build an accepting computation for 
Algorithm~\ref{algorithm-main} on inputs $\asys$, $\aconf_0$, $\aformula$.
\fi  
\end{proof}

As a corollary, we can state the main result of the paper.
\begin{theorem} 
\label{theorem-main}
$\mc{\PLTL[\counters]}{\flatcs}$ is \np-complete.
\end{theorem}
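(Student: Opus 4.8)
The plan is to derive Theorem~\ref{theorem-main} as a corollary of the machinery assembled above, proving membership in \np\ and \np-hardness separately. For the upper bound I would simply invoke Algorithm~\ref{algorithm-main} together with its two accompanying lemmas: Lemma~\ref{lemma-main-algo-in-np} states that the algorithm runs in nondeterministic polynomial time, and Lemma~\ref{lemma-correctness-main-algorithm} states that $\asys,\aconf_0\models\aformula$ iff the algorithm has an accepting computation. Combining the two gives $\mc{\PLTL[\counters]}{\flatcs}\in\np$. The substantive content is already in place: Corollary~\ref{corollary-ps-cover-runs} reduces the search to a minimal path schema; Theorem~\ref{theorem-main-disjunction} lets us guess a disjunction-free unfolded path schema $\aschema'\in\asetbis_{\aschema}$ of polynomial size and check membership in polynomial time; Proposition~\ref{proposition-iter-flatks} (built on the stuttering Theorem~\ref{theorem-stuttering}) together with the polynomial-time symbolic model checking of Past LTL over ultimately periodic footprints justifies guessing a small loop-count vector $\vec{y}\in\interval{1}{2\td{\aformula}+5}^{k-1}$; and Lemma~\ref{lemma-constraint-system} plus the small-solutions bound of~\cite{Borosh&Treybig76} justifies guessing a genuine, exponentially bounded solution $\vec{y'}$ of the constraint system refining $\vec{y}$.

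For the lower bound I would observe that flat Kripke structures embed into flat counter systems (take $n=0$ counters and the trivial guard $\top$ on every edge) and that $\PLTL[\emptyset]$ is a syntactic fragment of $\PLTL[\counters]$, so the identity map on instances is a logarithmic-space reduction from $\mc{\PLTL[\emptyset]}{\flatks}$ to $\mc{\PLTL[\counters]}{\flatcs}$; since the former is already \np-complete (in particular \np-hard, cf.\ Lemma~\ref{lemma-nphardness} and the ensuing theorem on $\flatks$), \np-hardness transfers. Alternatively one may reduce directly from SAT exactly as in the proof of Lemma~\ref{lemma-nphardness}, or from $\reach{\flatcs}$ (Lemma~\ref{lemma-reachability}) by encoding reachability of $\conf{\astate_f}$ by a $\PLTL[\counters]$ formula. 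Putting the two bounds together yields \np-completeness.

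The hard part is not this corollary but the correctness argument underlying Lemma~\ref{lemma-correctness-main-algorithm}: one must check that the three successive abstractions compose coherently, namely passing to a minimal path schema (preserving the set of runs), unfolding it to eliminate disjunctions in guards while preserving both flatness and footprints, and collapsing loop multiplicities modulo $2\td{\aformula}+5$ while preserving satisfaction of $\aformula$. Concretely, a witness $(\aschema,\aschema',\vec{y},\vec{y'})$ accepted by the algorithm must be shown to yield, via Lemma~\ref{lemma-constraint-system} and Theorem~\ref{theorem-main-disjunction}, a genuine infinite run of $\asys$ from $\aconf_0$ satisfying $\aformula$ (using that the footprint of the reconstructed run coincides with the symbolic word $\proj(\cdots)$ tested in line~7, together with Lemma~\ref{lemma-footprint-run}), and that validity of $\aschema'$ --- needed so that the constraint system captures all infinite runs respecting $\aschema'$ --- is automatically enforced once line~7 succeeds; conversely, any run of $\asys$ satisfying $\aformula$ must be traced back through these abstractions to a small witness accepted by the algorithm.
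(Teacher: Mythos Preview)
Your proposal is correct and follows essentially the same route as the paper: the theorem is stated there as an immediate corollary of Lemma~\ref{lemma-main-algo-in-np} and Lemma~\ref{lemma-correctness-main-algorithm}, with \np-hardness inherited from the earlier results you cite (the paper does not spell this out again at this point). One minor inaccuracy: validity of $\aschema'$ is not ``automatically enforced once line~7 succeeds'' but is part of the guess in line~3 of Algorithm~\ref{algorithm-main}; in the correctness proof the paper argues that the guessed $\aschema'$ is indeed valid because the constraint system has a solution, i.e., validity is witnessed a posteriori by $\vec{y'}$ rather than by the symbolic check.
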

\iflong
As an additional corollary, we can solve the global model-checking problem with existential
Presburger formulae (we knew that Presburger formulae exist for global model-checking~\cite{demri-model-10} 
but we can conclude that they are structurally simple and we provide an alternative proof). 
\begin{corollary} Given a flat counter system $\asys$, a control state $\astate_0$ and a formula
$\aformula \in \PLTL[\counters]$ one can effectively build an existential Presburger formula
$\aformula$ that represents the initial counter values $\vec{v_0}$ such that there is
an infinite run $\arun$ starting at $\pair{\astate_0}{\vec{v_0}}$ such that
$\arun, 0 \models \aformula$.
\end{corollary}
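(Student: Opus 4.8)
The goal is to build a single existential Presburger formula $\Psi(\vec z)$ in $n$ fresh first-order variables $\vec z=\tuple{z_1,\ldots,z_n}$ that holds of a tuple exactly when, taking it as the initial counter vector, $\asys$ has an infinite run from $\astate_0$ satisfying $\aformula$. (That \emph{some} such formula exists is already known~\cite{demri-model-10}; the content here is that it can be taken in the existential fragment and that it drops out of the decomposition machinery developed above.) The plan is to write $\Psi$ as a finite disjunction, one summand per ``shape'' of a satisfying run. First I would enumerate the finitely many minimal path schemas $\aschema$ of $\asys$ starting at $\astate_0$ (Corollary~\ref{corollary-ps-cover-runs}). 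For each such $\aschema$ and each term map $\atermmap\in I^T$, Theorem~\ref{theorem-main-disjunction} furnishes a finite family $\asetbis_{\aschema}$ of disjunction-free unfolded path schemas; inspecting the skeleton conditions, the initial vector enters that construction only through condition (init), hence only through which intervals the initial term values occupy, i.e.\ only through $\atermmap$ --- so this whole enumeration is finite. The set of $\vec z$ that induce a given $\atermmap$ is captured by the quantifier-free Presburger constraint $\iota_{\atermmap}(\vec z)$ asserting, for each $\aterm\in T$, that the value of $\aterm$ on $\vec z$ lies in $\atermmap(\aterm)$.

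Now fix a \emph{valid} path schema $\aschema'=\aseg_1\aloop_1^+\aseg_2\aloop_2^+\ldots\aseg_k\aloop_k^\omega$ in one of these families (the invalid ones carry no infinite run and are discarded). As $\aschema'$ has no disjunctions in guards, Lemma~\ref{lemma-constraint-system} applies, and the key point is that in its construction every conjunct is a linear (in)equality whose constant term is an affine function of the initial counter values. Running that construction with the variable $z_i$ in place of $\vec{v_0}[i]$ thus produces a quantifier-free Presburger formula $\aconstraintsystem_{\aschema'}(\vec z,\avariablebis_1,\ldots,\avariablebis_{k-1})$, with one loop-count variable $\avariablebis_j$ per non-final loop, whose solutions $\tuple{\vec{v_0},\vec y}$ are precisely those for which some infinite run respects $\aschema'$ starting at $\pair{\astate_0}{\vec{v_0}}$ with loop-count tuple $\vec y$. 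Moreover, whether such a run satisfies $\aformula$ does not depend on $\vec{v_0}$: the footprint of any run respecting $\aschema'$ is read off $\aschema'$ and the loop counts --- the term-map component sitting in the control states of $\aschema'$, the propositional component in the labelling --- so by Lemma~\ref{lemma-footprint-run} satisfaction of $\aformula$ is a property of the loop-count tuple, and by Proposition~\ref{proposition-iter-flatks} it depends on that tuple only through its componentwise $\mathtt{Min}(\cdot,2\td{\aformula}+5)$-reduct. Hence one computes in finite time the set $G_{\aschema'}$ of patterns $\vec y\in\interval{1}{2\td{\aformula}+5}^{k-1}$ with $\proj(\aseg_1\aloop_1^{\vec y[1]}\aseg_2\aloop_2^{\vec y[2]}\ldots\aseg_k\aloop_k^\omega),0\symbmodels\aformula$.

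For $\vec y\in G_{\aschema'}$ let $\aformulabis_j$ be the constraint ``$\avariablebis_j=\vec y[j]$'' if $\vec y[j]<2\td{\aformula}+5$ and ``$\avariablebis_j\geq 2\td{\aformula}+5$'' otherwise, exactly as in Algorithm~\ref{algorithm-main}, and let $\gamma_{\aschema',\vec y}(\vec z)$ be the existential Presburger formula $\exists\avariablebis_1\cdots\exists\avariablebis_{k-1}\bigl(\aconstraintsystem_{\aschema'}(\vec z,\avariablebis_1,\ldots,\avariablebis_{k-1})\wedge\bigwedge_j\aformulabis_j\bigr)$. Finally take $\Psi(\vec z)$ to be the finite disjunction, over all minimal path schemas $\aschema$ from $\astate_0$, all term maps $\atermmap$, all valid $\aschema'\in\asetbis_{\aschema}$ and all $\vec y\in G_{\aschema'}$, of $\iota_{\atermmap}(\vec z)\wedge\gamma_{\aschema',\vec y}(\vec z)$; pushing the quantifier-free conjunct $\iota_{\atermmap}$ into each existential block, $\Psi$ is existential Presburger, and the construction is effective. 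Correctness is a bookkeeping verification: if $\vec{v_0}\models\Psi$, a witnessing loop-count tuple unfolds by Lemma~\ref{lemma-constraint-system} into a run respecting some $\aschema'$, which satisfies $\aformula$ by Proposition~\ref{proposition-iter-flatks}, and transports back to a run of $\asys$ respecting $\aschema$ by Theorem~\ref{theorem-main-disjunction}(5); conversely, an infinite run from $\pair{\astate_0}{\vec{v_0}}$ satisfying $\aformula$ respects some minimal $\aschema$ (Corollary~\ref{corollary-ps-cover-runs}), yields by Theorem~\ref{theorem-main-disjunction}(4) a run respecting some $\aschema'\in\asetbis_{\aschema}$ with the same footprint, whose loop-count tuple together with $\vec{v_0}$ solves $\aconstraintsystem_{\aschema'}$ and whose reduct lies in $G_{\aschema'}$. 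The only genuinely delicate step --- and the part I expect to be the main obstacle to write cleanly --- is establishing the two facts that drive everything: that the constraint system of Lemma~\ref{lemma-constraint-system} is affine in the initial vector, so that substituting variables for $\vec{v_0}$ stays inside quantifier-free Presburger arithmetic, and that along a run respecting a disjunction-free path schema the truth of $\aformula$ factors through the bounded loop-count reduct and is otherwise independent of $\vec{v_0}$. Granting these, no small-solution argument is needed here at all; boundedness of each $G_{\aschema'}$ comes for free from Proposition~\ref{proposition-iter-flatks}.
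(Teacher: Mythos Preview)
Your proposal is correct and follows essentially the same route as the paper's (very terse) proof: both write the target formula as a finite disjunction over minimal path schemas $\aschema$ from $\astate_0$, unfolded schemas $\aschema'\in\asetbis_{\aschema}$, and bounded loop-count patterns $\vec y\in\interval{1}{2\td{\aformula}+5}^{k-1}$ that symbolically satisfy $\aformula$, with the inner clause being $\exists\avariablebis_1\cdots\avariablebis_{k-1}\;\aconstraintsystem_{\aschema'}'\wedge\aformulabis_1\wedge\cdots\wedge\aformulabis_{k-1}$, where $\aconstraintsystem_{\aschema'}'$ is the constraint system of Lemma~\ref{lemma-constraint-system} with the initial counter values replaced by free variables.

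The one place you are more careful than the paper is your extra disjunction over initial term maps $\atermmap$ together with the conjunct $\iota_{\atermmap}(\vec z)$. This addresses a real point the paper's sketch glosses over: the set $\asetbis_{\aschema}$ as constructed in Section~\ref{section-disjunction} depends on $\vec{v_0}$ through the skeleton condition \textbf{(init)}, and the runtime guards $\aguard_{\atermmap'}$ in $\aschema'$ constrain the term maps only from position~$1$ onward, not at position~$0$. So when $\vec{v_0}$ becomes a variable one must either range over all possible initial term maps and conjoin $\iota_{\atermmap}$ (as you do), or equivalently enlarge $\asetbis_{\aschema}$ to ignore \textbf{(init)} and fold the corresponding constraint into $\aconstraintsystem_{\aschema'}'$. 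The paper writes simply $\bigvee_{\aschema'\in\asetbis_{\aschema}}$ without spelling this out; your version makes the dependence explicit and is what one needs for the formula to be correct uniformly in $\vec z$.
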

It is sufficient to consider the formula below:
$$
\bigvee_{{\rm minimal \ path \ schema} \ \aschema} \ \ \
\bigvee_{\aschema' \in \asetbis_{\aschema}} \ \ \
$$
$$
\bigvee_{\vec{y} \ s.t. \ \footprint{\aseg_1 \aloop_1^{\vec{y}[1]} \aseg_2 \aloop_2^{\vec{y}[2]} \ldots \aloop_{k-1}^{\vec{y}[k-1]} \aseg_k \aloop_k^\omega}, 0 \ \symbmodels \ \aformula} \ \
\exists \ \avariablebis_1 \cdots \ \avariablebis_{k-1} \
\aconstraintsystem_{\aschema'}' \wedge \aformulabis_1 \wedge \cdots \wedge \aformulabis_{k-1}
$$
where the first generalized disjunction deals with minimal path schemas starting on $\astate_0$, 
the third generalized disjunction deals with $\vec{y} \in \interval{1}{2 \td{\aformula}+5}^{k-1}$.
Note that $\aconstraintsystem_{\aschema'}'$ is obtained from $\aconstraintsystem_{\aschema'}$
by replacing initial counter values by free variables. 

\subsection{The special case of path schemas with a single loop}

We have seen that $\mc{\PLTL[\counters]}{\cps(k)}$ is \np-hard as soon as $k \geq 2$.  By contrast, we prove that $\mc{\PLTL[\counters]}{\cps(1)}$ is in \ptime~ by using the previous proof techniques.


Consider a path schema $\aschema=\aseg.\aloop^{\omega}$ in a counter system with only one loop $\aloop$. Due to the structure of $\aschema$ there
exists at most one run $\arun$ respecting $\aschema$ and starting from a given initial configuration $\aconf_0$. $\footprint{\arun}$ (defined in
Section \ref{section-disjunction}) is of the form $u.v^{\omega}$, which is an ultimately periodic word. Since, the only loop $\aloop$ is to be taken an
infinite number of times, we have, $\length{v}=\length{\aloop}$ which is polynomial in size of the input, but $\length{u}$ can be exponential. But,
note that $\wordof{\arun(0)\arun(1)\cdots\arun(\length{u})}\in\aseg \cdot \aloop^{+}$
 where the number of 
repetition of $\aloop$ may be an
exponential number of times. The algorithm computes the number of different possible sets of term maps (defined in Section~\ref{section-disjunction}),
that the nodes of $\aloop$ can have. At most, this can be polynomially many times due to the monotonocity of guards and 
arithmetical constraints. Next, for
each such assignment $i$ of term maps to the nodes of $\aloop$, the algorithm calculates the number of iterations $nl_i$ of $\aloop$, for which the
terms remain in their respective term map. Note that each of these $nl_i$ can be exponentially large. Now, the formula is symbolically verified over
the ultimately periodic path where the nodes of the path schema are augmented with the term maps.

Before defining the algorithm formally, we need to define some notions to be used in the algorithm. For a path segment 
$\aseg= \anedge_1 \anedge_2\cdots
\anedge_{\length{\aseg}}$, we define $\aseg[i,j]=\anedge_i\anedge_{i+1}\cdots \anedge_j$ for $1\leq i\leq j\leq 
\length{\aseg}$. Also, for a loop segment
$\aloop$, we say a tuple of term maps 
$(\atermmap_1,\atermmap_2,\cdots,\atermmap_{\length{\aloop}})$ is \defstyle{final} iff for every term
$\aterm=\sum_j a_j \acounter_j\in T$ and for all $1\leq i\leq \length{\aloop}$,
\begin{itemize}
  \item  $\sum_j a_j \effect{\aloop}[j]>0$ implies $\atermmap_i(\aterm)$ is maximal in $I$.
  \item  $\sum_j a_j \effect{\aloop}[j]<0$ implies $\atermmap_i(\aterm)$ is minimal in $I$.
\end{itemize}
where $\effect{\aloop}$ is as defined in Section \ref{section-path-schemas}.

Since the unique run respecting $\aschema$ must contain $\aseg$ and copies of $\aloop$, we can specify the term maps for
$\aword=\aseg \cdot \aloop$. Consider the function 
$\amap_{{\rm init}} : \{0,1,2,\ldots,\length{\aword}\} \rightarrow I^{T}$ 
for a given configuration
$\aconf=\pair{\astate_0}{\avect_0}$, defined as:
\begin{itemize}
\itemsep 0 cm 
  \item $\amap_{{\rm init}}(0)=\atermmap_0$ iff for each term $\aterm=\sum_j a_j \acounter_j\in T$, we have that, $\sum_j
    a_j.\avect_0[j]\in\atermmap_0(\aterm)$ and $\atermmap_0 \vdash \guard{\aword(0)}$.
  \item for every $i \in \interval{1}{\length{\aword}}$, as $\amap_{{\rm init}}(i)=\atermmap_i$ iff, for each term 
   $\aterm=\sum_j a_j \acounter_j\in T$, we have that,
    $\sum_j a_j.(\effect{\aword[1,i]}[j]+\avect_0[j])\in\atermmap_i(\aterm)$ and $\atermmap_i\vdash \guard{\aword(i)}$.
  \item Otherwise, if the term maps do not satisfy the guards, then there does not exist any run and hence 
  $\amap_{{\rm init}}(i)$ is undefined. 
\end{itemize}
Also, we consider the function $curr: T \rightarrow \rel$ which, in the algorithm, gives the value of the terms at specific positions of the run. The
function $val_{curr}:\edges^{+}\rightarrow I^{T}$, is defined as $val_{curr}(w)=\atermmap$ where for all $\aterm=\sum_j a_j \acounter_j\in T,
curr(\aterm)+\sum_j a_j.(\effect{w}[j])\in \atermmap(\aterm)$. Given a path segment $\aseg=\anedge_1\anedge_2\cdots\anedge_{\length{\aseg}}$ with
$\anedge_i=(\astate_i,\aguard_i,\anupdate_i,\astate_{i+1})\in\edges$ for $i\in[1,\length{\aseg}]$ and a given 
tuple of term maps
$a=(\atermmap_1,\atermmap_2,\cdots,\allowbreak\atermmap_{\length{\aseg}})$, we define $\aseg\times a=\anedge'_1\anedge'_2\cdots\anedge'_{\length{\aseg}}$ where
$\anedge'_i=(\pair{\astate_i}{\atermmap_i},\aguard_i,\anupdate_i,\pair{\astate_{i+1}}{\atermmap_{i+1}})$. 

Given an initial configuration $\aconf$, we calculate the term maps for each position of $\aseg$ and the first iteration of $\aloop$, using
$\amap_{{\rm init}}$. Subsequently, we calculate new tuples of term maps 
$(\atermmap_1,\atermmap_2\cdots\atermmap_{\length{\aloop}})$ for $\aloop$ and the
number of iterations $nl$ of $\aloop$ for which the terms remain in their respective term map from the tuple. We store the tuple of term maps in an
array $A$ and the number of iterations corresponding to tuple $i$ in $nl_i$. In case, at any position, we reach some term maps that does not satisfy
some guard, the procedure is aborted as it means that there does not exist any run. 
Note that there are polynomially many entries in $A$ but each
of the $nl_i$ can be exponential. We perform symbolic model checking over a path schema augmented with the calculated term maps. The augmented
path schema is obtained by performing $\aloop\times A[i]$ for each $i$. But the number of times $\aloop\times A[i]$ is repeated, $nl_i$ can be
exponential. Thus, instead of taking $\aloop\times A[i]$, $nl_i$ times, we take it 
$\mathtt{Min}(nl_i,2\td{\aformula} + 5)$ times. 
By Theorem~\ref{theorem-stuttering}, 
we have that the two path schemas are equivalent in terms of satisfiability of $\aformula$. 
The polynomial-time algorithm is described in Algorithm~\ref{algorithm-cps}.

\begin{algorithm}
\caption{The \ptime \  algorithm with inputs $\aschema=\aseg \cdot \aloop^\omega$, $\aconf=\pair{\astate_0}{\avect_0}$,
$\aformula$}
\label{algorithm-cps}
\begin{algorithmic}[1]
\STATE Build a resource $\aresource = \triple{\aset}{T}{B}$ and a set of intervals $I$ coherent with $\aschema$ and $\aformula$.
\STATE Compute $\amap_{{\rm init}}(i)$ for all $i\in[0,\length{\aseg.\aloop}-1]$.
\STATE {\bf if} for some $i\in \interval{0}{\length{\aseg \cdot \aloop}-1}$, $\amap_{{\rm init}}(i)$ is undefined
       {\bf then} {\bf abort} 
\STATE For each term $\aterm=\sum_j a_j \acounter_j\in T$, $curr(\aterm) :=\sum_j a_j.(\effect{\aseg \cdot \aloop}[j]+\avect_0[j])$.
\STATE $h:=1$; $A[h]:=(\amap_{{\rm init}}(\length{\aseg}),\amap_{{\rm init}}(\length{\aseg}+1)\cdots \amap_{{\rm init}}(\length{\aseg.\aloop}-1))$
\WHILE{$A[h]$ is not final}
\STATE Compute, $nl_h=min\{nl|i\in [1,\length{\aloop}], \aterm\in T, val_{curr}(\aloop^{nl} \cdot \aloop[1,i])(\aterm)\neq A[h](i)(\aterm)\}$.
\STATE $h:=h+1$
\STATE $A[h] :=(\atermmap_1,\atermmap_2\cdots\atermmap_{\length{\aloop}})$, 
       such that at all positions $i$ in $\aloop$ we have that $val_{curr}(\aloop^{nl_{h}} \cdot \aloop[1,i])= \atermmap_i$.
\STATE For every term $\aterm=\sum_j a_j \cdot \acounter_j\in T$, set 
       $curr(\aterm)=curr(\aterm)+\sum_j a_j.(nl_h.\effect{\aloop}[j])$.
\STATE {\bf if} there is $i\in \interval{1}{\length{\aloop}}$ such that  $A[h](i) \nvdash\guard{\aloop(i)}$
       {\bf then} {\bf abort}
\ENDWHILE
\STATE For $j \in \interval{1}{h-1}$, $T[j] := \mathtt{Min}(nl_j,2\td{\aformula} + 5)$ 
\STATE Check that
$\proj((\aseg\times(\amap_{{\rm init}}(0),\ldots,\amap_{{\rm init}}(\length{\aseg}-1)).(\aloop\times A[1])^{T[1]}.(\aloop \times
A[2])^{T[2]}\ldots(\aloop\times A[h-1])^{T[h-1]}(\aloop\times A[h])^{\omega}),0 \symbmodels \aformula$
\end{algorithmic}
\end{algorithm}
It now remains to prove that the algorithm completes in \ptime \ and is correct.
\begin{lemma}
Algorithm \ref{algorithm-cps} terminates in time which is at most a polynomial in the size of the input.
\end{lemma}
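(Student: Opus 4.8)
The plan is to go through Algorithm~\ref{algorithm-cps} line by line and bound the cost of each step; the only non-routine point is the number of iterations of the \textbf{while} loop on lines~6--12.

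First, lines~1--5 are clearly polynomial. Building the resource $\aresource=\triple{\aset}{T}{B}$ and the interval set $I$ only requires collecting the terms and constants occurring in $\aschema$ and $\aformula$, and $\card{I}\leq 2\card{B}+1$. Computing $\amap_{{\rm init}}(i)$ for every $i\in\interval{0}{\length{\aseg \cdot \aloop}-1}$ amounts, for each term $\aterm=\sum_j \afactor_j\avariable_j\in T$, to evaluating the value $\sum_j \afactor_j(\effect{\aseg \cdot \aloop[1,i]}[j]+\avect_0[j])$ (an integer of polynomial bit-length) and locating the unique element of $I$ containing it, and then testing $\atermmap\vdash\guard{\cdot}$, which is polynomial by Lemma~\ref{relation-property}(I); there are $\length{\aseg \cdot \aloop}$ positions and $\card{T}$ terms, so the whole pass is polynomial. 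Lines~4--5 merely record these values in $curr$ and $A[1]$.

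Next I bound the number of iterations of the \textbf{while} loop. The key claim is that passing from $A[h]$ to $A[h+1]$ makes a \emph{strict} progress of the term maps: for every position $i\in\interval{1}{\length{\aloop}}$ we have $A[h](i)\preceq_{\effect{\aloop}}A[h+1](i)$, and for at least one $i$ this holds strictly. Indeed, at position $i$ the value of a term $\aterm=\sum_j\afactor_j\avariable_j$ after $nl$ further copies of $\aloop$ is affine in $nl$ with slope $\sum_j\afactor_j\effect{\aloop}[j]$, so once it leaves the interval $A[h](i)(\aterm)$ it moves into an adjacent interval in exactly the direction prescribed by $\preceq_{\effect{\aloop}}$; by the choice of $nl_h$ (line~7) this happens for at least one pair $(i,\aterm)$. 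Hence, for each fixed position $i$, the distinct values among $A[1](i),A[2](i),\dots$ form a strictly $\prec_{\effect{\aloop}}$-increasing chain, which by Lemma~\ref{lemma-bounded-map} has length at most $\card{I}\times\card{T}$. Since every iteration causes a strict step at some position, the number $h$ of iterations is at most $\length{\aloop}\times(\card{I}\times\card{T}-1)+1$, which is polynomial; moreover, as long as $A[h]$ is not final the above shows $nl_h$ is finite and a strict step occurs, so the loop terminates. Inside one iteration, line~7 computes $nl_h$ by solving, for each $i\in\interval{1}{\length{\aloop}}$ and each $\aterm\in T$, one affine integer inequality in $nl$ (a single division) and taking the minimum; although $nl_h$ may be exponentially large it is written in binary and computed in polynomial time. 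Lines~8--11 then update $A[h+1]$, $curr$ and test the guards, all polynomial (again Lemma~\ref{relation-property}(I)).

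Finally, on lines~13--14, each $T[j]=\mathtt{Min}(nl_j,2\td{\aformula}+5)$ is of polynomial magnitude and there are $h-1$ of them, so the ultimately periodic word $\proj((\aseg\times\cdots).(\aloop\times A[1])^{T[1]}\cdots(\aloop\times A[h-1])^{T[h-1]}(\aloop\times A[h])^{\omega})$ has prefix of length $\length{\aseg}+\sum_{j<h}T[j]\length{\aloop}$ and period $\length{\aloop}$, both polynomial; checking $\symbmodels\aformula$ on it is polynomial by Theorem~\ref{theorem-main-disjunction}(6) (equivalently by~\cite{Laroussinie&Markey&Schnoebelen02}). Summing the costs, Algorithm~\ref{algorithm-cps} runs in polynomial time. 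The main obstacle is precisely the bound on the number of \textbf{while}-loop iterations, i.e. the monotonicity argument showing the term maps of $\aloop$ can change only polynomially often; the soundness of the truncation to $\mathtt{Min}(nl_j,2\td{\aformula}+5)$ follows from Theorem~\ref{theorem-stuttering} and is a matter of correctness rather than complexity.
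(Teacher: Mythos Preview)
Your proof is correct and follows essentially the same approach as the paper. The only notable difference is that for bounding the number of \textbf{while}-loop iterations, you argue directly from the monotonicity $A[h](i)\preceq_{\effect{\aloop}}A[h+1](i)$ and invoke Lemma~\ref{lemma-bounded-map}, whereas the paper appeals to Lemma~\ref{lemma-bounded-skeleton} by observing that the loop body mimics the construction of the unfolded path-schema set $\asetbis_{\aschema}$; since Lemma~\ref{lemma-bounded-skeleton} itself rests on Lemma~\ref{lemma-bounded-map}, your argument is simply the more direct route to the same bound.
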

\begin{proof}
We will verify that each step of the algorithm can be performed in polynomial time.
\begin{itemize}
\itemsep 0  cm 
  \item Building a resource and a set of intervals can be done by scanning the input once.
  \item Since the updates of $\aschema$ is part of the input, we can compute 
        $\amap_{{\rm init}}$
        for all positions in $\aseg \cdot \aloop$ in polynomial time.
  \item Computation of $curr$ depends on the previous value of $curr$ and the coefficients appearing in the guards of $\aschema$. Hence, it involves 
        addition and multiplication of at most polynomial number of bits. Thus, this can be performed in polynomial time.
  \item The maximum possible value for $h$ is bounded by a polynomial given by Lemma~\ref{lemma-bounded-skeleton}. 
        Indeed,  the process described in the while loop is the same as the creation of unfolded path schema set 
        $Y_{\aschema}$. The only difference being that 
        there exists only one possible run, if any and hence $Y_{\aschema}$ is a singleton set.
  \item Calculation of each $nl_h$ requires computing $val_{curr}$ which again involves arithmetical operations on 
        polynomially many bits. Thus, this requires polynomial time only.
  \item Checking $(\aseg\times(\amap_{{\rm init}}(0),\ldots,\amap_{{\rm init}}(\length{\aseg}-1)).(\aloop\times A[1])^{T[1]} 
       (\aloop\times A[2])^{T[2]} \ldots\allowbreak 
       (\aloop\times A[h-1])^{T[h-1]}\allowbreak(\aloop\times A[h])^\omega, 0 \symbmodels \aformula$ can be 
      done in polynomial time for the following reasons. 
    \begin{itemize}
      \item By definition of $T[h]$, size of $(\aseg\times(\amap_{{\rm init}}(0),\ldots,\amap_{{\rm init}}(\length{\aseg}-1)).
      (\aloop\times A[1])^{T[1]}
        (\aloop\times A[2])^{T[2]} \ldots\allowbreak (\aloop\times A[h-1])^{T[h-1]}(\aloop\times A[h])^\omega$ is
        polynomial in the size of the input.
      \item By~\cite{Laroussinie&Markey&Schnoebelen02}, 
        $(\aseg\times(\amap_{{\rm init}}(0),\ldots,\amap_{{\rm init}}(\length{\aseg}-1)).(\aloop\times
        A[1])^{T[1]} (\aloop\times A[2])^{T[2]} \allowbreak \ldots (\aloop\times
        A[h-1])^{T[h-1]}\allowbreak(\aloop\times A[h])^\omega, 0 \symbmodels \aformula$ can be checked in time
        $\mathcal{O}(\sizeof{\aformula}^2 \times\allowbreak\length{\aseg \cdot \aloop^{T[1]}\allowbreak \aloop^{T[2]} \cdots
          \aloop^{T[h-1]} \aloop})$.  Indeed, $\symbmodels$ is analogous to the satisfaction relation for plain Past LTL.
     \end{itemize}
\end{itemize}
\end{proof}

\begin{lemma}
$\aschema,\aconf\models \aformula$ iff Algorithm \ref{algorithm-cps} on inputs $\aschema,\aconf,\aformula$ has an accepting run.
\end{lemma}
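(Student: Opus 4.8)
The plan is to establish the two directions of the biconditional by relating the (unique) run $\arun$ respecting $\aschema = \aseg \cdot \aloop^\omega$ from $\aconf$ to the ultimately periodic word constructed in the final check of Algorithm~\ref{algorithm-cps}. First I would observe that, because $\aschema$ has a single loop that must be iterated infinitely, there is at most one infinite run respecting $\aschema$ and starting at $\aconf$: the sequence of transitions is fixed, so the configurations are determined by $\aconf_0$ and the updates. If no such run exists, I would argue that the algorithm aborts: a run fails to exist exactly when some guard is eventually violated, which is detected either at line~3 (a guard in $\aseg$ or the first copy of $\aloop$ is violated) or inside the while loop at line~11 (some later copy of $\aloop$ violates a guard). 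Conversely, if the algorithm reaches line~14 without aborting, I would show that the term maps computed in $\amap_{\rm init}$ and in the array $A$, together with the iteration counts $nl_1,\dots,nl_{h-1}$, correctly describe $\footprint{\arun}$ with respect to the resource $\aresource$; this uses the monotonicity argument already invoked for term maps (Lemma~\ref{lemma-bounded-map}) to guarantee that after at most $h-1 \leq 2\card{T}\card{B}+\card{T}$ blocks the term maps stabilize to a final tuple, matching the last loop's behaviour.

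The key step is then to identify $\footprint{\arun}$ with the word $\aword_{full} := (\aseg \times (\amap_{\rm init}(0),\dots,\amap_{\rm init}(\length{\aseg}-1))) \cdot (\aloop\times A[1])^{nl_1} \cdots (\aloop\times A[h-1])^{nl_{h-1}} (\aloop\times A[h])^\omega$, via $\proj$. This is essentially a reformulation of Theorem~\ref{theorem-main-disjunction}, parts (4)--(6), specialized to the single-loop case where $\asetbis_\aschema$ is a singleton and the guards have been symbolically resolved; equivalently it follows from Lemma~\ref{guard-constraint} applied transition by transition. Having established $\proj(\aword_{full}) = \footprint{\arun}$, Lemma~\ref{lemma-footprint-run} gives $\arun, 0 \models \aformula$ iff $\footprint{\arun}, 0 \symbmodels \aformula$ iff $\proj(\aword_{full}), 0 \symbmodels \aformula$.

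Next I would replace $\aword_{full}$ by the shortened word $\aword_{short}$ obtained by taking each block $(\aloop\times A[j])$ only $T[j] = \mathtt{Min}(nl_j, 2\td{\aformula}+5)$ times, which is exactly the word checked on line~14. The justification is the stuttering theorem: since $\proj(\aword_{full})$ and $\proj(\aword_{short})$ are both ultimately periodic words over $\powerset{\aset}\times I^T$ built from the same prefix, the same periodic suffix, and blocks that agree on iteration count up to the threshold $2\td{\aformula}+5$, repeated application of Theorem~\ref{theorem-stuttering} (in the symbolic form used for Proposition~\ref{proposition-iter-flatks}) yields $\proj(\aword_{full}), 0 \symbmodels \aformula$ iff $\proj(\aword_{short}), 0 \symbmodels \aformula$. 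Combining this chain of equivalences: $\aschema, \aconf \models \aformula$ iff the unique run exists and satisfies $\aformula$ iff the algorithm does not abort and the line~14 check succeeds iff the algorithm has an accepting run.

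The main obstacle I anticipate is the bookkeeping in the correctness of the while loop: one must carefully verify that the function $curr$ correctly tracks the actual term values at the entry to each new ``block'' of loop iterations, that $nl_h$ is genuinely the number of consecutive iterations for which the term maps remain the tuple $A[h]$, and that the process terminates precisely when a final tuple is reached (so that the $\omega$-suffix $(\aloop\times A[h])^\omega$ is legitimate, i.e. all its guards are satisfied forever and counter values stay non-negative). Much of this mirrors the construction of $\asetbis_\aschema$ and the validity analysis of path schemas, so I would lean on Theorem~\ref{theorem-main-disjunction} and Lemma~\ref{lemma-constraint-system} rather than redo the convexity arguments; the only genuinely new ingredient is noting that in the single-loop case there is no nondeterministic choice of unfolding, so the array $A$ is forced and can be computed deterministically in polynomial time.
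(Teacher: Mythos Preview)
Your proposal is correct and follows essentially the same route as the paper: both directions hinge on identifying the unique run's footprint with $\proj$ of the full unfolded word, invoking Lemma~\ref{lemma-footprint-run} to pass to the symbolic relation $\symbmodels$, and then applying the stuttering theorem (Theorem~\ref{theorem-stuttering}, via Proposition~\ref{proposition-iter-flatks}) to replace each $nl_j$ by $T[j]=\mathtt{Min}(nl_j,2\td{\aformula}+5)$. Your treatment is in fact slightly more explicit than the paper's in handling the abort cases and in flagging the non-negativity of counter values in the $\omega$-suffix as part of the bookkeeping; the paper's proof simply asserts that the decomposition computed by the while loop coincides with the natural decomposition of $\arun$ and leaves the rest as ``easy to check''.
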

\begin{proof}
Let us first assume that $\aschema,\aconf\models \aformula$. We will show that there exists a vector of positive integers 
$\vec{nL}=(nl_1,nl_2\ldots nl_h)$ for some $h\in\Nat$ such that Algorithm \ref{algorithm-cps} has an accepting run. Clearly, the 
transitions taken by a run $\arun$ respecting $\aschema$ and satisfying $\aformula$ is of the form, $\aseg\aloop^{\omega}$. This 
can be decomposed in the form $\aseg \aloop^{nl_1} \aloop^{nl_2} \ldots \aloop^{nl_h} \aloop^\omega$, depending on the portion of $\aschema$ traversed, 
such that for each consecutive 
copy of $\aloop$, the term maps associated with the nodes change. It is easy to see that this decomposition is same as the one calculated by the algorithm.
 Now, the elements of $\vec{nL}$ can be exponential. 
But due to Lemma \ref{lemma-footprint-run} and Stuttering theorem (Theorem \ref{theorem-stuttering}), we know that,
$(\aseg\times(\amap_{{\rm init}}(0),\ldots,\amap_{{\rm init}}(\length{\aseg}-1)).(\aloop\times A[1])^{nl_1} 
(\aloop\times A[2])^{nl_2} \ldots 
(\aloop\times A[h-1])^{nl_{h-1}}(\aloop\times A[h])^\omega, 0 \symbmodels \aformula$ iff
$(\aseg\times(\amap_{{\rm init}}(0),\ldots,\amap_{{\rm init}}(\length{\aseg}-1)).(\aloop\times A[1])^{T[1]} (\aloop\times A[2])^{T[2]} \allowbreak\ldots\allowbreak 
(\aloop\times A[h-1])^{T[h-1]}\allowbreak(\aloop\times A[h])^\omega, 0 \allowbreak\symbmodels \aformula$. Hence, 
the algorithm has an accepting run.

Now, we suppose that the algorithm has an accepting run on inputs $\aschema,\aconf$ and $\aformula$. We will prove that $\aschema,\aconf\models \aformula$.
Since the algorithm has an accepting run, we assume the integers calculated by it are $nl_1,nl_2,\cdots,nl_h$.
Let $\aword = \aseg \aloop^{nl_1} \aloop^{nl_2} \ldots \aloop^{nl_h} \aloop^\omega$ and
$\arun = \pair{\pair{\astate_0}{\atermmap_0}}{\vec{x_0}}
\pair{\pair{\astate_1}{\atermmap_1}}{\vec{x_1}} 
\pair{\pair{\astate_2}{\atermmap_2}}{\vec{x_2}} \cdots \in (\states' \times \Zed^n)^{\omega}$
be defined as follows: 
for every $i \geq 0$, $\astate_i \egdef \pi_1(\source{\aword(i)})$,
$\vec{x_0} \egdef \vec{v_0}$ and for every $i \geq 1$, we have
      $\vec{x_i} \egdef \vec{x_{i-1}} + \update{\aword(i)}$.  
By the calculation of $l_j$, $1\leq j\leq n$, in the algorithm, it is easy to check that $\pair{\astate_0}{\vec{x_0}}\pair{\astate_1}{\vec{x_1}} 
\pair{\astate_2}{\vec{x_2}} \cdots \in (\states \times \Zed^n)^{\omega}$ is a run respecting $\aschema$. Algorithm \ref{algorithm-cps} guarantees that
$(\aseg\times(\amap_{{\rm init}}(0),\ldots,\amap_{{\rm init}}(\length{\aseg}-1)).(\aloop\times A[1])^{T[1]} (\aloop\times A[2])^{T[2]} \ldots 
(\aloop\times A[h-1])^{T[h-1]}(\aloop\times A[h])^\omega, 0 \allowbreak\symbmodels \aformula$.
And thus, by Lemma~\ref{lemma-footprint-run}  and Theorem~\ref{theorem-stuttering}, we have, 
$\pair{\astate_0}{\vec{x_0}}\pair{\astate_1}{\vec{x_1}}\allowbreak\pair{\astate_2}{\vec{x_2}} \cdots ,0\models \aformula$.
\end{proof}

From the two last lemmas, we deduce the result concerning path schemas of counter systems with a single loop.

\begin{proposition}
\label{lemma-constant-loops1} 
$\mc{\PLTL[\counters]}{\cps(1)}$ is in \ptime.
\end{proposition}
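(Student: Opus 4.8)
The proof is a direct consequence of the two preceding lemmas about Algorithm~\ref{algorithm-cps}, so the plan is simply to check that this algorithm applies to the inputs at hand and then to invoke those lemmas. First I would observe that a path schema of a flat counter system lying in $\cps(1)$ has at most one loop, and since every path schema ends with a loop iterated $\omega$ times, it has exactly one loop and is therefore of the form $\aseg \cdot \aloop^{\omega}$ --- precisely the input shape expected by Algorithm~\ref{algorithm-cps}. Hence $\mc{\PLTL[\counters]}{\cps(1)}$ is solved by running Algorithm~\ref{algorithm-cps} on $\aschema$, $\aconf$, $\aformula$.

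Next I would recall, for completeness, why the algorithm is correct and polynomial, since this is where the content lies. Because the schema has a single loop that is taken infinitely often, from a fixed initial configuration there is \emph{at most one} run $\arun$ respecting it; its footprint with respect to the resource $\triple{\aset}{T}{B}$ built from $\aschema$ and $\aformula$ is an ultimately periodic word $\awordbis \cdot \awordter^{\omega}$ with $\length{\awordter} = \length{\aloop}$, but $\length{\awordbis}$ may be exponential, because $\aloop$ may have to be unrolled exponentially often before the term maps attached to its control states stabilise. The crucial point --- a special case of the unfolding construction behind Theorem~\ref{theorem-main-disjunction} --- is that the iterations of $\aloop$ split into only \emph{polynomially many} consecutive phases: along the loop each linear term is monotone in the loop effect, so the interval of $I$ to which it belongs can change at most $\card{I}\times\card{T}$ times (Lemma~\ref{lemma-bounded-map}). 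Each individual phase may span exponentially many iterations, yet inside a phase the footprint merely stutters a fixed finite word, so by the Stuttering Theorem (Theorem~\ref{theorem-stuttering}) truncating every phase to $\mathtt{Min}(\cdot,\,2\td{\aformula}+5)$ iterations changes neither the satisfaction of $\aformula$ at position $0$ nor at any other position. The truncated object is an ultimately periodic path of polynomial size, augmented with the computed term maps, on which $\symbmodels$ --- plain Past LTL satisfaction once the arithmetical atoms are read off the term maps --- is decidable in polynomial time by~\cite{Laroussinie&Markey&Schnoebelen02}, and Lemma~\ref{lemma-footprint-run} transfers the verdict back to $\arun$.

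Putting this together: the complexity lemma for Algorithm~\ref{algorithm-cps} guarantees that building the resource and the interval set, computing the initial term maps, performing the polynomially many phase rounds (each manipulating only polynomially many bits, even though the iteration counts $nl_h$ are themselves exponential), and running the final symbolic model-check all take polynomial time; the correctness lemma guarantees that the algorithm accepts exactly when $\aschema,\aconf \models \aformula$. Hence $\mc{\PLTL[\counters]}{\cps(1)}$ is in \ptime. The main obstacle, already discharged in those two lemmas, is twofold: bounding the number of phases of the loop by a polynomial despite the exponential iteration counts, and ensuring that the stuttering-based truncation of each phase is sound for the past-time operators as well as for the future ones --- both are handled by Theorem~\ref{theorem-stuttering} together with Lemma~\ref{lemma-footprint-run}.
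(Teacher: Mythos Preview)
Your proposal is correct and follows exactly the paper's approach: the proposition is deduced directly from the two preceding lemmas establishing that Algorithm~\ref{algorithm-cps} runs in polynomial time and is correct. Your additional recap of why those lemmas hold (monotonicity bounding the number of phases, the Stuttering Theorem justifying truncation, and the polynomial-time symbolic check via~\cite{Laroussinie&Markey&Schnoebelen02}) is accurate and matches the reasoning already given in the paper's proofs of those lemmas.
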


\fi
%

\section{Conclusion}
\label{section-conclusion}

In this paper, we have investigated the computational complexity of the model-checking
problem for flat counter systems with formulae from an enriched version of 
\iflong
LTL 
(with past-time operators and arithmetical
constraints on the counters). 
\else
LTL.
\fi 
Our main result is the \np-completeness of \iflong the problem \fi 
$\mc{\PLTL[\counters]}{\flatcs}$, significantly improving the complexity upper bound 
from~\cite{demri-model-10}.
This also improves the results about the effective semilinearity of the reachability relations
for such flat counter systems 
\iflong 
from~\cite{Comon&Jurski98,Finkel&Leroux02b}; indeed, our logical dialects
allow to specify whether a configuration is reachable.  
\else
from~\cite{Comon&Jurski98,Finkel&Leroux02b} and 
it extends the recent result on the \np-completeness of model-checking 
flat Kripke structures with LTL from~\cite{Kuhtz&Finkbeiner11} 
by adding counters and past-time operators.
\fi 
\iflong Figure~\ref{figure-summary} presents our main
results and compare them with the complexity of the reachability 
problem.
\else
Our main results are presented above and compared to the reachability
problem 
(complementary proofs can be found in 
Appendix~\ref{section-appendix-complementary}). \\
\fi  
\iflong
Furthermore, our results extend the recent result on the \np-completeness of model-checking 
flat Kripke structures with LTL from~\cite{Kuhtz&Finkbeiner11} (see also~\cite{Kuhtz10}) 
by adding counters and past-time operators.
\fi 
As far as the proof technique is concerned, the \np \ upper bound is obtained as a combination
of a general stuttering  property for LTL with past-time operators (a result extending
what is done in~\cite{Kucera&Strejcek05} with past-time operators) and the use
of small integer solutions for quantifier-free Presburger formulae~\cite{Borosh&Treybig76}.
\iflong
This latter technique is nowadays widely used to obtain optimal complexity upper bounds
for verification problems, see e.g.~\cite{Haaseetal09}.
Herein, our main originality rests on its intricate combination with a very general stuttering principle.
\fi 
There are several related problems which are not addressed in the paper. For instance, 
the extension of the model-checking problem to full CTL$^{\star}$ is known to be decidable~\cite{demri-model-10}
but the characterization of its exact complexity is 
\iflong
open
(note that we can also get decidability by taking advantage of our resolution
of global model-checking by replacing successively innnermost linear-time formulae by QFP formulae). 
\else
open.
\fi 
\iflong
Similarly, the extension
of the model-checking problem
with affine counter systems having the finite monoid property in the sense of~\cite{Finkel&Leroux02b},
is also known to be decidable~\cite{demri-model-10} but not its exact complexity. 
\fi 
\iflong
Another direction for extensions would be to consider richer update functions or guards
and to analyze how much our combined proof technique would be robust in those cases, for instance
by allowing transfer updates. 
\fi 

\begin{figure}
{\footnotesize
  \begin{center}
  \begin{tabular}{|c||c|c|c|}
    \hline
    Classes of  Systems & $\PLTL[\emptyset]$ & $\PLTL[\counters]$ & Reachability\\
    \hline
    \hline
    $\kps$ & \np-complete & ----- & \ptime \\ \hline
    $\cps$ & \np-complete & \np-complete (Theo.~\ref{theorem-main}) & \np-complete \\ \hline
    $\kps(n)$ & \ptime \ (Theo.~\ref{theorem-kps-fixed})  &  ----- & \ptime \\ \hline
    $\cps(n)$, $n > 1$ & ?? &  \np-complete (Lem.~\ref{lemma-constant-loops2}) & ??  \\ \hline
    $\cps(1)$  & \ptime &  \ptime & \ptime  \\ \hline
     $\flatks$ & \np-complete & ----- & \ptime \\ \hline
     $\flatcs$  & \np-complete & {\bf \np-complete} (Theo.~\ref{theorem-main}) & \np-complete\\
    \hline
  \end{tabular}
\end{center}
}
\caption{Summary: computational complexity of the problems  $\mc{\logicfrag}{\csfrag}$}
\label{figure-summary}
\end{figure} 

\bibliographystyle{elsarticle-harv}
\bibliography{biblio-fossacs12}
\newpage
\appendix
\section{Proof of (Claim 3)} 
Before the proof, let us recall what is (Claim 3).
Let 
$\aword= \aword_1 \awordbis^M \aword_2, \aword'=\aword_1 \awordbis^{M'} \aword_2 \in \aalphabet^{\omega}$, 
$i,i' \in \Nat$ and $N \geq 2$ such that  $M,M' \geq 2N+1$ and $\pair{\aword}{i} \equivrel{N} \pair{\aword'}{i'}$.
\begin{description}
\itemsep 0 cm
\item[(Claim 3)]  $\pair{\aword}{i+1} \equivrel{N-1} \pair{\aword'}{i'+1}$.
\end{description}

\section{Proof of (Claim 5)} 
Before the proof, let us recall what is (Claim 5). 
Let 
$\aword= \aword_1 \awordbis^M \aword_2, \aword'=\aword_1 \awordbis^{M'} \aword_2 \in \aalphabet^{\omega}$, 
$i,i' \in \Nat$ and $N \geq 2$ such that  $M,M' \geq 2N+1$ and $\pair{\aword}{i} \equivrel{N} \pair{\aword'}{i'}$.
\begin{description}
\itemsep 0 cm
\item[(Claim 5)] for all $j \leq i$, there is $j' \leq i'$ 
      such that  $\pair{\aword}{j} \equivrel{N-1} \pair{\aword'}{j'}$ and 
       for all $k' \in \interval{j'-1}{i'}$, there is
            $k \in  \interval{j-1}{i}$ such that $\pair{\aword}{k} \equivrel{N-1} \pair{\aword'}{k'}$.
\end{description}

\end{document}